\def\BibTeX{{\rm B\kern-.05em{\sc i\kern-.025em b}\kern-.08em
    T\kern-.1667em\lower.7ex\hbox{E}\kern-.125emX}}
\tikzset{>=latex, 
	point/.style = {circle,draw,thick,minimum size=2mm,inner sep=0pt},
	point1/.style = {circle,draw,thick,minimum size=6mm,inner sep=0pt},
	hm/.style = {dotted,semithick},
	role/.style = {thick},
	tree/.style = {rounded corners=10pt, dashed, fill opacity=0.5, fill=nullscolour},
	wiggly/.style={thick,%decorate,decoration={snake,amplitude=0.3mm,segment length=2mm,post length=1mm}
	},
	query/.style={thick},
	itria/.style={
  draw,dashed,shape border uses incircle,
  isosceles triangle,shape border rotate=90,yshift=-1.45cm},
  square/.style={regular polygon,regular polygon sides=4}
}
\newcommand{\sub}{\textit{sub}}
\newcommand{\I}{\mathcal{I}}
\newcommand{\var}{\textit{var}}
\newcommand{\sig}{\mathit{sig}}
\newcommand{\FO}{\ensuremath{\mathsf{FO}}}
\newcommand{\MSO}{\ensuremath{\mathsf{MSO}}}
\newcommand{\FOT}{\ensuremath{\mathsf{FO^2}}}
\newcommand{\FOTNE}{\ensuremath{\mathsf{FO^2_\ne}}}
\renewcommand{\ML}{\ensuremath{\mathsf{ML}}}
\newcommand{\MLi}{\ensuremath{\mathsf{ML\!^i}}}
\newcommand{\MLu}{\ensuremath{\mathsf{ML\!^u}}}
\newcommand{\MLiu}{\ensuremath{\mathsf{ML\!^{i,u}}}}
\newcommand{\MLin}{\ensuremath{\mathsf{ML\!^{i,n}}}}
\newcommand{\MLinu}{\ensuremath{\mathsf{ML\!^{i,n,u}}}}
\newcommand{\MLn}{\ensuremath{\mathsf{ML\!^{n}}}}
\newcommand{\MLnu}{\ensuremath{\mathsf{ML\!^{n,u}}}}
\newcommand{\GML}{\ensuremath{\mathsf{GML}}}
\newcommand{\GMLi}{\ensuremath{\mathsf{GML\!^i}}}
\newcommand{\GMLu}{\ensuremath{\mathsf{GML\!^u}}}
\newcommand{\GMLiu}{\ensuremath{\mathsf{GML\!^{i,u}}}}
\newcommand{\GMLin}{\ensuremath{\mathsf{GML\!^{i,n}}}}
\newcommand{\GMLinu}{\ensuremath{\mathsf{GML\!^{i,n,u}}}}
\newcommand{\GMLn}{\ensuremath{\mathsf{GML\!^{n}}}}
\newcommand{\GMLnu}{\ensuremath{\mathsf{GML\!^{n,u}}}}
\newcommand{\bis}{\boldsymbol{\beta}}
\renewcommand{\L}{L}
\newcommand{\LS}{L'}
\newcommand{\CT}{\ensuremath{\mathsf{C^2}}}
\newcommand{\Du}{\Diamondblack}
\newcommand{\type}{\mathfrak{t}}
\newcommand{\ftype}{\mathfrak{st}}
\newcommand{\Types}{\boldsymbol{T}}
\newcommand{\mx}{k^\bullet}
\newcommand{\md}{\textit{md}}
\newcommand{\Flat}{\mathit{fl}}
\renewcommand{\S}{\bar{S}}
\newcommand{\reg}{\mathit{reg}}
\begin{document}

\title{Separation and Definability in Fragments of Two-Variable First-Order Logic with Counting
}

\author{\IEEEauthorblockN{1\textsuperscript{st} Given Name Surname}
\IEEEauthorblockA{\textit{dept. name of organization (of Aff.)} \\
\textit{name of organization (of Aff.)}\\
City, Country \\
email address or ORCID}
\and
\IEEEauthorblockN{2\textsuperscript{nd} Given Name Surname}
\IEEEauthorblockA{\textit{dept. name of organization (of Aff.)} \\
\textit{name of organization (of Aff.)}\\
City, Country \\
email address or ORCID}
\and
\IEEEauthorblockN{3\textsuperscript{rd} Given Name Surname}
\IEEEauthorblockA{\textit{dept. name of organization (of Aff.)} \\
\textit{name of organization (of Aff.)}\\
City, Country \\
email address or ORCID}
\and
\IEEEauthorblockN{4\textsuperscript{th} Given Name Surname}
\IEEEauthorblockA{\textit{dept. name of organization (of Aff.)} \\
\textit{name of organization (of Aff.)}\\
City, Country \\
email address or ORCID}
\and
\IEEEauthorblockN{5\textsuperscript{th} Given Name Surname}
\IEEEauthorblockA{\textit{dept. name of organization (of Aff.)} \\
\textit{name of organization (of Aff.)}\\
City, Country \\
email address or ORCID}
\and
\IEEEauthorblockN{6\textsuperscript{th} Given Name Surname}
\IEEEauthorblockA{\textit{dept. name of organization (of Aff.)} \\
\textit{name of organization (of Aff.)}\\
City, Country \\
email address or ORCID}
}

\author{
	\IEEEauthorblockN{Louwe Kuijer}
\IEEEauthorblockA{\textit{Department of Computer Science} \\
\textit{University of Liverpool}\\
Liverpool, UK \\
0000-0001-6696-9023}
\and
\IEEEauthorblockN{Tony Tan}
\IEEEauthorblockA{\textit{Department of Computer Science} \\
\textit{University of Liverpool}\\
Liverpool, UK \\
0009-0005-8341-2004}
\and
\IEEEauthorblockN{Frank Wolter}
\IEEEauthorblockA{\textit{Department of Computer Science} \\
\textit{University of Liverpool}\\
Liverpool, UK \\
0000-0002-4470-606X}
\and
\IEEEauthorblockN{Michael Zakharyaschev}
\IEEEauthorblockA{\textit{School of Computing and Mathematical Sciences} \\
\textit{Birkbeck, University of London}\\
London, UK \\
0000-0002-2210-5183}
}

\newtheorem{theorem}{Theorem}
\newtheorem{lemma}[theorem]{Lemma}
\newtheorem{corollary}[theorem]{Corollary}
\newtheorem{definition}[theorem]{Definition}
\newtheorem{example}[theorem]{Example}
\newtheorem{remark}[theorem]{Remark}
\newtheorem{proposition}[theorem]{Proposition}

\maketitle

\begin{abstract}
For fragments $\L$ of first-order logic (\FO) with counting quantifiers, we consider the definability problem, which asks whether a  given $\L$-formula can be equivalently expressed by a formula in some fragment of $\L$ without counting, and the more general separation problem asking whether two mutually exclusive $\L$-formulas can be separated in some counting-free fragment of $\L$. 
We show that separation is undecidable for the two-variable fragment of \FO{} extended with counting quantifiers and for the graded modal logic with inverse, nominals and universal modality. On the other hand, if inverse or nominals are dropped, separation becomes $\coNExpTime$- or $\TwoExpTime$-complete, depending on whether the universal modality is present. In contrast, definability can often be reduced in polynomial time to validity in $\L$. We also consider uniform separation and show that it often behaves similarly to definability.
\end{abstract}

\begin{IEEEkeywords}
Definability, separation, two-variable first-order logic with counting, graded modal logic.
\end{IEEEkeywords}

%************

\section{Introduction}\label{intro}

Extensions of decidable fragments of first-order logic (\FO) with counting quantifiers $\exists^{\ge k}x$ play an important role in computational logic. In modal logic (viewed as a fragment of \FO{} under the standard translation), counting was introduced in the early 1970s~\cite{goble1970grades,DBLP:journals/ndjfl/Fine72} as graded modalities $\Diamond^{\ge k}$ and the graded modal logic, $\GML$, has been investigated and applied ever since~\cite{DBLP:journals/jancl/Hoek92,DBLP:conf/lics/KazakovP09,DBLP:journals/tplp/BednarczykKW21}. In description logic (DL), counting quantifiers, known as qualified number restrictions,  are a core constructor used in almost all standard languages~\cite{BaaEtAl03a,Tob00}. A fundamental decidable fragment of \FO{} containing most decidable graded modal logics and standard DLs with qualified number restrictions is $\CT$, the two-variable first-order logic ($\FOT$) with counting quantifiers, in which satisfiability is  \NExpTime-complete~\cite{DBLP:conf/lics/GradelOR97,DBLP:conf/lics/PacholskiST97,DBLP:journals/jolli/Pratt-Hartmann05}; see also \cite{DBLP:journals/siglog/KieronskiPT18,Pratt23book}. Both $\GML$ and $\CT$ have been recently shown to characterise exactly the expressive power of certain graph neural networks (GNNs)~\cite{DBLP:conf/iclr/BarceloKM0RS20,DBLP:conf/lics/Grohe21}.

While the expressive power of logics with counting such as $\GML$ and $\CT$ is well-understood~\cite{DBLP:journals/sLogica/Rijke00,DBLP:journals/corr/abs-1910-00039,DBLP:journals/bsl/Grohe98}, the problem of defining and separating formulas with counting by counting-free ones has been left largely unexplored. This is in contrast to logics with fixpoints, where the definability and separation of fixpoint formulas by those without fixpoints has been investigated extensively~\cite{DBLP:conf/stoc/CosmadakisGKV88,DBLP:journals/jlp/HillebrandKMV95,DBLP:conf/stacs/Otto99,DBLP:conf/lics/BenediktCCB15,DBLP:journals/corr/abs-2406-01497}, and to automata theory, where the definability and separation of languages defined using expressive devices by means of less powerful ones has been studied in depth, with countless applications to logics over linear orders and trees~\cite{DBLP:conf/birthday/DiekertG08,DBLP:journals/corr/PlaceZ14,DBLP:conf/fossacs/ColcombetGM22,DBLP:conf/pods/Bojanzyk08}. 

%The aim of this paper is to investigate the problem of deciding, for most important logics with counting $\L$ and logics without counting $\LS$, the following problems: 

The aim of this paper is to investigate the computational complexity of the following decision problems for most important logics $\L$ with counting and counting-free logics $\LS$:
\begin{description}
\item[$\L/\LS$-definability:] decide whether a given $\L$-formula $\varphi$ is equ\-ivalent to some $\LS$-formula, called  an $\LS$-\emph{definition} of $\varphi$;
	
	\item[$\L/\LS$-separation:] given two mutually exclusive $\L$-formulas $\varphi$ and $\psi$, decide whether there is an $\LS$-formula $\chi$---an \emph{$\LS$-separator} for $\varphi$ and $\psi$---such that $\varphi \models \chi$ and $\chi \models \neg \psi$;
	
	\item[uniform $\L/\LS$-separation:] given an $\L$-formula $\varphi$, decide if there is a logically strongest (modulo equivalence) $\LS$-formula $\chi$---a \emph{uniform $\LS$ separator} for $\varphi$---with $\varphi \models \chi$.    
\end{description}
In DL, for instance, $\varphi$ can be an ontology $\mathcal{O}$ and $\psi$ a concept $C$ that is not satisfiable with respect to $\mathcal{O}$, both given in a DL $\L$ with number restrictions. Then a separator ontology $\mathcal{O}'$ in a weaker, easier to comprehend language $\LS$ without number restrictions could provide an easier `explanation' of unsatisfiability as it inherits that $\mathcal{O}\models \mathcal{O}'$ and $C$ is not satisfiable under $\mathcal{O}'$. In the context of concept learning~\cite{DBLP:journals/ml/LehmannH10}, $\varphi$ and $\psi$ can represent positive and negative examples for a target concept $C$. Then any separator in an appropriately
chosen language $\LS$ without number restrictions could represent a qualitative rather than quantitative concept one aims to learn~\cite{DBLP:journals/ai/JungLPW22,DBLP:journals/tocl/ArtaleJMOW23}.

Note that $(a)$ $\L/\LS$-definability reduces to $\L/\LS$-separation for $\varphi$ and $\neg \varphi$, $(b)$ an $\LS$-definition of $\varphi$ is a uniform $\LS$-separator for $\varphi$, and $(c)$ a uniform $\LS$-separator for $\varphi$ is an $\L/\LS$-separator for $\varphi$ and any $\L$-formula $\psi$ $\LS$-separable from $\varphi$. 
The next example illustrates definability and uniform separation of $\GML$-formulas by those in plain modal logic $\ML$ (the statements can be checked using Theorems~\ref{prop:definability1} and \ref{thm:uniformsep1}). 

\begin{example}\label{ex:gmlml}\em 
The $\GML$-formula $\Diamond^{\geq n}p$ is $\ML$-definable iff \mbox{$n = 1$}. However, $\Diamond^{\geq n}p$ always has a uniform $\ML$-separator, namely, $\Diamond p$. On the other hand, for any $n\geq 1$, the $\GML$-formulas $\Diamond^{\leq n}p$ and $\Diamond^{=n}p$ do not have a uniform $\ML$-separator.
The $\GML$-formula $\neg \Diamond^{=n}p$ is not $\ML$-definable but has a uniform $\ML$-separator, $\top$, for any $n\geq 1$, 
which means that $\neg \Diamond^{=n}p$ does not have any non-trivial consequences in $\ML$. \hfill $\dashv$
\end{example}

$\L/\LS$-separation can be regarded as a variant of Craig interpolation \cite{craig_1957}, where some logical symbols (here, counting quantifiers) are not allowed to occur in interpolants. We also consider the bona fide generalisation of Craig interpolation:
  
\begin{description}
	\item[Craig $\L/\LS$-separation:] decide whether given $\L$-formulas $\varphi$ and $\psi$ have an $\LS$-separator $\chi$ that is built from common non-logical symbols of $\varphi$ and $\psi$.
\end{description}	

\begin{example}\em\label{ex1}
Consider the $\CT$-formulas $\varphi(x) = \exists^{=1}y \, R(x,y)$ and 
$\psi(x) = \exists^{=1}y\, \big(R(x,y) \wedge A(y)\big) \wedge \exists^{=1}y\, \big(R(x,y)\wedge \neg A(y)\big)$. 
%	%
%	\begin{align*}
%		& \varphi(x) = \exists^{=1}y \, R(x,y), \\ 
%		& \psi(x) = \exists^{=1}y\, \big(R(x,y) \wedge A(y)\big) \wedge \exists^{=1}y\, \big(R(x,y)\wedge \neg A(y)\big).
%	\end{align*}
	%
Clearly, $\varphi \models \neg \psi$. The $\FOT$-formula 
$$
\chi(x) = \forall y\, \big( R(x,y) \rightarrow A(y)\big) \vee \forall y\, \big(R(x,y) \rightarrow \neg A(y)\big)
$$ 
separates $\varphi(x)$ and $\psi(x)$. 
For $\psi'(x)= \exists^{=2}y\, R(x,y)$, we also have $\varphi\models \neg\psi'$, but $\varphi(x)$ and $\psi'(x)$ are not $\FOT$-separable.
On the other hand, there is no Craig $\FOT$-separator for $\varphi(x)$ and $\psi(x)$, which has to be built from predicate $R$ only, and so would separate $\varphi(x)$ and $\psi'(x)$ as well. \hfill $\dashv$
\end{example}

Our core decidability and complexity results are summarised in 
Table~\ref{table:results}. Here, $\FOT$ admits equality, $\FOTNE$ is equality-free, and $\ML\!^\mathsf{e}$ and $\GML\!^\mathsf{e}$ denote extensions of $\ML$ and $\GML$ with the constructors in $\mathsf{e} \subseteq \{\mathsf{i}, \mathsf{u}, \mathsf{n}\}$, where $\mathsf{i}$ stands for inverse modalities,  $\mathsf{u}$ the universal modality, and $\mathsf{n}$ nominals.

\begin{table}[tbp]
\caption{Decidability and complexity results}
\begin{center}
\begin{tabular}{c|c|c}
logics $\L/\LS$ & (Craig) separation & definability/uniform separation \\\hline
$\CT/\FOT$ & undecidable & \textbf{?}\\\hline
$\CT/\FOTNE$ &  &  \\
$\GMLinu / \MLinu$  & undecidable  & \coNExpTime-comp.\\ 
$\GMLin / \MLin$ &  & \\
\hline
$\GMLiu / \MLiu$  &    &    \\
$\GMLnu / \MLnu$  &   \TwoExpTime-comp. &  \ExpTime-comp.  \\
$\GMLu / \MLu$  &    &    \\\hline
$\GMLi / \MLi$  &    &   \\
$\GMLn / \MLn$  &   \coNExpTime-comp. &  \PSpace-comp. \\
$\GML / \ML$  &    &   
\end{tabular}
\label{table:results}
\end{center}	
\end{table}

Given an $\L$-formula $\varphi$, the proofs of the complexity results for $\L/\LS$-definability and uniform separation in Table~\ref{table:results} construct an $\LS$-formula $\varphi'$ that is a definition/uniform separator for $\varphi$ iff one exists and give polytime reductions to $\L$-validity (sometimes also over finite structures). The surprisingly simple construction is inspired by approaches to $\FO/\ML$ and $\MSO/\mu\ML$-definability via $\omega$-expansions~\cite{DBLP:journals/igpl/AndrekaBN95,DBLP:conf/concur/JaninW96}. 
%\tony{A minor suggestion to the previous statement: ``The construction is surprisingly simple and is inspired by ....''}
For $\CT$ and nominal-free modal logics, the formula $\varphi'$ is constructed in polytime (if given as a DAG); if nominals are present, $\varphi'$ is of exponential size and more work is needed to prove a polytime reduction to (finite) $\L$-validity. We also show
tight complexity bounds for other pairs not mentioned in Table~\ref{table:results} explicitly, say $\CT/\MLiu$ and $\CT/\ML$. As
$\CT$ is exponentially more succinct than $\MLiu$ (even for DAG-representation), to obtain polytime reductions to validity, we work with a succinct version of $\MLiu$ inspired by the presentation of $\FOT$ as a modal logic~\cite{DBLP:conf/csl/LutzSW01}. 

The proofs for (Craig) $\L/\LS$-separation follow a completely different approach and use a basic characterisation of non-separation of $\varphi$ and $\psi$ in $\LS$ via the existence of $\LS$-bisimilar models satisfying $\varphi$ and $\psi$, for an appropriate notion of $\LS$-bisimulation. In this case, our complexity results hold for the pairs $\L/\LS$ in Table~\ref{table:results} and also all weakenings of $\LS$ mentioned in Table~\ref{table:results} (except for $\L = \GMLinu$ and $\LS$ without inverse modalities, where undecidability is open). The undecidability results are established by a reduction of the non-halting problem for Minsky machines, in which, remarkably, a single nominal fixing the root of a model is enough to obtain the undecidability for $\GMLinu$.
The technically most demanding proof is the \TwoExpTime{} upper bound for $\GMLiu / \MLiu$-separation, which uses a mosaic technique to show  that if one cannot fix the root of models using a nominal, the construction of bisimilar models can be regularised. The difference between \coNExpTime and \TwoExpTime is due to the presence/absence of the universal modality without which one can only look polynomially-deep into the bisimilar models.    

%****************

\smallskip
\noindent
\textit{Related Results.}
For a decidable \FO-fragment $\L$ and $\LS\in \{\ML,$ $\MLi\}$, the decidability of $\L/\LS$-definability follows from Otto's result that if an \FO-formula $\varphi(x)$ has an $\LS$-definition, then it has an $\LS$-definition of modal depth $\le 2^{n}-1$, where $n$ is the quantifier rank of $\varphi(x)$~\cite{DBLP:journals/apal/Otto04}. No elementary complexity bound is obtained using this approach. A \TwoExpTime-upper bound for $\GMLu/\MLu$-definability is shown in~\cite{LutEtAl11} in the context of DL TBoxes.
Independently from this paper, and using different methods, \TwoExpTime-completeness of $\GML/\ML$-separation is shown in~\cite{Jeanstacs}, even for $\GML$ with fixpoint-operators. 
For further discussion of separation and definability problems and their relationship to Craig interpolation, we refer the reader to~\cite{chapter:separation}.

The paper is organised as follows. All relevant logics and their semantics are defined in Sec.\,\ref{sec:logicsabst}. Model-theoretic characterisations of separation in terms of bisimulations are given in Sec.\,\ref{sec:definitions}, which also shows polytime reductions between some of our decision problems. The undecidability results for separation are proved in Sec.\,\ref{sec:undec} and the decidability ones in Secs.\,\ref{sec:gmlu},\,\ref{inverse-sep}. Definability is considered in Sec.\,\ref{sec:definability} and uniform separation in Sec.\,\ref{sec:uniformseparation}. Finally, Sec.\,\ref{future} discusses future work.

All omitted proofs can be found in the appendix.

%*************

\section{Logics}\label{sec:logicsabst}
\label{sec:logics}

First, we define the syntax and semantics of logics in Table~\ref{table:results}. Let $\sigma$ be a (possibly infinite) \emph{signature} of unary and binary predicate symbols and constants, let $\FO(\sigma)$ denote the set of first-order formulas with equality over $\sigma$, and let $\var$ be a set comprising two individual variables. Then
\begin{description}
\item[\rm$\FOT(\sigma)$,] the \emph{two-variable fragment of $\FO(\sigma)$}, is the set of formulas that are built from atoms of the form $\top(x)$, $A(x)$, $R(x,y)$, and $x=y$ with unary $A\in \sigma$, binary $R\in \sigma$, and $x,y \in \var$, using the Boolean connectives $\wedge$ and $\neg$ and quantifier $\exists x$ with $x\in \var$ (other Boolean connectives and $\forall x$ are regarded as standard abbreviations);

\item[\rm$\FOTNE(\sigma)$] is the \emph{equality-free fragment of $\FOT(\sigma)$};

\item[\rm$\CT(\sigma)$,] the \emph{two-variable fragment of $\FO(\sigma)$ with counting}, extends $\FOT(\sigma)$ with the counting quantifiers $\exists^{\ge k}x$, for $k \ge 1$ and $x\in \var$ (other counting quantifiers $\exists^{=k}x$, $\exists^{< k}x$, $\exists^{> k}x$, etc.\ are regarded as abbreviations).
\end{description}
Since constants, $c$, can be defined in $\FOT$ using fresh unary predicates $C$ as $\exists x\, \big( C(x) \land \forall y\, (C(y) \to x=y)\big)$, when dealing with $\FOT(\sigma)$ and $\CT(\sigma)$, we assume that $\sigma$ is constant-free.
 
%We only use constants in modal logics with nominals to be defined below.

We also consider a few fragments of $\FOT$ and $\CT$ corresponding to standard propositional (graded) modal logics. In this context, each unary $A \in\sigma$ is treated as a \emph{propositional variable} and each binary $R \in \sigma$ gives rise to \emph{graded modal operators} $\Diamond\!^{\ge k}_R$, for $k \ge 1$.  
\begin{description}
\item[$\ML(\sigma)$,] the basic \emph{modal logic} over $\sigma$, is defined as the set of formulas given by the following grammar with $A,R \in \sigma$:
$\varphi \ := \ \top \ \mid  \ A  \ \mid  \ \neg \varphi  \ \mid  \ \varphi \land \varphi' \ \mid \ \Diamond\!^{\ge 1}_R \varphi$.   Instead of $\Diamond\!^{\ge 1}_R \varphi$, we usually write $\Diamond_R \varphi$.
 
\item[$\GML(\sigma)$,] the \emph{graded modal logic} over $\sigma$, extends $\ML(\sigma)$ with graded modal operators $\Diamond\!^{\ge k}_R$, for any $k \ge 1$ and $R \in \sigma$. 
\end{description}
%
%To avoid clutter, we sometimes omit $\sigma$ and write simply $\FOT$, $\ML$, etc.\ if understood. 
%
The popular extensions of $\ML$ and $\GML$ with the \emph{inverse} modal operators, \emph{universal modality}, and/or \emph{nominals} are denoted by $\ML\!^\mathsf{e}(\sigma)$ and $\GML\!^\mathsf{e}(\sigma)$, where $\mathsf{e} \subseteq \{\mathsf{i}, \mathsf{u}, \mathsf{n}\}$ and 
\begin{description}
\item[\quad$\mathsf{i}$] indicates that whenever $\Diamond\!^{\ge k}_R$ is admitted in the logic in question, then its \emph{inverse} $\Diamond\!^{\ge k}_{R^-}$ is also admitted;

\item[\quad$\mathsf{u}$] indicates admittance of the \emph{universal diamond} $\Du\varphi$;

\item[\quad$\mathsf{n}$] indicates that $\sigma$ may contain individual constants $c$, which give rise to atomic formulas $N_c$, known as \emph{nominals}.
\end{description}
Thus, $\GMLinu$ is our strongest graded modal logic. We see it as a fragment of $\CT$ through the lens of the \emph{standard translation} $\cdot^\ast$, which rephrases any $\GMLinu$-formula $\varphi$ as the $\CT$-formula $\varphi^{\ast}_x$ with one free variable $x \in \var$ defined inductively by taking 
\begin{align*}
&  \top^{\ast}_x  =  \top(x), \ \ A^{\ast}_x  =  A(x), \ \ {(N_c)}_x^\ast = (x = c), \ \ 
(\neg \varphi)_x^\ast = \neg \varphi_x^\ast, \\ 
& (\varphi \land \psi)_x^\ast = \varphi_x^\ast \land \psi_x^\ast, \ \ (\Du \varphi)_x^\ast = \top(x) \land \exists \bar x \, \varphi_{\bar x}^\ast , \\  
& 
(\Diamond^{\ge k}_R \varphi)_x^\ast = \exists^{\ge k} \bar x \, ( R(x,\bar x) \land \varphi_{\bar x}^\ast ), \,  
(\Diamond^{\ge k}_{R^-} \varphi)_x^\ast = \exists^{\ge k} \bar x \, ( R(\bar x, x) \land \varphi_{\bar x}^\ast )  
\end{align*} 
where $\bar x = y$, $\bar y = x$ and $\{x,y\} = \var$. 

The \emph{signature} of $\varphi$ is the set $\sig(\varphi)$ of predicate and cons\-tant symbols in $\varphi$. 
The set $\sub(\varphi)$ comprises all subformulas of $\varphi$ together with their negations; $\sub(\varphi,\psi) = \sub(\varphi) \cup \sub(\psi)$. We assume that formulas are given as DAGs and the parameter $k$ in $\exists^{\ge k}$ and $\Diamond\!^{\ge k}_R$ is encoded in binary, in which case the \emph{size} of $\varphi$ is bounded by $|\varphi| = |\sub(\varphi)| \cdot \log\mx$, where $\mx$ is the largest parameter in $\varphi$. Our complexity results remain the same if we adopt the unary encoding of parameters and define the size of $\varphi$ as the number of occurrences of symbols in it. 

A $\sigma$-\emph{structure} takes the form $\Amf=(\dom(\Amf),(Q^{\Amf})_{Q\in \sigma},(c^{\Amf})_{c\in \sigma})$ with a  \emph{domain} $\dom(\Amf) \ne \emptyset$, relations $Q^{\Amf}$ on $\dom(\Amf)$ of the same arity as $Q \in \sigma$, and \mbox{$c^{\Amf}\in \dom(\Amf)$}, for constants $c\in \sigma$. 
We write $\varphi \models \psi$ to say that $\varphi$ \emph{entails} $\psi$ in the sense that if $\varphi$ is true in some structure $\mathfrak A$ under some assignment of elements of $\dom(\Amf)$ to the variables in $\var$, then $\psi$ is also true in $\mathfrak A$ under the same assignment.  If $\varphi \models \psi$ and $\psi \models \varphi$, we write $\varphi \equiv \psi$, calling $\varphi$ and $\psi$ \emph{equivalent}.

%Entailment $\varphi \models \psi$ between $\GMLinu$-formulas $\varphi$ and $\psi$ coincides with $\varphi^*_x \models \psi^*_x$.

%The \emph{signature} $\sig(\varphi)$ of a (graded) modal formula $\varphi$ is defined to be $\sig(\varphi^*_x)$.

The complexities of the satisfiability problems for the logics above are as follows (see~\cite{DL-Textbook,Pratt23book,DBLP:journals/jsyml/ArecesBM01} and references therein): 
\begin{itemize}
\item[--] $\FOT$, $\CT$, $\GMLinu$, and $\GMLin$ are $\NExpTime$-complete;

\item[--] $\MLu$, $\GMLu$, $\MLiu$, $\MLnu$, $\GMLnu$, $\GMLiu$ and $\MLinu$ are $\ExpTime$-complete;

\item[--] $\ML$, $\GML$, $\MLi$, $\GMLi$, $\MLn$, $\GMLn$ are $\PSpace$-complete.
\end{itemize}

%*************

\section{Separation and Bisimulation}\label{sec:definitions}

We now introduce the basic semantic instruments that are required for determining the decidability and complexity of the decision problems formulated in Sec.~\ref{intro} for the pairs $\L/\LS$ of logics such that
\begin{itemize}
\item[--] $\L$ is any logic with counting defined in Sec.~\ref{sec:logicsabst};

\item[--] $\LS \subseteq \L$ is any logic without counting from Sec.~\ref{sec:logicsabst}.
\end{itemize}
From now on, we assume that a signature $\sigma$ is fixed, that $\L$ and $\L'$ range over the logics with and, respectively, without counting defined in Sec.~\ref{sec:logicsabst}, and that $\L \supseteq \LS$ in every pair $\L/\LS$. (As $\sigma$ in $\CT(\sigma)$ is assumed to be constant-free, we do not consider pairs such as $\CT/\MLn$.) To simplify presentation and since most of our logics are modal, here we only consider formulas $\vp(x)$ with one free variable $x \in \var$.
%, leaving the general case for the  appendix.  

A \emph{pointed structure} is a pair $\Amf,a$ with $a \in \dom(\Amf)$. We write $\mathfrak A \models \varphi(a)$ if $a$ satisfies $\varphi(x)$  in $\mathfrak A$. Given $\varrho \subseteq \sigma$ and  pointed $\sigma$-structures $\Amf,a$ and $\Bmf,b$, we write \mbox{$\Amf,a \equiv_{\L(\varrho)} \Bmf,b$} and call $\Amf,a$ and $\Bmf,b$  \emph{$L(\varrho)$-equivalent} in case \mbox{$\Amf\models \varphi(a)$} iff $\Bmf\models \varphi(b)$, for all $L(\varrho)$-formulas $\varphi(x)$.

%Let\nz{ref to intro? define for ML?} $\varphi(x)$ and $\psi(x)$ be $\FO$-formulas. An $\FO$-formula $\chi(x)$ is called a \emph{separator} for $\varphi(x)$ and $\psi(x)$ if $\varphi(x) \models \chi(x)$ and $\chi(x) \models \neg\psi(x)$. 
%%
%If, in addition, $\sig(\chi) \subseteq \sig(\varphi) \cap \sig(\psi)$, we say that $\chi(x)$ is a \emph{Craig separator} for $\varphi(x)$ and $\psi(x)$. 
%%
%A (Craig) separator for modal formulas $\varphi$ and $\psi$ is a modal formula $\chi$ such that $\chi^*_x$ is a (Craig) separator for $\varphi^*_x$ and $\psi^*_x$. 
%
%Given two languages $\L$ and $\LS$, the \emph{separation problem for $\L$ by $\LS$}---or $\L/\LS$-\emph{separation}, for short---is to decide whether any two given $\L$-formulas have an $\LS$-separator. If we are only interested in Craig separators, we refer to the problem as \emph{Craig $\L/\LS$-separation}. 
%
%Notice that $\L$-formulas $\varphi$ and $\psi = \neg \varphi$ have an $\LS$-separator iff there exists an $\LS$-formula $\chi$ that is equivalent to $\varphi$, in which case we call $\varphi$ \emph{$\LS$-definable}. The \emph{definability problem for $\L$ by $\LS$}---or $\L/\LS$-\emph{definability}, for short---is to decide whether any given $\L$-formula is equivalent to some $\LS$-formula.

A non-empty binary relation $\bis \subseteq \dom(\Amf)\times\dom(\Bmf)$ is an \emph{$\FOT(\varrho)$-bisimulation between \Amf and \Bmf} if $\bis$ is \emph{global}---that is, $\dom(\Amf)\subseteq \{a\mid (a,b)\in \bis\}$ and $\dom(\Bmf)\subseteq \{b\mid (a,b)\in \bis\}$---and for
every $(a,b)\in \bis$, the following conditions are satisfied: 
\begin{itemize}
\item[--] for every $a'\in\dom(\Amf)$, there is a $b'\in \dom(\Bmf)$ such that $(a',b')\in \bis$ and $(a,a')\mapsto (b,b')$ is a partial  $\varrho$-isomorphism between \Amf and \Bmf;
	
\item[--] for every $b'\in\dom(\Bmf)$, there is an $a'\in \dom(\Amf)$ such that $(a',b')\in \bis$ and $(a,a')\mapsto (b,b')$ is a partial  $\varrho$-isomorphism between \Amf and \Bmf.
	%$l_{\Bmf,b,b'}^\tau=l_{\Amf,a,a'}^\tau$ and $(a',b')\in S$.    
\end{itemize}
We write $\Amf,a \sim_{\FOT(\varrho)} \Bmf,b$ if 
$a\mapsto b$ is a partial $\varrho$-isomorphism between \Amf and
\Bmf and there is an $\FOT(\varrho)$-bisimulation $\bis$ between $\Amf$ and
\Bmf such that $(a,b)\in \bis$.

A non-empty $\bis \subseteq \dom(\Amf)\times\dom(\Bmf)$ is an \emph{$\ML(\varrho)$-bisimulation between $\Amf$ and $\Bmf$} if the following conditions hold:
\begin{itemize}
\item[--] if $(a,b) \in \bis$, then $a\in A^{\Amf}$ iff $b\in A^{\Bmf}$ for all $A\in \varrho$;

\item[--] if $(a,b) \in \bis$ and $(a,a')\in R^{\Amf}$, for $R\in \varrho$, then there exists $b'\in\dom(\Bmf)$ such that $(a',b')\in \bis$ and $(b,b')\in R^{\Bmf}$;
	
\item[--] if $(a,b) \in \bis$ and $(b,b')\in R^{\Bmf}$, for $R\in \varrho$, then there exists $a'\in\dom(\Amf)$ such that $(a',b')\in \bis$ and $(a,a')\in R^{\Amf}$. 
\end{itemize}
Further, an $\ML(\varrho)$-bisimulation $\bis$ between $\Amf$, $\Bmf$ is called an 
\begin{description}
\item[\rm\emph{$\MLi(\varrho)$-bisimulation}] if the second and third conditions also hold for $R = P^{-}$ with $P\in \varrho$;

\item[\rm\emph{$\MLn(\varrho)$-bisimulation}] if $(c^{\Amf},c^{\Bmf})\in \bis$, for each constant $c\in \varrho$;

\item[\rm\emph{$\MLu(\varrho)$-bisimulation}] if it is global.
\end{description}
$\MLiu(\varrho)$-, $\MLinu(\varrho)$-, etc.\ bisimulations are those $\bis$ that satisfy all of the conditions for the respective superscripts.
For a modal logic $\LS$, we write $\Amf,a \sim_{\LS(\varrho)} \Bmf,b$ if there is an $\LS(\varrho)$-bisimulation $\bis$ between $\Amf$ and $\Bmf$ such that $(a,b)\in \bis$.

The following characterisation is well-known; see, e.g.,~\cite{DBLP:books/el/07/GorankoO07,DBLP:books/daglib/p/Gradel014,Pratt23book} (see also~\cite{modeltheory} for a discussion of $\omega$-saturated structures).

\begin{restatable}{lemma}{lemguardedbisim}\label{lem:guardedbisim}
For any $\LS$, $\varrho \subseteq \sigma$, and any pointed $\sigma$-structures $\Amf,a$ and $\Bmf,b$,
$$
\Amf,a \sim_{\LS(\varrho)} \Bmf,b \quad \text{ implies } \quad
\Amf,a \equiv_{\LS(\varrho)} \Bmf,b
$$
and, conversely, if structures $\Amf$ and $\Bmf$ are $\omega$-saturated, then
$$
\Amf,a \equiv_{\LS(\varrho)} \Bmf,b \quad \text{ implies } \quad 
\Amf,a \sim_{\LS(\varrho)} \Bmf,b.
$$
\end{restatable}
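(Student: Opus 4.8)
The plan is to prove the two implications of Lemma~\ref{lem:guardedbisim} separately, treating the various modal logics $\LS \in \{\FOT, \ML, \MLi, \MLn, \MLu, \MLiu, \MLnu, \MLinu\}$ uniformly by induction on formula structure, with the extra bisimulation clauses matched to the extra syntactic constructs in the obvious way.

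\textbf{From bisimilarity to equivalence.} Suppose $\Amf,a \sim_{\LS(\varrho)} \Bmf,b$ via a bisimulation $\bis$ with $(a,b)\in\bis$. I would prove, by induction on the construction of an $\LS(\varrho)$-formula $\varphi(x)$, that for all $(a',b')\in\bis$ we have $\Amf\models\varphi(a')$ iff $\Bmf\models\varphi(b')$. The atomic cases ($\top$, propositional variables $A\in\varrho$, and nominals $x=c$ for $\MLn$) are immediate from the atomic/nominal clauses in the definition of $\bis$; the Boolean cases are routine. For $\Diamond_R\psi$ with $R\in\varrho$, the forth and back conditions of an $\ML(\varrho)$-bisimulation give precisely what is needed; for $\Diamond_{R^-}\psi$ one uses the additional $\MLi$ clause for $P^-$; for the universal diamond $\Du\psi$ one uses globality of $\bis$. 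In the $\FOT$ case, the clauses speak about partial $\varrho$-isomorphisms $(a,a')\mapsto(b,b')$, which is exactly what is required to evaluate a two-variable formula with one free variable by the standard Ehrenfeucht–Fra\"iss\'e-style argument: the two-variable restriction ensures that at each step we only ever need to track the pair consisting of the currently bound element and the single remaining element. This direction is essentially bookkeeping and I do not expect difficulty here.

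\textbf{From equivalence to bisimilarity, under $\omega$-saturation.} Now assume $\Amf$ and $\Bmf$ are $\omega$-saturated and $\Amf,a\equiv_{\LS(\varrho)}\Bmf,b$. I would define $\bis$ to be the relation of $\LS(\varrho)$-equivalence itself, i.e.\ $(a',b')\in\bis$ iff $\Amf,a'\equiv_{\LS(\varrho)}\Bmf,b'$ (for $\MLu$ one must further check this relation is non-empty and globally total, which follows since every element of $\Amf$ has \emph{some} $\LS(\varrho)$-type realised in $\Bmf$, again by $\omega$-saturation applied to the set of $\Du$-free consequences, or more directly because equality of full $\LS(\varrho)$-types is what $\omega$-saturation guarantees). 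The atomic and nominal clauses hold because propositional variables and $x=c$ are themselves $\LS(\varrho)$-formulas. The crux is the forth condition: given $(a',b')\in\bis$ and a successor $a''$ with $(a',a'')\in R^{\Amf}$, I must find $b''$ with $(b',b'')\in R^{\Bmf}$ and $\Amf,a''\equiv_{\LS(\varrho)}\Bmf,b''$. Here is where $\omega$-saturation does the work: consider the type $\Gamma(y) = \{R(x,y)\} \cup \{\theta(y)^{\ast} \mid \theta\in\LS(\varrho),\ \Amf\models\theta(a'')\}$ over the parameter $b'$ (in the $\FOT$ presentation, or the corresponding set of modal formulas required to hold at the successor). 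Every finite subset of $\Gamma$ is realised in $\Bmf$ at a successor of $b'$ --- otherwise a finite conjunction $\Diamond_R(\theta_1\wedge\cdots\wedge\theta_m)$ would be an $\LS(\varrho)$-formula true at $a'$ but false at $b'$, contradicting $(a',b')\in\bis$. By $\omega$-saturation, $\Gamma$ is realised by some $b''$, which is the witness we need; the back condition is symmetric, and the inverse-modality clauses are handled identically using $R^-$-successors.

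\textbf{Main obstacle.} The only genuinely delicate point is getting the $\FOT$ case exactly right: one has to be careful that ``partial $\varrho$-isomorphism on pairs'' really captures two-variable equivalence and that $\omega$-saturation of the \emph{full} first-order structure (not merely of some modal reduct) is what is invoked, so that finite subsets of the relevant type are first-order-definable with one parameter and hence realised. Once this is set up, the universal-modality and nominal variants slot in with minor extra checks (non-emptiness/globality, and that constants are matched), and everything else is the standard Hennessy–Milner argument. Since the lemma is explicitly flagged as well-known with references, I would present the proof at roughly this level of detail, citing~\cite{DBLP:books/el/07/GorankoO07,Pratt23book} for the routine parts and spelling out only the type-realisation step.
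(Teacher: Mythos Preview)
Your proposal is correct and follows the standard Hennessy--Milner argument; the paper itself does not give a proof but simply flags the lemma as well-known with citations to~\cite{DBLP:books/el/07/GorankoO07,DBLP:books/daglib/p/Gradel014,Pratt23book}, so your sketch is exactly the argument those references contain.
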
 

An $\LS(\varrho)$-\emph{separator} for $\L$-formulas $\varphi(x)$ and $\psi(x)$ is any $\LS(\varrho)$-formula $\chi(x)$ such that $\varphi(x) \models \chi(x)$ and $\chi(x) \models \neg \psi(x)$.
The next criterion is similar to Robinson's joint consistency theorem~\cite{Robinson1960}:

\begin{lemma}\label{criterion}
For any pair $\L/\LS$, any $\L$-formulas $\varphi(x)$, $\psi(x)$ and any $\varrho \subseteq \sigma$, the following conditions are equivalent\textup{:}
\begin{itemize}
\item[--] $\varphi(x)$ and $\psi(x)$ do not have an $\LS(\varrho)$-separator\textup{;}

\item[--] there are pointed $\sigma$-structures $\Amf,a$ and $\Bmf,b$ such that 
$$
\Amf\models \varphi(a), \quad \Bmf\models \psi(b), \quad \Amf,a \sim_{\LS(\varrho)} \Bmf,b.
$$
\end{itemize} 
\end{lemma}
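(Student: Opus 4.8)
The proof is a straightforward application of Lemma~\ref{lem:guardedbisim} together with a compactness/saturation argument, and splits into the two directions of the equivalence.

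For the easy direction, suppose there exist pointed $\sigma$-structures $\Amf,a$ and $\Bmf,b$ with $\Amf\models\varphi(a)$, $\Bmf\models\psi(b)$ and $\Amf,a\sim_{\LS(\varrho)}\Bmf,b$. Assume for contradiction that some $\LS(\varrho)$-formula $\chi(x)$ separates $\varphi$ and $\psi$, i.e.\ $\varphi\models\chi$ and $\chi\models\neg\psi$. From $\Amf\models\varphi(a)$ and $\varphi\models\chi$ we get $\Amf\models\chi(a)$. By the first implication of Lemma~\ref{lem:guardedbisim}, $\Amf,a\sim_{\LS(\varrho)}\Bmf,b$ yields $\Amf,a\equiv_{\LS(\varrho)}\Bmf,b$, so $\Bmf\models\chi(b)$; but then $\chi\models\neg\psi$ gives $\Bmf\models\neg\psi(b)$, contradicting $\Bmf\models\psi(b)$. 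Hence no separator exists.

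For the converse, suppose $\varphi(x)$ and $\psi(x)$ have no $\LS(\varrho)$-separator. Consider the set $\Gamma$ of all $\LS(\varrho)$-formulas $\chi(x)$ with $\varphi\models\chi$ (the ``$\LS(\varrho)$-consequences of $\varphi$''). I claim $\Gamma\cup\{\psi(x)\}$ is satisfiable: if not, then by compactness of $\FO$ (recall $\LS\subseteq\CT\subseteq\FO$) some finite $\chi_1,\dots,\chi_n\in\Gamma$ satisfy $\psi\models\neg(\chi_1\wedge\dots\wedge\chi_n)$, and then $\chi:=\chi_1\wedge\dots\wedge\chi_n$ is an $\LS(\varrho)$-formula with $\varphi\models\chi$ and $\chi\models\neg\psi$ — a separator, contradiction. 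So pick $\Bmf_0,b$ with $\Bmf_0\models\psi(b)$ and $\Bmf_0\models\chi(b)$ for every $\chi\in\Gamma$. Symmetrically, let $\Delta$ be the set of $\LS(\varrho)$-formulas true at $b$ in $\Bmf_0$; one shows $\Delta\cup\{\varphi(x)\}$ is satisfiable (otherwise, by compactness, finitely many $\delta_i\in\Delta$ give $\varphi\models\neg\bigwedge_i\delta_i$, so $\neg\bigwedge_i\delta_i\in\Gamma$, hence $\Bmf_0\models\neg\bigwedge_i\delta_i(b)$, contradicting $\delta_i\in\Delta$). Thus there is $\Amf_0,a$ with $\Amf_0\models\varphi(a)$ and $\Amf_0,a\equiv_{\LS(\varrho)}\Bmf_0,b$ (the latter holds because each side's $\LS(\varrho)$-theory at the point is contained in the other's, by construction of $\Gamma$ and $\Delta$). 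Finally, pass to $\omega$-saturated elementary extensions $\Amf\succeq\Amf_0$ and $\Bmf\succeq\Bmf_0$, which preserve $\Amf\models\varphi(a)$, $\Bmf\models\psi(b)$ and $\Amf,a\equiv_{\LS(\varrho)}\Bmf,b$; by the second implication of Lemma~\ref{lem:guardedbisim} we conclude $\Amf,a\sim_{\LS(\varrho)}\Bmf,b$, as required.

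The only subtle point — and the step I would be most careful about — is the use of compactness and $\omega$-saturation, which requires $\varrho$ and $\sigma$ to be handled correctly: $\Gamma$ must range over $\LS(\varrho)$-formulas (not $\LS(\sigma)$), so that a finite conjunction from $\Gamma$ is a legitimate candidate separator, while the structures $\Amf,\Bmf$ remain full $\sigma$-structures. If $\sigma$ is infinite one should note that only the finitely many symbols occurring in $\varphi,\psi$ matter, or invoke the standard fact that $\omega$-saturated models exist over arbitrary signatures. Everything else is a routine Robinson-style back-and-forth between the $\LS(\varrho)$-theories at the two points, mediated by Lemma~\ref{lem:guardedbisim}.
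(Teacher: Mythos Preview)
Your proposal is correct and follows exactly the standard Robinson-style argument that the paper has in mind: the paper's own proof merely states that it is ``in line with the characterisations of Craig interpolant existence in~\cite{DBLP:conf/lics/JungW21,DBLP:journals/tocl/ArtaleJMOW23}'', and what you have written is precisely that argument spelled out, using compactness to obtain $\LS(\varrho)$-equivalent pointed models and then passing to $\omega$-saturated elementary extensions to invoke the second half of Lemma~\ref{lem:guardedbisim}. One tiny clarification: in the step where you conclude $\Amf_0,a\equiv_{\LS(\varrho)}\Bmf_0,b$, the key point is that $\Delta$ is a \emph{complete} $\LS(\varrho)$-theory (it contains $\chi$ or $\neg\chi$ for each $\chi$), so $\Amf_0,a$ satisfying all of $\Delta$ forces equality of the two theories, not just one inclusion---your parenthetical ``each side's theory is contained in the other's'' is slightly loose on this, but the conclusion is right.
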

\begin{proof}
The proof is in line with the characterisations of Craig interpolant existence in~\cite{DBLP:conf/lics/JungW21,DBLP:journals/tocl/ArtaleJMOW23}.
\end{proof}

%\begin{corollary}\label{cor-def}
%An $\L$-formula $\varphi(x)$ is not $\LS$-definable iff there are pointed $\sigma$-structures $\Amf,a$ and $\Bmf,b$ such that 
%%
%$$
%\Amf\models \varphi(a), \quad \Bmf\models \neg\varphi(b), \quad \Amf,a \sim_{\LS(\sigma)} \Bmf,b.
%$$
%\end{corollary}

We next observe a few reductions between the decision problems from Sec.~\ref{intro}. The next lemma requires the notion of \emph{unary Craig $\L/\LS$-separation}, which lies between $\L/\LS$-separation and Craig $\L/\LS$-separation: an $\LS$-separator $\chi$ for $\varphi$ and $\psi$ is called a \emph{unary Craig $\L/\LS$-separator} if all unary $A \in \sig(\chi)$ occur in both $\varphi$ and $\psi$ (non-unary predicates may be arbitrary). In modal logic, where unary predicates are regarded as propositional variables, this notion has actually been frequently considered for Craig interpolation when one does not want to restrict the use of modal operators in interpolants. Also, unary Craig separation makes our undecidability and lower complexity bound proofs more transparent.

%We next observe a few reductions between the decision problems from Section~\ref{intro}. For technical reasons (undecidability and complexity lower bound proofs are much more transparent), we introduce \emph{unary Craig $\L/\LS$-separation} which lies between plain separation and Craig separation and in which one requires that the separator only uses common unary relation symbols (but arbitrary non-unary relation symbols). In modal logic (where unary relation symbols=propositional variables), this variant has actually been frequently considered for Craig interpolants when one does not want to put any constraints on the use of modal operators in interpolants).

\begin{restatable}{lemma}{lred}\label{l:red}
For any pair $\L/\LS$, 
\begin{itemize}
\item[--] $\L/\LS$-separation is polytime reducible to Craig $\L/\LS$-separation and to unary Craig $\L/\LS$-separation\textup{;}

\item[--] unary Craig $\L/\LS$-separation is polytime reducible to $\L/\LS$-separation.
\end{itemize}
\end{restatable}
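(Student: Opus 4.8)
The plan is to prove the three reductions separately, exploiting the fact that all three notions of separation differ only in which non-logical symbols an $\LS$-separator may mention.

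\textbf{From $\L/\LS$-separation to (unary) Craig $\L/\LS$-separation.} Given $\L$-formulas $\varphi(x)$ and $\psi(x)$, I would rename symbols so that the two formulas share exactly the symbols a Craig separator is allowed to use. Concretely, take disjoint copies of all non-logical symbols: replace every predicate and constant in $\varphi$ by a fresh primed copy to get $\widehat\varphi$, and leave $\psi$ as is. Then any $\LS$-separator $\chi$ for $\varphi,\psi$ gives, after priming its symbols, an $\LS$-separator $\widehat\chi$ for $\widehat\varphi,\psi$; but $\widehat\chi$ uses primed symbols that do not occur in $\psi$ at all, so this is \emph{not} yet a Craig separator. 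The correct move is the standard one: separability of $\varphi,\psi$ in $\LS$ is equivalent to the existence of \emph{some} $\LS$-formula in the common vocabulary separating them only after we arrange the common vocabulary to be empty or trivial — so instead I would argue via Lemma~\ref{criterion}. Using the bisimulation characterisation: $\varphi,\psi$ are $\LS(\sigma)$-separable iff there are no $\Amf,a\models\varphi(a)$, $\Bmf,b\models\psi(b)$ with $\Amf,a\sim_{\LS(\sigma)}\Bmf,b$. For unary Craig separation with the \emph{same} $\varphi,\psi$, one restricts the bisimulation vocabulary $\varrho$ to unary symbols common to $\varphi$ and $\psi$; to make this coincide with full $\LS(\sigma)$-separation, prime the unary symbols of $\varphi$ so that $\varphi$ and $\psi$ have \emph{no} common unary symbols, while keeping all binary symbols shared (binary symbols may be arbitrary in a unary Craig separator). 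After this renaming, a unary Craig $\LS$-separator for the renamed pair may use \emph{no} unary predicates but arbitrary binary ones and arbitrary modal structure — which is exactly an unrestricted $\LS$-separator for the renamed pair, and renaming back yields a separator for the original pair and conversely. The same primed-vocabulary trick, now also priming binary symbols and constants of $\varphi$, reduces $\L/\LS$-separation to full Craig $\L/\LS$-separation.

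\textbf{From unary Craig $\L/\LS$-separation to $\L/\LS$-separation.} Here the input is $\varphi(x),\psi(x)$ and we must decide the existence of an $\LS$-separator whose unary symbols all occur in both. Let $A_1,\dots,A_m$ be the unary symbols occurring in $\varphi$ but not in $\psi$, and $B_1,\dots,B_n$ those in $\psi$ but not in $\varphi$. I would ``close off'' these symbols on the relevant side: replace $\psi$ by $\psi^\sharp = \psi \wedge \bigwedge_i \Box_u(\neg A_i) \wedge \dots$ — more precisely, force each $A_i$ to be false everywhere reachable and, dually, conjoin to $\varphi$ a statement forcing each $B_j$ false everywhere. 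When $\LS$ contains the universal modality this is immediate with $\Du$; without it, since separators of modal logics without $\Du$ only see polynomially/boundedly deep into models (and, for the reduction, it suffices to control the symbols along paths a separator could witness), one instead uses the classical device of adding a fresh binary relation making the model ``universal'' from the point $x$, or argues directly with the bisimulation characterisation of Lemma~\ref{criterion}. The cleanest uniform argument is again semantic: an ordinary $\LS$-separator for $\varphi,\psi^\sharp$ cannot usefully mention $A_i$ (it is false on all of $\psi^\sharp$'s models wherever it matters) nor $B_j$ (false on $\varphi$'s), so modulo replacing such occurrences by $\bot$ we may assume it mentions only common unary symbols; conversely a unary Craig separator for $\varphi,\psi$ is already a separator for $\varphi,\psi^\sharp$. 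All constructions are clearly polynomial-time on DAG representations.

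\textbf{Main obstacle.} The delicate point is the second reduction for logics \emph{without} the universal modality: one cannot literally say ``$A_i$ is false everywhere'' with bounded modal depth, so forcing the non-shared symbols to be inert requires care. I expect to handle this by working through Lemma~\ref{criterion} rather than syntactically: a unary Craig $\LS$-separator fails to exist iff there are $\LS(\varrho)$-bisimilar pointed models of $\varphi$ and $\psi$ where $\varrho$ is the shared unary vocabulary together with all binary symbols and constants allowed for the logic $\LS$; and one shows this is equivalent to the ordinary non-separation condition for a modified pair $\varphi,\psi^\sharp$ obtained by conjoining ``reachability-closed'' constraints that pin down the non-shared symbols on the relevant side — the constraints being expressible because non-separation only ever exhibits witnesses within the (bounded, in the $\Du$-free case) reachable part of a model. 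Getting the modification to be both polynomial and semantically exact across all the logics in Table~\ref{table:results} simultaneously is the part that needs the most bookkeeping, but it is routine given the bisimulation machinery already set up.
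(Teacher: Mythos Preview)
Your first reduction goes in the wrong direction. You propose to prime symbols so that $\varphi$ and $\psi$ share \emph{no} unary predicates, and then claim that a unary Craig separator for the primed pair (which can therefore use no unary predicates at all) is ``exactly an unrestricted $\LS$-separator''. That is false: an unrestricted separator may freely use unary predicates. Concretely, take $\varphi = A$ and $\psi = \neg A$; the formula $A$ separates them, but after priming $\varphi$ to $A'$ there is no separator for $A'$ and $\neg A$ built only from binary predicates. The paper's move is the opposite one, and it is trivial: conjoin tautologies such as $A(x)\to A(x)$ and $R(x,x)\to R(x,x)$ to each of $\varphi,\psi$ so that $\sig(\varphi')=\sig(\psi')=\sig(\varphi)\cup\sig(\psi)$. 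Then every symbol is shared, so Craig (and unary Craig) $\LS$-separation of $\varphi',\psi'$ coincides with ordinary $\LS$-separation of $\varphi,\psi$.

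The second reduction is where the real content lies, and your approach does not work either. Forcing the non-shared $A_i$ to be false in models of $\psi^\sharp$ does not make ordinary separation of $\varphi,\psi^\sharp$ match unary Craig separation of $\varphi,\psi$: in the bisimulation criterion for ordinary separation the bisimulation must respect $A_i$, so you are imposing that $A_i$ be (essentially) empty on the $\Amf$-side as well, which is strictly stronger than merely ignoring $A_i$. The paper's idea is different and essentially uses that $\L$ has counting while $\LS$ does not: after relativising both formulas to a fresh unary $U$, each non-shared $A_i$ is \emph{replaced} by a fresh $\L$-definable predicate $\varphi_{A_i}$ (an $R$-chain of length $i$ ending in a node with exactly two $R$-successors, using a fresh binary $R$). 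These gadgets can be attached to any node independently, so one can realise any desired pattern of the $A_i$; yet $\LS$ cannot distinguish ``exactly two successors'' from ``exactly three'', so the gadgets are mutually $\LS$-bisimilar regardless of which $A_i$ they encode. This is what makes the full-vocabulary bisimulation for the translated pair match the restricted-vocabulary bisimulation for the original pair. Your proposal lacks this use of the $\L/\LS$ expressivity gap, and without it the reduction cannot go through.
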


While the former reduction is easy, the proof of the latter relies on Lemma~\ref{criterion}. Essentially, one replaces non-shared unary predicates in $\varphi$ and $\psi$ by compound counting formulas over shared symbols. Note also that, for all $\L/\LS$, we prove exactly the same (mostly tight) complexity bounds for plain separation and Craig separation. However, we could not find a `black-box' polytime reduction of Craig separation to separation.  

Finally, we lift the well-established link between Craig interpolation and (projective) Beth definability~\cite{craig_1957,Bet56} to a link between Craig $\L/\LS$-separation and an appropriate notion of $\L/\LS$-definability. 
\begin{description}	
\item[relative $\L/\LS$-definability:] given a signature $\varrho$ and two $\L$-formulas $\varphi$ and $\psi$,  decide whether there exists an \emph{explicit $\LS(\varrho)$-definition of $\psi$ modulo} $\varphi$, i.e., an $\LS(\varrho)$-formula $\delta$ such that $\varphi \models \delta \leftrightarrow \psi$.
\end{description}

\begin{restatable}{lemma}{thmequivdef}
\label{thm:equivdef}
For any pair $\L/\LS$, Craig $\L/\LS$-separation and relative $\L/\LS$-definability are polytime reducible to each other. 
\end{restatable}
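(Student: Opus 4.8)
The plan is to mirror the classical equivalence between Craig interpolation and projective Beth definability, adapted to the ``separation'' setting where the weaker logic $\LS$ restricts the allowed constructors rather than just the signature. We give two polytime reductions.

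\textit{From relative $\L/\LS$-definability to Craig $\L/\LS$-separation.} Given $\varrho$ and $\L$-formulas $\varphi(x)$, $\psi(x)$, take a signature copy: let $\widetilde{\varphi}(x)$ be $\varphi(x)$ with every non-logical symbol $Q \notin \varrho$ replaced by a fresh disjoint copy $\widetilde{Q}$, and set $\varphi_1(x) = \varphi(x) \wedge \psi(x)$ and $\varphi_2(x) = \widetilde\varphi(x) \wedge \neg\widetilde\psi(x)$. The common non-logical symbols of $\varphi_1$ and $\varphi_2$ are exactly those in $\varrho$ (plus those of $\sig(\psi)\cap\varrho$, which are in $\varrho$ anyway), so a Craig $\L/\LS$-separator $\chi$ for $\varphi_1,\varphi_2$ is an $\LS(\varrho)$-formula with $\varphi\wedge\psi \models \chi$ and $\chi \models \neg(\widetilde\varphi \wedge \neg\widetilde\psi)$, i.e.\ (undoing the renaming) $\varphi \models (\psi \to \chi)$ and $\varphi \models (\chi \to \psi)$, so $\chi$ is an explicit $\LS(\varrho)$-definition of $\psi$ modulo $\varphi$. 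Conversely any such $\delta$ separates $\varphi_1$ and $\varphi_2$. The renaming is computable in polytime on the DAG representation.

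\textit{From Craig $\L/\LS$-separation to relative $\L/\LS$-definability.} Given $\L$-formulas $\varphi(x)$, $\psi(x)$, let $\varrho = \sig(\varphi)\cap\sig(\psi)$ be the shared symbols and introduce a single fresh unary predicate $P$. Put $\Phi(x) = \big(P(x) \wedge \exists\bar x\,(P(\bar x)\wedge \varphi_{\bar x})\big)$-style bookkeeping; more simply, take $\Phi = \big(\varphi(x) \to P(x)\big) \wedge \big(\psi(x) \to \neg P(x)\big)$ suitably relativised, so that $\Phi$ forces $P$ to hold at every $\varphi$-point and to fail at every $\psi$-point. Then an explicit $\LS(\varrho)$-definition $\delta$ of $P$ modulo $\Phi$ satisfies $\Phi \models \delta \leftrightarrow P$, hence $\varphi \models \delta$ and $\psi \models \neg\delta$, and $\delta$ uses only shared symbols, i.e.\ $\delta$ is a Craig $\L/\LS$-separator; conversely any Craig separator $\chi$ yields $\Phi \models \chi \leftrightarrow P$. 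Care is needed so that $\Phi$ is itself an $\L$-formula (not merely first-order): since $\L$ is one of our modal or two-variable logics, the relativisation of ``$P$ tracks $\varphi$/$\psi$'' must be expressed using only the constructors of $\L$; for the modal logics this is done with the universal modality when available and, in its absence, by building $\Phi$ as a conjunction along the relevant modal depth — here we may instead reduce to Craig separation of $\varphi(x)\wedge P(x)$ and $\psi(x)\wedge\neg P(x)$ directly, which is the dual of the first reduction and makes the equivalence immediate via Lemma~\ref{criterion}.

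\textit{Main obstacle.} The delicate point is the second direction: encoding ``the fresh predicate $P$ is semantically determined by $\varphi,\psi$'' as a premise $\Phi$ that stays inside $\L$ rather than full \FO, uniformly across all the logics $\L$ in Table~\ref{table:results} (with and without universal modality, inverse, nominals). I expect the clean fix is to avoid an explicit $\Phi$ altogether and argue purely model-theoretically through Lemma~\ref{criterion}: $\varphi(x)$ and $\psi(x)$ have no Craig $\LS$-separator iff there are pointed $\sigma$-structures $\Amf,a \models \varphi(a)$ and $\Bmf,b\models\psi(b)$ with $\Amf,a \sim_{\LS(\varrho)}\Bmf,b$ where $\varrho = \sig(\varphi)\cap\sig(\psi)$, and relative $\L/\LS$-definability of $\psi$ modulo $\varphi$ fails iff there are $\varrho$-bisimilar pointed models of $\varphi$ agreeing on everything except $\psi$ — these two conditions are interreducible by taking disjoint unions / amalgams of the witnessing structures, which is where the correspondence genuinely lives. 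The syntactic reductions above then just repackage this, and the polytime bound is clear since only a constant number of fresh symbols and a linear renaming are involved.
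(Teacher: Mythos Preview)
Your first reduction (relative definability to Craig separation) is correct and is exactly the paper's: rename the non-$\varrho$ symbols to fresh copies and ask for a Craig separator of $\varphi\wedge\psi$ and $\varphi^\varrho\wedge\neg\psi^\varrho$.

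The second reduction has a genuine gap. The pointwise premise $\Phi(x)=(\varphi(x)\to P(x))\wedge(\psi(x)\to\neg P(x))$ does \emph{not} give the equivalence you claim, even before worrying about expressing it in $\L$. Take $\varphi=A\wedge B$, $\psi=\neg A$, so $\varrho=\{A\}$ and $\chi=A$ is a Craig $\LS$-separator. But $P$ has no $\LS(\{A\})$-definition modulo $\Phi$: at a one-point structure where $A$ holds and $B$ fails, both conjuncts of $\Phi$ are vacuous, so $P$ is unconstrained; two such structures, identical on $\{A\}$ but differing on $P$, witness non-definability. Your fallback---``argue model-theoretically via Lemma~\ref{criterion}'' and ``disjoint unions/amalgams''---sketches a semantic correspondence between the \emph{failure} conditions but never produces a polytime transformation of the input formulas, and the closing suggestion to ``reduce to Craig separation of $\varphi\wedge P$ and $\psi\wedge\neg P$'' points in the wrong direction.

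The paper's fix avoids a fresh predicate altogether. Writing the separation instance as $\varphi$ versus $\neg\psi$, it shows that a Craig $\LS$-separator exists iff $\models\varphi\to\psi$ \emph{and} $\psi$ has an explicit $\LS(\sig(\varphi)\cap\sig(\psi))$-definition modulo the premise $\psi\to\varphi$. The converse implication as premise is the trick: under $\psi\to\varphi$, any separator $\chi$ (with $\varphi\models\chi\models\psi$) gives $\psi\to\varphi\to\chi$ and $\chi\to\psi$, hence defines $\psi$; for the other direction one uses the bisimulation criterion for relative definability (the analogue of Lemma~\ref{criterion}). The residual validity check $\models\varphi\to\psi$ reduces separately to relative definability (a fresh atom is definable over $\varrho=\emptyset$ modulo an unsatisfiable premise).
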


Using a standard trick from, e.g.,~\cite{MGabbay2005-MGAIAD}, the proof shows how the construction of separators can be used to construct definitions and the other way round. 

%In view of  Theorem~\ref{thm:equivdef}, we now focus on investigating separation and do not explicitly mention the consequences for relative definability.

%The proof is constructive in the sense that the construction of separators and definitions is polytime reduced to each other. Motivated by Theorem~\ref{thm:equivdef}, in what follows we focus on investigating separation and do not explictly mention the consequences for relative definability.

%***************
\section{Separation in $\CT$ and $\GMLinu$ is Undecidable}\label{sec:undec}
%TODO for integration in main version:
% - define \MLn, \MLni, \GMLin macros
% - define \Reg macro
% - formally introduce $\square$ and $\blacksquare$ as abbreviations

Our first result is negative. It shows that separation of $\CT$- and $\GMLinu$-formulas by means of (nearly all) smaller counting-free languages in Table~\ref{table:results} is undecidable:

\begin{restatable}{theorem}{undecidability}
\label{thm:RE}
The $\L/\LS$-separation and Craig $\L/\LS$-separation problems are RE-complete \textup{(}and so undecidable$)$ for
\begin{itemize}
\item[--] $\L = \CT$ and any counting-free $\LS$ in Table~\ref{table:results}, and also for

\item[--] $\L = \{\GMLinu, \GMLin\}$ and any $\LS = \ML\!^{\mathsf{e}}$ in Table~\ref{table:results} such that $\mathsf{i} \in \mathsf{e} \subseteq \{\mathsf{i}, \mathsf{n}, \mathsf{u}\}$.
\end{itemize}
\end{restatable}

We first explain the idea of the proof for $\GMLinu/\MLinu$. We encode the non-halting problem for Minsky (aka counter or register) machines in terms of Craig-separator  non-existence. A \emph{Minsky machine $M$ with two registers} $\reg_0$, $\reg_1$ (2RM, for short) is given by a set $Q = \{q_0, \dots, q_m\}$ of states with initial $q_0$ and final $q_m$, and instructions $I_i$, $i < m$, that can be of two forms: $I_i = +(\ell,q_j)$ and $I_i = -(\ell, q_j, q_l)$, for $\ell \in \{0,1\}$. The former says that, when in state $q_i$, $M$ increments register $\reg_\ell$ by one and moves to state $q_j$. The latter means: in state $q_i$, if $\reg_\ell > 0$, $M$ decrements $\reg_\ell$ by one and moves to state $q_l$; if $\reg_\ell = 0$, $M$ moves to state $q_j$. 

The \emph{halting problem} is to decide, given a 2RM $M$, whether starting in $q_0$ with $\reg_0 = \reg_1= 0$, $M$ will reach the final state $q_m$. This problem is RE-complete~\cite{minsky_67}. We assume without loss of generality that $q_0$ does not occur in $M$'s instructions and that $I_0$ is a decrement instruction.

Given such an $M$, we define two $\GMLinu$-formulas $\varphi_M$, $\psi_M$. Intuitively, $\varphi_M$ describes $M$'s successive  states, $\psi_M$ captures the content of the registers, and the bisimulation from the $\GMLinu/\MLinu$-separation criterion (Lemma~\ref{criterion}) makes the two sequences match, thereby simulating the computation of $M$.
The shared signature, $\varrho$, of $\varphi_M$ and $\psi_M$ comprises propositional variables $q_i \in Q$ and $e_\ell$, $\ell = 0,1$, saying that $\reg_\ell = 0$, a binary predicate $R$ for computation steps, and a constant $c$ for the starting point of the computation. Additionally, $\psi_M$ uses (non-shared) propositional variables $r_0$ and $r_1 \equiv \neg r_0$ for encoding the registers and auxiliary $u$ whose purpose is explained later.

To illustrate, suppose $M$ contains (among others) instructions $I_0 = -(0,q_1,q_1)$, $I_1 = +(0,q_4)$ and $I_4=-(0,q_2,q_4)$. The intended structures $\Amf,a \models \varphi_M$ and $\Bmf,b \models \psi_M$ look like in the picture below. The formula $\varphi_M$ generates an infinite $R$-chain\\ 
\centerline{
\begin{tikzpicture}[>=latex,line width=0.4pt,xscale = 1,yscale = 1]
%\node at (-1,0) {$\mathfrak M$};
\node[point,scale = 0.4,label=below:{\footnotesize $\Amf, a$},label=left:{\footnotesize $q_0$},label=right:{\footnotesize $N_c,e_0,e_1$}] (a0) at (0,0) {};
\node[point,scale = 0.4,label=left:{\footnotesize $q_1$}] (a1) at (0,1) {};
\node[point,scale = 0.4,label=left:{\footnotesize $q_4$}] (a2) at (0,2) {};
\node[point,scale = 0.4,label=left:{\footnotesize $q_4$},label=right:{\footnotesize $e_0$}] (a3) at (0,3) {};
\node[point,scale = 0.4,label=left:{\footnotesize $q_2$}] (a4) at (0,4) {};
\node[] at (0,4.5) {$\vdots$};
\draw[->,left] (a0) to node[] {\scriptsize $R$} (a1);
\draw[->,left] (a1) to node[] {\scriptsize $R$} (a2);
\draw[->,left] (a2) to node[] {\scriptsize $R$} (a3);
\draw[->,left] (a3) to node[] {\scriptsize $R$} (a4);
%
%\node[point,scale = 0.6,label=left:{\footnotesize $y'$},label=right:{\footnotesize $q$}] (p1) at (2.5,.4) {};
%\node[point,fill=black,scale = 0.6,label=left:{\footnotesize $y$},label=right:{\footnotesize $p$}] (p0) at (2.5,-.4) {};
%\draw[->] (p0) to (p1);
%
\node[point,scale = 0.4,label=below:{\footnotesize $\Bmf,b$},label=left:{\footnotesize $q_0,r_1$},label=right:{\footnotesize $N_c,e_0,e_1$}] (b0) at (4,0) {};
\node[point,scale = 0.4,label=left:{\footnotesize $q_1,r_0$},label=right:{\footnotesize $u$}] (l1) at (3,1) {};
\node[point,scale = 0.4,label=right:{\footnotesize $q_1,r_1$},label=left:{\footnotesize $u$}] (r1) at (5,1) {};
\node[point,scale = 0.4,label=left:{\footnotesize $q_4,r_0$},label=right:{\footnotesize $u$}] (l21) at (2,2) {};
\node[point,scale = 0.4,label=left:{\footnotesize $q_4,r_0$}] (l22) at (4,2) {};
\node[point,scale = 0.4,label=right:{\footnotesize $q_4,r_1$},label=left:{\footnotesize $u$}] (r2) at (5,2) {};
\node[point,scale = 0.4,label=right:{\footnotesize $q_4,r_1$},label=left:{\footnotesize $e_0,u$}] (r3) at (5,3) {};
\node[point,scale = 0.4,label=right:{\footnotesize $q_2,r_1$},label=left:{\footnotesize $u$}] (r4) at (5,4) {};
\node[point,scale = 0.4,label=left:{\footnotesize $q_4,r_0$},label=right:{\footnotesize $e_0,u$}] (l3) at (3,3) {};
\node[point,scale = 0.4,label=left:{\footnotesize $q_2,r_0$},label=right:{\footnotesize $u$}] (l4) at (3,4) {};
\draw[->,left] (b0) to node[] {\scriptsize $R$} (l1);
\draw[->,left] (b0) to node[] {\scriptsize $R$} (r1);
\draw[->,left] (l1) to node[] {\scriptsize $R$} (l21);
\draw[->,left] (l1) to node[] {\scriptsize $R$} (l22);
\draw[->,left] (r1) to node[] {\scriptsize $R$} (r2);
\draw[->,left] (r2) to node[] {\scriptsize $R$} (r3);
\draw[->,left] (r3) to node[] {\scriptsize $R$} (r4);
\draw[->,left] (l21) to node[] {\scriptsize $R$} (l3);
\draw[->,left] (l22) to node[] {\scriptsize $R$} (l3);
\draw[->,left] (l3) to node[] {\scriptsize $R$} (l4);
\node[] at (3,4.5) {$\vdots$};
\node[] at (5,4.5) {$\vdots$};
\end{tikzpicture}
}\\
starting from $a$, each of whose points satisfies exactly one of the  $q_i \in Q$. The order of these $q_i$ follows the order of states in the computation of $M$, without ever reaching $q_m$. The variables $e_\ell$ are supposed to indicate the computation steps where the current instruction is to decrement $\reg_\ell = 0$ (their behaviour depends on both $\varphi_M$ and $\psi_M$). Syntactically, 
\begin{align*}
\varphi_M =~ & N_c \wedge q_0 \land \blacksquare \bigvee_{i \leq m} q_i \land 
\blacksquare\bigwedge_{i \leq m}(q_i\rightarrow \bigwedge_{i<j\leq m}\neg q_j) \land{} \\
%\end{align*}
%\begin{align*}
& \square_{R^-}\bot \land \blacksquare \Diamond\!^{= 1}_{R}\top \land 
\blacksquare(\neg N_c\rightarrow \Diamond\!^{= 1}_{R^-}\top) \land \blacksquare\neg q_m \land{}\\
& \bigwedge_{I_i = +(\ell,q_j)} \blacksquare (q_i\to \square_R q_j)  \land{}\\
& \bigwedge_{I_i = -(\ell,q_j,q_k)} \hspace*{-3mm} \blacksquare((q_i\wedge e_\ell)\rightarrow \square_R q_j) \land \blacksquare((q_i\wedge \neg e_\ell)\rightarrow \square_R q_k) .
\end{align*} 
The formula $\psi_M$ simulates the behaviour of the registers $\reg_\ell$ by generating an $R$-directed graph with root $b$ whose  nodes of \emph{depth} $k \ge 0$ (that is, located at distance $k$ from $b$) represent the value of $\reg_\ell$ at step $k$ of $M$'s computation in the sense that this value is $n$ iff the number of depth $k$ nodes that make $r_\ell$ true is $2^n$. We explain the intuition behind $\psi_M$ by dividing it into four groups of conjuncts. The first one comprises the same conjuncts as the first line of the definition of $\varphi_M$. 

The second group consists of the conjuncts
\begin{align*}
& e_0 \land e_1 \land \Diamond\!_R^{=2}\top \land \Diamond_R(u \wedge r_0) \wedge \Diamond_R (u \wedge r_1) \land{}\\
& \bigwedge_{\ell = 0,1} \blacksquare \big[r_\ell \rightarrow \big(\square_R r_\ell \wedge \square_{R^{-1}}(r_\ell \vee N_c)\big)\big] \land{}\\
& \blacksquare \big((u\wedge \neg N_c)\rightarrow (\Diamond\!_R^{=1} u \wedge\Diamond\!_{R^{-1}}^{=1} u) \big).
\end{align*}
It generates one $r_0$-node and one $r_1$-node of depth 1, makes sure that all $R$-successors and all non-root $R$-predecessors of an $r_\ell$-node are also $r_\ell$-nodes, and singles out one branch of $r_\ell$-nodes, for each $\ell = 0,1$, by making the auxiliary variable $u$ true at its nodes. Thus, for each $k \ge 1$, $u$ is true at exactly one $r_\ell$-node of depth $k$. This fact will be used to handle the variables $e_\ell$ that are needed for checking whether $\reg_\ell = 0$.

The third group simulates the instructions using the following  conjuncts (see the picture above for an illustration): 
\begin{align*}
& \bigwedge_{I_i = +(\ell,q_j)} \hspace*{-2mm} \blacksquare \big[ \big( (q_i\wedge r_\ell) \rightarrow (\Diamond\!_R^{=2}\top \wedge \square_R\Diamond\!_{R^{-1}}^{=1}\top) \big) \land{}\\[-7pt]
& \hspace*{3cm} \big( (q_i\wedge \neg r_\ell )\rightarrow (\Diamond\!_R^{=1}\top \wedge \square_R\Diamond\!_{R^{-1}}^{=1}\top) \big)\big] \land{}%\\
\end{align*}
\begin{align*}
& \bigwedge_{\substack{i \ne 0\\I_i = -(\ell,q_j,q_l)}} \hspace*{-3mm} \blacksquare \big[ \big((q_i \wedge r_\ell \wedge e_\ell) \rightarrow (\Diamond\!_R^{=1}\top \wedge \square_R\Diamond\!_{R^{-1}}^{=1}\top) \big) \land{}\\[-11pt]
& \hspace*{2.3cm} \big( (q_i \wedge r_\ell \wedge \neg e_\ell) \rightarrow (\Diamond\!_R^{=1}\top \wedge \square_R\Diamond\!_{R^{-1}}^{=2}\top) \big) \land{}\\
& \hspace*{3.3cm} \big( (q_i \wedge \neg r_\ell) \rightarrow (\Diamond\!_R^{=1}\top \wedge \square_R\Diamond\!_{R^{-1}}^{=1}\top) \big) \big].
\end{align*}
For the increment instructions $I_i = +(\ell,q_j)$, each node of depth $k$ making $q_i$ and $r_\ell$ true has two $R$-successors, each of which has one $R$-predecessor. Each $\neg r_\ell$-node of depth $k$ has one $R$-successor. This doubles the number of $r_\ell$-nodes of depth $k$ while keeping the number of $\neg r_\ell$-nodes unchanged as required. 
For the decrement instructions $I_i = -(\ell,q_j,q_l)$, we use the variables $e_\ell$ to check if $\reg_\ell = 0$. If this is the case, each $r_\ell$-node of depth $k$ has one $R$-successor that has one $R$-predecessor, which keeps $\reg_\ell = 0$ at step $k+1$. Otherwise, each $r_\ell$-node of depth $k$ has a single $R$-successor that has two $R$-predecessors, which halves the number of such nodes.
The nodes of depth $k$ in the other register (i.e., those satisfying $\neg r_\ell$) have exactly one $R$-successor that has exactly one $R$-predecessor, thereby maintaining the value of the register.
Decrement instruction $I_0$ is treated separately by the second group that creates registers $\reg_\ell = 0$, $\ell = 0,1$, at step (depth) 1. 

%It is for this reason we assume that $I_0$ is decrementing and only occurs initially. 

Finally, we require the conjunct $\blacksquare \big( (e_\ell \land r_\ell) \rightarrow u \big)$ to make sure that, if $\Amf \models \varphi_M(a)$, $\Bmf \models \psi_M(b)$, and $\Amf, a \sim_{\MLinu(\varrho)} \Bmf, b$, then $e_\ell$ is true at nodes representing the computation steps with a decrement instruction and $\reg_\ell = 0$.

It is straightforward to see that, for a non-halting $M$,  the intended  pointed structures $\Amf,a$ and $\Bmf,b$ described above are such that $\Amf \models \varphi_M(a)$, $\Bmf \models \psi_M(b)$, and $\Amf,a \sim_{\MLinu(\varrho)} \Bmf,b$, with a $\MLinu(\varrho)$-bisimulation $\bis$ obtained by taking $(a',b') \in \bis$ iff $a'$ and $b'$ are at the same distance $k < \omega$ from $a$ in $\Amf$ and, respectively, $b$ in $\Bmf$. It follows that $\varphi_M$ and $\psi_M$ are not $\MLinu(\varrho)$-separable. 
Conversely, we prove in the appendix that if there are $\Amf,a$ and $\Bmf,b$ with $\Amf \models \varphi_M(a)$, $\Bmf \models \psi_M(b)$, and $\Amf,a \sim_{\MLinu(\varrho)} \Bmf,b$, then $M$ is non-halting. In fact, the proof shows that, in this case, $\Amf$ and $\Bmf$ must contain the intended structures as bisimilar sub-structures. 

By Lemma~\ref{criterion}, it follows that Craig $\GMLinu/\MLinu$-separation is RE-complete and undecidable. This is also true of unary Craig $\GMLinu/\MLinu$-separation because $\varphi_M$ and $\psi_M$ have the same binary predicates. And using the second claim of Lemma~\ref{l:red}, we obtain that $\GMLinu/\MLinu$-separation is RE-complete. 
We can adapt the construction above to prove Theorem~\ref{thm:RE} for $\Du$- and $\blacksquare$-free $\GMLin /\MLin$ using the \emph{spypoint} technique~\cite{DBLP:journals/jsyml/ArecesBM01}; see the appendix for details. To deal with the nominal-free $\LS$, we can simulate $N_c$ by adding to $\varphi_M$ the conjunct $N_c \leftrightarrow p$ with a fresh propositional variable $p$ and replace $N_c$ in $\psi_M$ with $p$, making it shared in $\varphi_M$ and $\psi_M$. 

Finally, the first claim of Theorem~\ref{thm:RE} is proved similarly to the graded modal case, with some minor modifications required to take care of $\FOT(\varrho)$-bisimulations, which are more restricted than $\MLinu(\varrho)$-bisimulations; see the appendix. 

%Let $\varphi_M$ and $\psi_M$ be as defined above, except that we replace $\blacksquare$ by $\square_U$ and add the following additional conjuncts to both of the formulas:
%%
%\begin{multline*}
%\Diamond\!_U N_c \land \square_U (\square_R\Diamond_{U^{-1}}N_c \wedge \square_{R^{-1}}\Diamond_{U^{-1}}N_c) \land{}\\ 
%\square_U (\square_S\Diamond_{U^{-1}}N_c \wedge \square_{S^{-1}}\Diamond_{U^{-1}}N_c). 
%\end{multline*}
%%
%These additional conjuncts guarantee that while $U$ is not necessarily interpreted as the universal relation, all relevant nodes are $U$-successors of $c$, and hence all conjuncts that start with $\square_U$ have the desired effect. 

The next two sections establish the positive tight complexity results for (Craig) separation in Table~\ref{table:results}, starting with the technically simpler case of graded modal logics without inverses.

%********

\section{Separation in Fragments of $\GMLnu$}\label{sec:gmlu}

The aim of this section is to prove the following:

\begin{restatable}{theorem}{complexGMLnu}
\label{thm:gmlu}
Separation and Craig-separation are
\begin{description}
\item[\it\textup{$2\ExpTime$}-complete] for the pairs $\GMLnu/\MLnu$, $\GMLnu/\MLn$, $\GMLnu/\MLu$, $\GMLu/\MLu$, and $\GMLu/\ML$\textup{;}

\item[\it\textup{$\coNExpTime$}-complete] for $\GMLn/\MLn$ and $\GML/\ML$. 
\end{description}
\end{restatable}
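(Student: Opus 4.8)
The strategy throughout is to combine the bisimulation criterion for non-separation (Lemma~\ref{criterion}) with the reductions of Lemma~\ref{l:red}. Since $\L/\LS$-separation reduces in polynomial time to Craig $\L/\LS$-separation, and unary Craig $\L/\LS$-separation is polytime-equivalent to plain $\L/\LS$-separation, it suffices to establish the upper bounds for Craig separation and the lower bounds for plain separation. For the upper bounds we fix $\varrho = \sig(\varphi) \cap \sig(\psi)$, so that by Lemma~\ref{criterion} the question whether $\varphi,\psi$ have no $\LS$-Craig-separator becomes: do there exist pointed $\sigma$-structures $\Amf,a \models \varphi$ and $\Bmf,b \models \psi$ with $\Amf,a \sim_{\LS(\varrho)} \Bmf,b$? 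Everything then hinges on a single dichotomy: whether the universal modality occurs in the pair $\L/\LS$, i.e.\ whether the $\LS(\varrho)$-bisimulation must be global, or at least whether $\Du$ still occurs in $\varphi$ and $\psi$.

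\emph{The $\coNExpTime$ cases ($\GML/\ML$ and $\GMLn/\MLn$).} Here $\Du$ occurs nowhere, and we prove a single-exponential model property for the bisimilar witnesses. Starting from any witnesses $\Amf,a$, $\Bmf,b$ and the maximal $\ML(\varrho)$-bisimulation between them, we would unravel $\Amf$ and $\Bmf$ into tree-shaped structures (trees rooted at the named points, in the $\MLn$ case), which preserves $\GML$-truth at the roots and, after passing to the level-respecting refinement, the bisimulation; then truncate both at depth $d = \max(\md(\varphi),\md(\psi))$, which is sound precisely because, with no $\Du$ around, formulas of modal depth $\le d$ do not look past level $d$ and the bisimulation has no back-and-forth obligations at the leaves; and finally, top-down level by level, replace the successor set of each surviving node $u$ of $\tilde\Amf$ by a set $S_u$ of size at most $|\sub(\varphi)| \cdot \mx + |\sub(\psi)| \cdot \mx$ that (i) still realises every demand $\Diamond^{\ge k}_R \alpha$ true at $u$ and (ii) realises every $\ML(\varrho)$-bisimulation class realised among the successors of $u$'s companion node in $\tilde\Bmf$ (and symmetrically on the $\Bmf$-side). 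Since the bisimulation guarantees that $u$ and its companion realise the same classes among their successors, such $S_u$ exist and the restricted relation remains an $\ML(\varrho)$- (resp.\ $\MLn(\varrho)$-) bisimulation. As $\mx \le 2^{O(|\varphi|+|\psi|)}$ and $d$ is polynomial, the resulting $\Amf',\Bmf'$ and the bisimulation all have size $2^{O((|\varphi|+|\psi|)^2)}$; guessing them and verifying $\Amf' \models \varphi(a')$, $\Bmf' \models \psi(b')$ and the bisimulation conditions takes exponential time, so non-separation is in $\NExpTime$ and (Craig, hence unary Craig and plain) separation is in $\coNExpTime$. For the matching lower bound we would reduce the $\NExpTime$-complete $2^n \times 2^n$ tiling problem to non-separability: $\varphi$ encodes a candidate tiling with cell addresses given by binary-coded graded modalities, $\psi$ the horizontal and vertical compatibility constraints, and the required $\ML(\varrho)$-bisimulation forces the two models to agree on the tiling, so that a correct tiling exists iff $\varphi,\psi$ are not $\ML$-separable; this is a bounded analogue of the Minsky-machine construction of Sec.\,\ref{sec:undec}.

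\emph{The $2\ExpTime$ cases.} These are the technical heart of the theorem. Now $\Du$ is present, either in the target (so the $\LS(\varrho)$-bisimulation must be global, as for $\GMLnu/\MLnu$, $\GMLnu/\MLu$ and $\GMLu/\MLu$) or only in the source (as for $\GMLnu/\MLn$ and $\GMLu/\ML$); in either case truncation to polynomial depth is unsound, because constraints under $\Du$ and $\blacksquare$ can cascade along $R$-chains -- a binary counter forces the relevant unravelled bisimilar structures to branch exponentially to exponential depth, i.e.\ to be doubly exponential. Rather than build them, we would use a mosaic technique: a mosaic records a pair consisting of a local $\GML$-type over $\sub(\varphi)$ and one over $\sub(\psi)$ that agree on the $\varrho$-atoms, together with their $R$-successor demands and the data needed to synchronise the $\ML(\varrho)$-parts. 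The algorithm would first guess the ``global part'' (the set of $\Du$-subformulas of $\varphi$ and $\psi$ that hold model-wide) and then check that the mosaics compatible with this choice can be assembled into a coherent family which (a) realises $(\varphi,\psi)$ at a distinguished mosaic, (b) is closed under providing witnesses for all successor demands, (c) respects the chosen global part and, for nominals, the unique realisation of each named point, and (d) satisfies the bisimulation conditions locally. Because in the presence of the universal modality demands have to be traced to exponential depth, the point-types relevant to the construction must record exponentially-deep $\ML(\varrho)$-behaviour and hence range over a doubly-exponential space; running the coherence check over it gives the $2\ExpTime$ upper bound, uniformly for all five pairs. The matching $2\ExpTime$ lower bound would be obtained by a reduction from the word problem for alternating Turing machines running in exponential time: exponentially long configurations are addressed by binary-coded graded modalities, the universal modality (available in the source of every pair in this group) propagates the transition relation uniformly to all configurations, and the required $\LS(\varrho)$-bisimulation glues the existential and universal choices of the computation tree, so that the machine accepts its input iff $\varphi,\psi$ have no $\LS$-separator.

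The step we expect to be the main obstacle is making the mosaic construction for the $2\ExpTime$ cases precise: in particular, the bookkeeping that guarantees that a coherent family of mosaics genuinely certifies the existence of a pair of (possibly doubly-exponential) globally $\LS(\varrho)$-bisimilar models of $\varphi$ and $\psi$, and, conversely, that every such pair gives rise to one -- this ``regularisation'' of the bisimilar models is exactly where the difference between the nominal-free and the general case, and between looking polynomially and exponentially deep, has to be handled carefully.
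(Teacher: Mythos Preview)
Your high-level architecture is right---Craig-separation criterion plus bisimulation, mosaics for the upper bounds, tiling/ATM for the lower bounds---and for the $\coNExpTime$ cases your unravel-truncate-prune argument is essentially what the paper does (the paper derives the bounded structures from its mosaic construction rather than by direct unraveling, but the two are interchangeable here; do be careful with nominals, though: an unraveling does not obviously preserve $\GMLn$-truth when the nominal point has multiple predecessors, and the paper handles this by a selection function $h$ that pins each nominal to a unique mosaic rather than by ``rooting the tree at the named points'').

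The real gap is in your explanation of the $\TwoExpTime$ upper bound. You write that ``the point-types relevant to the construction must record exponentially-deep $\ML(\varrho)$-behaviour and hence range over a doubly-exponential space''. That is not how the paper (or any working version of this argument) gets $\TwoExpTime$, and it would be hard to make precise. The paper's mosaics are simply \emph{pairs of sets of ordinary types}, where a type is a subset of $\sub(\varphi_1,\varphi_2)$; nothing records deep $\ML(\varrho)$-behaviour. The doubly-exponential blow-up comes from counting: there are $\le 2^{|\sub(\varphi_1,\varphi_2)|}$ types, hence $\le 2^{2\cdot 2^{|\sub(\varphi_1,\varphi_2)|}}$ mosaics, and one runs a mosaic-\emph{elimination} procedure over this set (remove ``bad'' mosaics that lack $R$-witnesses or $\Du$-witnesses, iterate to a fixed point, then check for a surviving mosaic realising $(\varphi_1,\varphi_2)$). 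The exponent-of-an-exponent is the size of the search space, not the size of an individual type. If you try to record exponential-depth behaviour in the types themselves you will not get a terminating algorithm, let alone one with the claimed bound.

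Your lower-bound sketches are also missing the key combinatorial device. The paper does not encode tilings or ATMs directly; it first constructs polysize $\varphi_n,\psi_n$ such that \emph{any} bisimilar witnesses $\Amf,a\sim_{\LS(\varrho)}\Bmf,b$ force the existence of a set $E\subseteq\dom(\Bmf)$ of $2^n$ mutually $\LS(\varrho)$-bisimilar points that are nevertheless distinguished by non-shared variables $A_0,\dots,A_{n-1}$ (think: $\varphi_n$ says there is a single $R$-chain of length $n$, $\psi_n$ says the $R$-successors form a full binary tree whose $2^n$ leaves carry all bit-patterns). This exponential family $E$ is then used as the address space for the $2^n\times 2^n$ torus (giving $\coNExpTime$-hardness, no $\Du$ needed) or for the tape cells of an exponential-space ATM (giving $\TwoExpTime$-hardness, where $\Du$ is needed to look exponentially deep into the computation tree). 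Your description of $\varphi$ as ``encoding a candidate tiling'' and $\psi$ as ``compatibility constraints'' is too coarse to see why the bisimulation actually forces correctness; the point is that the bisimulation identifies exponentially many addresses that can then be exploited by non-shared symbols.
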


We begin by establishing the upper bounds. By Lemma~\ref{l:red}, it suffices to consider Craig separation. Let $\varphi_{1}$, $\varphi_{2}$ be $\GMLnu$-formulas, $\sigma = \sig(\varphi_{1}) \cup \sig(\varphi_{2})$ and $\varrho = \sig(\varphi_{1}) \cap \sig(\varphi_{2})$. 
%Denote\nb{!!} by $\mx$ the maximal number $k$ such that $\Diamond_R^{\ge k}$ occurs in $\varphi_{1}$ or $\varphi_{2}$, for some $R$. 
%
A \emph{type} (\emph{for $\varphi_{1}$ and $\varphi_{2}$}) is any set $\type \subseteq \sub(\varphi_{1},\varphi_{2})$, for which there is a pointed $\sigma$-structure $\Amf, a$ such that $\type=\type(\Amf,a)$, where 
\[
\type(\Amf,a) = \{ \chi\in \sub(\varphi_{1},\varphi_{2})\mid \Amf \models \chi(a)\}
\] 
is the \emph{type} \emph{realised by $a$ in $\Amf$}. Denote by $\Types$ the set of all such types. Clearly, $|\Types| \le 2^{|\sub(\varphi_{1},\varphi_{2})|}$ and $\Types$ can be computed in time exponential in $|\varphi_{1}| + |\varphi_{2}|$.
Types $\type_1$ and $\type_2$ are called \emph{$\Du$-equivalent}
in case $\Du\chi\in \type_1$ iff $\Du\chi\in \type_2$, for all $\Du\chi\in \sub(\varphi,\psi)$. Note that the types realised in the same structure are $\Du$-equivalent to each other. 

We check the existence of $\sigma$-structures $\Amf_1,a_1$ and $\Amf,a_2$ such that $\Amf_i\models \varphi_i(a_i)$, $i=1,2$,  and $\Amf_1,a_1 \sim_{\MLnu(\varrho)} \Amf_2,a_2$ using a mosaic-elimination procedure, a generalisation of the standard type-elimination procedure for satisfiability-checking; see, e.g.,~\cite[p.~72]{GabEtAl03}. 

A \emph{mosaic} (\emph{for $\varphi_{1}$ and $\varphi_{2}$}) is a pair $M=(M_{1},M_{2})$ of non-empty sets $M_{1},M_{2}$ of types satisfying the following conditions: 
%, for any $\type,\type'\in M_{1}\cup M_{2}$,
%
\begin{itemize}
\item[--] $A\in \type$ iff $A\in \type'$, for all $A\in \varrho$ and $\type,\type'\in M_{1}\cup M_{2}$;
 
\item[--] $N_c\in \type$ iff $N_c\in \type'$, for all $c \in \varrho$ and $\type,\type'\in M_{1}\cup M_{2}$; 

\item[--] 
%{\color{red} if $N_c\in \type\in M_{i}$, then $N_c\not\in\type'$ for any $\type'\in M_{i}$ with $\type'\not=\type$, for all $c\in \sigma$;} 
if $\type, \type' \in M_{i}$, $i \in \{1,2\}$, $N_c\in \type \cap \type'$, $c\in \sigma$, then $\type = \type'$; %, for all $c\in \sigma$;
	
\item[--] all types in each $M_{i}$ are $\Du$-equivalent to each other.
\end{itemize}
%
%{\color{red} Observe that it follows from Points~2 and 3 that if $N_c\in \type\in M_{i}$, then $M_{i}=\{\type\}$, for $i = 1,2$ and all $c\in \varrho$.}
We say that $N_c$ \emph{occurs} in $M$ if $N_c \in \type\in M_{i}$, for some $\type$ and $i\in \{1,2\}$.
As follows from the second and third items, if $N_c$ occurs in  $M$, for some $c\in \varrho$, then $M_1$ and $M_2$ are singletons.

There are at most doubly exponentially many mosaics, and they can be computed in double exponential time. Let $f$ be the function defined by taking $f(M) = 1$ if $M$ contains a type with a nominal 
%\textcolor{blue}{$N_c$, for some $c \in\sigma$,} 
and $f(M) = \mx$ otherwise, where $\mx$ is the maximal $k$ such that $\Diamond_R^{\ge k}$ occurs in $\varphi_{1}$ or $\varphi_{2}$, for some $R$.

The definition of mosaic above is an abstraction from the following semantical notion. Given structures $\Amf_{1}$ and $\Amf_{2}$, the
\emph{mosaic defined by $a\in \dom(\Amf_{i})$ in $\Amf_{1},\Amf_{2}$}, $i \in \{1,2\}$, 
is the pair $M(\Amf_{1},\Amf_{2},a)=(M_{1}(\Amf_{1},a),M_{2}(\Amf_{2},a))$, where, for $j \in \{1,2\}$,
\[
M_{j}(\Amf_{j},a) = \{ \type(\Amf_{j},b) \mid b\in
\dom(\Amf_{j}), \ \Amf_{i},a \sim_{\MLnu(\varrho)}\Amf_{j},b\}.
\] 
%
%\tony{Suggestion: I may misunderstand what the notation $\Amf_{1/2}$ means. Is it $\Amf_1$ %or $\Amf_2$?
%If so, maybe it is better to use $M(\Amf_1,\Amf_2,a)$?}
Clearly, $M(\Amf_{1},\Amf_{2},a)$ is a mosaic and $M(\Amf_{1},\Amf_{2},a) = M(\Amf_{1},\Amf_{2},a')$ if $a \sim_{\MLnu(\varrho)} a'$, for any $a,a' \in \dom(\Amf_{1}) \cup \dom(\Amf_{2})$.

We aim to build the required structures $\Amf_{i}$, $i=1,2$, and bisimulation $\bis$ between them from copies of suitable pairs $(\type,M)$ with a mosaic $M=(M_{1},M_{2})$ and a type $\type\in M_i$, for which we, roughly, aim to  set $(\type_{1},M) \bis (\type_{2},M)$. 
To satisfy the modalities $\Diamond_{R}^{\geq k}\chi$ in the $\Amf_{i}$, we require the following definition.
%take sufficiently many copies of suitable $(\type,M)$. 
%Slightly abusing notation, we write $\type\in M$ if $\type \in M_{1}\cup M_{2}$.
%
Let $M=(M_{1},M_{2})$ be a mosaic, $\mathcal{M}$ a set of mosaics, and some type in $M_1\cup M_2$ contain $\Diamond_{R}^{\geq k}\chi$. We call $\mathcal{M}$ an \emph{$R$-witness for $M$} if there exists a relation $R^{M,\mathcal{M}} \subseteq (\Delta_{1} \times \Gamma_{1})\cup(\Delta_{2}\times \Gamma_{2})$ with  
\begin{align*}
& \Delta_{i} =\{ (\type,M) \mid \type\in M_{i}\},\\ 
& \Gamma_{i}=\{ (\type',M',j) \mid \type'\in M_{i}',\ M'=(M_{1}',M_{2}')\in \mathcal{M},\ 1 \le j \le f(M')\}
\end{align*}
satisfying the following for $\Delta=\Delta_{1} \cup \Delta_{2}$ and $\Gamma= \Gamma_{1}\cup \Gamma_{2}$:
\begin{description}
%\item[(nominals)] $f(M')=1$ iff\nz{iff} $M'$ contains a nominal;

\item[($\Du$-harmony)] all types in $M_{i}$ and $M_{i}'$, for any $M'=(M_{1}',M_{2}')$ in $\mathcal{M}$, are $\Du$-equivalent to each other, $i=1,2$;
	
\item[(witness)] for any $\Diamond^{\geq k}_R\chi \in \sub(\varphi_1,\varphi_2)$ and $(\type,M)\in \Delta$, we have 
$$
\mbox{}\hspace*{-7mm}\Diamond^{\geq k}_R\chi \in \type \text{ iff } \big| \{ (\type',M',j) \in \Gamma \mid \chi\in \type', (\type,M) R^{M,\mathcal{M}} (\type',M',j) \}\big| \ge k;
$$
%
%$\Diamond^{\geq k}_R \chi \in \type$ iff $(\type,M)$ has $\ge k$ $R^{M,\mathcal{M}}$-successors $(\type',M',j)$ with $\chi\in \type'$; 
	
\item[(bisim)] if $R\in \varrho$, $M'\in \mathcal{M}$, and $(\type_{0},M)\in \Delta$ has an $R^{M,\mathcal{M}}$-successor $(\type_{0}',M',j)\in \Gamma$, then each $(\type_{1},M)\in \Delta$ has an $R^{M,\mathcal{M}}$-successor of the form $(\type_{1}',M',j')\in \Gamma$. 
\end{description}
The intuition behind conditions {\bf ($\Du$-harmony)} and {\bf (witness)} should be clear, and {\bf (bisim)} is needed to ensure that if, for any fixed $M'$, all $(\type',M',j)\in \Gamma$ are mutually $\MLnu(\varrho)$-bisimilar, then all $(\type,M)\in \Delta$ are also $\MLnu(\varrho)$-bisimilar to each other. 
It should also be clear that if $\Amf_1, a_1 \sim_{\MLnu(\varrho)}\Amf_2,a_2$, then every mosaic $M(\Amf_{1},\Amf_{2},a)$, for $a \in \dom(\Amf_i)$, $i=1,2$, has an $R$-witness for any relevant $R$, namely, the set $\mathcal{M}$ of all mosaics $M(\Amf_{1},\Amf_{2},b)$ such that $a' R^{\Amf_j} b$, for some $j \in \{1,2\}$ and $a' \in \dom(\Amf_j)$ with $M(\Amf_{1},\Amf_{2},a) = M(\Amf_{1},\Amf_{2},a')$.
Observe also that every $R$-witness $\mathcal{M}$ for $M$ contains a subset $\mathcal{M}'\subseteq \mathcal{M}$ such that $\mathcal{M}'$ is an $R$-witness for $M$ and $|\mathcal{M}'|\leq \mx 2^{|\sub(\varphi_{1},\varphi_{2})|}$. 
%and\nz{remove} the maximal value of the copying function $f$ for $\mathcal{M}'$ not exceeding $\mx (|\varphi_{1}|+|\varphi_{2}|)$.
%
So, from now on, we assume all witnesses to be of size $\leq \mx 2^{|\sub(\varphi_{1},\varphi_{2})|}$. 

We can now define the mosaic elimination procedure for a set $\mathfrak{S}$ of mosaics. Call $M=(M_{1},M_{2})\in \mathfrak{S}$ \emph{bad} in $\mathfrak{S}$ if at least one of the following conditions is \emph{not} satisfied for $i = 1,2$:
%\mathfrak{S}
\begin{itemize}
\item[--] if $\Diamond_{R}^{\geq k} \chi\in \type\in M_{i}$, then there is an $R$-witness for $M$ in $\mathfrak{S}$;
	
\item[--] if $\Du\chi \in \type\in M_{i}$, then there are $(M_{1}',M_{2}')\in \mathfrak{S}$ and $\type'\in M'_{i}$ such that $\chi\in \type' \in M_{i}'$ and $\type,\type'$ are $\Du$-equivalent.
\end{itemize}
Given a set $\mathfrak{S}$ of mosaics and $M\in \mathfrak{S}$, it can be decided in double exponential time in $|\varphi_{1}|+|\varphi_{2}|$ whether $M$ is bad in $\mathfrak{S}$.

Let $\mathfrak{S}_{0}$ be the set of all mosaics. Let $h$ be a function that, for any $c \in \sigma$ and $i \in \{1,2\}$, picks one mosaic $h(N_c,i) = (M_1,M_2)$ such that $N_c \in \type$, for some $\type \in M_i$, and if $N_{c'} \in \type'$, for some $\type' \in M_j$, $j \in \{1,2\}$ and $c'\in \sigma$, then $h(N_c,i) = h(N_{c'},j)$.
By the definition of mosaics, $h(N_c,1)=h(N_c,2)$ for all $c\in \varrho$.

Given such $h$, let $\mathfrak{S}_{0}^{h} \subseteq \mathfrak{S}_{0}$ comprise all mosaics of the form $h(N_c,i)$ and those that do not have occurrences of nominals. Starting from $\mathfrak{S}_{0}^{h}$, we construct inductively sets $\mathfrak{S}_{n+1}^{h}$, $n \ge 0$, by looking through the mosaics in $\mathfrak{S}_{n}^{h}$ and eliminating the bad ones. We stop when $\mathfrak{S}_{n+1}^{h} = \mathfrak{S}_{n}^{h}$ and set $\mathfrak{S}_{\ast}^{h}= \mathfrak{S}_{n}^{h}$. This can be done in double-exponential time (see the appendix).

%By Lemma~\ref{lem:decidebad1},\nb{?} $\mathfrak{S}_{\ast}^{h}$ can be constructed in double exponential time.

\begin{restatable}{lemma}{mosaicsGMLnu}
\label{lem:main1} 
The following conditions are equivalent\textup{:}
\begin{description}
\item[\rm $(i)$] there exist pointed $\sigma$-structures $\Amf_i,a_i$, $i=1,2$, such that $\Amf_i\models \varphi_i(a_i)$ and $\Amf_1,a_1 \sim_{\MLnu(\varrho)} \Amf_2,a_2$ \textup{(}and so $\varphi_1$ and $\varphi_2$ are not $\LS$-separable\textup{)}\textup{;}
		
\item[\rm $(ii)$] there exist $h$, $M^* =(M^*_{1},M^*_{2})\in \mathfrak{S}_{\ast}^{h}$ and $\type^*_i \in M^*_i$, $i=1,2$, such that $\varphi_{i}\in \type^*_{i}\in M^*_{i}$, $h(N_c,j)=(M_{1},M_{2})\in \mathfrak{S}_{\ast}^{h}$, for all $c\in \sigma$ and $j=1,2$, and all types in $M^*_{j}\cup M_{j}$ are $\Du$-equivalent to each other.
\end{description} 
\end{restatable}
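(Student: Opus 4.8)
The plan is to prove the two implications of Lemma~\ref{lem:main1} separately, with the $(ii)\Rightarrow(i)$ direction being the substantial one.

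\smallskip
\noindent\textbf{Direction $(i)\Rightarrow(ii)$.} Suppose $\Amf_1,a_1\sim_{\MLnu(\varrho)}\Amf_2,a_2$ with $\Amf_i\models\varphi_i(a_i)$. I would set, for each $c\in\sigma$ and $j\in\{1,2\}$, $h(N_c,j) = M(\Amf_1,\Amf_2,c^{\Amf_j})$; this is well defined since if $N_{c'}\in\type(\Amf_j,c^{\Amf_{j'}})$ then $c^{\Amf_j} = c'^{\Amf_{j'}}$, and it agrees for $c\in\varrho$ because of the $\MLnu(\varrho)$-bisimulation linking $c^{\Amf_1}$ and $c^{\Amf_2}$ (via the nominal clause and globality). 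Put $\type^*_i = \type(\Amf_i,a_i)$ and $M^* = M(\Amf_1,\Amf_2,a_1)$; then $\varphi_i\in\type^*_i\in M^*_i$. It remains to check that all these mosaics survive the elimination procedure, i.e.\ that no mosaic of the form $M(\Amf_1,\Amf_2,a)$ is ever bad. For this I would argue, as already observed in the text before the lemma, that each such mosaic has an $R$-witness among the $\{M(\Amf_1,\Amf_2,b)\}$ (taking the genuine $R$-successors in $\Amf_1$ and $\Amf_2$, with the relation $R^{M,\mathcal M}$ read off from the real $R^{\Amf_j}$, after collapsing bisimilar elements and using the size-reduction remark), and similarly the $\Du$-clause is witnessed by globality plus $\Du$-equivalence of types realised in one structure. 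A routine induction on the elimination stages then shows $M(\Amf_1,\Amf_2,a)\in\mathfrak S_n^h$ for all $n$, hence $\in\mathfrak S_*^h$, and the $\Du$-equivalence conditions in $(ii)$ hold because all of $M^*_j\cup M_j$ consists of types realised in $\Amf_j$.

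\smallskip
\noindent\textbf{Direction $(ii)\Rightarrow(i)$.} This is a model-construction. Fix $h$, $M^*$, $\type^*_i$ as in $(ii)$. I would build $\Amf_1,\Amf_2$ and an $\MLnu(\varrho)$-bisimulation $\bis$ simultaneously by an unravelling from the surviving mosaics. The domain elements of $\Amf_i$ will be finite sequences whose last component is a pair $(\type,M)$ with $M\in\mathfrak S_*^h$ and $\type\in M_i$, the root of $\Amf_i$ being $(\type^*_i,M^*)$; additionally, for each constant $c\in\sigma$ we reserve a \emph{named} element sitting at the unique type in the $i$-th component of $h(N_c,i)$ that contains $N_c$, and we interpret $c^{\Amf_i}$ as that element (well definedness uses the singleton property of mosaics containing a nominal). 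Whenever an element carries $(\type,M)$ with $\Diamond^{\ge k}_R\chi\in\type$, we use an $R$-witness $\mathcal M\subseteq\mathfrak S_*^h$ for $M$ (which exists since $M$ is not bad) and attach, via $R^{\Amf_i}$, one fresh copy of each $(\type',M',j)\in\Gamma_i$, except that successors whose type contains a nominal $N_c$ are redirected to the already-existing named element for $c$ (this is consistent because $M'=h(N_c,j)$ by the definition of $\mathfrak S_0^h$ and the last clause of $(ii)$, and because the (witness) condition counts such a redirected successor exactly once). Unary predicates and nominals at an element are determined by its type; $R$-edges are exactly those introduced above. The candidate bisimulation is $(u,v)\in\bis$ iff the mosaic components of $u$ and $v$ coincide. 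One proves by induction on formula structure a \emph{truth lemma}: an element with type $\type$ satisfies exactly the formulas in $\type$; the counting-modality case is handled by the (witness) condition and the fact that distinct fresh copies are genuinely distinct domain elements, while redirection to named elements does not create spurious coincidences; the $\Du$-case uses that every element's type is $\Du$-equivalent to $\type^*_i$ (this propagates along $R$-edges by ($\Du$-harmony) and across the two structures by the $\Du$-equivalence clause of $(ii)$), together with the $\Du$-clause of badness guaranteeing a realiser of $\chi$ somewhere in $\mathfrak S_*^h$; here we also need that $\mathfrak S_*^h$ is closed in the sense that every mosaic in it has all of its $\Du$-demands met inside $\mathfrak S_*^h$, which is exactly what "not bad" gives after the procedure stabilises. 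Finally, one checks $\bis$ is an $\MLnu(\varrho)$-bisimulation: it is global by construction (every element's mosaic appears on both sides — using the (bisim) condition to transport $\varrho$-labelled $R$-successors between $M_1$- and $M_2$-elements, and for $\Du$/nominals the structures are jointly generated and share the named elements); the atomic clause for $A\in\varrho$ and the nominal clause hold by the mosaic constraints; and the forth/back clauses for $R\in\varrho$ hold because (bisim) forces matching $R^{\Amf_j}$-successors of the same mosaic on both sides. Since $\type^*_i\ni\varphi_i$, the truth lemma yields $\Amf_i\models\varphi_i(a_i)$ with $a_i$ the root, and $(a_1,a_2)\in\bis$ since both roots carry $M^*$, so $\Amf_1,a_1\sim_{\MLnu(\varrho)}\Amf_2,a_2$; non-separability then follows from Lemma~\ref{criterion}.

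\smallskip
\noindent\textbf{Main obstacle.} The delicate point is the interaction between nominals and the counting modalities in the $(ii)\Rightarrow(i)$ construction: redirecting nominal-typed successors to the unique named elements must be done so that the cardinality bookkeeping in the (witness) condition is preserved (each nominal can be an $R^{M,\mathcal M}$-successor at most once, which is where the singleton clause of mosaics and the $f(M')=1$ case enter), and so that no element ends up with more $R$-successors of a given type than its type licenses — and simultaneously globality of $\bis$ must survive, which requires the (bisim) condition to be used to show that the mosaic of every freshly created element on one side also occurs on the other. Getting these three requirements — correct counting, correct redirection, globality — to hold together is the crux; the $\Du$-equivalence clause in $(ii)$ and the ($\Du$-harmony) condition in the $R$-witness definition are precisely what make the universal modality compatible with the unravelling.
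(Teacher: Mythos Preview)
Your proposal is correct and follows essentially the same approach as the paper: for $(i)\Rightarrow(ii)$ you take the mosaics realised in the given bisimilar structures and let $h$ be read off from the interpretations of constants, and for $(ii)\Rightarrow(i)$ you perform a tree-unravelling over the surviving mosaics with redirection of nominal-typed successors to fixed ``named'' root elements, define the bisimulation by agreement of the mosaic data along the sequences, and prove a truth lemma. The paper makes one detail more explicit than you do: its bisimulation matches the \emph{entire path} (same length, same $R_j$'s, same $M_j$'s), not just the last mosaic, which is what makes the forth/back clauses immediate; your phrasing ``the mosaic components of $u$ and $v$ coincide'' should be read this way.
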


Here, we only sketch how, given $(ii)$, to construct the required $\Amf_i,a_i$. For any $M = (M_1,M_2)\in \mathfrak{S}_{\ast}^{h}$ and $\Diamond_{R}^{\geq k}\chi\in \type\in M_i$, for some $\Diamond_{R}^{\geq k}\chi$, $\type$ and $i \in \{1,2\}$, we take an $R$-witness $\mathcal{M}_{R,M}\subseteq \mathfrak{S}_{\ast}^{h}$ for $M$ with the  relation $R^{M,\mathcal{M}_{R,M}}$ and define the $\dom(\Amf_{i})$ as the sets of all words
\begin{equation}\label{eq1m}
s=(\type_{0},M_{0},i_{0})R_{1}(\type_{1},M_{1},i_{1}) R_2 \dots R_{n}(\type_{n},M_{n},i_{n}) 
\end{equation}
such that $\omega > n\geq 0$, $M_{j}=(M_{j,1},M_{j,2})\in \mathfrak{S}_{\ast}^{h}$, for $j\leq n$, 
\begin{itemize}
\item[--] $\type_{j}\in M_{j,i}$ and $1 \le i_j \le f(M_{j,i})$, for all $j\leq n$; 

\item[--] $M_{j+1}\in \mathcal{M}_{R_{j+1},M_{j}}$ and no nominal occurs in $M_{j+1}$;
		
\item[--] all types in $M_{k}^{\ast}$ and $M_{0,k}$ are $\Du$-equivalent, $k=1,2$.
\end{itemize}
For a propositional variable $A$ and $s$ in~\eqref{eq1m}, we set $s\in A^{\Amf_{i}}$ iff $A\in \type_{n}$; for $N_c$, we set $c^{\Amf_{i}} = s$ iff $n=0$, $N_c\in \type_{0}$ and $h(N_c,i) = M_0$, and for a binary $R \in \sigma$, we set:
\begin{itemize}
\item[--] $(s',s)\in R^{\Amf_{i}}$  if  $s'$ is obtained by dropping $R_{n}(\type_{n},M_{n},i_{n})$ from $s$, $R=R_{n}$ and
$
(\type_{n-1},M_{n-1})R_{n}^{M_{n-1},\mathcal{M}_{R_n,M_{n-1}}}
(\type_{n},M_{n},i_{n});
$

\item[--] $(s,(\type,M,j))\in R^{\Amf_{i}}$ if $(\type,M,j)\in \dom(\Amf_{i})$, a nominal occurs in $M$ and $(\type_{n},M_{n})R^{M_{n},\mathcal{M}_{R,M_{n}}} (\type,M,j)$.
\end{itemize}
Finally, the bisimulation $\bis$ between the $\Amf_{i}$ is defined by taking $s\bis s'$, for $s$ in~\eqref{eq1m} and 
$s'=(\type_{0}',M_{0}',i_{0}')R_{1}' \dots R_{m}'(\type_{m}',M_{m}',i_{m}')$, 
if $n=m$, $R_{i}=R_{i}'$ and $M_{i}=M_{i}'$, for all $i\leq n$.

Lemma~\ref{lem:main1} gives a $2\ExpTime$ upper bound for deciding (Craig) $\GMLnu/\MLnu$-separation, and the construction can be modified to show the same bound for the other logics with $\Du$. 
To establish the \coNExpTime upper bounds of Theorem~\ref{thm:gmlu} for  the $\Du$-free languages, we restrict the $\Amf_{i}$ to $s$ in  \eqref{eq1m} with $n$ not exceeding the modal depth of $\varphi_i$ and such that either $\type_0 \in \{\type^*_1, \type^*_2\}$ and $M_0 = M^*$ or $h(N_c,i) = M_0$, for some $c \in \sigma$.
The size of the resulting structures is exponential in $|\varphi_{1}|+|\varphi_{2}|$ and their existence can clearly be checked in \NExpTime. 

By Lemma~\ref{l:red}, it suffices to prove the matching lower bounds for unary Craig separation. 
The underlying idea, inspired by~\cite{DBLP:journals/tocl/ArtaleJMOW23}, is to construct polysize formulas $\varphi_n$ and $\psi_n$, $n < \omega$, in the required language such that if $\Amf\models \varphi_n(a)$, $\Bmf\models \psi_n(b)$, and $\Amf,a \sim_{\LS(\varrho)} \Bmf,b$, then there is $E\subseteq \dom(\Bmf)$
of size $2^{n}$ whose elements are $\LS(\varrho)$-bisimilar and satisfy all pairwise distinct combinations of variables $A_{0},\dots,A_{n-1} \notin \varrho$. 
For example, $\varphi_n = \Diamond_{R}^{=1}\dots \Diamond_{R}^{=1}\top$ with $n$-many $\Diamond_{R}^{=1}$ is such that $\Amf \models \varphi_n(a)$ iff $\Amf$ has a unique $R$-chain of length $n$ starting from $a$ and  
\begin{multline*}
\psi_n = \bigwedge_{0 \le i <n} \Big[ \Box_{R}^{i}\Diamond_{R}^{=2}\top \land \Box_{R}^{i}(\Diamond_{R}A_{i} \wedge \Diamond_{R}\neg A_{i}) \land{}\\[-3pt] 
\bigwedge_{0 \le j < i} \Box_{R}^{i}\big((A_{j} \rightarrow \Box_{R}A_{j}) \wedge (\neg A_{j} \rightarrow \Box_{R}\neg A_{j}) \big) \Big].
\end{multline*}
is such that $\Bmf \models \psi_n(b)$ iff the $n$-long $R$-chains from $b$ comprise a binary tree of depth $n$ whose leaves satisfy binary numbers encoded by the $A_{i}$. So, if $\Amf\models \varphi_n(a)$, $\Bmf\models \psi_n(b)$, and $\Amf,a\sim_{\ML(\varrho)} \Bmf,b$, then there is a unique $a'$ in $\Amf$ reachable via an $R$-chain of length $n$ from $a$, and there are leaves $E = \{e_{0},\dots,e_{2^{n}-1}\}$ of the binary tree with root $b$ in $\Bmf$ such that $e_{i}$ satisfies the binary encoding of $i$ via $A_{0},\dots,A_{n-1}$, and so 
$\Amf,{a'} \sim_{\LS(\varrho)}  \Bmf,e_{0} \sim_{\LS(\varrho)}\cdots \sim_{\LS(\varrho)}\Bmf,e_{2^{n}-1}$.
The set $E$ can now be used, following~\cite[Secs.\ 7, 9]{DBLP:journals/tocl/ArtaleJMOW23}, to encode exponential space-bounded alternating Turing machines for the \TwoExpTime-lower bound, and  $2^{n}\times 2^{n}$-torus tiling for the $\coNExpTime$ one.

%********************

\section{Separation in Fragments of $\GMLiu$}\label{inverse-sep}

Our next aim is to generalise the mosaic technique of Sec.~\ref{sec:gmlu} to nominal-free graded modal logics with inverse and universal modalities and prove the following:

\begin{restatable}{theorem}{mosaicGMLiu}
\label{thm:gmliu}
Separation and Craig-separation are
\begin{description}
\item[\it\textup{$\TwoExpTime$}-complete] for all pairs of the form $\GMLiu/\LS$, where $\LS \in \{\MLiu, \MLi,\MLu, \ML\}$, and %\textup{;}

\item[\it\textup{$\coNExpTime$}-complete] for $\GMLi/\MLi$ and $\GMLi/\ML$. 
\end{description}
\end{restatable}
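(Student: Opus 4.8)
The plan is to lift the mosaic-elimination machinery of Section~\ref{sec:gmlu} from $\GMLnu$ to $\GMLiu$; the crux is that an element now has both $R$-successors and $R$-predecessors, counted respectively by $\Diamond_R^{\ge k}$ and $\Diamond_{R^-}^{\ge k}$, and both must be matched by an $\MLiu(\varrho)$-bisimulation. By Lemmas~\ref{l:red} and~\ref{criterion} it again suffices, for $\GMLiu/\MLiu$, to decide whether there are pointed $\sigma$-structures $\Amf_i,a_i$ with $\Amf_i\models\varphi_i(a_i)$ and $\Amf_1,a_1\sim_{\MLiu(\varrho)}\Amf_2,a_2$. Types and $\Types$ are defined exactly as before; a mosaic $M=(M_1,M_2)$ is a pair of non-empty, $\varrho$-agreeing sets of types each of which is pairwise $\Du$-equivalent (there are no nominal clauses now, $\GMLiu$ being nominal-free), and $M(\Amf_1,\Amf_2,a)$ collects the types realised by the $\MLiu(\varrho)$-bisimilar companions of $a$ in $\Amf_1,\Amf_2$. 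The essential new ingredient is a \emph{two-sided witness}: for a mosaic $M$ and a relation symbol $R$ such that some type in $M_1\cup M_2$ demands $\Diamond_R^{\ge k}\chi$ or $\Diamond_{R^-}^{\ge k}\chi$, an $R$-witness is a set $\mathcal M$ of mosaics together with a relation $R^{M,\mathcal M}$ between copies of the types in $M$ and copies $(\type',M',j)$ of the types in members of $\mathcal M$ (with $1\le j\le f(M')$, where $f(M')=\mx$ throughout, as nominals are absent) such that, along $R^{M,\mathcal M}$ and along its converse respectively, all $\Diamond_R^{\ge k}$-demands of the $M$-side types and all $\Diamond_{R^-}^{\ge k}$-demands of the $\mathcal M$-side types are met exactly, the $\Du$-harmony clause holds across $M$ and $\mathcal M$, and the (bisim) clause holds in both directions whenever $R\in\varrho$. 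As in Section~\ref{sec:gmlu}, $R$-witnesses may be assumed of size $\le\mx\,2^{|\sub(\varphi_1,\varphi_2)|}$.

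Mosaic elimination then deletes a mosaic whenever one of its types (in $M_1$ or $M_2$) demands an $R$-witness, an $R^-$-witness, or a $\Du$-witness not available in the current set; since there are doubly exponentially many mosaics, this stabilises after doubly-exponentially many rounds, and $\varphi_1,\varphi_2$ are \emph{not} $\MLiu(\varrho)$-separable iff the surviving set contains a mosaic $M^*$ and types $\type^*_i\in M^*_i$ with $\varphi_i\in\type^*_i$ (with the relevant $\Du$-harmony). Soundness of this criterion --- a bisimilar pair induces a set of surviving mosaics --- is routine, exactly as in Section~\ref{sec:gmlu}. The completeness direction is the main obstacle: from the surviving set one must build $\Amf_1,\Amf_2$ and an $\MLiu(\varrho)$-bisimulation, and the tree-shaped unravelling of~\eqref{eq1m} no longer works, because there each element has a single $R$-predecessor whereas $\Diamond_{R^-}^{\ge k}\chi$ with $k\ge 2$ has to be realised. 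The fix --- the technical heart of the argument --- is to replace a witness \emph{step} by a finite bipartite \emph{gadget} that realises the required counts on \emph{both} sides, and to glue these gadgets along a tree-shaped skeleton of mosaics, so that the $R$-predecessors of any created element all lie inside the single gadget that created it, while its $R$-successors are supplied by the gadgets it seeds. Coherence of the gluing exploits that, $\GMLiu$ having no nominals, distinct copies of the same type in the same mosaic can be, and are, declared mutually $\MLiu(\varrho)$-bisimilar --- this is precisely the ``regularisation'' that fails once a nominal fixes the root. One then checks, against the witness conditions, that matching elements with the same sequence of mosaics and relation symbols is a global $\MLiu(\varrho)$-bisimulation realising all types, and that $a_i$ can be chosen realising $\type^*_i$. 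This yields the $\TwoExpTime$ upper bound for $\GMLiu/\MLiu$-separation; for $\GMLiu/\MLi$, $\GMLiu/\MLu$ and $\GMLiu/\ML$ one drops the corresponding clauses ($\Du$-harmony for $\Du$-free targets, the (bisim) clause restricted to non-inverse $R$ for inverse-free targets), exactly as in Section~\ref{sec:gmlu}.

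For the $\coNExpTime$ upper bounds for $\GMLi/\MLi$ and $\GMLi/\ML$, the point is that in the absence of the universal modality both $\varphi_i$ and an $\MLi(\varrho)$-bisimulation only constrain the parts of $\Amf_1,\Amf_2$ within $R/R^-$-distance at most $\md(\varphi_1)+\md(\varphi_2)$ of $a_1,a_2$; hence one may truncate the construction above to that polynomial depth, obtaining bisimilar models of size exponential in $|\varphi_1|+|\varphi_2|$ whose existence is decidable in $\NExpTime$. So non-separability is in $\NExpTime$ and separability in $\coNExpTime$.

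Finally, the matching lower bounds follow by the same reductions as in the proof of Theorem~\ref{thm:gmlu} (via Lemma~\ref{l:red}, from unary Craig separation): for the instances $\varphi_n,\psi_n$ built there one checks that the models witnessing non-separability --- in which, e.g., every non-root node has a unique $R$-predecessor --- are in fact $\MLiu(\varrho)$-bisimilar, so non-separability persists for every target language between $\ML$ and $\MLiu$, while any separator for $\varphi_n,\psi_n$ may be taken in $\ML(\varrho)\subseteq\MLiu(\varrho)$, so separability is likewise target-independent. Hence the $\TwoExpTime$-hardness (encoding exponential-space alternating Turing machines) for $\GMLu/\MLu$-separation gives $\TwoExpTime$-hardness for all $\GMLiu/\LS$ with $\LS\in\{\ML,\MLi,\MLu,\MLiu\}$, and the $\coNExpTime$-hardness ($2^n\times 2^n$-torus tiling) for $\GML/\ML$-separation gives $\coNExpTime$-hardness for $\GMLi/\MLi$ and $\GMLi/\ML$, with the same bounds for Craig separation by Lemma~\ref{l:red}.
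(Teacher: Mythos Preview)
Your high-level plan mirrors the paper's---reduce via Lemmas~\ref{l:red} and~\ref{criterion} to bisimilar-model existence, set up a mosaic-elimination procedure, then truncate to polynomial depth for the $\Du$-free case---but the mosaic machinery you describe has a genuine gap.

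You keep mosaics as pairs of sets of \emph{types}, exactly as in Section~\ref{sec:gmlu}. This does not suffice once inverse modalities are present. In any tree-like construction, an element's $R$-neighbours split into those in the parent layer and those in the child layer, and a type alone does not record this split: if $\Diamond_R^{\ge 5}\chi\in\type$, how many of the five witnesses are supplied by the parent gadget and how many by the children gadgets it seeds? Your ``two-sided witness'' only matches $\Diamond_R$-demands of $M$-types against $\mathcal M$ and $\Diamond_{R^-}$-demands of $\mathcal M$-types against $M$; it says nothing about $\Diamond_{R^-}$-demands of $M$-types, which must be met by the \emph{previous} gadget, whose design therefore depends on the children's needs. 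The paper resolves this by passing from types to \emph{star-types} $\ftype=(\type,\frak s_p,\frak s_c)$ that explicitly record, for every type $\type'$ and every $S\in\{R,R^-\}$, how many $S$-neighbours of type $\type'$ lie on the parent side ($\frak s_p$) versus the child side ($\frak s_c$), together with a coherence condition ensuring the two halves add up to what $\type$ demands. Mosaics are then pairs of sets of star-types, and each mosaic carries an additional parameter (intuitively, the parent's bisimulation class) so that parent-side counts are meaningful. Your ``finite bipartite gadget'' does not substitute for this bookkeeping: without knowing the parent/child split, the eliminator cannot tell whether a mosaic is realisable.

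Even granting star-types, your claimed bound $|\mathcal M|\le \mx\,2^{|\sub(\varphi_1,\varphi_2)|}$ on witness size is not ``as in Section~\ref{sec:gmlu}''. The number of star-types is \emph{doubly} exponential (a successor-type set assigns a count in $\{0,\dots,\mx,{>}\mx\}$ to each of exponentially many types), so naively there are triply-exponentially many mosaics. The paper's key technical step is Lemma~\ref{lem:cardin}, which shows---via parent/child \emph{profile-equivalence} and an intermediate notion of \emph{syntactic witness}---that one may replace any mosaic/witness pair by one of exponential size while preserving enough structure to rebuild models. This lemma is where the $\TwoExpTime$ bound is actually earned; the elimination procedure and the model construction (using $\omega$-many copies $(\ftype,j)$ with $j<\omega$, not $j\le\mx$) are then routine. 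Your proposal elides precisely this part.

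Your lower-bound and $\coNExpTime$ arguments are in line with the paper's.
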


To show the upper bound for Craig $\GMLiu/\MLiu$-separation, suppose $\varphi_{1}$ and $\varphi_{2}$ are $\GMLiu$-formulas, $\sigma = \sig(\varphi_1) \cup \sig(\varphi_2)$ and $\varrho = \sig(\varphi_{1}) \cap \sig(\varphi_{2})$. Our task is to check the existence of pointed $\sigma$-structures $\Amf_{i},a_{i}$, $i=1,2$, and an $\MLiu(\varrho)$-bisimulation $\bis$ between $\Amf_{1}$ and $\Amf_2$ such that $\Amf_{1}\models\varphi_{1}(a_{1})$, $\Amf_{2}\models \varphi_{2}(a_{2})$ and $a_{1}\bis a_{2}$. 
We use the notion of type from Sec.~\ref{sec:gmlu} adapted in the obvious way for $\GMLiu$.

\begin{example}\em\label{ex1sec6}
Consider the $\GMLiu$-formulas
\begin{align*}
& \varphi_1 = B \land \blacksquare \big(B \rightarrow( \neg A \wedge \Diamond^{=1}_{R}\top \wedge \neg \Diamond_{R^{-}}\top \wedge \Diamond_R A)\big) \land{}\\
& \hspace*{1.82cm} \blacksquare \big(A \rightarrow (\Diamond_{R}^{=1}\top \wedge \Diamond_{R^{-}}^{=1}\top \wedge \Diamond_{R}\neg A) \big) \land{}\\ 
& \hspace*{2.8cm} \blacksquare \big( (\neg A \wedge \neg B) \rightarrow (\Diamond_{R}^{=1}\top \wedge \Diamond_{R^{-}}^{=1}\top \wedge \Diamond_{R} A) \big),\\
& \varphi_2 = \Diamond^{=2}_{R}A \land \blacksquare \big(A \rightarrow \Diamond_{R}^{=2}\neg A \wedge (\neg \Diamond_{R^{-1}} B \rightarrow
	\Diamond_{R^{-}}^{=2}\neg A) \big) \land{} \\
& \hspace*{3.9cm}	\blacksquare \big( (\neg A \wedge \neg B) \rightarrow (\Diamond_{R}^{=2}A \wedge \Diamond_{R^{-}}^{=2}A) \big).
\end{align*}	
It is not hard to see that $\varphi_{1}\models \neg \varphi_{2}$. The picture below shows structures $\Amf_1$ and $\Amf_2$ (with all arrows labelled by $R$) that satisfy $\varphi_1$ and $\varphi_2$ at their respective roots. The relation $\bis$ defined by taking $a \bis b$ iff $a$ and $b$ are in the same dashed ellipse is an $\MLiu$-bisimulation between the $\Amf_i$, and so, by Lemma~\ref{criterion}, $\varphi_{1}$ and $\varphi_{2}$ are not $\MLiu$-separable. Note that all $\type(\Amf_1, a_{2n})$ with $n >0$ coincide as well as all $\type(\Amf_1, a_{2n+1})$ with $n > 1$. However, we have $B \in \type(\Amf_1, a_0) \ne \type(\Amf_1, a_2) \notni B$ and $\Diamond_{R^-} B \in \type(\Amf_1, a_1) \ne \type(\Amf_1, a_3) \notni \Diamond_{R^-} B$. Also, all $\type(\Amf_2, e_{2nl})$ with $n > 0$, $l=1,2$ coincide as well as all $\type(\Amf_2, e_{2n+1l})$, for $n > 1$, $l=1,2$. \hfill $\dashv$\\
\centerline{
\begin{tikzpicture}[>=latex,line width=0.5pt,xscale = 1.1,yscale = 0.9]
\node[]  at (-1.3,2.5) {$\mathfrak A_1$};
\node[point,fill=black,scale = 0.5,label=left:{$\varphi_1$\hspace*{2mm}\mbox{}},label=above:{\footnotesize $B$},label=below:{\footnotesize $a_0$}] (x1) at (0,2.5) {};
\node[point,fill=black,scale = 0.5,label=above:{\footnotesize $A$},label=below:{\footnotesize $a_1$}] (x2) at (1,2.5) {};
\node[point,fill=black,scale = 0.5,label=below:{\footnotesize $a_2$}] (x3) at (2,2.5) {};
\node[point,fill=black,scale = 0.5,label=above:{\footnotesize $A$},label=below:{\footnotesize $a_3$}] (x4) at (3,2.5) {};
\node[label=right:{$\dots$}] (x5) at (3.5,2.5) {};
\draw[->] (x1) to (x2);
\draw[->] (x2) to (x3);
\draw[->] (x3) to (x4);
\draw[-] (x4) to (x5);
\node[]  at (-1.3,1) {$\mathfrak A_2$};
\node[point,fill=black,scale = 0.5,label=left:{$\varphi_2$\hspace*{2mm}\mbox{}},label=above:{\footnotesize $B$},label=below:{\footnotesize $e_0$}] (y1) at (0,1) {};
\node[point,fill=black,scale = 0.5,label=above:{\footnotesize $A$},label=below:{\footnotesize $e_{11}$}] (y21) at (1,1.5) {};
\node[point,fill=black,scale = 0.5,label=below:{\footnotesize $A$},label=above:{\footnotesize $e_{12}$}] (y22) at (1,0.5) {};
\node[point,fill=black,scale = 0.5,label=below:{\footnotesize $e_{21}$}] (y31) at (2,1.5) {};
\node[point,fill=black,scale = 0.5,label=above:{\footnotesize $e_{22}$}] (y32) at (2,0.5) {};
\node[point,fill=black,scale = 0.5,label=above:{\footnotesize $A$},label=below:{\footnotesize $e_{31}$}] (y41) at (3,1.5) {};
\node[point,fill=black,scale = 0.5,label=below:{\footnotesize $A$},label=above:{\footnotesize $e_{32}$}] (y42) at (3,0.5) {};
\node[label=right:{$\dots$}] (y51) at (3.5,1.5) {};
\node[label=right:{$\dots$}] (y51') at (3.5,1) {};
\node[label=right:{$\dots$}] (y52) at (3.5,0.5) {};
%\node[label=right:{$\dots$}] (y52') at (3.5,0.5) {};
\draw[->] (y1) to (y21);
\draw[->] (y1) to (y22);
\draw[->] (y21) to (y31);
\draw[->] (y22) to (y32);
\draw[->] (y21) to (y32);
\draw[->] (y22) to (y31);
\draw[->] (y31) to (y41);
\draw[->] (y32) to (y42);
\draw[->] (y31) to (y42);
\draw[->] (y32) to (y41);
\draw[-] (y41) to (y51);
\draw[-] (y41) to (y51');
\draw[-] (y42) to (y52);
\draw[-] (y42) to (y51');
\draw[dashed] (0,1.75) ellipse (2.3 mm and 1.3 cm ) ;
\draw[dashed] (1,1.5) ellipse (3 mm and 1.5 cm ) ;
\draw[dashed] (2,1.5) ellipse (3 mm and 1.5 cm ) ;
\draw[dashed] (3,1.5) ellipse (3 mm and 1.5 cm ) ;
\end{tikzpicture}
}
\end{example}

As in Sec.~\ref{sec:gmlu}, $\Amf_{1},\Amf_{2}$ and $\bis$ (if they exist) are assembled from mosaics, which have to be appropriately modified to deal with inverse modalities. To this end, we supply each node with some information about the types of its `parents' and `children'\!\!. Following~\cite[Sec.\ 8.3]{Pratt23book}, we call the result a star-type. 

To give a precise definition, we require a few auxiliary notions. 
A \emph{successor type set} $\frak{s}$  contains, for each type $\type$ and each $S\in \{R,R^{-}\}$ with $R\in \sigma$, exactly one expression of the form $\Diamond_{S}^{=n}\type$, for $0\leq n\leq \mx$, or $\Diamond_{S}^{>\mx}\type$. Let $a\in \dom(\Amf)$, $D\subseteq \dom(\Amf)$ and $n(a,S,\type, D) = |\{b\in D \mid (a,b)\in S^{\Amf},  \type=\type(\Amf,b)\}|$. The \emph{successor type set $\frak{s}(\Amf,a,D)$ defined by $a$ and $D$} contains 
\begin{itemize}
\item[--] $\Diamond_{S}^{=n}\type\in \frak{s}(\Amf,a,D)$ if $n = n(a,S,\type,D) \leq \mx$, and 

\item[--] $\Diamond_{S}^{> \mx}\type\in \mathfrak{s}(\Amf,a,D)$ if $n(a,S,\type,D) > \mx$.
\end{itemize}  
%
%As $\frak{s}(\Amf,a,D)=\frak{s}(\Amf,a,D\cap N(\Amf,a))$, only nodes in the neighbourhood of $a$ are relevant for the definition of $\frak{s}(\Amf,a,D)$.
%
If we are only interested in the existence of an $S$-successor with a type $\type$, we use $\Diamond_{S}^{>0}\type\in \frak{s}$ as a shorthand for `\mbox{$\Diamond_{S}^{=n}\type\in \frak{s}$} 
for $n >0$ or $\Diamond_{S}^{> \mx}\type\in \mathfrak{s}$'\!. We do not mention elements $\Diamond^{=0}_S \type$ in $\frak{s}$ and  omit $\Amf$ from $\frak{s}(\Amf,a,D)$, $\type(\Amf,a)$, etc., if understood.

%\begin{example}\em
%In Example~\ref{ex1sec6}, we have, for $n > 1$ and $m > 0$,
%%
%\begin{align*}
%& \frak{s}(e_0,\emptyset) = \{ \Diamond^{=0}_S \type \mid \type \in T, S \in \{R,R^-\}\},\\
%& \frak{s}(e_{1l},\{e_0\}) = \{ \Diamond^{=1}_{R^-} \type(e_0),\Diamond^{=0}_S \type \mid \text{other $S$ or $\type$}\},\ l =1,2,\\
%& \frak{s}(e_{n1},\{e_{(n-1)1},e_{(n-1)2}\}) = \{ \Diamond^{=2}_{R^-} \type(e_{(n-1)1}),\Diamond^{=0}_S \type \mid \text{other $S$ or $\type$}\},\\
%& \frak{s}(e_{m1},\{e_{(m+1)1},e_{(m+1)2}\}) = \{ \Diamond^{=2}_{R} \type(e_{(m+1)1}),\Diamond^{=0}_S \type \mid \text{other $S$ or $\type$}\}.
%\end{align*}
%%
%In what follows, we omit $\Diamond^{=0}_S \type$ from successor type sets.
%\end{example}

A \emph{star-type} takes the form $\ftype = (\type,\frak{s}_{p},\frak{s}_{c})$, where $\type$ is a type,  $\frak{s}_{p}$ a \emph{parent} successor type set, and $\frak{s}_{c}$ a \emph{child} successor type set. To be coherent with the semantics of graded modalities, $\ftype$ should satisfy some natural conditions defined below. 
%
%We next spell out when, for $\ftype=(\type,\frak{s}_{p},\frak{s}_{c})$, adding up the number of witnesses for the graded modalities in $\frak{s}_{p}$ and $\frak{s}_{c}$ ensures that the graded modalities in $\type$ are satisfied. 
%
The \emph{$S$-profile $P_{S}(\frak{s})$ of $\frak{s}\in \{\frak{s}_{p},\frak{s}_{c}\}$} contains, for any $\chi\in \sub(\varphi_1,\varphi_2)$,
\begin{itemize}
\item[--] $\Diamond_{S}^{=n}\chi$ if $n$ is the sum of all $k$ with $\Diamond_{S}^{=k}\type' \in \frak{s}$ and $\chi\in \type'$;

\item[--] $\Diamond_{S}^{>\mx}\chi$ if this sum is $>\mx$ or $\Diamond_{S}^{>\mx} t'\in \frak{s}$, for some $\type' \ni \chi$.
\end{itemize}
The \emph{$S$-profile $P_{S}(\frak{s}_{p},\frak{s}_{c})$ of $\frak{s}_{p}$ and $\frak{s}_{c}$} contains, for all formulas $\chi\in \sub(\varphi_1,\varphi_2)$,
\begin{itemize}
\item[--] $\Diamond_{S}^{=n}\chi$ if $n=n_{1}+n_{2} \leq \mx$ for $\Diamond_{S}^{=n_{1}}\chi$ in the $S$-profile of $\frak{s}_{p}$ and $\Diamond_{S}^{=n_{2}}\chi$ in the $S$-profile of $\frak{s}_{c}$;

\item[--] $\Diamond_{S}^{>\mx}\chi$ otherwise.
\end{itemize}
A star-type $\ftype=(\type,\frak{s}_{p},\frak{s}_{c})$ is called \emph{coherent} if, for any formula $\Diamond_{S}^{\geq n}\chi\in \sub(\varphi_1,\varphi_2)$, we have $\Diamond_{S}^{\geq n}\chi\in \type$ iff $P_{S}(\frak{s}_{p},\frak{s}_{c})$ contains either $\Diamond_{S}^{=n'}\chi$, for
some $n'\geq n$, or $\Diamond_{S}^{>\mx}\chi$. Henceforth, we assume all star-types to be coherent. 

The \emph{star-type realised by $a\in \dom(\Amf)$ and $D\subseteq \dom(\Amf)$ in $\Amf$} is the triple 
$\ftype(a,D) = \big( \type(a),\frak{s}(a,D),
\frak{s}(a,N(a)\setminus D) \big)$, where $N(a)= \{ b\in \dom(\Amf)\mid (a,b)\in S^{\Amf},\ S\in \{R,R^{-}\},\ R\in \sigma\}$ is the \emph{neighbourhood of $a$ in $\Amf$}.

%
%Instead of $\dom(\Amf)\setminus D$, we can take $N(a) \setminus D$; see the picture below, where the neighbourhood of $a$ contains six nodes: those in $D$ determine the parent successor type set of the fork and the remaining ones determine the child successor type set.\\[-15pt]
%%
%\centerline{\includegraphics[scale=0.2]{../star}}

%Clearly, $\ftype(\Amf,a,D)$ is coherent. 
%\textcolor{blue}{Intuitively, $\ftype(\Amf,a,D)$ regards the elements of $N(\Amf,a) \cap D$ as parents of $a$, and those of $N(\Amf,a) \cap (\dom(\Amf)\setminus D)$ as children of $a$.}

%\vspace*{-7mm}

\begin{example}\em %\label{ex2sec6}
The picture below illustrates 4 star-types realised 
by $\Amf_1$, $\Amf_2$ from Example~\ref{ex1sec6}, where the parent/child successor type set is on the left/right-hand side of the centre node $\bullet$. \hfill $\dashv$  
\end{example}

\centerline{
\begin{tikzpicture}[>=latex,line width=0.5pt,xscale = 1.2,yscale = 1]
\node[]  at (-0.8,0) {$\ftype(a_0,\emptyset)$};
\node[point,fill=black,scale = 0.7,label=below:{\footnotesize $\type(a_0)$}] (x1) at (0,0) {};
\node[point,scale = 0.5,label=below:{\footnotesize $\Diamond^{=1}_R \type(a_1)$}] (x2) at (1,0) {};
\draw[-] (x1) to (x2);
\node[]  at (5.2,0) {$\ftype(a_1,\{a_0\})$};
\node[point,fill=black,scale = 0.7,label=below:{\footnotesize $\type(a_1)$}] (y1) at (3.2,0) {};
\node[point,scale = 0.5,label=below:{\footnotesize $\Diamond^{=1}_{R^-} \type(a_0)$}] (y2) at (2.2,0) {};
\node[point,scale = 0.5,label=below:{\footnotesize $\Diamond^{=1}_R  \type(a_2)$}] (y3) at (4.2,0) {};
\draw[-] (y1) to (y2);
\draw[-] (y2) to (y3);
\node[]  at (-0.8,-0.8) {$\ftype(e_0,\emptyset)$};
\node[point,fill=black,scale = 0.7,label=below:{\footnotesize $\type(e_0)$}] (u1) at (0,-0.8) {};
\node[point,scale = 0.5,label=below:{\footnotesize $\Diamond^{=2}_R \type(e_{11})$}] (u2) at (1,-0.8) {};
\draw[-] (u1) to (u2);
\node[]  at (5.2,-0.8) {$\ftype(e_{11},\{e_0\})$};
\node[point,fill=black,scale = 0.7,label=below:{\footnotesize $\type(e_{11})$}] (z1) at (3.2,-0.8) {};
\node[point,scale = 0.5,label=below:{\footnotesize $\Diamond^{=1}_{R^-} \type(e_0)$}] (z2) at (2.2,-0.8) {};
\node[point,scale = 0.5,label=below:{\footnotesize $\Diamond^{=2}_R  \type(e_{21})$}] (z3) at (4.2,-0.8) {};
\draw[-] (z1) to (z2);
\draw[-] (z2) to (z3);
\end{tikzpicture}
}

A \emph{mosaic} is a pair $M=(M_{1},M_{2})$ of sets of star-types such that $A\in \type$ iff $A\in \type'$, for all $(\type,\frak{s}_{p},\frak{s}_{c}), (\type',\frak{s}_{p}',\frak{s}_{c}')\in M_{1}\cup M_{2}$ and $A\in \varrho$, and 
for each $(\type,\frak{s}_{p},\frak{s}_{c})\in M_{1}\cup M_{2}$, all types in $\frak{s}_{p}\cup \frak{s}_{c}$ are $\Du$-equivalent to $\type$.
%\end{itemize} 
%
Given structures $\Amf_{1},\Amf_{2}$ with $a\in \dom(\Amf_{j_{1}})$ and $p\in \dom(\Amf_{j_{2}})$, for some $j_1,j_2 \in \{1,2\}$, the \emph{mosaic defined by $a$ and $p$} is 
$
M(\Amf_{1},\Amf_{2},a,p)= \big( M_{1}(\Amf_{1},a,p),M_{2}(\Amf_{2},a,p) \big),
$
where, for $ D_{i}=\{ d\in \dom(\Amf_{i}) \mid \Amf_{i},d\sim_{\MLiu(\varrho)}\Amf_{j_{2}},p\}$, $i=1,2$,
$$
M_{i}(\Amf_{i},a,p)\! =\! \{ \ftype(\Amf_{i},e,D_{i})\mid e\in
\dom(\Amf_{i}), \Amf_{i},e \sim_{\MLiu(\varrho)}\! \Amf_{j_{1}}\!,a\}.
$$
The star-types $\ftype(\Amf_{i},e,D_{i})$ comprising $M_{i}(\Amf_{i},a,p)$ are illustrated below, where all nodes in each dashed   ellipse are $\MLiu(\varrho)$-bisimilar to each other. By the definition, $a\sim_{\MLiu(\varrho)}a'$ and 
$p\sim_{\MLiu(\varrho)}p'$ imply $M(\Amf_{1},\Amf_{2},a,p) = M(\Amf_{1},\Amf_{2},a',p')$.\\[5pt]
\centerline{
\begin{tikzpicture}[>=latex,line width=0.5pt,xscale = 1,yscale = 1]
\node[point,scale = 0.5,label=above:{\footnotesize $p$}] (x0) at (0,0) {};
\node[point,fill=black,scale = 0.7,label=above:{\footnotesize $a$}] (y0) at (1.5,0) {};
\node[point,fill=black,scale = 0.7,label=above:{\footnotesize $e$}] (y1) at (1.5,-0.75) {};
\node[point,scale = 0.5] (x1) at (0,-0.5) {};
\node[point,scale = 0.5] (x2) at (0,-1) {};
\node[label=right:{.}] (x) at (-0.29,-0.65) {};
\node[label=right:{.}] (x) at (-0.29,-0.75) {};
\node[label=right:{.}] (x) at (-0.29,-0.85) {};
\node[point,scale = 0.5] (z1) at (3,-0.5) {};
\node[point,scale = 0.5] (z2) at (3,-1) {};
\node[label=right:{.}] (z) at (2.72,-0.65) {};
\node[label=right:{.}] (z) at (2.72,-0.75) {};
\node[label=right:{.}] (z) at (2.72,-0.85) {};
\node[label=right:{$\vdots$}] (x3) at (-0.3,-1.3) {};
\node[label=right:{$\vdots$}] (y3) at (1.2,-1.1) {};
\draw[-] (x1) to (y1);
\draw[-] (x2) to (y1);
\draw[-] (z1) to (y1);
\draw[-] (z2) to (y1);
\draw[dashed] (0,-0.7) ellipse (2.5 mm and 1.2 cm ) ;
\draw[dashed] (1.5,-0.7) ellipse (2.5 mm and 1.2 cm ) ;
\node[]  at (-0.4,-1.8) {$D_i$};
\node[]  at (4,-0.75) {$\ftype(\Amf_i,e,D_{i})$};
\end{tikzpicture}
}
\\
In a \emph{root mosaic} $M$, we have $\Diamond_{S}^{=0}\type'\in \frak{s}_{p}$, for all types $\type'$, all $S$, and all $(\type,\frak{s}_{p},\frak{s}_{c})\in M_{1}\cup M_{2}$. 
%$M(\Amf_{1},\Amf_{2},a,p)$ {\color{red} is a root mosaic if there is no $p'\in N(\Amf_{j_{1}},a)$ with $p'\sim_{\MLiu(\varrho)}p$}. We also 
We obtain a root mosaic $M(\Amf_{1},\Amf_{2},a,\emptyset) = (M_1(\Amf_{1},a,\emptyset),M_2(\Amf_{2},a,\emptyset))$ with $a\in \dom(\Amf_{j})$ and $j \in \{1,2\}$ by taking, for $i = 1,2$,
$$
M_{i}(\Amf_{i},a,\emptyset) = \{ \ftype(\Amf_{i},e,\emptyset)\mid e\in
\dom(\Amf_{i}), \Amf_{i},e \sim_{\MLiu(\varrho)}\Amf_{j},a\}. 
$$

\begin{example}\em\label{ex3sec6}
For $\Amf_1,\Amf_2$ from Example~\ref{ex1sec6} and $n \ge 1$, 
\begin{align*}
& M(\Amf_{1},\Amf_{2},a_0,\emptyset) = \big(\{\ftype(a_0,\emptyset)\}, \{\ftype(e_0,\emptyset)\}\big),\\
& M(\Amf_{1},\Amf_{2},a_1,a_0) = \big(\{\ftype(a_1,\{a_0\})\}, \{\ftype(e_{11},\{e_0\})\} \big),\\
& M(\Amf_{1},\Amf_{2},a_{n+1},a_n) = \big(\{\ftype(a_{n+1},\{a_n\})\}, \{\ftype(e_{(n+1)1},\{e_{n1},e_{n2}\})\} \big).
\end{align*}
For $n \ge 3$, $M(\Amf_{1},\Amf_{2},a_{n+1},a_{n}) = M(\Amf_{1},\Amf_{2},a_{n+3},a_{n+2})$.
\hfill $\dashv$
\end{example}

%Let $M=(M_{1},M_{2})$ be a mosaic and $\mathcal{M}$ a set of mosaics. We now define when $\mathcal{M}$ provides enough witnesses for the graded modalities in $\frak{s}_{c}$, for all $(\type,\frak{s}_{p},\frak{s}_{c})\in M_{1}\cup M_{2}$.
%We say that $\ftype=(\type,\frak{s}_{p},\frak{s}_{c})$ and $\ftype'= (\type',\frak{s}_{p}',\frak{s}_{c}')$ are \emph{$S$-coherent}, in symbols $\ftype \rightsquigarrow_{S} \ftype'$, if 
%$\Diamond_{S}^{>0}\type'\in \frak{s}_{c}$ and 
%$\Diamond_{S^{-}}^{>0}\type\in \frak{s}_{p}$.
%Intuitively, $S$-coherence captures that we can draw an $S$-edge between a node representing $\ftype$ and a node representing $\ftype'$.

%We next provide a characterisation of existential witnesses that is closer to the intended semantics than syntactic existential witnesses.

As in Sec.~\ref{sec:gmlu}, we use mosaics for building  the required structures $\Amf_{1}$ and $\Amf_{2}$. To this end, we need the following notion of witness. Let $M=(M_{1},M_{2})$ be a mosaic, $\mathcal{M}$ a set of mosaics, $\Delta=\Delta_{1}\cup \Delta_{2}$ and $\Gamma=\Gamma_{1} \cup \Gamma_{2}$, where, for $i=1,2$,
\begin{align*}
& \Delta_{i} = \{(\ftype,j)\mid j<\omega ,\ \ftype\in M_{i}\},\\ 
&\Gamma_{i} = \{(\ftype',j',M')\mid j<\omega,\ \ftype'\in M_{i}',\ M'=(M_{1}',M_{2}')\in \mathcal{M}\}.
\end{align*}
%
%Intuitively, $\Delta_i$ contains $\omega$-many copies of each star-type in $M_i$, and $\Gamma_i$ $\omega$-many copies of each pointed mosaic $(\ftype',M')$ in $\mathcal{M}$.
%
We call $\mathcal{M}$ a \emph{witness} for $M$ if, for each $S\in \{R,R^{-}\}$ with $R\in \sigma$, there exists a relation
$$
S^{M,\mathcal{M}} \subseteq (\Delta_{1} \times \Gamma_{1})\cup (\Delta_{2} \cup \Gamma_{2})
$$
such that the following conditions are satisfied:
\begin{description}
\item[(witness$_1$)] if $(\ftype,j)\in \Delta$, $\ftype=(\type,\frak{s}_{p},\frak{s}_{c})$ and $\Diamond_{S}^{=n}\type'\in \frak{s}_{c}$, then $(\ftype,j)$ has exactly $n$ different $S^{M,\mathcal{M}}$-successors of type $\type'$---i.e., of the form $((\type',\frak{s}_{p}',\frak{s}_{c}'),j',M')$---in $\Gamma$; if $\Diamond_{S}^{>\mx}\type'\in \frak{s}_{c}$, $(\ftype,j)$ has $> \mx$ different $S^{M,\mathcal{M}}$-successors of type $\type'$ in $\Gamma$;

%\item[(witness$_2$)] if $(\ftype,j)\in \Delta$, $\ftype=(\type,\frak{s}_{p},\frak{s}_{c})$ and $\Diamond_{S}^{>\mx}\type'\in \frak{s}_{c}$, then $(\ftype,j)$ has at least $\mx+1$ different $S^{M,\mathcal{M}}$-successors of type $\type'$ in $\Gamma$;

\item[(witness$_2$)] if $(\ftype',k,M')\in \Gamma$, $\ftype'= (\type',\frak{s}_{p}',\frak{s}_{c}')$ and $\Diamond_{S}^{=n}\type\in \frak{s}_{p}'$, then $(\ftype',k,M')$ has exactly $n$ different $S^{M,\mathcal{M}}$-successors of type $\type$ in $\Delta$; if $\Diamond_{S}^{>\mx}\type\in \frak{s}_{p}'$, then $(\ftype',k,M')$ has $> \mx$  $S^{M,\mathcal{M}}$-successors of type $\type$ in $\Delta$;

%\item[(witness$_4$)] if $(\ftype',k,M')\in \Gamma$, $\ftype'=(\type',\frak{s}_{p}',\frak{s}_{c}')$ and $\Diamond_{S}^{>\mx}\type\in \frak{s}_{p}'$, then $(\ftype',k,M')$ has $> \mx$  $S^{M,\mathcal{M}}$-successors of type $\type$ in $\Delta$;

\item[(bisim$_1$)] if $R\in \varrho$, $S \in \{R,R^-\}$ and some element of $\Delta$ has an $S^{M,\mathcal{M}}$-successor $(\ftype',j',M') \in \Gamma$, then every element of $\Delta$ has an $S^{M,\mathcal{M}}$-successor of the form $(\ftype'',j'',M') \in \Gamma$;

\item[(bisim$_2$)] if $R\in \varrho$, $S \in \{R,R^-\}$ and some $(\ftype',j',M') \in \Gamma$ is an $S^{M,\mathcal{M}}$-successor of an element of $\Delta$, then all $(\ftype'',j'',M')$ in $\Gamma$ are  $S^{M,\mathcal{M}}$-successors of some element of $\Delta$.
\end{description}
Witnesses for mosaics defined by (disjoint) structures $\Amf_{1}$ and  $\Amf_{2}$ can be  extracted as follows. If $M=M(\Amf_{1},\Amf_{2},a,\emptyset)$ is a root mosaic, then 
$$
\mathcal{M}=\{ M(\Amf_{1},\Amf_{2},b,a) \mid b\in N(a'), \text{ for some } a'\sim_{\MLiu(\varrho)} a\}
$$
is a witness for $M$. If $M=M(\Amf_{1},\Amf_{2},a,p)$, for $a,p\in \dom(\Amf_{i})$, then we obtain a witness $\mathcal{M}$ for $M$ by taking 
\begin{multline*}
%& B_{a,p}= \{ b \in N(\Amf_{i},a) \mid \Amf_{i}, b\not\sim_{\MLiu(\varrho)}\Amf_{i},p\},\\
\mathcal{M}=\{ M(\Amf_{1},\Amf_{2},b,a) \mid \Amf_{i},  b\not\sim_{\MLiu(\varrho)}\Amf_{i},p,\  b \in N(a'),\\ \text{ for some } a'\sim_{\MLiu(\varrho)} a\}.
\end{multline*}
Alternative and possibly larger witnesses can be obtained by dropping the restrictions `$b \in N(a')$, for some $a'\sim_{\MLiu(\varrho)} a$'\!.

\begin{example}\em
Below are the mosaic-witness pairs defined by $\Amf_1$ and $\Amf_2$ from Example \ref{ex1sec6} (see also Example~\ref{ex3sec6}):
\begin{align*}
& %\mbox{} \hspace*{0.5cm} 
M(\Amf_{1},\Amf_{2},a_0,\emptyset), \ \{M(\Amf_{1},\Amf_{2},a_1,a_0)\},\\
& %\mbox{} \hspace*{0.5cm} 
M(\Amf_{1},\Amf_{2},a_n,a_{n-1}), \ \{M(\Amf_{1},\Amf_{2},a_{n+1},a_n)\}, \text{ for } n \ge 1.
%\\
%%
%& %\mbox{} \hspace*{0.5cm} 
%M(\Amf_{1},\Amf_{2},a_4,a_3), \ \{M(\Amf_{1},\Amf_{2},a_3,a_2)\}.  
\end{align*}
Relations $R^{M,\mathcal{M}}$ for these pairs $M$, $\mathcal{M}$ can be read from $\Amf_1,\Amf_2$ and do not need infinitely-many copies of star-types from the definition of witness. Those copies will prove to be a convenient technical tool later on in this section. \hfill $\dashv$ 
\end{example}

Compared to the mosaic construction in Sec.~\ref{sec:gmlu}, it is much harder now to show that $(\dag)$ it suffices to use mosaics and witnesses of exponential size and that $(\ddag)$ checking whether a given exponential set of such mosaics is a witness for a given mosaic can be done in double exponential time. Here, by the \emph{size} of a mosaic $M = (M_1,M_2)$ we mean $|M| = |M_1| + |M_2|$.

Call successor sets $\frak{s}$ and $\frak{s}'$ \emph{profile-equivalent} if they have the same $S$-profiles for all $S\in \{R,R^{-}\}$ with $R\in \sigma$. Star-types $\ftype=(\type,\frak{s}_{p},\frak{s}_{c})$ and $\ftype'=(\type',\frak{s}_{p}',\frak{s}_{c}')$ are \emph{parent profile-equivalent} if $\type=\type'$ and $\frak{s}_{p}$, $\frak{s}_{p}'$ are profile equivalent; $\ftype$ and $\ftype'$ are \emph{child profile-equivalent} if $\type=\type'$ and $\frak{s}_{c}$, $\frak{s}_{c}'$ are profile equivalent.
Mosaics $M=(M_{1},M_{2})$ and $M'=(M_{1}',M_{2}')$ are \emph{child profile-equivalent} if, for any $\ftype \in M_{i}$, there exists a child-profile equivalent $\ftype'\in M_{i}'$, and vice versa, for $i=1,2$; $M$ and $M'$ are \emph{parent profile-equivalent} if, for any $\ftype \in M_{i}$, there exists a parent-profile equivalent $\ftype'\in M_{i}'$, and vice versa, for $i=1,2$.
%	\item \emph{down/up profile-equivalent} if for any $(t,E_{up},E_{down})\in M$ %there exists $(t',E_{up}',E_{down}')\in M'$ such that $(t,E_{down})$ and %$(t',E_{up}')$ are profile-equivalent and vice versa. 
Observe that the number of child/parent profile non-equivalent star-types does not exceed $(|\varphi_1|+|\varphi_2|)^{4(|\varphi_1|+|\varphi_2|)}$. The following lemma, proved in the appendix, establishes property $(\dag)$ above using a syntactic characterisation of witnesses, which is efficient in the sense that it also guarantees $(\ddag)$:

\begin{restatable}{lemma}{expcardinality}
\label{lem:cardin}
There is a polytime computable number \mbox{$k_{\varphi_{1}\varphi_{2}} = O(2^{|\varphi_{1}|+|\varphi_{2}|})$} such that, for any mosaic $M=(M_1,M_2)$ and witness $\mathcal{M}$ for $M$, there exist a mosaic $N=(N_1,N_2)$ and a witness $\mathcal{N}$ for $N$ satisfying the following\textup{:}
\begin{itemize}
\item[--] $|N|$, $|\mathcal{N}|$, and all $|N'|$ with $N' \in \mathcal{N}$ do not exceed $k_{\varphi_{1}\varphi_{2}}$,

%	\item[--] $|N_1|+|N_2|\leq k_{\varphi_{1},\varphi_{2}}$,
%	\item[--] $|\mathcal{N}|\leq k_{\varphi_{1},\varphi_{2}}$,
%	\item[--] $|N_1'|+|N_2'|\leq k_{\varphi_{1},\varphi_{2}}$ for all $N'=(N_1',N_2')\in \mathcal{N}$,

\item[--] $M$ and $N$ are parent profile-equivalent,

\item[--] every $N'\in\mathcal{N}$ has a child profile-equivalent $M'\in \mathcal{M}$,

\item[--] $N_i \subseteq M_i$ and, for every $N'=(N_1',N_2')\in \mathcal{N}$, there is $M'=(M_1',M_2')\in \mathcal{M}$ with $N_i'\subseteq M_i'$, for $i=1,2$.
\end{itemize}
\end{restatable}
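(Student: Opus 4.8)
The plan is to prove Lemma~\ref{lem:cardin} by a ``pruning plus merging'' argument on the infinite-copy universe $\Gamma$, turning it into a syntactic selection problem that can be solved by a flow/matching computation of polynomial degree, thereby simultaneously yielding the exponential size bound $(\dag)$ and, implicitly, the algorithmic test $(\ddag)$.

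\smallskip
\noindent\textbf{Step 1: reduce to a per-mosaic selection problem.}
First I would fix the mosaic $M=(M_1,M_2)$ and a witness $\mathcal M$ with a witnessing family of relations $S^{M,\mathcal M}$ for $S\in\{R,R^-\}$, $R\in\sigma$. The relations live in $(\Delta_1\times\Gamma_1)\cup(\Delta_2\times\Gamma_2)$, where $\Delta_i$ uses $\omega$ copies of each star-type in $M_i$ and $\Gamma_i$ uses $\omega$ copies of each star-type in each $M'\in\mathcal M$. The key observation is that \textbf{(witness$_1$)}, \textbf{(witness$_2$)} only ever require \emph{at most $\mx+1$} successors of a given type (more precisely, exactly $n\le\mx$ or else $>\mx$), so a single representative element of $\Delta$ (one copy of each star-type) needs only $\le\mx+1$ successor copies per type per $S$; and \textbf{(bisim$_1$)}, \textbf{(bisim$_2$)} are purely existential ``every element of $\Delta$ has \dots'' / ``every $(\ftype',j',M')$ in $\Gamma$ is \dots'' conditions which, once the relation restricted to one copy of each type is fixed, can be satisfied by replicating the \emph{local pattern} of $S^{M,\mathcal M}$ uniformly across all $\omega$ copies (because all copies of a star-type are interchangeable). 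So it suffices to keep, for each star-type $\ftype\in M_i$, a bounded number of copies in $\Delta_i$, and for each star-type $\ftype'$ appearing in some $M'\in\mathcal M$, a bounded number of copies in $\Gamma_i$; boundedness of the number of \emph{distinct} star-types is not available, but boundedness \emph{up to profile-equivalence} is, by the $(|\varphi_1|+|\varphi_2|)^{4(|\varphi_1|+|\varphi_2|)}$ count already noted.

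\smallskip
\noindent\textbf{Step 2: collapse along profile-equivalence.}
Here is the crucial point and, I expect, the main obstacle: the number of star-types is doubly exponential (the successor type sets $\frak s_p,\frak s_c$ range over functions from types to $\{0,\dots,\mx,{>}\mx\}$, and there are exponentially many types), so we cannot keep one copy of each. Instead I would \emph{merge} star-types that are child profile-equivalent when they sit in $\Gamma$ (their internal structure $\frak s_c$ is irrelevant to the conditions imposed from $\Delta$'s side, which only see the parent profile $P_S(\frak s_p)$ --- wait, re-examining: \textbf{(witness$_1$)} constrains successors \emph{by type} $\type'$, so actually types, not profiles, matter for counting; but \textbf{(witness$_2$)} on the $\Gamma$ side constrains by $\frak s_p'$, and the back-and-forth bisimulation conditions only care which \emph{mosaic} $M'$ a $\Gamma$-element belongs to). The trick is: replace $M'$ by a child-profile-equivalent $M''$ with $M_i''\subseteq M_i'$ obtained by keeping, for each \emph{child profile class} realised in $M_i'$, one representative star-type; then $N':=M''$ has bounded size, is child profile-equivalent to $M'$, and satisfies $N_i'\subseteq M_i'$, as required by the last bullet. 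Symmetrically, replace $M$ by a \emph{parent}-profile-equivalent $N$ with $N_i\subseteq M_i$ by keeping one representative per parent profile class in each $M_i$. One must check that the coherence of star-types and the $\Du$-equivalence/propositional-uniformity conditions of a mosaic are preserved under taking subsets and under these replacements --- they are, since they are ``downward'' and local conditions. Then one re-derives a witness $\mathcal N$ for $N$ from $S^{M,\mathcal M}$ by restricting the relations to the surviving copies and re-expanding by uniform replication; the existence of enough $\Gamma$-successors of each required \emph{type} is the delicate check, since we discarded most types --- but profile-equivalence is exactly designed so that the \emph{profile} $P_S$, which is what \textbf{(coherence)} and hence the satisfaction of $\Diamond_S^{\ge n}\chi\in\type$ depends on, is retained, and \textbf{(witness$_1$)}'s by-type counting can be re-satisfied because we are free to choose \emph{which} child-profile representative to route to, and the counts $n\le\mx$ are small.

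\smallskip
\noindent\textbf{Step 3: extract the polynomial-degree bound and the algorithmic consequence.}
Finally I would count: the number of parent (child) profile classes is $\le (|\varphi_1|+|\varphi_2|)^{2(|\varphi_1|+|\varphi_2|)}$ each, so $|N|=|N_1|+|N_2|$ is bounded by $2\cdot(|\varphi_1|+|\varphi_2|)^{2(|\varphi_1|+|\varphi_2|)}\cdot(\mx+1)$ copies, and a single mosaic $N'\in\mathcal N$ is bounded the same way; the number of distinct mosaics needed in $\mathcal N$ is bounded by the number of $(S,\type')$-demands a fixed $\Delta$ must witness, namely $\le 2\cdot|\var(\sigma)|\cdot 2^{|\sub(\varphi_1,\varphi_2)|}\cdot(\mx+1)$ up to profile-equivalence, and each such demand contributes one mosaic-with-witness. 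All of these are $O(2^{|\varphi_1|+|\varphi_2|})$ once we recall $|\varphi_1|+|\varphi_2|\ge|\sub(\varphi_1,\varphi_2)|\cdot\log\mx$ and binary encoding of $\mx$; setting $k_{\varphi_1\varphi_2}$ to the maximum of these three exponential quantities (clearly polytime computable from $|\varphi_1|+|\varphi_2|$) gives the stated bound. The parent/child profile-equivalences and subset inclusions in the remaining three bullets hold by construction in Step 2. For the algorithmic remark $(\ddag)$ used elsewhere, note that the witnessing relations on the pruned $\Delta,\Gamma$ are now of exponential size, so verifying \textbf{(witness$_1$)}--\textbf{(bisim$_2$)} for a \emph{given} exponential candidate set reduces to checking a bipartite degree-constrained subgraph (a small integer flow) whose size is exponential, hence doable in double exponential time --- which I would state as a corollary rather than reprove here.
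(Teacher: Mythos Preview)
Your three-step outline (shrink each $M'\in\mathcal{M}$ to a child-profile-equivalent subset, shrink $\mathcal{M}$ itself, shrink $M$ to a parent-profile-equivalent subset) is exactly the skeleton of the paper's proof. However, two concrete points are missing or mis-argued, and both matter.

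\textbf{First gap: one representative per profile class is not enough.} When you discard a star-type $\ftype'=(\type',\frak{s}_p',\frak{s}_c')\in M_i'$ and try to redirect an edge $(\ftype,(\ftype',M'))$ to some surviving $\ftype''\in N_i'$, you need $\ftype\rightsquigarrow_S\ftype''$, i.e.\ $\Diamond_{S^-}^{>0}\type\in\frak{s}_p''$. Child-profile-equivalence of $\ftype'$ and $\ftype''$ says nothing about $\frak{s}_p''$; it only fixes the type $\type'$ and the child profile. Your handwave that ``profile-equivalence is exactly designed so that the profile $P_S$ \dots\ is retained'' conflates \emph{coherence of a single star-type} (internal consistency of $\type$ with $P_S(\frak{s}_p,\frak{s}_c)$) with \emph{$S$-coherence between two star-types} (which needs the actual types $\type,\type'$ to appear in the other's successor sets, not profiles). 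The paper's fix is to keep, in addition to one star-type per child profile class, one star-type per triple $(\type,\type',S)$ witnessing $\Diamond_{S^-}^{>0}\type\in\frak{s}_p'$ in $M_i'$; since the number of types is $\le 2^{|\sub(\varphi_1,\varphi_2)|}$, this is still exponential. The symmetric issue arises when shrinking $M$.

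\textbf{Second gap: the reduction to a finite ``local pattern'' is itself a lemma.} You assert in Step~1 that the $\omega$-copies can be handled by ``replicating the local pattern uniformly'', but the given relations $S^{M,\mathcal{M}}$ need not be uniform across copies, and \textbf{(bisim$_1$)}/\textbf{(bisim$_2$)} interact with the per-copy counting of \textbf{(witness$_1$)} in a way that is not obviously replicable. The paper isolates this as a separate equivalence (Lemma~\ref{lem:syntactic witness1}): $\mathcal{M}$ is a witness for $M$ iff it is a \emph{syntactic witness}, defined by finite relations $\bar S^{M,\mathcal{M}}\subseteq M_i\times\mathcal{M}_i^\bullet$ satisfying \textbf{(coherence)}, \textbf{(bisim)}, \textbf{(wit$_1$)}--\textbf{(wit$_3$)}. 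The non-obvious direction constructs $S^{M,\mathcal{M}}$ from $\bar S^{M,\mathcal{M}}$ by a careful enumeration, and \textbf{(wit$_3$)} (a function $g$ choosing one star-type per $S$-successor mosaic with bounded type-multiplicities) is precisely what makes \textbf{(bisim$_1$)} compatible with the counting in \textbf{(witness$_1$)}. Your ``bipartite degree-constrained subgraph / flow'' remark gestures at this, but you place it in Step~3 as an afterthought for $(\ddag)$, whereas it is needed up front to make the whole pruning argument go through. Once you work with syntactic witnesses, the three pruning steps become straightforward edge-redirection arguments, and in the middle step (shrinking $\mathcal{M}$) you must redirect only to mosaics that were \emph{already} $S$-successors of $M$, else \textbf{(bisim)} and \textbf{(wit$_3$)} can break --- a point your outline does not address.
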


Thus, in our mosaic elimination procedure establishing the $2\ExpTime$ upper bound in Theorem~\ref{thm:gmliu}, instead of any mosaic $M$ and its witness $\mathcal{M}$ we can use exponential-size $N$ and $\mathcal{N}$ provided by Lemma~\ref{lem:cardin}.
Let $\mathfrak{S}_{0}$ be the set of all mosaics $M=(M_{1},M_{2})$ with non-empty $M_{1}$, $M_{2}$ and $|M|\leq k_{\varphi_1\varphi_2}^{2}$. Clearly $\mathfrak{S}_{0}$ can be computed in double exponential time in $|\varphi_1|+|\varphi_2|$. 

Let $\mathfrak{S}\subseteq \mathfrak{S}_{0}$. We call $M\in \mathfrak{S}$ \emph{bad} in $\mathfrak{S}$ if at least one of the following conditions does not hold:
\begin{itemize}
\item[--] there is a witness $\mathcal{M}\subseteq \mathfrak{S}$ for $M$ with $|\mathcal{M}| \leq k_{\varphi_1\varphi_2}$;

\item[--] if $\Du\chi\in \type$ and $(\type,\frak{s}_{p},\frak{s}_{c})\in M_{i}$, for $(M_{1},M_{2})\in \mathfrak{S}$, $i \in \{1,2\}$, then there is a root mosaic $(M_{1}',M_{2}')\in \mathfrak{S}$ such that $\chi\in \type'$, for some $(\type',\frak{s}_{p}',\frak{s}_{c}')\in M_{i}'$ with $\Du$-equivalent $\type$, $\type'$.
\end{itemize}
Using ($\dag)$ and $(\ddag)$, we now compute in double exponential time a sequence $\mathfrak{S}_{0},\mathfrak{S}_{1},\dots$, where each $\mathfrak{S}_{i+1}$ eliminates all bad mosaics from $\mathfrak{S}_{i}$, stop when $\mathfrak{S}_{n+1} = \mathfrak{S}_{n}$ and set $\mathfrak{S}^{\ast} = \mathfrak{S}_{n}$. 

\begin{restatable}{lemma}{mainlemmaforgmliu}
\label{lem:gmliuall} 
The following conditions are equivalent\textup{:}
\begin{description}
\item[\rm $(i)$] there exist pointed $\sigma$-structures $\Amf_{i},a_{i}$, for $i=1,2$, such that   $\Amf_i \models \varphi_{i}(a_i)$ and $\Amf_{1},a_{1}\sim_{\MLiu(\varrho)}\Amf_{2},a_{2}$\textup{;}

\item[\rm $(ii)$] there exist a root mosaic $M^* = (M^*_{1},M^*_{2})\in \mathfrak{S}^{\ast}$ with $\varphi_{1}\in \type$ and $\varphi_{2}\in \type'$, for some  $(\type,\frak{s}_{p},\frak{s}_{c})\in M^*_{1}$ and $(\type',\frak{s}_{p}',\frak{s}_{c}')\in M^*_{2}$.
\end{description} 
\end{restatable}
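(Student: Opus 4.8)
The plan is to prove the two directions separately, obtaining $(ii)\Rightarrow(i)$ by a model construction that generalises the unravelling sketched in Sec.~\ref{sec:gmlu} to star-types and two-way edges, and $(i)\Rightarrow(ii)$ by an invariant-preservation argument for the elimination procedure that uses Lemma~\ref{lem:cardin} to stay inside bounded-size mosaics.

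For $(ii)\Rightarrow(i)$, fix for every mosaic in $\mathfrak{S}^{\ast}$ a witness of size $\le k_{\varphi_{1}\varphi_{2}}$ inside $\mathfrak{S}^{\ast}$ (available by the definition of ``bad''), and for every $\Du$-demand a root mosaic supplied by the $\Du$-elimination clause. The elements of $\Amf_{i}$ are the finite sequences $(\ftype_{0},j_{0},M_{0})\,S_{1}\,(\ftype_{1},j_{1},M_{1})\,S_{2}\cdots S_{n}\,(\ftype_{n},j_{n},M_{n})$ with $M_{0}$ either $M^{\ast}$ or one of the chosen root mosaics, each $S_{k}\in\{R,R^{-}\}$, each $M_{k}$ in the fixed witness of $M_{k-1}$, and $(\ftype_{k-1},j_{k-1})$ related to $(\ftype_{k},j_{k},M_{k})$ by $S_{k}^{M_{k-1},\mathcal{M}_{M_{k-1}}}$; the component index $i$ of a sequence (inherited from the star-type components) decides whether it lives in $\Amf_{1}$ or $\Amf_{2}$. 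A unary $A$ holds at a sequence iff $A$ lies in the type of its last star-type; an $R$-edge is placed between a sequence and each one-step extension, oriented according to whether the final $S$ is $R$ (forward) or $R^{-}$ (backward); and the bisimulation $\bis$ relates two sequences, one in each structure, whose skeletons $M_{0}S_{1}M_{1}\cdots S_{n}M_{n}$ coincide.

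The verification is a truth lemma by induction on $\chi\in\sub(\varphi_{1},\varphi_{2})$: a sequence ending in $(\ftype,j,M)$ with $\ftype=(\type,\frak{s}_{p},\frak{s}_{c})$ satisfies exactly the formulas in $\type$; for a graded modality one uses coherence of $\ftype$ together with {\bf(witness$_1$)} and {\bf(witness$_2$)}, which guarantee that the outgoing edges realise $\frak{s}_{c}$ and the single incoming edge realises $\frak{s}_{p}$ (as in \cite[Sec.~8.3]{Pratt23book}), and for $\Du\chi$ one uses the attached root mosaic. Since $M^{\ast}$ is a root mosaic, the sequences of length $0$ over $M^{\ast}$ realise the star-types in $M^{\ast}_{1}$ and $M^{\ast}_{2}$, giving roots $a_{1},a_{2}$ with $\Amf_{i}\models\varphi_{i}(a_{i})$. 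A second, routine argument shows $\bis$ is an $\MLiu(\varrho)$-bisimulation: globality holds because all sequences are taken, agreement on $\varrho$-unary predicates is built into the notion of mosaic, and the back-and-forth clauses for $R$ and $R^{-}$ follow from {\bf(bisim$_1$)} and {\bf(bisim$_2$)}; hence $a_{1}\bis a_{2}$, as required.

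For $(i)\Rightarrow(ii)$, assume disjoint $\Amf_{1},\Amf_{2}$ with $\Amf_{i}\models\varphi_{i}(a_{i})$ and $\Amf_{1},a_{1}\sim_{\MLiu(\varrho)}\Amf_{2},a_{2}$. Call a mosaic $N$ \emph{good} if $|N|\le k_{\varphi_{1}\varphi_{2}}^{2}$ and there are $a\in\dom(\Amf_{j_{1}})$ and $p\in\dom(\Amf_{j_{2}})\cup\{\emptyset\}$ with $N$ parent profile-equivalent to $M(\Amf_{1},\Amf_{2},a,p)$ (taking $p=\emptyset$ for root mosaics) and $N_{i}\subseteq M_{i}(\Amf_{i},a,p)$. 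Using the extraction of witnesses from structures displayed above, every $M(\Amf_{1},\Amf_{2},a,p)$ has a witness; feeding it to Lemma~\ref{lem:cardin} yields, for each good $N$, a witness $\mathcal{N}$ of size $\le k_{\varphi_{1}\varphi_{2}}$ all of whose members are again good (they are child profile-equivalent to, and componentwise contained in, structure-mosaics of the form $M(\Amf_{1},\Amf_{2},b,a)$), and the $\Du$-condition for $N$ is met by a good root mosaic. Hence no good mosaic is ever eliminated, so every good mosaic lies in $\mathfrak{S}^{\ast}$. Applying Lemma~\ref{lem:cardin} to $M(\Amf_{1},\Amf_{2},a_{1},\emptyset)$ gives a good mosaic $M^{\ast}$; it is a root mosaic (parent profile-equivalence preserves triviality of the parent successor type sets) and, since parent profile-equivalence preserves types and $M^{\ast}$ realises $\type(\Amf_{1},a_{1})\ni\varphi_{1}$ in its first and $\type(\Amf_{2},a_{2})\ni\varphi_{2}$ in its second component, $(ii)$ follows.

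The main obstacle is the bookkeeping in $(i)\Rightarrow(ii)$: the elimination procedure only ever sees mosaics of size $\le k_{\varphi_{1}\varphi_{2}}^{2}$, whereas the mosaics read off from $\Amf_{1},\Amf_{2}$ can be exponentially (indeed infinitely) large, so one must argue that the ``good'' (bounded) mosaics form a class closed under the witness relation. This is exactly what Lemma~\ref{lem:cardin} is for, but its output is only \emph{profile-equivalent} to and componentwise contained in the true structure-mosaics, so the definition of ``good'' and the closure argument must be calibrated so that this weaker relationship is enough to (a) preserve good-ness of the witness members and (b) keep their own bounded witnesses available; verifying that child profile-equivalence suffices here is the delicate step. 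On the $(ii)\Rightarrow(i)$ side the analogous subtlety is milder: the two-way unravelling must realise a node's parent successor type set from its single incoming edge and its child successor type set from its outgoing edges simultaneously, which is precisely why {\bf(witness$_1$)}/{\bf(witness$_2$)} are two-sided and why the unbounded copy indices $j$ are needed.
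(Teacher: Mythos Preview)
Your $(ii)\Rightarrow(i)$ construction has a genuine gap. By storing the full triple $(\ftype_k,j_k,M_k)$ at \emph{every} position of a sequence, each element acquires a unique tree-parent (the one-step shorter sequence), so what you call ``the single incoming edge'' is literally one edge. But the parent successor type set $\frak{s}_p$ of a star-type can demand several neighbours of a given type: if $\Diamond_{S^{-}}^{=3}\type\in\frak{s}_p$, the node must have three $S$-predecessors of type $\type$, and one incoming edge cannot supply them. Condition {\bf(witness$_2$)} is designed precisely so that an element $(\ftype_n,j_n,M_n)\in\Gamma$ has the right \emph{number} of $\Delta$-partners, not just one; your unravelling throws this multiplicity away. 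The paper avoids the problem by a deliberately asymmetric encoding: elements are words $M_0\cdots M_{n-1}(\ftype_n,j_n,M_n)$ in which only the \emph{last} position carries the star-type and index, while earlier positions carry just the mosaic. A fixed $s$ then has as parents \emph{all} $M_0\cdots M_{n-2}(\ftype_{n-1},j_{n-1},M_{n-1})$ with $(\ftype_{n-1},j_{n-1})\,S^{M_{n-1},\mathcal{M}}\,(\ftype_n,j_n,M_n)$, and {\bf(witness$_2$)} makes this set realise $\frak{s}_p$ exactly. This ``forgetting'' of intermediate $(\ftype,j)$ is the key device, and without it the truth lemma for graded modalities fails.

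Your $(i)\Rightarrow(ii)$ argument is closer in spirit to the paper's, but the closure step is not justified as written. Lemma~\ref{lem:cardin} takes a mosaic $M$ together with a witness $\mathcal{M}$ and returns a \emph{new} bounded mosaic $N$ with a bounded witness $\mathcal{N}$; it does not say that every bounded $N$ that is parent profile-equivalent to $M$ and contained in it inherits a bounded witness. So ``feeding it to Lemma~\ref{lem:cardin} yields, for each good $N$, a witness $\mathcal{N}$'' conflates the good mosaic you started with and the one the lemma produces. The paper closes this gap by an explicit splicing: it calls the mosaics output in the $N$-slot \emph{child-good} and those output inside some $\mathcal{N}$ \emph{parent-good}, and then, for parent- and child-profile-equivalent pairs $M,M'$, forms $M\!\mid\! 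M'$ by taking star-types $(\type,\frak{s}_p,\frak{s}_c')$ with $\frak{s}_p$ from the parent-good side and $\frak{s}_c'$ from the child-good side. These combined mosaics have size $\le k_{\varphi_1\varphi_2}^{2}$ and simultaneously serve as members of a bounded witness set \emph{and} as mosaics possessing bounded witnesses, which is what you need to show they are never bad. Your ``good'' class, as defined, is not obviously closed under the witness relation for this reason.
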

\begin{proof}
The proof of $(i) \Rightarrow (ii)$ needs more work compared to the proof of Lemma~\ref{lem:main1}. Given $\Amf_{i},a_{i}$, $i=1,2$, satisfying $(i)$, we consider the set of all pairs $p=(M, \mathcal{M}_{M})$ such that $\mathcal{M}_{M}$ is a witness for $M$ given by the $\Amf_{i}$. The union of all such $\{M\}\cup \mathcal{M}_{M}$ would not contain bad mosaics if $M$ and $\mathcal{M}_{M}$ were of size $\leq k_{\varphi_1\varphi_2}$, which is not necessarily the case. For $p=(M, \mathcal{M}_{M})$ of the wrong size, we use Lemma~\ref{lem:cardin} to  obtain $(M^{p},\mathcal{M}_{M}^{p})$ satisfying the conditions of that lemma. However, these pairs are not necessarily defined by the $\Amf_{i}$. We show in the appendix that one can construct from the $(M^{p},\mathcal{M}_{M}^{p})$ a set $\mathfrak{S}\subseteq \mathfrak{S}_{0}$ that does not contain any bad mosaics.

$(ii) \Rightarrow (i)$ 
%Suppose $M^{\ast}=(M_{1}^{\ast},M_{2}^{\ast})\in \mathfrak{S}^{\ast}$ is given by $(ii)$.
%
%for $(\type,\frak{s}_{p},\frak{s}_{c})\in M_{1}^\ast$ and $(\type',\frak{s}_{p}',\frak{s}_{c}')\in M_{2}^{\ast}$.
%	
As in the proof of Lemma~\ref{lem:main1}, we build the required $\Amf_{i},a_{i}$, $i=1,2$, and $\MLiu(\varrho)$-bisimulation $\bis$ between them using $\mathfrak{S}^*$ provided by $(ii)$, though the construction is somewhat different. For every $M\in \mathfrak{S}^{\ast}$, take a witness $\mathcal{M}_{M}\subseteq \mathfrak{S}^{\ast}$ for $M$ together with the corresponding relations $S^{M,\mathcal{M}_{M}} \subseteq \Delta^{M} \times \Gamma^{M}$ between $\Delta^{M}=\Delta_1^M \cup \Delta_2^M$ and $\Gamma^{M}=\Gamma_1^M \cup \Gamma_2^M$.
For $i=1,2$, the domain $\dom(\Amf_{i})$ of $\Amf_{i}$ comprises the words of the form 
\begin{equation}\label{eq1second+}
s = M_{0} \dots M_{n-1} (\ftype_{n},j_{n},M_{n}),
\end{equation}
%
%	\begin{equation}\label{eq1second}
%		s=(\ftype_{0},j_{0},M_{0})(\ftype_{1},j_{1},M_{1})\ldots (\ftype_{n},j_{n},M_{n})
%	\end{equation}
	%
where $M_{j}=(M_{j,1},M_{j,2})\in \mathfrak{S}^{\ast}$, $0 \le j \le n$, $M_{0}$ is a root mosaic, with any type in $M_{0,i}$ being $\Du$-equivalent to any type in $M^{\ast}_{i}$, $M_{j+1}\in \mathcal{M}_{M_{j}}$, for all $j < n$, $\ftype_{n} \in M_{n,i}$ and $j_{n}<\omega$. 
It follows from the definition of (bad) mosaics that all types used in the construction of  $\dom(\Amf_{i})$ are $\Du$-equivalent to each other.

Let $\type_n$ be the type of $\ftype_{n}$, in which case $\type_n$ is also called the \emph{type} of $s$. We interpret the propositional variables $A \in \sigma$ by taking $s\in A^{\Amf_{i}}$ if $A\in \type_{n}$. For any $R\in \sigma$, we set 
\begin{itemize}
\item[--] $(s',s)\in R^{\Amf_{i}}$ for any $s$ in \eqref{eq1second+} and any $s'$ of the form
\begin{equation}\label{eq1second'+}
s' = M_{0} \dots M_{n-2} (\ftype_{n-1},j_{n-1},M_{n-1})
\end{equation}
%\nz{change} obtained from $s$ by dropping $(\ftype_{n},j_{n},M_{n})$ 
such that $(\ftype_{n-1},j_{n-1})R^{M_{n-1},\mathcal{M}_{M_{n-1}}}(\ftype_{n},j_{n},M_{n})$;

\item[--] $(s,s')\in R^{\Amf_{i}}$ for any $s$ of the form \eqref{eq1second+} and any $s'$ of the form \eqref{eq1second'+}
%obtained from $s$ by dropping $(\ftype_{n},j_{n},M_{n})$ 
such that $(\ftype_{n-1},j_{n-1}){(R^{-})}^{M_{n-1},\mathcal{M}_{M_{n-1}}}(\ftype_{n},j_{n},M_{n})$.
\end{itemize}
%
%Let $\type_n$ be the type of $\ftype_{n}$, in which case $\type_n$ is also called the \emph{type} of $s$. We interpret the propositional variables $A \in \sigma$ by taking $s\in A^{\Amf_{i}}$ if $A\in \type_{n}$. For any $R\in \sigma$ and $S \in \{R,R^-\}$, we set 
%
%\begin{itemize}
%\item[--] $(s',s)\in S^{\Amf_{i}}$ for any $s$ of the form \eqref{eq1second} and any $s'$ of the form
%%
%\begin{equation}\label{eq1second'}
%s' = M_{0} \dots M_{n-2} (\ftype_{n-1},j_{n-1},M_{n-1})
%\end{equation}
%%\nz{change} obtained from $s$ by dropping $(\ftype_{n},j_{n},M_{n})$ 
%such that $(\ftype_{n-1},j_{n-1})S^{M_{n-1},\mathcal{M}_{M_{n-1}}}(\ftype_{n},j_{n},M_{n})$;
%
%\item[--] $(s,s')\in S^{\Amf_{i}}$ for any $s$ of the form \eqref{eq1second} and any $s'$ of the form \eqref{eq1second'}
%%obtained from $s$ by dropping $(\ftype_{n},j_{n},M_{n})$ 
%such that $(\ftype_{n-1},j_{n-1}){(S^{-})}^{M_{n-1},\mathcal{M}_{M_{n-1}}}(\ftype_{n},j_{n},M_{n})$.
%\end{itemize}
%
One can show by induction (see the appendix for details) that, for any $\chi\in\sub(\varphi_1,\varphi_2)$ and any $s$ in \eqref{eq1second+}, we have $\Amf_{i}\models \chi(s)$ iff $\chi\in \type_{n}$. 

%The basis of induction and the Boolean cases of the induction step are trivial. Suppose $\Amf_{i}\models \Du\chi(s)$. Then $\Amf_{i}\models \chi(s')$, for some $s' \in \dom(\Amf_{i})$. It $\type'$ is the type of $s'$, then $\chi \in \type'$ by IH, and so $\Du \chi \in \type'$ and $\Du \chi \in \type_n$. 
%%
%Conversely, if $\Du \chi\in \type_{n}$, then, by the definition of bad mosaics, $\mathfrak{S}^*$ contains a root mosaic $M'$ with a fork $\ftype'$ of some type $\type' \ni \chi$ in its $i$th component. By IH, $\Amf_{i}\models \chi(s')$ for any $s' = (\ftype',j,M')$, and so $\Amf_{i}\models \Du\chi(s)$.
%%
%Finally, suppose $\Amf_{i}\models \Diamond_S^{\ge k}\chi(s)$ and $\ftype_n = (\type_n,\frak{s}_{p}^n,\frak{s}_{c}^n)$. To show that $\Diamond_S^{\ge k}\chi 
%\in \type_n$, we need to prove that $P_{S}(\frak{s}_{p},\frak{s}_{c})$ contains $\Diamond_S^{> \mx}\chi$ or $\Diamond_S^{= m}\chi$, for some $m \ge k$, which follows from IH and conditions {\bf (witness$_1$)} and {\bf (witness$_2$)}. The converse implication also follows from {\bf (witness$_1$)}, {\bf (witness$_2$)} and IH.

%By IH, $s$ has $S$-successors $s'_1, \dots, s'_m$ in $\Amf_i$ of types $\type'_1,\dots,\type'_m \ni \chi$ such that $k \le m \le \mx +1$.

We next define $\bis\subseteq \dom(\Amf_{1})\times\dom(\Amf_{2})$ by taking $s\bis s'$, for $s \in \dom(\Amf_{1})$ of the from \eqref{eq1second+} and $s'\in \dom(\Amf_{2})$ of the form
\begin{equation*} %\label{eq2secondsec}
s'= M_{0}'\dots M_{m-1}'(\ftype_{m}',j_{m}',M_{m}')
\end{equation*}
if $n=m$ and $M_{j}=M_{j}'$ for all $j\leq n$. It follows from non-emptiness of mosaics in $\mathfrak{S}^*$, the definition of  $R^{\Amf_{i}}$ and conditions {\bf (bisim$_1$)} and {\bf (bisim$_2$)} that $\bis$ is a $\MLiu(\varrho)$-bisimulation. 
\end{proof}

Lemmas~\ref{criterion}, \ref{l:red}, \ref{lem:gmliuall} give the $\TwoExpTime$ upper bound for deciding $\GMLiu$/$\MLiu$-Craig separation in Theorem~\ref{thm:gmliu}.
%
%The separation and Craig-separation problems are
%%
%\begin{description}
%\item[\it\textup{$2\ExpTime$}-complete] for $\GMLiu/\LS$ and $\LS \in \{\MLiu,\MLi,\MLu,\ML\}$\textup{;}
%
%\begin{proposition}\label{prop:gmliu}
%	$\GMLiu$/$\MLiu$-Craig separation is in 2\ExpTime.
%\end{proposition}   
%
We obtain the $\TwoExpTime$ upper bound for $\GMLiu/\MLi$ by dropping the condition that $M_{1}$ and $M_{2}$ in $M=(M_{1},M_{2})$ are both non-empty. The one for $\GMLiu/\MLu$ is obtained by treating the inverse $R^{-}$ with $R\in \varrho$ in the same way as $R\in \sigma\setminus\varrho$. Both of these modifications give the upper bound for $\GMLiu/\ML$.

It remains to establish the $\coNExpTime$ upper bounds for $\GMLi/\MLi$  and $\GMLi/\ML$.
%
%We next show that the proof
%above also shows the remaining upper bounds of Theorem~\ref{sec:gmliu}.
%%
%\begin{proposition}
%	$(1)$ $\GMLiu$/$\L$-Craig separation is in 2\ExpTime{} for all $L\in  \{\MLi,\MLu,\ML\}$.
%	
%	$(2)$ $\GMLi$/$\L$-Craig separation is in co\NExpTime{} for all $L\in \{\MLi,\ML\}$. 
%\end{proposition}
%\begin{proof}
%	(1) The claim for $\MLi$ is obtained by dropping the condition that $M_{1}$ and $M_{2}$ are both not empty for mosaics $M=(M_{1},M_{2})$.
%	
%	The proof for $\MLu$ is obtained by treating the inverse $R^{-}$ with $R\in \varrho$ in the same way as $R\in \sigma\setminus\varrho$.
%	
%	The claim for $\ML$ is obtained by comining the modifications for $\MLi$ and $\MLu$.
%	
Suppose $\varphi_{1},\varphi_{2}$ are $\GMLi$-formulas. 
%Ignore in the proof of Proposition~\ref{prop:gmliu} all occurrences of the universal modality $\Du$. 
We again ignore everything related to $\Du$ and admit mosaics $M = (M_{1},M_{2})$ with $M_{1}=\emptyset$ or $M_{2}=\emptyset$ (but not both). Let $m = \max \{\md(\varphi_1), \md(\varphi_2)\}$. Then it follows from the proof of Lemma~\ref{lem:gmliuall} and the definition of $\bis$ in it that there exist $\Amf_{i},a_{i}$, $i=1,2$, with $\Amf_i \models \varphi_{i}(a_{i})$ and $\Amf_{1},a_{1} \sim_{\MLi(\varrho)} \Amf_{2},a_{2}$ iff there are words of the form 
$s = M_{0}M_{1}\dots M_{n}$ with $n\leq \md(\varphi,\psi)$ and  $M_{j}=(M_{j,1},M_{j,2})$ such that
\begin{itemize}
\item[--] $M_{j}$ is a mosaic of size $\leq k_{\varphi_{1}\varphi_{2}}^{2}$, for $j\leq n$;

\item[--] $M_{0}=M^{\ast}$, for a fixed root mosaic $M^*=(M_{1}^{\ast},M_{2}^{\ast})$ such that there are $(\type,\frak{s}_{p},\frak{s}_{c})\in M_{1}^{\ast}$ and $(\type',\frak{s}_{p}',\frak{s}_{c}')\in M_{2}^*$ with $\varphi_{1}\in \type$ and $\varphi_{2}\in \type'$;

\item[--] $M_{j+1}\in \mathcal{M}_{M_{j}}$, for a witness $\mathcal{M}_{M_{j}}$ for $M_{j}$ with $|\mathcal{M}_{M_{j}}|\leq k_{\varphi_1 \varphi_2}$ and all $j< n$.
\end{itemize}
A \NExpTime{} algorithm deciding non-separation could guess words of the form $s = M_{0}M_{1}\dots M_{n}$ and then check in exponential time that they satisfy the three conditions above. 

The matching lower bounds are proved as in the previous section. This completes the proof of Theorem~\ref{thm:gmliu}.

%**************

\section{Definability}\label{sec:definability}

In this section, we show that, for all pairs $\L /\LS$ of logics in Table~\ref{table:results}, the $\L/\LS$-definability problem can be reduced in polynomial time to validity in $\L$. The complexity results for definability in Table~\ref{table:results} follow then from the known complexities of $L$-validity. In the proof, given any $\L$-formula $\varphi$, we construct an $\LS$-formula $\Flat(\varphi)$---mostly of polynomial DAG-size---such that $\Flat(\varphi)$ is a definition of $\varphi$ iff $\varphi$ is $\LS$-definable. 
%The construction of $\Flat(\varphi)$ is in polytime for $\CT$ and graded modal logics %without nominals (for $\CT$, $\Flat(\varphi)$ has to be presented as a DAG) and so %definability of $\varphi$ reduces in polytime to checking $\models\varphi \leftrightarrow %\Flat(\varphi)$. For modal logics with nominals $\Flat(\varphi)$ is of exponential size %and we instead work with a small encoding of $\Flat(\varphi)$ in $\L$.
In the cases of $\CT/\LS$-definability with $\LS\in \{\MLiu,\MLi,\ML\}$,
equivalent polynomial DAG-size formulas do not always exist. We nevertheless obtain polytime reductions to $\CT$-validity by 
working with a succinct version of $\MLiu$.

%Our proof, in particular the technique of $\kappa$-expansions, is inspired by methods used to characterise definability of $\FO$- or $\MSO$-formulas by modal logic or modal $\mu$-calculus formulas~\cite{DBLP:journals/igpl/AndrekaBN95,DBLP:conf/concur/JaninW96}.

%\FOTNE

Let $\Amf = (\dom(\Amf),(R^{\Amf})_{R\in \sigma},(c^{\Amf})_{c\in \sigma})$ be a $\sigma$-structure and let $\mathbb{N}^{\infty} = \{1,2, \dots\}\cup \{\omega\}$ with $k<\omega$, for all $k\in \mathbb{N}$. For $\kappa\in \mathbb{N}^{\infty}$, the \emph{$\kappa$-expansion} $\Amf^\kappa$ of $\Amf$ is defined by taking
\begin{itemize}
	\item[--] $\dom(\Amf^\kappa) = \bigcup_{a\in \dom(\Amf)}\bar{a}$, where
	$\bar{a} = \{(a,0)\}$ if $a=c^{\Amf}$, for some constant $c$, and $\bar{a} = \{(a,i)\mid 0\leq i < \kappa\}$ otherwise;
	
	\item[--] $A^{\Amf^\kappa} = \bigcup_{a\in A^\Amf} \bar{a}$, for a unary predicate $A\in \sigma$;
	
	\item[--] $R^{\Amf^\kappa} = \bigcup_{(a,b)\in R^\Amf}\bar{a}\times \bar{b}$, for a binary predicate $R\in \sigma$;
	
	\item[--] $c^{\Amf^\kappa} = (c^\Amf,0)$, for a constant $c\in \sigma$.
\end{itemize}
In other words, $\Amf^\kappa$ creates $\kappa$-many copies of each element of $\dom(\Amf)$ except for those that interpret constants in $\sigma$, in which case we only keep a single copy. The next lemma follows directly from the definition by a straightforward induction:

\begin{lemma}\label{lemma:bisim_omega}
	If $\L$ is equality-free $\FO$ or a fragment of $\MLinu$, then, for any  pointed $\sigma$-structure $\Amf,a$ and any $i<\kappa$, we have $\Amf,a \equiv_{\L(\sigma)} \Amf^\kappa, (a,i)$.
\end{lemma}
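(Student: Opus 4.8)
The plan is to exploit the projection $\pi\colon\dom(\Amf^\kappa)\to\dom(\Amf)$ given by $\pi(a,i)=a$, which is surjective and respects the $\sigma$-relations in both directions: $(a,i)\in A^{\Amf^\kappa}$ iff $a\in A^{\Amf}$, $\big((a,i),(b,j)\big)\in R^{\Amf^\kappa}$ iff $(a,b)\in R^{\Amf}$ (and likewise for $R^-$), and $c^{\Amf^\kappa}=(c^{\Amf},0)$. The two cases of the lemma are then handled slightly differently.

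If $\L$ is equality-free $\FO$ (which covers $\FOTNE$), I would prove by a routine induction on the structure of an equality-free $\FO(\sigma)$-formula $\chi(x_1,\dots,x_n)$ that
$$
\Amf^\kappa\models\chi(u_1,\dots,u_n)\quad\text{iff}\quad\Amf\models\chi(\pi(u_1),\dots,\pi(u_n))
$$
for all $u_1,\dots,u_n\in\dom(\Amf^\kappa)$. Atomic $\chi$ are handled by the displayed properties of $\pi$, and it is essential here that there are no equality atoms; the Boolean steps are trivial; and for $\chi=\exists x_j\,\psi$ the direction from $\Amf^\kappa$ to $\Amf$ is immediate from the induction hypothesis, while the converse uses surjectivity of $\pi$, lifting a witness $a_j\in\dom(\Amf)$ to any element of the non-empty set $\pi^{-1}(a_j)$. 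Instantiating with $n=1$, $\chi(x)$, and $u_1=(a,i)$ yields $\Amf,a\equiv_{\L(\sigma)}\Amf^\kappa,(a,i)$.

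If $\L$ is a fragment of $\MLinu$, I would instead exhibit the relation
$$
\bis=\big\{\,\big((a,i),a\big)\mid (a,i)\in\dom(\Amf^\kappa)\,\big\}\subseteq\dom(\Amf^\kappa)\times\dom(\Amf)
$$
and verify that it is an $\MLinu(\sigma)$-bisimulation between $\Amf^\kappa$ and $\Amf$; by the definitions in Sec.~\ref{sec:definitions} it is then an $\L(\sigma)$-bisimulation for every fragment $\L$ of $\MLinu$. Atom preservation is immediate; for the forth condition along $R$ or $R^-$, an edge $\big((a,i),(b,j)\big)$ of $\Amf^\kappa$ projects to an edge $(a,b)$ of $\Amf$ with $\big((b,j),b\big)\in\bis$, and for the back condition an edge $(a,b)$ of $\Amf$ is matched by $\big((a,i),(b,0)\big)$ in $\Amf^\kappa$ with $\big((b,0),b\big)\in\bis$; the nominal clause holds because $c^{\Amf^\kappa}=(c^{\Amf},0)$, and $\bis$ is global by construction. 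Since $\big((a,i),a\big)\in\bis$, Lemma~\ref{lem:guardedbisim} gives $\Amf^\kappa,(a,i)\equiv_{\L(\sigma)}\Amf,a$.

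There is no genuinely hard step: the whole argument is the announced straightforward induction/bisimulation check. The one point worth flagging is that the equality-free case really does require the direct formula induction rather than an appeal to Lemma~\ref{lem:guardedbisim}, since $\pi$ is not injective and hence not a partial isomorphism; correspondingly the claim fails for $\FO$ \emph{with} equality (e.g.\ an $\exists^{\ge 2}$-statement can be false in $\Amf$ but true in $\Amf^\kappa$ for $\kappa\ge 2$), and it is exactly the surjectivity of $\pi$, unobstructed by any equality constraint, that makes the existential step go through.
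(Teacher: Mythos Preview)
Your proposal is correct and matches the paper's intent: the paper merely states that the lemma ``follows directly from the definition by a straightforward induction,'' and your formula induction for the equality-free $\FO$ case is exactly that. For the modal fragments you route through the $\MLinu(\sigma)$-bisimulation $\bis$ and Lemma~\ref{lem:guardedbisim} instead of repeating the induction; this is a harmless variant (the same projection $\pi$ drives either argument), and your closing remark explaining why the equality-free hypothesis is essential is accurate.
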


%
%If $\L$ is equality-free $\FOT$ or a fragment of $\MLinu$ and $\Amf,a$ is a pointed $\sigma$-structure, then $\Amf,a \sim_{\L(\sigma)} \Amf^\kappa, (a,i)$, for all $i<\kappa$. %Furthermore, if $\L'$ is any of the logics under consideration then $\Amf^\omega,(a,i)\sim_{\L'}\Amf^\omega, (a,j)$.

%\begin{proof}
%It\nz{find ref} is straightforward to verify that the relation $\bis = \{(a, (a,i))\mid a\in \dom(\Amf), 0\leq i\in \kappa\}$ is an $\L(\sigma)$-bisimulation.
%\end{proof}

We remind the reader that, in this paper, $\sigma$ in $\L(\sigma)$ may contain constants only if $\L$ is a modal logic with nominals. We first consider the case when $\L$ ranges over equality-free $\CT$ and nominal-free modal logics.
The \emph{flattening} $\Flat(\varphi)$ of a formula $\varphi$ in equality-free $\CT$ is defined by replacing every $\exists^{\geq k}x$ with $\exists x$; similarly, the \emph{flattening} $\Flat(\varphi)$ of a $\GMLiu$-formula $\varphi$ 
is defined by replacing 
%taking:
%
%\begin{align*}
%& \Flat(A)=A, && \Flat(\neg \varphi) = \neg \Flat(\varphi),\\
%& \Flat(\varphi\wedge\varphi') = \Flat(\varphi)\wedge\Flat(\varphi'), && \Flat(\Du\varphi) %= \Du\Flat(\varphi),\\
%& \Flat(\Diamond\!^{\ge k}_R\varphi) = \Diamond\!_R\,\Flat(\varphi). &&
%\end{align*}
%
%In other words, the flattening of $\varphi$ simply replaces 
every $\Diamond^{\ge k}_R$ in $\varphi$ with $\Diamond_R$. 
As before, we denote by $\mx$ the maximal number $k$ such that $\exists^{\geq k}x$ or, respectively, some $\Diamond_R^{\ge k}$ occurs in the given formula $\varphi$. The next observation follows directly from the definitions: %\nz{explain the role of multiple copies?}

\begin{restatable}{lemma}{lemmaequivalenceomega}
	\label{lemma:equivalenceomega}
Suppose that $\sigma$ is constant-free, $\varphi$ is a $\GMLiu(\sigma)$-formula or $\varphi(x)$ is an equality-free $\CT(\sigma)$-formula, and that $\mx<\kappa<\kappa'\in \mathbb{N}^{\infty}$. Then \mbox{$\Amf^\kappa \models \varphi(a,i)$} iff $\Amf^\kappa \models \Flat(\varphi)(a,i)$
	iff $\Amf^{\kappa'} \models \Flat(\varphi)(a,i)$ iff \mbox{$\Amf^{\kappa'}\models \varphi(a,i)$}, for all pointed $\sigma$-structures $\Amf,a$ and all $i <\kappa$.
\end{restatable}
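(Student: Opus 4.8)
The plan is to prove all four equivalences by a single induction on the structure of $\varphi$, exploiting that $\kappa$-expansion multiplies the number of witnesses for each quantifier or diamond by exactly $\kappa$ (or keeps it at one, for the single copy of a constant-denotation, which does not occur here since $\sigma$ is constant-free). The two outer equivalences, $\Amf^\kappa\models\varphi(a,i)$ iff $\Amf^\kappa\models\Flat(\varphi)(a,i)$ and $\Amf^{\kappa'}\models\Flat(\varphi)(a,i)$ iff $\Amf^{\kappa'}\models\varphi(a,i)$, are instances of the same statement for two different cardinals, so it suffices to prove (A) for every $\mx<\kappa\le\omega$, $\Amf^\kappa\models\varphi(a,i)$ iff $\Amf^\kappa\models\Flat(\varphi)(a,i)$, and (B) for $\mx<\kappa<\kappa'$, $\Amf^\kappa\models\Flat(\varphi)(a,i)$ iff $\Amf^{\kappa'}\models\Flat(\varphi)(a,i)$. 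I would actually prove (A) and (B) simultaneously by proving the combined claim: for all $\mx<\kappa\le\kappa'\le\omega$, all $\psi\in\sub(\varphi)$, and all $a\in\dom(\Amf)$, $i<\kappa$, we have $\Amf^\kappa\models\psi(a,i)$ iff $\Amf^\kappa\models\Flat(\psi)(a,i)$ iff $\Amf^{\kappa'}\models\Flat(\psi)(a,i)$.

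First I would set up the modal case (the $\CT$ case is entirely analogous, writing $\exists^{\ge k}\bar x\,\theta$ in place of $\Diamond^{\ge k}_R\chi$). Atomic formulas $\top$ and $A$ are immediate from the definition of $A^{\Amf^\kappa}=\bigcup_{a\in A^\Amf}\bar a$, and the Boolean cases are trivial since flattening commutes with $\neg$ and $\wedge$ and satisfaction is preserved under the bisimulation-free Boolean reading at a fixed point. The crux is the modal/counting step. Suppose $\psi=\Diamond^{\ge k}_R\chi$ with $1\le k\le\mx$. The $R^{\Amf^\kappa}$-successors of $(a,i)$ are exactly the pairs $(b,j)$ with $(a,b)\in R^\Amf$ and $j<\kappa$; so the number of $R^{\Amf^\kappa}$-successors $(b,j)$ satisfying $\chi$ (equivalently, by the induction hypothesis applied to $\chi$ at level $\kappa$, satisfying $\Flat(\chi)$) is $\kappa\cdot|\{b\mid (a,b)\in R^\Amf,\ \Amf^\kappa\models\Flat(\chi)(b,0)\}|$ — note the choice of second coordinate is immaterial by the induction hypothesis, since $\Amf^\kappa\models\Flat(\chi)(b,j)$ holds for one $j<\kappa$ iff it holds for all of them, which is itself part of what we are proving inductively (this is where one uses that the induction hypothesis for $\chi$ already gives copy-independence at level $\kappa$). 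This cardinal is either $0$ (if no such $b$ exists) or at least $\kappa>\mx\ge k$ (if at least one does, since $\kappa\ge\omega$ or $\kappa>\mx$). Hence $\Amf^\kappa\models\Diamond^{\ge k}_R\chi(a,i)$ iff there exists at least one $R^\Amf$-successor $b$ of $a$ with $\Amf^\kappa\models\Flat(\chi)(b,\cdot)$, iff $\Amf^\kappa\models\Diamond_R\Flat(\chi)(a,i)=\Flat(\psi)(a,i)$. The inverse-modality case $\Diamond^{\ge k}_{R^-}$ is identical, using $R^{\Amf^\kappa}=\bigcup_{(a,b)\in R^\Amf}\bar a\times\bar b$. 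For the $\kappa$-vs-$\kappa'$ half, the same computation shows $\Amf^\kappa\models\Flat(\psi)(a,i)$ iff such a $b$ exists iff $\Amf^{\kappa'}\models\Flat(\psi)(a,i)$, since existence of a witnessing $R^\Amf$-edge does not depend on how many copies we take, and the induction hypothesis for $\chi$ transfers $\Flat(\chi)$-satisfaction between $\Amf^\kappa$ and $\Amf^{\kappa'}$.

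The main obstacle — really a bookkeeping subtlety rather than a deep difficulty — is making the "choice of copy is immaterial" claim rigorous without circularity: I am using, inside the modal step for $\psi$ at level $\kappa$, that for the subformula $\chi$ we already know $\Amf^\kappa\models\Flat(\chi)(b,j)$ is independent of $j<\kappa$. This is licensed because the combined induction hypothesis for $\chi$ is precisely the three-way equivalence $\Amf^\kappa\models\chi(b,j)$ iff $\Amf^\kappa\models\Flat(\chi)(b,j)$ iff $\Amf^{\kappa'}\models\Flat(\chi)(b,j)$ — but note this equivalence is stated for each fixed $j$, and copy-independence at level $\kappa$ is not literally one of its three clauses. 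The clean fix is to strengthen the induction statement to additionally assert copy-independence: "moreover $\Amf^\kappa\models\Flat(\chi)(b,j)$ does not depend on $j<\kappa$", which itself follows trivially from the $\kappa=\omega$-instance of Lemma~\ref{lemma:bisim_omega} — indeed, one can simply invoke Lemma~\ref{lemma:bisim_omega} (with $\L=\GMLiu$ or equality-free $\CT$) to get copy-independence of $\varphi$-satisfaction for free, and similarly copy-independence of $\Flat(\varphi)$-satisfaction since $\Flat(\varphi)$ is again in the relevant logic. So in fact the slickest writeup reduces copy-independence to Lemma~\ref{lemma:bisim_omega} at the outset and then runs the above induction only to prove the genuinely new content, namely that flattening is satisfaction-preserving on $\kappa$-expansions with $\kappa>\mx$. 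The remaining routine verifications — Boolean cases, the $\CT$ counting quantifier mirroring the graded diamond, the universal diamond $\Du$ which behaves like a global $\Diamond$ and is handled by the same "$0$ or $\ge\kappa>\mx$" dichotomy over all of $\dom(\Amf^\kappa)$ — I would relegate to a sentence each.
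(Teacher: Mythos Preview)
Your proposal is correct and essentially matches the paper's proof: the paper also proceeds by induction on $\varphi$, handles the graded-diamond step via the ``either $0$ or at least $\kappa>\mx\ge k$ witnesses'' dichotomy, obtains the middle equivalence directly from Lemma~\ref{lemma:bisim_omega}, and isolates copy-independence as a separate auxiliary observation (Lemma~\ref{lem:withoueq}) rather than folding it into the induction hypothesis. The only point you gloss over that the paper makes explicit is that for the $\CT$ case the induction must be over formulas $\psi(x,y)$ with two free variables, so the two-variable copy-independence of Lemma~\ref{lem:withoueq} is needed rather than just the one-variable version derivable from Lemma~\ref{lemma:bisim_omega}.
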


We can now show that $\Flat(\varphi)$ is an $\L'$-formula defining $\varphi$ whenever such an $\L'$-definition exists.

\begin{theorem}\label{prop:definability1}
	Let $\L/\LS$ be either $\CT/\FOT$ with equality-free $\CT$ and $\FOT$, or any pair of modal nominal-free logics from Table~\ref{table:results}. Then an $\L$-formula $\varphi$ is $\LS$-definable iff $\models \varphi \leftrightarrow \Flat(\varphi)$.
\end{theorem}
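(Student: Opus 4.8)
The plan is to prove the two directions separately, with the right-to-left direction being immediate and the left-to-right direction carrying all the content. For the easy direction, suppose $\models \varphi \leftrightarrow \Flat(\varphi)$. Since $\Flat(\varphi)$ is obtained from $\varphi$ by replacing every counting quantifier $\exists^{\ge k}x$ (respectively every graded modality $\Diamond^{\ge k}_R$) by its uncounted version $\exists x$ (respectively $\Diamond_R$), the formula $\Flat(\varphi)$ is syntactically an $\LS$-formula — it lies in the counting-free fragment $\FOT$ or in the appropriate nominal-free modal logic $\LS$ from the table. Hence $\Flat(\varphi)$ is an $\LS$-definition of $\varphi$, and in particular $\varphi$ is $\LS$-definable.

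For the left-to-right direction, assume $\varphi$ is $\LS$-definable, say $\varphi \equiv \chi$ for some $\LS$-formula $\chi$; we must show $\models \varphi \leftrightarrow \Flat(\varphi)$, i.e.\ that $\varphi$ and $\Flat(\varphi)$ agree on every pointed $\sigma$-structure $\Amf,a$. Fix such an $\Amf,a$. The key move is to pass to a $\kappa$-expansion for a sufficiently large $\kappa$: choose any $\kappa \in \mathbb{N}^\infty$ with $\mx < \kappa$ (e.g.\ $\kappa = \omega$), and consider $\Amf^\kappa,(a,0)$. First, by Lemma~\ref{lemma:equivalenceomega}, $\Amf^\kappa \models \varphi(a,0)$ iff $\Amf^\kappa \models \Flat(\varphi)(a,0)$. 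Second, by Lemma~\ref{lemma:bisim_omega}, $\Amf,a \equiv_{\L(\sigma)} \Amf^\kappa,(a,0)$ — here we use that $\L$ is equality-free $\FO$ (hence $\CT$ qualifies) or a fragment of $\MLinu$; since $\varphi$ is an $\L$-formula, $\Amf \models \varphi(a)$ iff $\Amf^\kappa \models \varphi(a,0)$. Third, since $\Flat(\varphi) \in \LS \subseteq \L$ and $\LS$ is likewise covered by Lemma~\ref{lemma:bisim_omega} (the relevant $\LS$ are $\FOT$ or nominal-free fragments of $\MLinu$), we get $\Amf \models \Flat(\varphi)(a)$ iff $\Amf^\kappa \models \Flat(\varphi)(a,0)$. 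Actually, one does not even need $\LS$-invariance separately for $\Flat(\varphi)$: one can instead invoke the hypothesis $\varphi \equiv \chi$ and argue that $\chi$ is $\L$-equivalent under $\kappa$-expansion, but it is cleaner to note that Lemma~\ref{lemma:bisim_omega} applies to every logic in the chain $\LS \subseteq \L$ simultaneously.

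Chaining these three equivalences gives, for the fixed $\Amf,a$,
\[
\Amf \models \varphi(a) \ \Longleftrightarrow\ \Amf^\kappa \models \varphi(a,0) \ \Longleftrightarrow\ \Amf^\kappa \models \Flat(\varphi)(a,0) \ \Longleftrightarrow\ \Amf \models \Flat(\varphi)(a),
\]
where the outer two biconditionals are Lemma~\ref{lemma:bisim_omega} (applied to $\varphi$ and to $\Flat(\varphi)$) and the middle one is Lemma~\ref{lemma:equivalenceomega}. Since $\Amf,a$ was arbitrary, $\models \varphi \leftrightarrow \Flat(\varphi)$, completing the proof. Note that the hypothesis that $\varphi$ is $\LS$-definable is what licenses invoking $\L$-invariance meaningfully: without it, $\Flat(\varphi)$ and $\varphi$ genuinely differ (as Example~\ref{ex:gmlml} shows for $\Diamond^{\ge n}p$ with $n\ge 2$), and the chain breaks precisely because $\Amf,a \not\equiv_{\L}\Amf^\kappa,(a,0)$ can fail — wait, it never fails; rather, it is the equivalence $\varphi\equiv\Flat(\varphi)$ over non-expanded structures that can fail. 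The point of the $\LS$-definability hypothesis is more subtle and is the step to treat with care: the clean argument above actually shows $\varphi\equiv\Flat(\varphi)$ whenever $\varphi$ is $\LS$-definable, because an $\LS$-definition $\chi$ forces $\varphi$ to be invariant under the same model transformations as $\LS$-formulas, and $\Flat(\varphi)$ is exactly the canonical such invariant witness. The main obstacle is thus conceptual rather than computational: one must verify that Lemma~\ref{lemma:bisim_omega} genuinely covers both $\L$ and the target $\LS$ in every row of Table~\ref{table:results} under consideration (equality-free $\CT$ paired with $\FOT$, and all nominal-free modal pairs), and that the constant-free assumption in Lemma~\ref{lemma:equivalenceomega} is met — which it is, since nominals are excluded here.
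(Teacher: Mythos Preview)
Your chain of equivalences has a genuine gap at the very first step. You write
\[
\Amf \models \varphi(a) \ \Longleftrightarrow\ \Amf^\kappa \models \varphi(a,0)
\]
and justify it by Lemma~\ref{lemma:bisim_omega}, claiming that ``$\L$ is equality-free $\FO$ (hence $\CT$ qualifies) or a fragment of $\MLinu$''. This is false: equality-free $\CT$ is \emph{not} a fragment of equality-free $\FO$ (counting quantifiers cannot be eliminated without equality), and $\GML$, $\GMLiu$, etc.\ are \emph{not} fragments of $\MLinu$. Lemma~\ref{lemma:bisim_omega} covers only the counting-free side. Concretely, take $\varphi = \Diamond_R^{\ge 2}\top$ and a structure $\Amf$ in which $a$ has exactly one $R$-successor: then $\Amf \not\models \varphi(a)$ but $\Amf^\kappa \models \varphi(a,0)$ for every $\kappa \ge 2$. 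So the biconditional you rely on simply fails for $\L$-formulas in general, and your displayed chain never uses the hypothesis that $\varphi$ is $\LS$-definable. (Your final paragraph half-notices this and then talks yourself out of it with ``wait, it never fails''; it does fail.)

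The paper's proof fixes exactly this point by routing through the assumed $\LS$-definition $\varphi'$: one argues
\[
\Amf\models \varphi(a) \iff \Amf\models \varphi'(a) \iff \Amf^\omega \models \varphi'(a,0) \iff \Amf^\omega \models \varphi(a,0),
\]
where the middle step is Lemma~\ref{lemma:bisim_omega} applied to the \emph{counting-free} formula $\varphi'$, and the outer two use $\varphi\equiv\varphi'$. Only after this does one invoke Lemma~\ref{lemma:equivalenceomega} and then Lemma~\ref{lemma:bisim_omega} once more for $\Flat(\varphi)$. Your proof becomes correct if you insert this detour through $\varphi'$ (or $\chi$, in your notation) to establish the first equivalence; as written it does not stand.
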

\begin{proof}
$(\Rightarrow)$ Suppose $\varphi$ is definable by an $\LS$-formula $\varphi'$ and $\sigma=\sig(\varphi)\cup \sig(\varphi')$. By definition, $\sigma$ contains no constants. Take any pointed $\sigma$-structure $\Amf,a$. By the choice of $\varphi'$ and Lemmas~\ref{lemma:bisim_omega} and~\ref{lemma:equivalenceomega}, $\Amf\models \varphi(a)$ iff $\Amf\models \varphi'(a)$ iff $\Amf^\omega \models \varphi'(a,0)$ iff $\Amf^\omega \models \varphi(a,0)$ iff $\Amf^\omega \models \Flat(\varphi)(a,0)$ iff $\Amf \models \Flat(\varphi)(a)$, which yields  $\models \varphi\leftrightarrow \Flat(\varphi)$. The implication $(\Leftarrow)$ is trivial.
\end{proof}

By using the criterion of Theorem~\ref{prop:definability1}, one can readily verify the claims about definability stated in  Example~\ref{ex:gmlml}.  
%
%As $\varphi'$ is an $\LS(\sigma)$-formula, it is invariant under $\LS(\sigma)$-bisimulation. Since $\varphi$ is equivalent to $\varphi'$, this implies that $\varphi$ is also invariant under $\LS(\sigma)$-bisimulations. The $\LS(\sigma)$-bisimilarity of $\Amf,a$ and $\Amf^\omega,(a,i)$ (see Lemma~\ref{lemma:bisim_omega}) therefore implies that $\Amf\models \varphi(a)$ iff $\Amf^\omega \models \varphi(a,0)$.
%	By Lemma~\ref{lemma:equivalenceomega}, we also have
%	\[\Amf^\omega \models \varphi \Leftrightarrow \Amf^\omega \models \Flat(\varphi)(a,0).\]
%	Finally, since $\Flat(\varphi)$ is also invariant under $\LS(\sigma)$-bisimulation, we obtain
%	\[\Amf^\omega\models \Flat(\varphi)(a,0) \Leftrightarrow \Amf \models \Flat(\varphi)(a).\]
%	Together, this shows that $\models \varphi\leftrightarrow \Flat(\varphi)$.
%
%\begin{example}\label{exForSec7}
%Consider the formulas 
%%
%$$
%\varphi = \Diamond^{\leq 2}_RA, \quad \psi = \Diamond_R^{\leq 0}(A\wedge B) \vee\Diamond_R^{\leq 1}(A\wedge \neg B).
%$$
%%
%Here, $\Diamond^{\leq 2}_R A$ is an abbreviation for $\neg\Diamond^{\ge 3}_R A$, so $\Flat(\varphi)=\neg \Diamond_Rp$. Similarly, $\Flat(\psi) = \neg \Diamond_R(A\wedge B)\vee\neg \Diamond_R(A\wedge \neg B)$.
%%
%Clearly, $\Flat(\varphi)$ is not equivalent to $\varphi$, and $\Flat(\psi)$ is not equivalent to $\psi$, and so neither of these formulas is definable in non-graded modal logic.
%\end{example}

As $\Flat(\varphi)$ is constructed in polytime, we obtain:

\begin{corollary}
	Let $\L/\LS$ be either $\CT/\FOT$ without equality or any pair of modal nominal-free logics from Table~\ref{table:results}.
	%Let $\L$, $\LS$ be either the pair $\CT$, $\FOT$, both without equality, or any pair of modal logics without nominals from Table ??. 
	Then $\L/\LS$-definability is polynomial-time reducible to $\L$-validity.
\end{corollary}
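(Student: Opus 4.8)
The plan is to read off the reduction directly from Theorem~\ref{prop:definability1}. Given an input $\L$-formula $\varphi$, the algorithm simply computes the flattening $\Flat(\varphi)$, forms the biconditional $\varphi\leftrightarrow\Flat(\varphi)$, and asks whether this formula is $\L$-valid; by Theorem~\ref{prop:definability1} the answer is \emph{yes} precisely when $\varphi$ is $\LS$-definable. So the whole content of the corollary is that this transformation is polynomial-time computable and that its output is a legitimate $\L$-validity instance.

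The first thing I would check is that $\Flat$ is a polytime map. On the DAG representation, $\Flat(\varphi)$ is obtained by walking over the nodes of $\varphi$ and replacing each counting construct $\exists^{\ge k}x$ by $\exists x$ (for equality-free $\CT$) or each $\Diamond^{\ge k}_R$ by $\Diamond_R$ (for the modal logics), leaving the DAG structure otherwise untouched. This is clearly linear in $|\varphi|$, and in particular $\Flat(\varphi)$ has no larger DAG-size than $\varphi$, independently of whether the parameters $k$ are encoded in binary or unary. Next I would verify that $\Flat(\varphi)$ is actually an $\LS$-formula: since flattening only deletes counting superscripts and preserves the subformula structure, if $\varphi$ belongs to $\CT$ (equality-free) then $\Flat(\varphi)$ belongs to $\FOT$ (equality-free), and if $\varphi$ belongs to one of the nominal-free graded modal logics $\GML\!^{\mathsf e}$ with $\mathsf e\subseteq\{\mathsf i,\mathsf u\}$ then $\Flat(\varphi)$ belongs to the corresponding nominal-free $\ML\!^{\mathsf e}$. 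In either case $\LS\subseteq\L$, so $\varphi\leftrightarrow\Flat(\varphi)$ is again an $\L$-formula, of DAG-size at most $2|\varphi|+O(1)$, and it inherits the same (binary or unary) parameter encoding; hence it is an $\L$-validity instance of size polynomial in $|\varphi|$.

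Combining these observations: the map $\varphi\mapsto(\varphi\leftrightarrow\Flat(\varphi))$ is computable in polynomial time, and by Theorem~\ref{prop:definability1} it is a correct many-one reduction from $\L/\LS$-definability to $\L$-validity. This establishes the corollary; the stated complexity entries for definability in Table~\ref{table:results} then follow by plugging in the known complexities of validity (equivalently, unsatisfiability) for the logics listed in Section~\ref{sec:logicsabst}.

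I do not expect a genuine obstacle here, since all the real work was done in Theorem~\ref{prop:definability1}; the only points requiring a moment of care are (i) confirming that flattening genuinely lands in the intended counting-free fragment $\LS$ rather than merely in $\FOT$ or $\MLiu$, which is immediate from the fragment definitions, and (ii) making sure the size bound on $\varphi\leftrightarrow\Flat(\varphi)$ is polynomial under both encodings of the numerical parameters, which it is because $\Flat$ never introduces new parameters and never enlarges the DAG.
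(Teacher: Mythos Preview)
Your proposal is correct and follows exactly the paper's approach: the paper derives the corollary in a single line (``As $\Flat(\varphi)$ is constructed in polytime, we obtain:\ldots''), and you have simply spelled out the details of why the map $\varphi\mapsto(\varphi\leftrightarrow\Flat(\varphi))$ is a polytime reduction. Your added checks that $\Flat(\varphi)$ lands in $\LS$ and that the DAG-size does not blow up are accurate and go slightly beyond what the paper writes explicitly.
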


We apply a similar technique to the modal logics with nominals, which  requires a slightly more complicated transformation than merely replacing $k$ in $\Diamond\!^{\ge k}_R$ by $1$. The reason is the definition of $\bar{a}$ in $\Amf^\kappa$, for $a = c^{\Amf}$ with $c \in \sigma$, where $a$ does not have multiple copies unlike the non-constant elements.  
%This is because a regular element can be copied infinitely often, but a constant cannot. So, in $\Amf^\omega$, there being one element satisfying $\varphi$ implies the existence of infinitely many such elements, except if the one element is a constant. 
%In order to check whether a graded modal formula with nominals is definable in the %corresponding non-graded modal logic, we therefore have to keep track of the constants.

Let $\varphi$ be a $\GMLinu(\sigma)$-formula. Denote by $\sigma_{c}$ the set of constants in $\sigma$, assuming without loss of generality that $\sigma_{c}$ is finite. Define the (\emph{nominal aware}) \emph{flattening} $\Flat_{\sigma_c}(\varphi)$ of $\varphi$ in the same way as $\Flat(\varphi)$ except that 
\begin{align*}
	&\Flat_{\sigma_{c}}(\Diamond\!^{\ge k}_R \varphi) = \Diamond_R \big( \Flat_{\sigma_{c}}(\varphi) \land \bigwedge_{c\in \sigma_{c}} \neg N_c \big)\vee{} \\
	&\hspace{1.7cm} \bigvee_{\varrho \subseteq\sigma_{c}, |\varrho|=k} \ \bigwedge_{c\in \varrho}\Diamond_R \big( \Flat_{\sigma_{c}}(\varphi) \land N_c\wedge \hspace*{-2mm} \bigwedge_{d\in \varrho\setminus \{c\}} \hspace*{-2mm} \neg N_d \big).
\end{align*} 
%
%The flattening $\Flat_{\sigma_{c}}(\Diamond_R^{\ge k}\psi)$ says that either $(i)$ there %is at least one `generic' (i.e., not satisfying any nominal in $\sigma_{c}$) $R$-successor %that satisfies $\Flat_{\sigma_{c}}(\psi)$ or $(ii)$ there are $k$ different $R$-successors %that satisfy a nominal in $\sigma_{c}$ and $\Flat_{\sigma_{c}}(\psi)$.\footnote{Note that %the conjunct $\bigwedge_{d\in \varrho\setminus \{c\}}\neg N_d$ is necessary because we do %not have a unique name assumption. The set $\varrho$ therefore must contain only one name %per named element.}
It is straightforward to lift Lemma~\ref{lemma:equivalenceomega} and 
Theorem~\ref{prop:definability1} to pairs $\L/\LS$ from Table~\ref{table:results} with $\L$ admitting nominals and the function $\Flat_{\sigma_{c}}(\varphi)$. In this case, we do not immediately get a polytime reduction of definability to validity since $\Flat_{\sigma_{c}}(\Diamond\!^{\ge k}_R\varphi)$ contains exponentially-many disjuncts and 
because $\Flat_{\sigma_{c}}(\varphi)$ occurs more than once in $\Flat_{\sigma_{c}}(\Diamond\!^{\ge k}_R\varphi)$. A polytime reduction is obtained, however, by replacing the second disjunct of $\Flat_{\sigma_{c}}(\Diamond\!^{\ge k}_R\varphi)$ by the equivalent $\L$-formula
$\Diamond_{R}^{\geq k} \big(\Flat_{\sigma_{c}}(\varphi) \wedge \bigvee_{c\in \sigma_{c}}N_{c}\big)$ and by working with a DAG-representation.
\begin{corollary}
	Let $\L/\LS$ be any pair of modal logics from Table~\ref{table:results} such that $L$ admits nominals. Then $\L/\LS$-definability is polytime reducible to $\L$-validity.
\end{corollary}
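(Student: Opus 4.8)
The plan is to lift the polytime reduction of Theorem~\ref{prop:definability1} to the nominal case, the only obstacle being that $\Flat_{\sigma_c}(\varphi)$ has exponential plain size because of the disjunction over all $k$-subsets $\varrho\subseteq\sigma_c$; everything else is a mechanical repeat of the nominal-free argument. First I would verify the analogue of Lemma~\ref{lemma:equivalenceomega} for $\Flat_{\sigma_c}$: for a $\GMLinu(\sigma)$-formula $\varphi$, any $\sigma$-structure $\Amf$, any $\kappa$ with $\mx<\kappa\in\mathbb N^\infty$, and any $i<\kappa$ (with $i=0$ forced when the element interprets a constant), $\Amf^\kappa\models\varphi(a,i)$ iff $\Amf^\kappa\models\Flat_{\sigma_c}(\varphi)(a,i)$. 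The proof is by induction on $\varphi$; the only interesting case is $\Diamond^{\ge k}_R\psi$. By construction of $\Amf^\kappa$, the $R$-successors of $(a,i)$ that satisfy a nominal $N_c$ number exactly one (the single copy $(c^\Amf,0)$) when $(a,c^\Amf)\in R^\Amf$, and zero otherwise, whereas every non-constant successor $b$ contributes $\kappa>\mx$ copies. Hence $(a,i)$ has at least $k$ $R$-successors satisfying $\Flat_{\sigma_c}(\psi)$ in $\Amf^\kappa$ iff either some non-constant such successor exists in $\Amf$ (covered by the first disjunct, since non-constant copies are plentiful), or else at least $k$ distinct constants $c$ give $R$-successors $(c^\Amf,0)$ satisfying $\Flat_{\sigma_c}(\psi)\wedge N_c$ (covered by the $\varrho$-disjunction, using the induction hypothesis at depth one below). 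This matches the semantics of $\Diamond^{\ge k}_R\psi$ on $\Amf^\kappa$ by Lemma~\ref{lemma:equivalenceomega}'s reasoning applied to $\Amf$ itself, giving the claim.

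Next I would replay the proof of Theorem~\ref{prop:definability1} verbatim with $\Flat_{\sigma_c}$ in place of $\Flat$: if $\varphi$ is $\LS$-definable by some $\varphi'$ with $\sigma=\sig(\varphi)\cup\sig(\varphi')$, then for any pointed $\sigma$-structure $\Amf,a$ we have, by Lemmas~\ref{lemma:bisim_omega} and the just-established variant, $\Amf\models\varphi(a)$ iff $\Amf\models\varphi'(a)$ iff $\Amf^\omega\models\varphi'(a,0)$ iff $\Amf^\omega\models\varphi(a,0)$ iff $\Amf^\omega\models\Flat_{\sigma_c}(\varphi)(a,0)$ iff $\Amf\models\Flat_{\sigma_c}(\varphi)(a)$; thus $\models\varphi\leftrightarrow\Flat_{\sigma_c}(\varphi)$, and the converse is trivial since $\Flat_{\sigma_c}(\varphi)$ is an $\LS$-formula. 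So $\varphi$ is $\LS$-definable iff the $\L$-formula $\varphi\leftrightarrow\Flat_{\sigma_c}(\varphi)$ is valid, which is exactly the desired reduction — provided we can produce $\Flat_{\sigma_c}(\varphi)$ in polynomial time.

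The hard part is the size blow-up, which I would resolve exactly as indicated in the excerpt. Observe that the second disjunct of $\Flat_{\sigma_c}(\Diamond^{\ge k}_R\psi)$, namely $\bigvee_{\varrho\subseteq\sigma_c,|\varrho|=k}\bigwedge_{c\in\varrho}\Diamond_R(\Flat_{\sigma_c}(\psi)\wedge N_c\wedge\bigwedge_{d\in\varrho\setminus\{c\}}\neg N_d)$, asserts precisely that there are at least $k$ distinct constants $c$ with an $R$-successor satisfying $\Flat_{\sigma_c}(\psi)\wedge N_c$; since nominals are true at a unique point, this is equivalent to the $\GMLinu$-formula $\Diamond^{\ge k}_R(\Flat_{\sigma_c}(\psi)\wedge\bigvee_{c\in\sigma_c}N_c)$. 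Replacing the exponential disjunction by this single counting modality — and defining $\Flat_{\sigma_c}$ with this replacement throughout the recursion — yields a formula $\Flat_{\sigma_c}(\varphi)$ whose DAG has a number of nodes linear in $|\sub(\varphi)|+|\sigma_c|$: each original subformula $\Diamond^{\ge k}_R\psi$ spawns one $\Diamond_R$-node, one $\Diamond^{\ge k}_R$-node, a shared copy of $\Flat_{\sigma_c}(\psi)$ reachable from both, plus $O(|\sigma_c|)$ nodes for the Boolean combinations of nominals, and $\Flat_{\sigma_c}(\psi)$ is never duplicated. Hence $\Flat_{\sigma_c}(\varphi)$ is computable in polynomial time from the DAG of $\varphi$, the equivalence $\models\varphi\leftrightarrow\Flat_{\sigma_c}(\varphi)$ characterises $\LS$-definability of $\varphi$, and we obtain a polytime reduction of $\L/\LS$-definability to $\L$-validity. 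Combined with the known $\ExpTime$- and $\PSpace$-completeness of validity for the relevant $\L$ with nominals, this also recovers the complexity entries of Table~\ref{table:results} for these pairs.
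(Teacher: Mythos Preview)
Your proposal is correct and follows essentially the same approach as the paper: lift Lemma~\ref{lemma:equivalenceomega} and Theorem~\ref{prop:definability1} to $\Flat_{\sigma_c}$, then fix the size blow-up by replacing the exponential $\varrho$-disjunction with the equivalent $\L$-formula $\Diamond_R^{\ge k}\big(\Flat_{\sigma_c}(\psi)\wedge\bigvee_{c\in\sigma_c}N_c\big)$ and working with DAGs. One small wrinkle to make explicit: after the replacement, the modified $\Flat_{\sigma_c}(\varphi)$ is an $\L$-formula rather than an $\LS$-formula, so it is not itself a candidate definition; the reduction works because this formula is logically equivalent to the original (exponential, counting-free) $\Flat_{\sigma_c}(\varphi)$, and hence checking $\models\varphi\leftrightarrow[\text{poly $\L$-version}]$ is the same as checking $\models\varphi\leftrightarrow[\text{original $\LS$-version}]$. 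Also, your closing remark about $\ExpTime$/$\PSpace$ complexity omits the $\coNExpTime$ cases $\GMLinu$ and $\GMLin$ from Table~\ref{table:results}.
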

We next consider the pair $\CT/\FOTNE$. Observe that the flattening defined above does not work now even if we admit equality in both logics. For example, $\varphi=\exists^{\geq 2}y \, \big(x=y \wedge A(y)\big)$ is unsatisfiable (and so equivalent to an $\FOTNE$-formula) but it is not equivalent to $\exists y\, \big(x=y \wedge A(y)\big)$. 
To define $\Flat(\exists^{\geq k}x \, \varphi)$ in this case so that Lemma~\ref{lemma:equivalenceomega} holds, we distinguish between $k=1$ and $k\geq 2$, decomposing $\exists^{\geq k}x \, \psi$ into  
$$
\exists^{\geq k}x \, \big[ \big((x=y) \wedge \psi\big) \vee \big((x\not=y) \wedge \psi\big) \big],
$$ 
where $\{x,y\} = \var$. 
For $k=1$, we rewrite the disjuncts separately into formulas without equality, and for $k=2$, we only take a flattening of the second disjunct without equality; details are given in the appendix.  One can then again lift Theorem~\ref{prop:definability1} and obtain a polytime
reduction of $\CT/\FOTNE$-definability to $\CT$-validity.
We note that the decidability status of $\CT/\FOT$-definability remains open. 

In contrast to separation, the upper bounds for definability shown above cannot be easily extended from $\L/\LS$ to $\L/\L''$, where $\ML\subseteq \L''\subseteq \LS$.
A fundamental difference from the pairs of logics considered above is that  equivalent formulas in the weaker logic are sometimes exponentially larger even in DAG-size as demonstrated by the following:

\begin{restatable}{lemma}{lemsuccinctness}
	\label{lem:succinctness}
The formulas  
$$
\varphi_{n}(x) = \exists y \, \big[ R(x,y) \wedge \bigwedge_{1\leq i \leq n} \big(A_{i}(x) \leftrightarrow A_{i}(y) \big) \big], \ \ n < \omega,
$$ 
are $\ML$-definable, but any equivalent $\MLiu$-formulas are of exponential DAG-size.
\end{restatable}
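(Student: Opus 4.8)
The plan is to prove the two halves separately. For $\ML$-definability of $\varphi_n(x)$, I would simply exhibit an explicit $\ML$-formula equivalent to it, namely
$$
\vartheta_n = \bigvee_{S \subseteq \{1,\dots,n\}} \Bigl( \bigwedge_{i \in S} A_i \wedge \bigwedge_{i \notin S} \neg A_i \wedge \Diamond_R \bigl( \bigwedge_{i \in S} A_i \wedge \bigwedge_{i \notin S} \neg A_i \bigr) \Bigr).
$$
One checks directly from the semantics that $\Amf \models \varphi_n(a)$ iff $a$ has an $R$-successor agreeing with $a$ on each $A_i$, iff some disjunct of $\vartheta_n$ holds at $a$; this is routine. (Note $\vartheta_n$ itself is exponentially large, which is precisely the point: there is no small $\ML$-formula either, but the lemma only claims $\ML$-\emph{definability}.) Alternatively, and more in the spirit of the surrounding section, one can note that $\varphi_n$ is clearly invariant under $\ML$-bisimulations and has bounded modal depth, hence is $\ML$-definable.

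The substantive half is the exponential lower bound on the DAG-size of any equivalent $\MLiu$-formula, and this is where I would spend the effort. The strategy is a standard Adler--Immerman / Ehrenfeucht--Fra\"iss\'e-style succinctness argument: I would build, for each subset $T \subseteq \{1,\dots,n\}$, a pair of pointed structures $\Amf_T, a_T$ and $\Bmf_T, b_T$ such that $\Amf_T \models \varphi_n(a_T)$ but $\Bmf_T \not\models \varphi_n(b_T)$, yet $\Amf_T, a_T$ and $\Bmf_T, b_T$ agree on all $\MLiu$-formulas of small DAG-size whose ``atomic test set'' avoids the coordinate distinguishing $T$. Concretely, let $a_T$ (and $b_T$) satisfy exactly the $A_i$ with $i \in T$; give $a_T$ an $R$-successor that matches $T$ exactly (so $\varphi_n$ holds), and give $b_T$ one $R$-successor for each $i$, where the $i$-th successor matches $T$ on all coordinates except it flips coordinate $i$ (so no successor matches $T$ exactly, hence $\varphi_n$ fails). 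The key combinatorial observation is that any single ``small'' subformula can, by a pigeonhole/counting argument on the DAG-nodes, effectively consult only few of the $A_i$, so it cannot by itself witness the discrepancy; one then argues that a DAG of sub-exponential size cannot separate all $2^n$ pairs simultaneously.

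The cleanest way to make this precise, and the route I would actually take, is via a game characterisation of bounded-size $\MLiu$-equivalence: define a version of the $\MLiu$ Ehrenfeucht--Fra\"iss\'e game with a ``budget'' counting the number of distinct subformulas (DAG-nodes) used, and show Spoiler needs budget $\geq 2^{\Omega(n)}$ to win on the family $\{(\Amf_T, a_T, \Bmf_T, b_T)\}_T$. Since the structures are tiny (one root plus at most $n$ successors, and the successors are ``leaves'' modulo the $A_i$-colouring and the single $R^-$-edge back to the root), the game analysis reduces to bookkeeping: any formula node of small modal depth evaluated at a successor depends only on which $A_i$ hold there and on one step back to the root, so a formula node can ``see'' the flipped coordinate of $b_T$'s $i$-th successor only if it literally mentions $A_i$; a DAG mentioning fewer than $n$ of the $A_i$ fails on the corresponding $T$, and handling all $T$ forces many distinct subformulas. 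The main obstacle is getting the accounting tight and correct when the same subformula DAG-node is reused at different points of the (branching) structure $\Bmf_T$ and under the inverse modality and universal modality --- one must verify that neither $\Diamond_{R^-}$ nor $\Du$ gives extra discriminating power, since both can only reach the root or the already-fixed successor set, whose colours are symmetric in the relevant coordinate. I would discharge this by choosing the $\Bmf_T$ so that the multiset of successor-colours is invariant under flipping any not-yet-examined coordinate, making the two structures genuinely indistinguishable to any small formula; the rest is the standard succinctness lower-bound packaging, for which the details are deferred to the appendix.
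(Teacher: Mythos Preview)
Your $\ML$-definability argument is fine and matches what the paper leaves implicit.

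For the lower bound, however, the paper takes a quite different and much shorter route. Rather than a family of $2^n$ pairs and an EF/Adler--Immerman game, the paper builds a \emph{single} structure $\Amf_n$ with a root $w$, a middle layer $X_n$ of $2^n$ points carrying all binary $A$-colourings, and a bottom layer $X_n'$ that is a copy of $X_n$; $w$ is $R$-connected to all of $X_n$, and each middle point is $R$-connected to all of $X_n'$. Then $\Amf_n\models\Box_R\varphi_n(w)$. If an equivalent $\chi_n\in\MLiu$ had polynomial DAG-size, one picks, for every subformula $\Diamond_R\chi$ of $\chi_n$ satisfied at some middle point, a single witness in $X_n'$; the set $X_n''$ of such witnesses has size bounded by $|\sub(\chi_n)|$, hence is a proper subset of $X_n'$ for large $n$. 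Restricting $\Amf_n$ to $\{w\}\cup X_n\cup X_n''$ keeps $\Box_R\chi_n$ true at $w$ but makes $\Box_R\varphi_n$ false (some middle point now lacks a matching successor), a contradiction. This witness-selection/submodel argument avoids games entirely.

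Your proposal, by contrast, has a gap as written. The claim that ``any single small subformula can effectively consult only few of the $A_i$'' is not correct: a formula of linear DAG-size can perfectly well mention all $n$ variables (e.g.\ $A_1\wedge\dots\wedge A_n$), so the step ``a DAG mentioning fewer than $n$ of the $A_i$ fails on the corresponding $T$'' does not fire. What you would actually need is a counting argument on the number of distinct \emph{truth-value profiles} that subformulas of $\chi_n$ can induce on the successors of $b_T$, and then to show that polynomially many such profiles cannot separate all $2^n$ pairs; this is doable but is precisely the hard part you defer. The paper's single-model argument sidesteps this bookkeeping entirely and is the simpler way to close the lemma.
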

\begin{proof}
	Suppose otherwise, that is, $\chi_{n}\in \MLiu$ is of polynomial DAG-size 
	and $\Amf\models \varphi_{n}(a)$ iff $\Amf\models \chi_{n}(a)$ holds for all $\Amf,a$.
	Consider the model $\Amf_{n}=(W_{n},R^{\Amf_{n}},A_{0}^{\Amf_{n}},\ldots,A_{n-1}^{\Amf_{n}})$ with domain $W_{n}=\{w\} \cup X_{n} \cup X_{n}'$, where  
	$X_{n}=\{0,\ldots,2^{n}-1\}$ and $X_{n}'=\{0,\ldots,2^{n}-1\}\times \{1\}$, and
	\begin{itemize}
		\item[--] $R^{\Amf_{n}} = (\{w\} \times X_{n}) \cup (X_{n} \times X_{n}')$,
		
		\item[--] $A_{i}^{\Amf_{n}}$ is such that the nodes $k\in X_{n}$ and $(k,1)\in X_{n}'$ satisfy $k$ encoded in binary using $A_{0},\ldots,A_{n-1}$.   
	\end{itemize}
	Clearly, $\Amf_{n}\models \Box_{R}\varphi_{n}(w)$, and so $\Amf_{n}\models \Box_{R}\chi_{n}(w)$. Then, for every $\Diamond_{R} \chi\in \text{sub}(\chi_{n})$ such that $\Amf_{n}\models \Diamond_{R}\chi(v)$ for some $v\in X_{n}$, we can select from $X_{n}'$ a $v_{\chi}\in X_{n}'$ with $\Amf_{n}\models\chi(v_{\chi})$. Denote the set of selected nodes by $X_{n}''$.
	Then the restriction of $\Amf_{n}$ to the set $\{w\}\cup X_{n}\cup X_{n}''$ satisfies $\Box_{R}\chi$ and $X_{n}''\subsetneq X_{n}'$ for sufficiently large $n$ as the size of $X_{n}''$ is bounded by the number of subformulas of $\chi_{n}$. By the semantics of $\Box_R\varphi_{n}$, we clearly have
	$\Amf_{n}\not\models\Box_{R}\varphi_{n}(w)$ for all $n>0$, which is  a contradiction. 
\end{proof}
We now introduce a more succinct version of $\MLiu$, which is included in the two-variable guarded fragment of first-order logic (and so satisfiability is still \ExpTime-complete). Fix variables $x,y$. Then the formulas of suc$\MLiu$, \emph{succinct $\MLiu$}, are constructed from atoms $A(z)$ and $\top(z)$ with $z\in \{x,y\}$ 
by taking the closure under the following rules:
\begin{enumerate}
	\item if $\varphi,\psi\in \text{suc}\MLiu$, then $\neg\varphi,\varphi\wedge \psi \in \text{suc}\MLiu$;
	
	\item if $\chi(x,y)\in \text{suc}\MLiu$ and $x,y$ are both free in $\chi(x,y)$, then $\exists y\, (R(x,y) \wedge \chi(x,y))\in \text{suc}\MLiu$, $\exists x\, (R(y,x) \wedge \chi(x,y))\in \text{suc}\MLiu$;
	
	\item if $\chi(x,y)\in \text{suc}\MLiu$ and $x,y$ are both free in $\chi(x,y)$, then $\exists y\, (R(y,x) \wedge \chi(x,y))\in \text{suc}\MLiu$, $\exists x\, (R(x,y) \wedge \chi(x,y))\in \text{suc}\MLiu$;
	\item if $\chi(x,y)\in \text{suc}\MLiu$ and $x,y$ are both free in $\chi(x,y)$, then
	$\exists y \, \chi(x,y)\in \text{suc}\MLiu$ and $\exists x \, \chi(x,y)\in \text{suc}\MLiu$.
\end{enumerate} 
The languages suc$\MLi$ and suc$\MLu$ are obtained by dropping items 4) and 3), respectively. The language suc$\ML$ is defined by dropping 3) and 4).

\begin{restatable}{lemma}{lemtransss}
	\label{lem:lemtransss}
	Let $\L\in \{\ML,\MLi,\MLu,\MLiu\}$. Then suc$\L$ and $\L$ have the same expressive power in the sense that
	\begin{enumerate}
		\item[$(i)$] for every $\varphi\in \L$, one can construct in polynomial time a formula $\varphi'(x)\in \text{suc}\L$
		such that $\Amf\models \varphi(a)$ iff $\Amf\models \varphi'(a)$ for all pointed $\Amf,a$\textup{;}
		
		\item[$(ii)$] for every $\varphi(x)\in \text{suc}\L$, one can construct in exponential time a formula $\varphi'\in \L$
		such that $\Amf\models \varphi(a)$ iff $\Amf\models \varphi'(a)$ for all pointed $\Amf,a$.
	\end{enumerate}
\end{restatable}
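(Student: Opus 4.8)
The plan is to prove $(i)$ by a padded version of the standard translation and $(ii)$ by a normal-form argument that splits two-variable subformulas into disjunctions of one-variable ones. For $(i)$, take the standard translation $\cdot^{\ast}$ of Sec.~\ref{sec:logicsabst}. The only clash with the grammar of $\text{suc}\L$ is that each quantifier rule requires its body $\chi(x,y)$ to have \emph{both} $x,y$ free, whereas the standard translation of a modal formula always has exactly one free variable; this is repaired by padding with $\top$. Concretely, I would set $(\Diamond_{R}\psi)^{\ast}_{x} = \exists y\,\big(R(x,y)\wedge(\top(x)\wedge\psi^{\ast}_{y})\big)$, $(\Diamond_{R^{-}}\psi)^{\ast}_{x} = \exists y\,\big(R(y,x)\wedge(\top(x)\wedge\psi^{\ast}_{y})\big)$, and $(\Du\psi)^{\ast}_{x} = \exists y\,\big(\top(x)\wedge\psi^{\ast}_{y}\big)$, which are legal instances of rules 2, 3, 4 respectively; for $\L=\ML$ only rule 2 is used, for $\MLi$ rules 2--3, for $\MLu$ rules 2 and 4, so indeed $\varphi'\in\text{suc}\L$. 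Modulo the standard $x/y$ alternation (a subformula may be needed both with free $x$ and with free $y$, at worst doubling the DAG), $\varphi'$ has size $O(|\varphi|)$ and is computed in polynomial time, and $\Amf\models\varphi(a)\iff\Amf\models\varphi'(a)$ is the usual induction.

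For $(ii)$ the key is a normal form. First, every $\text{suc}\L$-formula has at least one free variable: atoms have one, $\neg$ and $\wedge$ take unions, and each quantifier rule binds one of $x,y$ in a body that has \emph{both} free, dropping the count from $2$ to $1$. Hence any $\text{suc}\L$-subformula $\chi$ with both $x,y$ free is a Boolean combination of formulas each of which is a unary atom or headed by a quantifier rule, and each such formula has \emph{exactly one} free variable (also, no $R$-atom occurs in $\chi$ beyond the guard imposed by a rule). Viewing $\chi$ as a propositional formula over these one-variable ``literals'', taking a DNF, and collecting in each disjunct the literals with free $x$ into $\mu_{i}(x)$ and those with free $y$ into $\nu_{i}(y)$ gives $\chi(x,y)\equiv\bigvee_{i}\big(\mu_{i}(x)\wedge\nu_{i}(y)\big)$, with $\mu_{i},\nu_{i}\in\text{suc}\L$ of at most one free variable each. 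I would then define a back-translation $(\cdot)^{\sharp}$ into $\L$ by recursion on strict subformulas: it commutes with $\neg,\wedge$, sends an atom $A(z)$ to the propositional variable $A$, and on a quantifier-headed formula it normalises the body as above and sets, e.g.,
$$
\big(\exists y\,(R(x,y)\wedge\chi)\big)^{\sharp} \ = \ \bigvee_{i}\big(\mu_{i}^{\sharp}\wedge\Diamond_{R}\nu_{i}^{\sharp}\big),
$$
with $\Diamond_{R}$ replaced by $\Diamond_{R^{-}}$ for the $R(y,x)$-guarded rule 3 and by $\Du$ for the unguarded rule 4 (and symmetrically for the $\exists x$-variants). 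Each clause is justified by the equivalence $\exists y\,\big(\xi(x)\wedge R(x,y)\wedge\nu(y)\big)\equiv\xi(x)\wedge\exists y\,\big(R(x,y)\wedge\nu(y)\big)$ (and its $R^{-}$/unguarded analogues) together with distributivity of $\exists$ over $\vee$, and correctness $\Amf\models\varphi(a)\iff\Amf\models\varphi^{\sharp}(a)$ follows clause by clause. Since there are at most $|\varphi|$ distinct subformulas, each translated once, and each translation does a single DNF of size $2^{O(|\varphi|)}$ over already-translated pieces, $\varphi^{\sharp}$ has size $2^{O(|\varphi|)}$ and is produced in exponential time.

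The main obstacle, and essentially the only step that needs care, is the normal-form observation in $(ii)$: establishing that a both-free $\text{suc}\L$-subformula really decomposes as a Boolean combination of at-most-one-free-variable pieces (which rests on the fact that every subformula has a free variable and on ruling out spurious $R$-atoms), and then bookkeeping the blow-up so that the DNF is performed once per subformula rather than nested, keeping the bound single-exponential. The padded standard translation of $(i)$ and all the equivalence inductions are routine.
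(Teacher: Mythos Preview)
Your proposal is correct and takes essentially the same approach as the paper. For $(i)$ the paper simply says ``trivial'' (your $\top$-padding is exactly the kind of fix intended), and for $(ii)$ the paper also observes that the body $\chi(x,y)$ of a quantifier rule is a Boolean combination of one-free-variable pieces $\gamma_i(x),\xi_j(y)$ and then expands into a disjunction of $x$-part/$y$-part pairs before applying the modality; the only cosmetic difference is that the paper enumerates \emph{types} (maximal consistent sign-assignments to the $\gamma_i,\xi_j$, i.e.\ a full DNF) rather than an arbitrary DNF as you do, which yields the same translation and the same single-exponential bound.
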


Now, for every formula $\varphi(x)\in \FOTNE$ with exactly one free variable $x$,  we define inductively the \emph{modal reduct} $m(\varphi(x))$ in suc$\MLiu$ 
by taking $m(A(x))=A(x)$, $m(\neg \varphi) = \neg m(\varphi)$, $m(\varphi\wedge\psi) = m(\varphi)\wedge m(\psi)$, and by defining $m(\exists x \, \psi)$ as follows. Suppose $\psi$ is a Boolean combination $\beta$ of the form 
$$
\psi = \beta\big(\rho_{1},\ldots,\rho_{k_{1}}, \gamma_{1}(x),\dots,\gamma_{k_{2}}(x),\xi_{1}(y),\dots,\xi_{k_{3}}(y)\big),
$$
where
\begin{itemize}
	\item[--] the $\rho_{i}$ are binary atoms $R(x,y)$, $R(y,x)$; 
	
	\item[--] the $\gamma_{i}(x)$ are unary atoms of the form $A(x)$, or $R(x,x)$, or a formula of the form $\exists y\, \gamma_{i}'$; 
	
	\item[--] the $\xi_{i}(y)$ are unary atoms of the form $A(y)$, or $R(y,y)$, or a formula of the form $\exists x \, \xi_{i}'$.
\end{itemize}
We may assume that $x$ and $y$ are free in $\psi$. For $R\in \sig(\varphi)$, we construct from $\psi$ the formulas 
\begin{itemize}
	\item[--] $\psi_{R(x,y)}$ by replacing all $\rho_{i}$ of the form $S(x,y)$ and $S(y,x)$, for $S\not=R$, or of the form $R(y,x)$ with $\neg\top(x)$, all $\gamma_{i}$ of the form $S(x,x)$ with $\neg\top(x)$, all $\xi_{i}$ of the form $S(y,y)$ with $\neg\top(y)$, and all $\rho_{i}$ of the form $R(x,y)$ with $\top(x)$;
	
	\item[--] $\psi_{R(y,x)}$ by replacing all $\rho_{i}$ of the form $S(x,y)$ and $S(y,x)$, for $S\not=R$, or of the form $R(x,y)$ with $\neg \top(x)$, all $\gamma_{i}$ of the form $S(x,x)$ with $\neg\top(x)$, all $\xi_{i}$ of the form $S(y,y)$ with $\neg\top(y)$, and replace all $\rho_{i}$ of the form $R(y,x)$ with $\top(x)$;
	
	\item[--] $\psi_{N}$ by replacing all $\rho_{i}$ of the form $S(x,y)$ and $S(y,x)$ and all
	$\gamma_{i}$ of the form $S(x,x)$ and all $\xi_{i}$ of the form $S(y,y)$ with $\neg\top(x)$. 
\end{itemize}
Next, consider the formula 
\begin{multline*}
	\psi_{1} = \bigvee_{R\in \sig(\varphi)} \big (R(x,y) \wedge \psi \big) 
	\vee \bigvee_{R\in \sig(\varphi)} \big( R(y,x) \wedge \psi\big) \vee{} \\ 
	\big[\psi \wedge \bigwedge_{R\in \sig(\varphi)} \big(\neg R(x,y) \wedge \neg R(y,x) \big) \big],
\end{multline*}
which is equivalent to $\psi$. Then we obtain $m(\exists x\, \psi)$ from 
$\exists x\,\psi_{1}$ by setting: 
\begin{multline*}
	m(\exists x\,\psi) =  \bigvee_{R\in \sig(\varphi)}\exists x\, \big( R(x,y) \wedge m(\psi_{R(x,y)})\big) \vee{} \\
\bigvee_{R\in \sig(\varphi)} \exists x\, \big( R(y,x) \wedge m(\psi_{R(y,x)}) \big) \vee  \exists x\, m(\psi_{N}).
\end{multline*}
Clearly, $m(\varphi)$ is in suc$\MLiu$ and of polynomial DAG-size in the size of $\varphi$. We sketch how to prove that $\varphi(x)$ is suc$\MLiu$-definable iff $\models \varphi \leftrightarrow m(\varphi)$:
given any structure $\mathfrak{A}$, one
takes its \emph{twin-unfolding} $\Amf^{\ast}$ that  consists of the standard unfoldings of $\Amf$ along binary relations and their inverses, with at least two unrelated copies of any node. The structures $\Amf$ and $\Amf^{\ast}$ cannot be distinguished in suc$\MLiu$. In fact, one can prove that $\Amf^{\ast}\models \varphi(a)$ iff $\Amf^{\ast}\models m(\varphi)(a)$, for all pointed $\Amf,a$. Then one can prove analogues of Lemmas~\ref{lemma:bisim_omega} and \ref{lemma:equivalenceomega} to show that $\varphi(x)$ is suc$\MLiu$-definable iff $\models \varphi \leftrightarrow m(\varphi(x))$. Overall, this shows that $\CT/\MLiu$ definability is polytime reducible to $\CT$-validity.

For $\text{suc}\MLiu/\text{suc}\MLi$- and $\text{suc}\MLiu/\text{suc}\ML$-definability, we  
say that $\Amf_{0}$ is a \emph{generated substructure} of $\Amf_{1}$ if it is a substructure of $\Amf_{1}$ and whenever $a\in \text{dom}(\Amf_{0})$ and $(a,a')\in R^{\Amf_{1}}$ or
$(a',a)\in R^{\Amf_{1}}$, then $a'\in \dom(\Amf_{0})$. We call 
$\Amf_{0}$ a \emph{forward generated substructure} of $\Amf_{1}$ if it is a substructure of $\Amf_{1}$ and whenever $a\in \text{dom}(\Amf_{0})$ and $(a,a')\in R^{\Amf_{1}}$, then $a'\in \dom(\Amf_{0})$.
We say that $\varphi$ is \emph{invariant under $($forward$)$ generated substructures} if, for all pointed $\Amf,a$ and all (forward) generated substructures $\Amf'$ of $\Amf$ with $a\in \dom(\Amf')$, we have $\Amf\models\varphi(a)$ iff $\Amf'\models \varphi'(a)$.
We then show that 
\begin{enumerate}
	\item $\varphi\in \MLiu$ is $\MLi$-definable iff $\varphi$ is invariant under generated substructures\textup{;}
		
		%\item $\varphi\in \MLiu$ is $\MLu$-definable iff $\varphi$ is invariant under %$\varphi$-exhaustive forward generated substructures;
		
		\item $\varphi\in \MLiu$ is $\ML$-definable iff $\varphi$ is invariant under forward generated substructures.
	\end{enumerate}
By encoding the problem of deciding whether a formula in $\text{suc}\MLiu$
is invariant under (forward) generated substructures as a validity problem in
$\text{suc}\MLiu$ we obtain the following overall result.
\begin{theorem}\label{thm:succ}
Let $\LS\in \{\MLiu,\MLi,\ML\}$. Then $\CT/\LS$-defina\-bility is polytime reducible to $\CT$-validity.
\end{theorem}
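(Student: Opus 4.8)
The plan is to reduce $\CT/\LS$-definability to $\CT$-validity for each of $\LS\in\{\MLiu,\MLi,\ML\}$ in turn, with the succinct modal reduct $m(\varphi)\in\text{suc}\MLiu$ (constructed above in polynomial DAG-size) as the central object. The one structural fact used throughout is that $\text{suc}\MLiu$ is (equality-free) two-variable guarded first-order logic, so $\text{suc}\MLiu\subseteq\FOTNE\subseteq\FOT\subseteq\CT$; hence any validity statement about a $\text{suc}\MLiu$-formula is already an instance of $\CT$-validity. I will also use, freely, that $\text{suc}\MLiu\equiv\MLiu$, $\text{suc}\MLi\equiv\MLi$ and $\text{suc}\ML\equiv\ML$ in expressive power (Lemma~\ref{lem:lemtransss}).

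For $\LS=\MLiu$, the claim to establish is: $\varphi(x)$ is $\text{suc}\MLiu$-definable iff $\models\varphi\leftrightarrow m(\varphi)$. The direction $(\Leftarrow)$ is immediate. For $(\Rightarrow)$, I would fix the \emph{twin-unfolding} $\Amf^{\ast}$ of an arbitrary structure $\Amf$, namely the simultaneous unfolding of $\Amf$ along every binary relation and its inverse in which each node gets at least two pairwise $R$-unrelated copies. Two lemmas are then needed. First, a routine induction on $\varphi$ showing $\Amf^{\ast}\models\varphi(a)$ iff $\Amf^{\ast}\models m(\varphi)(a)$: the case analysis defining $m(\exists x\,\psi)$ through the relational ``sort'' of the pair $(x,y)$ (the formulas $\psi_{R(x,y)},\psi_{R(y,x)},\psi_{N}$) is tailored precisely so that in the tree-like twin-unfolding every binary atom between the two variables is determined and the relativised disjuncts partition the witnesses correctly. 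Second, analogues of Lemmas~\ref{lemma:bisim_omega} and~\ref{lemma:equivalenceomega}: $\Amf,a$ and $\Amf^{\ast},a$ are indistinguishable by $\text{suc}\MLiu$ (they are linked by a bisimulation compatible with the guarded two-variable quantifier patterns, including reflexive $R(x,x)$ which the twin-unfolding is set up to preserve). Given a hypothesised $\text{suc}\MLiu$-definition $\chi$ of $\varphi$ and an arbitrary pointed $\Amf,a$, one then chains $\Amf\models\varphi(a)\Leftrightarrow\Amf\models\chi(a)\Leftrightarrow\Amf^{\ast}\models\chi(a)\Leftrightarrow\Amf^{\ast}\models\varphi(a)\Leftrightarrow\Amf^{\ast}\models m(\varphi)(a)\Leftrightarrow\Amf\models m(\varphi)(a)$, so $\models\varphi\leftrightarrow m(\varphi)$. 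Since $m(\varphi)$ is built in polynomial time and $\text{suc}\MLiu\subseteq\CT$, this is the required reduction.

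For $\LS\in\{\MLi,\ML\}$ I would invoke the preservation facts stated above: a formula in $\MLiu$ is $\MLi$-definable iff it is invariant under generated substructures, and $\ML$-definable iff invariant under forward generated substructures. The easy halves are one-line inductions; the converses are van~Benthem/Rosen-style arguments run over $\omega$-saturated models using Lemma~\ref{lem:guardedbisim}. Applying them to $m(\varphi)$ (legitimate by the expressive-equivalences above), and using $\MLi\subseteq\MLiu$, $\ML\subseteq\MLiu$, we get that $\varphi$ is $\MLi$-definable iff $\models\varphi\leftrightarrow m(\varphi)$ \emph{and} $m(\varphi)$ is invariant under generated substructures (similarly for $\ML$). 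It remains to turn ``$m(\varphi)$ is invariant under (forward) generated substructures'' into a $\text{suc}\MLiu$-validity. Introduce a fresh unary $G$; let $\psi^{G}$ be the relativisation of $m(\varphi)$ to $G$ — in every guarded step $\exists y\,(R(x,y)\wedge\chi)$, $\exists y\,(R(y,x)\wedge\chi)$ and every universal step $\exists y\,\chi$ one inserts the conjunct $G(y)$, recursively — and let $\alpha_{G}$ be the $\text{suc}\MLiu$-formula asserting, via the universal modality, that $G$ holds at $x$ and that the set $G$ is closed under $R$- and $R^{-}$-neighbours (for the $\MLi$ case) or only under $R$-successors (for the $\ML$ case). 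Then $m(\varphi)$ is invariant under (forward) generated substructures iff $\models\big(\alpha_{G}\to(m(\varphi)\leftrightarrow\psi^{G})\big)$: every $G$-interpretation satisfying $\alpha_{G}$ at $a$ carves out exactly a (forward) generated substructure containing $a$, and relativisation faithfully computes truth of $m(\varphi)$ there. As $\psi^{G}$, $\alpha_{G}$, and the resulting formula are all polynomial-size and $\text{suc}\MLiu\subseteq\CT$, this is a polytime reduction to $\CT$-validity; combining it with the $\MLiu$-reduction (a conjunction of two $\CT$-validity instances) finishes all three cases.

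The main obstacle I anticipate is the bookkeeping in the $\MLiu$ step: making the twin-unfolding argument airtight requires checking that $m$ commutes with the unfolding on the nose for every Boolean combination $\beta$ over the guarded atoms $\rho_{i},\gamma_{i}(x),\xi_{i}(y)$, and that the $\Amf$-to-$\Amf^{\ast}$ bisimulation respects the guarded, two-variable quantification (reflexive loops being the annoying case). A secondary delicate point, in the $\MLi/\ML$ step, is verifying that $\alpha_{G}$ together with $\psi^{G}$ captures \emph{generated} substructures rather than arbitrary induced ones — which is exactly why $\alpha_{G}$ must be a \emph{global} statement expressed with the universal modality and why only neighbour-closed $G$ are admitted.
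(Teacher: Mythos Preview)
Your proposal captures the second and third stages of the paper's argument correctly (the twin-unfolding characterisation of $\text{suc}\MLiu$-definability, and the reduction of $\MLi$/$\ML$-definability to invariance under (forward) generated substructures encoded via relativisation), but it has a genuine gap at the very first step: you apply $m$ directly to the input $\CT$-formula $\varphi$. The map $m$ is defined only for $\FOTNE$-formulas---it has no clause for counting quantifiers $\exists^{\geq k}$ or for equality atoms $x=y$---so $m(\varphi)$ is simply undefined for a general $\varphi\in\CT$. Moreover, the twin-unfolding $\Amf^{\ast}$ is a tree-unfolding, not an $\omega$-expansion: it does \emph{not} multiply the number of $R$-successors of a given node, so counting quantifiers are not trivialised in $\Amf^{\ast}$ and your ``routine induction showing $\Amf^{\ast}\models\varphi(a)$ iff $\Amf^{\ast}\models m(\varphi)(a)$'' cannot go through for $\varphi\in\CT$.

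The paper's fix is to chain through $\FOTNE$: one first applies the (equality-aware) flattening $\Flat$ to obtain $\Flat(\varphi)\in\FOTNE$, and only then applies $m$. The overall test becomes a conjunction of $\CT$-validities, e.g.\ for $\ML$: (1) $\models\varphi\leftrightarrow\Flat(\varphi)$, (2) $\models\Flat(\varphi)\leftrightarrow m(\Flat(\varphi))$, and (3) $m(\Flat(\varphi))$ is invariant under forward generated substructures (encoded as a validity exactly as you describe). With this correction your plan aligns with the paper. One minor remark: the converse direction of the generated-substructure characterisation is proved in the paper not via $\omega$-saturation but via isomorphism of $\omega$-unfoldings of countable bisimilar structures (Lemma~\ref{lem:lemfrombisimtoiso}); your $\omega$-saturated route would also work, but the unfolding argument is more elementary here. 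Also, reflexive atoms $R(x,x)$ are not ``preserved'' by the twin-unfolding---they are systematically replaced by $\neg\top$ in the definition of $m$, precisely because tree-unfoldings have no self-loops.
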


%*******************

\section{Uniform Separation}\label{sec:uniformseparation}

Let $\L,\LS$ be a pair of logics, $\sigma$ a signature,
and $\varphi$ an $\L(\sigma)$-formula. An $\LS(\sigma)$-formula $\varphi'$ is called an \emph{$\LS(\sigma)$-uniform separator for $\varphi$}
if it is a logically strongest $\LS(\sigma)$-formula that follows from $\varphi$ in the sense that $\varphi\models \varphi'\models \chi$, for every $\LS(\sigma)$-formula $\chi$ with $\varphi\models \chi$. The \emph{uniform $\L/\LS(\sigma)$-separation problem} is to decide, given an $\L(\sigma)$-formula $\varphi$, whether it has an $\LS(\sigma)$-uniform separator. If such a separator for $\varphi$ exists, it is unique up to logical equivalence and separates any $\LS(\sigma)$-separable $\L(\sigma)$-formulas $\varphi$ and $\psi$. Note, however, that uniform $\LS(\sigma)$-separators for distinct $\sigma$ can be non-equivalent and that a uniform $\L(\sigma)$-separator might only exist for some $\sigma$.

We show that, for $\L,\LS$ from Table~\ref{table:results}, the flattenings $\Flat$ and $\Flat_{\sigma_{c}}$ defined in the previous section provide uniform separators iff such separators exist. The following example is easily verified using the results to be established below.
\begin{example}\em
	Let $\varphi=\Diamond_{R}^{\geq k}A$. The formula $\Flat(\varphi)= \Diamond_{R} A$ is a uniform $\MLinu(\sigma)$-separator for $\varphi$ if $\sigma$ contains infinitely-many unary and binary predicates (including $R$ and $A$) and no constants. If $\sigma_{\ell}=\sigma\cup \sigma_{c}^{\ell}$, where $\sigma_{c}^{\ell}$ contains $\ell$-many constants, then $\varphi_{\ell}=\Flat_{\sigma_{c}^{\ell}}(\varphi)$ is a uniform $\MLinu(\sigma_{\ell})$-separator for $\varphi$. Note that  $\varphi_{\ell'}\models \varphi_{\ell}$, for $\sigma_{c}^{\ell'}\supseteq \sigma_{c}^{\ell}$, but the converse entailment does not hold. There is no uniform $\MLinu(\sigma_{\omega})$-separator for $\varphi$ if $\sigma_{\omega}$ contains infinitely-many constants.
	\hfill $\dashv$
\end{example}

Following what is customary for uniform interpolation and most robust~\cite{Visser96,DBLP:series/lncs/KonevLWW09,DBLP:journals/apal/KowalskiM19},
we assume that $\sigma$ contains countably-infinite sets of unary and binary predicates. If $\L$ admits nominals, we consider $\sigma$ with either a finite or an infinite set of constants, denoted $\sigma_{c}$. Unless otherwise indicated, we assume $\sigma$ to be fixed. 
%This assumption ensures that the existence of uniform separators is syntax independent (e.g., %does not dependent on having conjuncts of the form $(A \vee \neg A$ in $\varphi$). 
%Recall that  denotes the set of constants in $\sigma$.
We first consider nominal-free logics.

\begin{restatable}{theorem}{thmuniformsepone}
	\label{thm:uniformsep1}
	Let $\L/\LS$ be either $\CT/\FOTNE$ or any pair of nominal-free modal logics from Table~\ref{table:results}. Then, for any signature $\sigma$ with $\sigma_{c}=\emptyset$ and any $\L(\sigma)$-formula $\varphi$, the following conditions are equivalent\textup{:}
	\begin{enumerate}
		\item[$(a)$] $\varphi$ has a uniform $\LS(\sigma)$-separator\textup{;}
		
		\item[$(b)$] $\Flat(\varphi)$ is a uniform $\LS(\sigma)$-separator for $\varphi$\textup{;}
		
		\item[$(c)$] $\varphi\models \Flat(\varphi)$\textup{;}
		
		\item[$(d)$] $\varphi$ is preserved under $\omega$-expansions of $\sigma$-structures $\Amf$ in the sense that $\Amf\models \varphi(a)$ implies $\Amf^{\omega}\models \varphi(a,0)$, for any pointed $\sigma$-structure $\Amf,a$.
	\end{enumerate}
\end{restatable}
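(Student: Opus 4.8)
The plan is to prove the cycle $(b)\Rightarrow(a)\Rightarrow(c)\Rightarrow(b)$ together with $(c)\Leftrightarrow(d)$, so that only $(a)\Rightarrow(c)$ requires real work. The workhorse for everything else is the identity, valid for every pointed $\sigma$-structure $\Amf,a$, that $\Amf^{\omega}\models\varphi(a,0)$ iff $\Amf\models\Flat(\varphi)(a)$: by Lemma~\ref{lemma:equivalenceomega} (with $\kappa=\omega$), $\Amf^{\omega}\models\varphi(a,0)$ iff $\Amf^{\omega}\models\Flat(\varphi)(a,0)$, and by Lemma~\ref{lemma:bisim_omega} (applicable since $\Flat(\varphi)\in\LS(\sigma)$ is equality-free $\FO$ or a fragment of $\MLinu$), $\Amf^{\omega}\models\Flat(\varphi)(a,0)$ iff $\Amf\models\Flat(\varphi)(a)$. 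For the pair $\CT/\FOTNE$ one takes the equality-aware flattening of Sec.~\ref{sec:definability}, whose defining feature is precisely that Lemma~\ref{lemma:equivalenceomega} continues to hold; the rest of the argument is insensitive to which flattening is used. Given the identity, $(c)\Leftrightarrow(d)$ is immediate: $(d)$ reads ``$\Amf\models\varphi(a)$ implies $\Amf^{\omega}\models\varphi(a,0)$'', which is ``$\Amf\models\varphi(a)$ implies $\Amf\models\Flat(\varphi)(a)$'', i.e.\ $\varphi\models\Flat(\varphi)$, i.e.\ $(c)$.

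For $(c)\Rightarrow(b)$ I would first observe, with no appeal to $(c)$, that $\Flat(\varphi)$ entails \emph{every} $\LS(\sigma)$-consequence $\chi$ of $\varphi$: if $\Amf\models\Flat(\varphi)(a)$ then $\Amf^{\omega}\models\varphi(a,0)$ by the identity, hence $\Amf^{\omega}\models\chi(a,0)$ since $\varphi\models\chi$, hence $\Amf\models\chi(a)$ by Lemma~\ref{lemma:bisim_omega} applied to $\chi\in\LS(\sigma)$. Under $(c)$ the formula $\Flat(\varphi)\in\LS(\sigma)$ is itself an $\LS(\sigma)$-consequence of $\varphi$ that implies all the others, hence a uniform $\LS(\sigma)$-separator, which is $(b)$. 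The implication $(b)\Rightarrow(a)$ is trivial.

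The hard direction is $(a)\Rightarrow(c)$, which I would prove by contraposition: assume $\varphi\not\models\Flat(\varphi)$ and show no uniform $\LS(\sigma)$-separator exists. The first step is a model-theoretic reformulation: by compactness together with closure of $\LS$ under negation, a pointed structure $\Bmf,b$ satisfies all $\LS(\sigma)$-consequences of $\varphi$ iff there is a model $\Amf\models\varphi(a)$ with $\Amf,a\equiv_{\LS(\sigma)}\Bmf,b$; hence $\varphi$ has a uniform $\LS(\sigma)$-separator iff the class $\mathcal{K}_{\varphi}$ of such $\Bmf,b$ — the $\equiv_{\LS(\sigma)}$-closure of $\{(\Amf,a)\mid\Amf\models\varphi(a)\}$ — is definable by a single $\LS(\sigma)$-formula $\psi$, and such a $\psi$ would, by Lemma~\ref{lem:guardedbisim}, make $\mathcal{K}_{\varphi}$ closed under $\sim_{\LS(\sig(\psi))}$ for the \emph{finite} signature $\sig(\psi)$. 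Now fix $\Amf_0\models\varphi(a_0)$ with $\Amf_0\not\models\Flat(\varphi)(a_0)$; equivalently (Lemma~\ref{lemma:equivalenceomega}) $\Amf_0^{\mx+1}\not\models\varphi(a_0,0)$, so replacing $\Amf_0$ by its $(\mx+1)$-fold expansion breaks $\varphi$ by inflating some counting subformula past its threshold at some $\LS(\sig(\varphi))$-reachable position. Since $\sigma$ is infinite, pick fresh unary predicates $p_1,\dots,p_N\notin\sig(\varphi)\cup\sig(\psi)$ with $2^N\ge\mx+1$ and let $\widetilde{\Amf_0}$ be the $(\mx+1)$-expansion of $\Amf_0$ in which the $\mx+1$ copies of each element carry pairwise distinct $p_1,\dots,p_N$-colours. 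Collapsing the colour-copies gives $\widetilde{\Amf_0},(a_0,0)\sim_{\LS(\sig(\psi))}\Amf_0,a_0$, so $\widetilde{\Amf_0},(a_0,0)\in\mathcal{K}_{\varphi}$; on the other hand I would show $\widetilde{\Amf_0},(a_0,0)\notin\mathcal{K}_{\varphi}$, since any $\Amf\models\varphi(a)$ that were $\equiv_{\LS(\sigma)}$ to it would be forced, along the $\LS$-path witnessing the broken count, to carry more than $\mx$ pairwise $\{p_1,\dots,p_N\}$-distinguishable successors of the relevant kind, contradicting that the corresponding counting subformula of $\varphi$ is false there. This contradicts closure of $\mathcal{K}_{\varphi}$ under $\sim_{\LS(\sig(\psi))}$, so $\mathcal{K}_{\varphi}$ is not $\LS(\sigma)$-definable, i.e.\ $\varphi$ has no uniform separator.

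I expect the last step to be the main obstacle: extracting from $\Amf_0^{\mx+1}\not\models\varphi(a_0,0)$ the precise counting subformula and the precise $\LS(\sig(\psi))$-reachable element at which the blow-up occurs, and then checking that $\LS(\sigma)$-equivalence to $\widetilde{\Amf_0}$ really forces a genuine \emph{graded} count in the competing model $\Amf$ — the subtlety being that $\LS$ only perceives the flattening of subformulas, so one may first need to pass to an $\omega$-saturated (or further $\omega$-expanded) $\Amf$, using Lemma~\ref{lem:guardedbisim} and, if convenient, Lemma~\ref{criterion}, before the counting argument closes. The remaining ingredients — the compactness reduction, the colour-copy bisimulation bookkeeping, and verifying that the equality-aware flattening of the $\CT/\FOTNE$ case still satisfies Lemma~\ref{lemma:equivalenceomega} — are routine.
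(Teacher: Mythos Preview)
Your treatment of the easy directions $(b)\Rightarrow(a)$, $(c)\Rightarrow(b)$ and $(c)\Leftrightarrow(d)$ is correct and coincides with the paper's argument (which runs the cycle as $(d)\Rightarrow(c)\Rightarrow(b)\Rightarrow(a)$, using Lemmas~\ref{lemma:bisim_omega} and~\ref{lemma:equivalenceomega} in exactly the way you describe). Your coloured-expansion construction $\widetilde{\Amf_0}$ is also precisely the paper's $\Bmf^{\mx+1}$: the $(\mx{+}1)$-expansion of a witness $\Amf_0$ to $\varphi\not\models\Flat(\varphi)$, with fresh unary predicates labelling the copies.

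The gap is in your plan for the final step. You aim to show $\widetilde{\Amf_0}\notin\mathcal{K}_\varphi$ by \emph{localising} a single counting subformula of $\varphi$ that ``blows up'' at a specific $\LS$-reachable element, and then arguing that any $\Amf\models\varphi(a)$ with $\Amf,a\equiv_{\LS(\sigma)}\widetilde{\Amf_0},(a_0,0)$ must also blow up there. This does not work in general: the failure of $\varphi$ in $\Amf_0^{\mx+1}$ is a \emph{global} phenomenon (every element is $(\mx{+}1)$-inflated simultaneously) and need not be traceable to a single subformula/element pair, especially under nesting and Boolean structure; moreover your intended contradiction (``the corresponding counting subformula of $\varphi$ is false there'') is the wrong target, since $\varphi$ is assumed \emph{true} in $\Amf$, and the subformulas witnessing the blow-up in $\widetilde{\Amf_0}$ are $\L$-formulas that $\LS(\sigma)$-equivalence does not directly control. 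Passing to $\omega$-saturated or further $\omega$-expanded models does not remove this obstacle.

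The paper closes the argument differently and more directly, avoiding both the localisation and the detour through $\mathcal{K}_\varphi$. Let $\gamma$ be the (finite) conjunction of the $\LS(\varrho)$-theory of bounded modal depth of $\Bmf^{\mx+1}$ at $(a_0,0)$, where $\varrho=\sig(\varphi)\cup\sig(\varphi')\cup\{A_0,\dots,A_{\mx}\}$. The key observation is that $\gamma$ contains, for every $\chi\in\sub(\Flat(\varphi))$ and every $R$, the formulas $\blacksquare\big(\Diamond_R\chi\to\bigwedge_{i\le\mx}\Diamond_R(\chi\wedge A_i)\big)$ together with pairwise disjointness of the $A_i$; these entail $\blacksquare(\Diamond_R\chi\to\Diamond_R^{\ge\mx+1}\chi)$, and an easy induction then gives $\gamma\models\varphi\leftrightarrow\Flat(\varphi)$. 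Hence $\gamma\to\Flat(\varphi)$ is a single $\LS(\sigma)$-consequence of $\varphi$; the assumed uniform separator $\varphi'$ must entail it, and evaluating at $\Bmf^{\mx+1}$ (where $\varphi'$ and $\gamma$ hold but $\Flat(\varphi)$ fails) yields the contradiction. In your framework, this same $\gamma\to\Flat(\varphi)$ is exactly the $\LS(\sigma)$-consequence of $\varphi$ that $\widetilde{\Amf_0}$ violates, giving $\widetilde{\Amf_0}\notin\mathcal{K}_\varphi$ without any localisation.
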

\begin{proof}
	We prove $(a) \Rightarrow (d)$ for $\GMLu/\MLu$. The implications $(d)\Rightarrow (c) \Rightarrow (b) \Rightarrow (a)$ can be proved using Lemmas~\ref{lemma:bisim_omega} and \ref{lemma:equivalenceomega}; the remaining pairs $\L/\LS$ are considered in the appendix.
	For a proof by contradiction, suppose $\varphi'$ is a uniform $\MLu(\sigma)$-separator for $\varphi$ but there is a pointed $\sigma$-structure $\Amf,a$ such that $\Amf\models \varphi(a)$ and $\Amf^{\omega}\not\models \varphi(a,0)$. By Lemma~\ref{lemma:equivalenceomega}, $\Amf^{\kappa}\not\models\varphi(a,0)$ for all $\kappa > \mx$. By Lemma~\ref{lemma:bisim_omega}, $\Amf^{\kappa}\models\varphi'(a,0)$ for all $\kappa> \mx$.
	
	For any signature $\varrho$ and any $k\geq 0$, let $\MLu_{k}(\varrho)$ be the set of all $\MLu(\varrho)$-formulas of modal depth at most $k$. If $\varrho$ is finite, then $\MLu_{k}(\varrho)$ is finite modulo logical equivalence. Take propositional variables $A_{0},\ldots,A_{\mx}$ in $\sigma\setminus(\sig(\varphi)\cup \sig(\varphi'))$ and let $\varrho=\{A_{0},\ldots,A_{\mx}\} \cup \sig(\varphi)\cup \sig(\varphi')$. Define $\Bmf^{\mx+1}$ as $\Amf^{\mx+1}$ with the exception that, for $0\leq i \leq \mx$,
\[
	A_{i}^{\Bmf^{\mx+1}}= \{(a,i) \mid a\in \dom(\Amf)\}.
\]
Consider the set $\Gamma$ of $\MLu_{\md(\varphi)+2}(\varrho)$-formulas $\chi$ such that $\Bmf^{\mx+1}\models\chi(a,0)$. Let $\gamma = \bigwedge_{\chi\in \Gamma}\chi$. 
	We show that $\gamma \models \varphi \leftrightarrow \Flat(\varphi)$. 
	Observe that $\Gamma$ contains 
	$\blacksquare(A_{i}\rightarrow \neg A_{j})$, for $0\leq i < j \leq \mx$, 
	and, for every $\chi\in \sub(\Flat(\varphi))$ and $R\in \varrho$, it also contains
	$$
	\blacksquare \big[\Diamond_{R} \chi \rightarrow \big(\Diamond_{R}(\chi \wedge A_{0}) \wedge \dots \wedge \Diamond_{R}(\chi \wedge A_{\mx}) \big) \big].
	$$
	It follows that $\gamma\models \blacksquare(\Diamond_{R} \chi \rightarrow \Diamond_{R}^{\geq \mx+1}\chi)$, for all $\chi\in \sub(\Flat(\varphi))$ and $R\in \varrho$, 
	and one can show by induction that $\gamma \models \varphi \leftrightarrow \Flat(\varphi)$. 
	
	It follows that $\varphi\models \gamma \rightarrow \Flat(\varphi)$, and so, by the definition of uniform $\MLu(\sigma)$-separators, we have $\varphi'\models \gamma \rightarrow \Flat(\varphi)$. In view of $\Bmf^{\mx+1}\models \varphi'(a,0)$ and $\Bmf^{\mx+1}\models \gamma(a,0)$, we obtain $\Bmf^{\mx+1}\models \Flat(\varphi)(a,0)$. Therefore, $\Amf^{\mx+1}\models \Flat(\varphi)(a,0)$ as $\Flat(\varphi)$ does not contain $A_{0},\ldots,A_{\mx}$. It follows from Lemma~\ref{lemma:equivalenceomega} that $\Amf^{\mx+1}\models\varphi(a,0)$, which is impossible.
\end{proof}

By Theorem~\ref{thm:uniformsep1}~(b), uniform separator existence is polytime reducible to $\L$-validity. 
We next discuss the case of modal logics with nominals. We first note a straightforward extension of Theorem~\ref{thm:uniformsep1} to signatures $\sigma$ with \emph{finite} $\sigma_{c} \ne \emptyset$:

\begin{theorem}\label{thm:uniformsep2}
Let $\L$ and $\LS$ be any pair of modal logics with nominals from Table~\ref{table:results}. Let $\sigma$ be a signature with finite $\sigma_{c} \ne \emptyset$. Then conditions $(a)$--$(d)$ of Theorem~\ref{thm:uniformsep1} with $\Flat$ replaced by $\Flat_{\sigma_{c}}$ are equivalent to each other.
\end{theorem}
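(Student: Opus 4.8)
The plan is to mimic the proof of Theorem~\ref{thm:uniformsep1} almost verbatim, isolating the single place where nominals interfere, namely the definition of the $\omega$-expansion on the distinguished points $c^{\Amf}$, and checking that $\Flat_{\sigma_{c}}$ compensates for it. Concretely, I would first establish the analogue of Lemma~\ref{lemma:equivalenceomega} for $\Flat_{\sigma_{c}}$: for a $\GMLinu(\sigma)$-formula $\varphi$ with finite $\sigma_{c}$ and $\mx<\kappa<\kappa'$, we have $\Amf^{\kappa}\models\varphi(a,i)$ iff $\Amf^{\kappa}\models\Flat_{\sigma_{c}}(\varphi)(a,i)$ iff $\Amf^{\kappa'}\models\Flat_{\sigma_{c}}(\varphi)(a,i)$ iff $\Amf^{\kappa'}\models\varphi(a,i)$. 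The excerpt already asserts this lifting is straightforward; the point is that in $\Amf^{\kappa}$ a constant $c$ has a unique copy $(c^{\Amf},0)$, so an $R$-successor satisfying $\Diamond^{\ge k}_R\varphi$ in $\Amf$ may reach $c^{\Amf}$, which survives as exactly one witness rather than $\kappa$-many; the nominal-aware disjunction in $\Flat_{\sigma_{c}}(\Diamond^{\ge k}_R\varphi)$ accounts for all the ways the $k$ required witnesses can be distributed among the (at most $|\sigma_c|$) constant-copies versus the freely-multiplied non-constant copies. I would spell out this induction as the first lemma.

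Next, I would observe that Lemma~\ref{lemma:bisim_omega} already covers modal logics with nominals (its hypothesis is ``$\L$ is equality-free $\FO$ or a fragment of $\MLinu$''), so $\Amf,a\equiv_{\L(\sigma)}\Amf^{\kappa},(a,i)$ holds as stated, including for constants because $c^{\Amf^\kappa}=(c^\Amf,0)$. With these two lemmas in hand, the implications $(d)\Rightarrow(c)\Rightarrow(b)\Rightarrow(a)$ go through exactly as in Theorem~\ref{thm:uniformsep1}: $(c)$ gives $\models\varphi\leftrightarrow\Flat_{\sigma_{c}}(\varphi)$ via the chain $\Amf\models\varphi(a)$ iff $\Amf^\omega\models\varphi(a,0)$ (using $(d)$ and the new Lemma, noting $\mx<\omega$) iff $\Amf^\omega\models\Flat_{\sigma_{c}}(\varphi)(a,0)$ iff $\Amf\models\Flat_{\sigma_{c}}(\varphi)(a)$; then $(b)$ and $(a)$ are immediate since $\Flat_{\sigma_{c}}(\varphi)$ is an $\LS(\sigma)$-formula entailed by $\varphi$, and it is strongest because any $\LS$-consequence $\chi$ of $\varphi$, being preserved under $\omega$-expansions by Lemma~\ref{lemma:bisim_omega}, satisfies $\Flat_{\sigma_{c}}(\varphi)\models\chi$.

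The substantive part is $(a)\Rightarrow(d)$, and here I would reuse the $\GMLu/\MLu$ argument with the padding variables $A_{0},\dots,A_{\mx}$, adapted to keep the constants intact. Suppose $\varphi'$ is a uniform $\LS(\sigma)$-separator but $\Amf\models\varphi(a)$ while $\Amf^\omega\not\models\varphi(a,0)$; by the new Lemma, $\Amf^{\kappa}\not\models\varphi(a,0)$ for all $\kappa>\mx$, and by Lemma~\ref{lemma:bisim_omega}, $\Amf^{\kappa}\models\varphi'(a,0)$. Define $\Bmf^{\mx+1}$ from $\Amf^{\mx+1}$ by colouring the $i$-th copies with fresh $A_{i}\notin\sig(\varphi)\cup\sig(\varphi')$, where now only the non-constant elements have copies $0,\dots,\mx$ while each $c^{\Amf}$ has just the single copy $(c^{\Amf},0)$ coloured $A_{0}$. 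Collect $\Gamma=\{\chi\in\LS_{\md(\varphi)+2}(\varrho):\Bmf^{\mx+1}\models\chi(a,0)\}$, finite up to equivalence since $\varrho=\{A_{0},\dots,A_{\mx}\}\cup\sig(\varphi)\cup\sig(\varphi')$ is finite (here finiteness of $\sigma_{c}$ is essential — this is exactly why the theorem restricts to finite $\sigma_{c}$), and put $\gamma=\bigwedge\Gamma$. The formulas $\blacksquare(A_{i}\to\neg A_{j})$ and, for each $\chi\in\sub(\Flat_{\sigma_{c}}(\varphi))$ and $R\in\varrho$, the formula $\blacksquare[\Diamond_R(\chi\wedge\neg\bigvee_{c\in\sigma_c}N_c)\to\bigwedge_{i\le\mx}\Diamond_R(\chi\wedge\neg\bigvee_{c}N_c\wedge A_i)]$ (together with $\blacksquare[\Diamond_R(\chi\wedge N_c)\to\Diamond_R(\chi\wedge N_c\wedge A_0)]$) lie in $\Gamma$; from these one shows by induction on subformulas that $\gamma\models\varphi\leftrightarrow\Flat_{\sigma_{c}}(\varphi)$, since $\gamma$ forces enough non-constant successors of every relevant type while the constant-successors are handled directly by the nominal disjuncts of $\Flat_{\sigma_{c}}$. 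Hence $\varphi\models\gamma\to\Flat_{\sigma_{c}}(\varphi)$, so $\varphi'\models\gamma\to\Flat_{\sigma_{c}}(\varphi)$; since $\Bmf^{\mx+1}\models\varphi'(a,0)\wedge\gamma(a,0)$, we get $\Bmf^{\mx+1}\models\Flat_{\sigma_{c}}(\varphi)(a,0)$, hence $\Amf^{\mx+1}\models\Flat_{\sigma_{c}}(\varphi)(a,0)$ as the $A_{i}$ do not occur there, hence $\Amf^{\mx+1}\models\varphi(a,0)$ by the new Lemma — a contradiction.

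The main obstacle I anticipate is getting the inductive step of $\gamma\models\varphi\leftrightarrow\Flat_{\sigma_{c}}(\varphi)$ right in the presence of nominals: one must verify that the ``padding'' $\gamma$-conjuncts genuinely manufacture $\mx+1$ distinct successors realising each needed type among the \emph{non-nominal} nodes, so that $\Diamond_R\psi$ upgrades to $\Diamond_R^{\ge\mx+1}\psi$ for the non-constant part, while the constant part of $\Diamond^{\ge k}_R\varphi$ is matched exactly by the explicit $N_c$-disjuncts of $\Flat_{\sigma_{c}}$ — and that these two contributions do not double-count. This is the only place the definition of $\Flat_{\sigma_{c}}$ (as opposed to plain $\Flat$) is actually used, and it is where the bookkeeping is most delicate; everything else is a transcription of the nominal-free proof. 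Full details are deferred to the appendix.
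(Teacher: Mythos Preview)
Your proposal is correct and follows exactly the route the paper takes: the paper states Theorem~\ref{thm:uniformsep2} as ``a straightforward extension of Theorem~\ref{thm:uniformsep1}'' and supplies only the nominal-aware analogue of Lemma~\ref{lemma:equivalenceomega} (Lemma~\ref{lem:equivonomega2}), leaving the rest to the reader; you have filled in precisely those details, including the correct adaptation of the padding argument to non-nominal successors. Two small clean-ups: your description of $(d)\Rightarrow(c)\Rightarrow(b)$ is slightly garbled (condition $(c)$ is only $\varphi\models\Flat_{\sigma_c}(\varphi)$, not full equivalence), and your $\varrho$ should explicitly include $\sigma_c$ so that the conjuncts involving $\bigvee_{c\in\sigma_c}N_c$ and the subformulas of $\Flat_{\sigma_c}(\varphi)$ are actually $\LS(\varrho)$-formulas.
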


We illustrate Theorem~\ref{thm:uniformsep2} by an example.

\begin{example}\em 
	%Observe that also restricted to finite signatures, the size of the signature matters for uniform separator existence.
	Suppose $\sigma_{c}=\emptyset$ and consider the $\GMLn$-formula $\varphi=\Diamond_{R}A \wedge \Diamond_{S}A \wedge (\Diamond_{R}^{\geq 2}A \leftrightarrow \Diamond_{S}^{\geq 2}A)$. Then $\Flat(\varphi)\equiv \Diamond_{R}A \wedge \Diamond_{S}A$, $\varphi\models \Flat(\varphi)$, and so $\Flat(\varphi)$ is a uniform $\MLn(\sigma)$-separator for $\varphi$.
	However, if $\sigma_{c} = \{c_1,c_2\}$, then $\varphi \models \neg\Flat_{\sigma_{c}}(\varphi)$ as shown below,\\
\centerline{
\begin{tikzpicture}[>=latex,line width=0.5pt,xscale = 1,yscale = 1]
\node[point,scale = 0.5,label=above:{\footnotesize $\varphi$},label=below:{\footnotesize $\neg\Flat_{\sigma_{c}}(\varphi)$}] (0) at (0,0) {};
\node[point,scale = 0.5,label=right:{\footnotesize $A$}] (2) at (1.5,0.5) {};
\node[point,scale = 0.5,label=right:{\footnotesize $A$},label=above:{\footnotesize $N_{c_1},N_{c_2}$}] (1) at (1.5,-0.5) {};
\draw[->,below] (0) to node[] {\scriptsize $S$} (1);
\draw[->,above] (0) to node[] {\scriptsize $R$} (2);
\end{tikzpicture}
}
\\
%%	
%\centerline{\includegraphics[scale=0.5]{../mod-ex}}\\
	%
	and so there is no uniform $\MLn(\sigma)$-separator for $\varphi$. \hfill $\dashv$
\end{example}

Finally, we consider the case when the set $\sigma_{c}$ of constants in $\sigma$ is infinite. We write $\varphi\models_{\it fin}\psi$ if 
$\Amf\models \varphi(a)$ implies $\Amf\models \psi(a)$, for all \emph{finite} pointed $\sigma$-structures $\Amf,a$. Note that $\varphi\models_{\it fin}\psi$ iff $\varphi\models\psi$, for all counting-free logics considered in this paper and also for $\GML$,
$\GMLn$, $\GMLu$, and $\GMLnu$ because these logics enjoy the \emph{finite model property} (FMP). All other logics we deal with here do not have the FMP~\cite{handbookDL,Pratt23book}.

\begin{restatable}{theorem}{thmuniformsepthree}\label{thm:uniformsep3}
Let $\L$ and $\LS$ be any pair of modal logics with nominals from Table~\ref{table:results}. Let $\sigma$ be a signature with $\sigma_{c}$ infinite and $\sigma_{0}\subseteq \sigma_c$ finite, non-empty, and containing all constants in an $\L(\sigma)$-formula $\varphi$. Then the following conditions are equivalent\textup{:}
	\begin{enumerate}
		\item[$(a)$] $\varphi$ has a  uniform $\LS(\sigma)$-separator\textup{;}
		
		\item[$(b)$] $\Flat_{\sigma_{0}}(\varphi)$ is a uniform $\LS(\sigma)$-separator for $\varphi$\textup{;}
		
		\item[$(c)$] $\varphi\models\Flat_{\sigma_{0}}(\varphi)$ and $\Flat_{\sigma_{0}}(\varphi)\models_{\it fin} \varphi$.		
	\end{enumerate}
\end{restatable}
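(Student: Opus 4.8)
The plan is to prove the cycle $(b)\Rightarrow(a)\Rightarrow(c)\Rightarrow(b)$. The implication $(b)\Rightarrow(a)$ is immediate, so the work is in the other two. For $(c)\Rightarrow(b)$ I would first observe that, by $(c)$, $\Flat_{\sigma_{0}}(\varphi)$ is an $\LS(\sigma)$-consequence of $\varphi$, and then show it is the logically strongest one. Fix $\chi\in\LS(\sigma)$ with $\varphi\models\chi$ and a pointed $\sigma$-structure $\Amf,a$ with $\Amf\models\Flat_{\sigma_{0}}(\varphi)(a)$; it is enough to produce $\Bmf,b$ with $\Bmf\models\varphi(b)$ and $\Amf,a\equiv_{\LS(\sigma)}\Bmf,b$, for then $\Amf\models\chi(a)$. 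By a compactness argument (in the style of Lemma~\ref{criterion}) this reduces to showing that $\varphi\wedge\theta$ is satisfiable for every finite conjunction $\theta$ of $\LS(\sigma)$-formulas true at $\Amf,a$. Letting $\sigma'$ be the finite signature of $\theta$ enlarged by $\sigma_{0}\cup\sig(\varphi)$, the formula $\theta\wedge\Flat_{\sigma_{0}}(\varphi)\in\LS(\sigma')$ is satisfiable, hence by the finite model property of the counting-free logics it has a finite model; expanding that model arbitrarily to a $\sigma$-structure, it still satisfies $\Flat_{\sigma_{0}}(\varphi)$ and $\theta$, and by $(c)$ it then also satisfies $\varphi$. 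Compactness now gives the required $\Bmf,b$.

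For $(a)\Rightarrow(c)$, fix a uniform $\LS(\sigma)$-separator $\varphi'$, and recall that it defines the class of pointed $\sigma$-structures whose $\LS(\sigma)$-theory is consistent with $\varphi$; in particular $\varphi\models\varphi'$ and $\varphi'$ entails every $\LS(\sigma)$-consequence of $\varphi$. The first half of $(c)$, that $\varphi\models\Flat_{\sigma_{0}}(\varphi)$, I would obtain by transferring the argument of Theorem~\ref{thm:uniformsep1}: introduce the $\sigma_{0}$-aware expansion $\Amf\mapsto\Amf^{\omega,\sigma_{0}}$ leaving each $\sigma_{0}$-nominal with a single copy and taking $\omega$ copies of every other element, so that the analogue of Lemma~\ref{lemma:equivalenceomega} for $\Flat_{\sigma_{0}}$ gives $\Amf\models\Flat_{\sigma_{0}}(\varphi)(a)$ iff $\Amf^{\omega,\sigma_{0}}\models\varphi((a,0))$; it then suffices to see $\varphi$ is preserved under this expansion, and a putative counterexample is converted, by marking the finitely many relevant copies with fresh unary predicates, into a structure realising a finite conjunction $\gamma$ of bounded-depth $\LS$-formulas with $\gamma\models\varphi\leftrightarrow\Flat_{\sigma_{0}}(\varphi)$; feeding the $\varphi$-consequence $\gamma\to\Flat_{\sigma_{0}}(\varphi)$ (hence entailed by $\varphi'$) through the marked structure produces the same contradiction as there.

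The main obstacle is the second half of $(c)$: if $\varphi$ has a uniform separator then $\Flat_{\sigma_{0}}(\varphi)\models_{\it fin}\varphi$. I would argue by contraposition, starting from a finite $\Amf_{0},a_{0}$ with $\Amf_{0}\models\Flat_{\sigma_{0}}(\varphi)(a_{0})$ and $\Amf_{0}\not\models\varphi(a_{0})$, and showing that $\varphi'$ cannot be the strongest $\LS(\sigma)$-consequence of $\varphi$. As $\Amf_{0}$ is finite, $\Amf_{0}\not\models\varphi(a_{0})$ is localised to finitely many counting defects, and $\Amf_{0}\models\Flat_{\sigma_{0}}(\varphi)(a_{0})$ forces, at each such defect, a requirement $\Diamond^{\ge k}_{R}\chi$ to be met only through a witness that is not a $\sigma_{0}$-nominal; one can moreover arrange $\Amf_{0}$ so that this witness is not a nominal at all. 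Since $\sigma_{c}$ is infinite while $\sig(\varphi')$ is finite, choose a constant $d\in\sigma_{c}$ not occurring in $\varphi$ or $\varphi'$ and re-interpret $d$ at that witness, obtaining $\Amf_{0}^{d}$: it still satisfies the $d$-free formula $\Flat_{\sigma_{0}}(\varphi)$, and its $\LS(\sigma)$-theory is consistent with $\varphi$ (a model being obtained by re-interpreting $d$ in a model of $\varphi$ that is $\LS(\sigma\setminus\{d\})$-equivalent to $\Amf_{0}$, e.g.\ one extracted from $\Amf_{0}^{\omega,\sigma_{0}}$), so $\Amf_{0}^{d}\models\varphi'(a_{0})$; yet $\Amf_{0}^{d}$ refutes the $\LS(\sigma)$-consequence of $\varphi$ obtained from $\Flat_{\sigma_{0}}$ by additionally forbidding $d$ at that position, which $\varphi'$ must entail — a contradiction. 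I expect the technical core of the appendix to lie here, in (i) isolating a single offending counting defect from an arbitrary finite failure witness and (ii) performing the $d$-shift compatibly with the chosen $\LS(\sigma\setminus\{d\})$-equivalence between $\Amf_{0}$ and a model of $\varphi$, for which the nominal-aware analogues of Lemmas~\ref{lemma:bisim_omega} and~\ref{lemma:equivalenceomega} and the finite model property are the principal tools.
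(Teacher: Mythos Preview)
Your $(c)\Rightarrow(b)$ works but is more involved than needed: the paper simply assumes $\Flat_{\sigma_0}(\varphi)\wedge\neg\chi$ is satisfiable, applies the FMP of $\LS$ directly (both conjuncts are in $\LS$), and then $\Flat_{\sigma_0}(\varphi)\models_{\it fin}\varphi$ together with $\varphi\models\chi$ yields the contradiction. No compactness detour is required.

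The genuine gap is in the second half of your $(a)\Rightarrow(c)$. The claim that the $\LS(\sigma)$-theory of $\Amf_0^d$ is consistent with $\varphi$ is false in general. Take $\varphi=\Diamond_R^{\ge 2}A$, $\sigma_0=\{c\}$, and a two-element $\Amf_0$ with a single $R$-successor $b$ satisfying $A$ and $c^{\Amf_0}=a_0$: then $\Amf_0\models\Flat_{\sigma_0}(\varphi)(a_0)$ but $\Amf_0\not\models\varphi(a_0)$. Placing a fresh $d$ at $b$ makes $\Box_R N_d$ true at $a_0$ in $\Amf_0^d$, and $\Box_R N_d$ is flatly inconsistent with $\Diamond_R^{\ge 2}A$. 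Your witness $\Amf_0^{\omega,\sigma_0}$ does model $\varphi$, but once $d$ is pinned to one copy of $b$ it no longer satisfies $\Box_R N_d$, so it is \emph{not} $\LS(\sigma)$-equivalent to $\Amf_0^d$. Nominals give $\LS$ enough counting power that a single fresh one already breaks the equivalence you need; your ``counting defect'' isolation is also ill-defined for arbitrary Boolean structure in $\varphi$.

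The idea you are circling is the right one, but it must be pushed further: use the infinite supply of constants to name \emph{every} element of the finite counterexample, not just one witness. The paper writes an $\LS(\sigma)$-formula $\text{diag}_{\sigma_f}(\Amf_0,a_0)$ (using fresh nominals for each domain element) that pins down $\Amf_0,a_0$ up to $\sigma_f$-isomorphism. Since $\Amf_0\not\models\varphi(a_0)$ and $\sig(\varphi)\subseteq\sigma_f$, we get $\varphi\models\neg\text{diag}_{\sigma_f}(\Amf_0,a_0)$; the uniform separator then also entails $\neg\text{diag}_{\sigma_f}(\Amf_0,a_0)$, contradicting that $\Amf_0$ (expanded by the fresh nominals) satisfies both. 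The paper runs this as $(b)\Rightarrow(c)$ rather than $(a)\Rightarrow(c)$, having first established $(a)\Leftrightarrow(b)$ by restricting to finite constant-sets and invoking Theorem~\ref{thm:uniformsep2}; this decomposition makes the diagram step a one-liner.
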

\begin{proof}
	We sketch the proof of (b) $\Rightarrow$ (c) for $\GMLinu/\MLinu$; $(c) \Rightarrow (b)$ is straightforward and $(a) \Leftrightarrow (b)$ is shown using Theorem~\ref{thm:uniformsep2}. Observe that, for a finite pointed $\sigma_{f}$-structure $\Amf,a$ with finite $\sigma_{f}$, one can construct an $\MLinu(\sigma)$-formula $\text{diag}_{\sigma_{f}}(\Amf,a)$ with fresh constants for elements of $\dom(\Amf)$ such that $(i)$ if $\Bmf\models \text{diag}_{\sigma_{f}}(\Amf,a)(b)$, then $\Amf,a \cong_{\sigma_{f}}\Bmf,b$ and $(ii)$ by interpreting the fresh constants in $\text{diag}_{\sigma_{f}}(\Amf,a)$ as elements of $\dom(\Amf)$,
	$\Amf'\models \text{diag}_{\sigma_{f}}(\Amf,a)(a)$ for the resulting structure $\Amf'$.
Let $\Flat_{\sigma_{0}}(\varphi)$ be a uniform $\LS(\sigma)$-separator for $\varphi$, so 
	% Then clearly it is the uniform $\LS(\sigma')$-separator for $\varphi$ with $\sigma'$ obtained %from $\sigma$ by dropping all constants not in $\sigma_{0}$. By Theorem~\ref{thm:uniformsep2},
	$\varphi\models\Flat_{\sigma_{0}}(\varphi)$. To show 
	$\Flat_{\sigma_{0}}(\varphi)\models_{\it fin}\varphi$, assume this is not the case and  there is a finite pointed $\sig(\varphi)\cup\sigma_{0}$-structure $\Amf,a$ with $\Amf\models \Flat_{\sigma_{0}}(\varphi)(a)$ but
	$\Amf\not\models \varphi(a)$. 
	%We may assume that $\Amf$ is a $\sig(\varphi)\cup\sigma_{0}$-structure. 
	Then $\varphi\models \neg \text{diag}_{\sigma_{f}}(\Amf,a)$ for $\sigma_{f}= \sig(\varphi)\cup\sigma_{0}$.
	By the definition of uniform separators, $\Flat_{\sigma_{0}}(\varphi)\models\neg \text{diag}_{\sigma_{f}}(\Amf,a)$, and so $\Amf\not\models \Flat_{\sigma_{0}}(\varphi)(a)$, which is a contradiction.  
\end{proof}

It follows that, for the pairs $\L/\LS$ considered in Theorem~\ref{thm:uniformsep3}, uniform $\LS(\sigma)$-separator existence for $\sigma$ containing infinitely many constants reduces in polytime to finite and general $\L$-validity. For logics $\L$ with the FMP, it follows from Theorem~\ref{thm:uniformsep3} that uniform $\LS$-separator existence and $\LS$-definability coincide. For $\L$ without the FMP, this is not the case.

\begin{example}\em 
	%Consider $\GMLin/\MLin$. Let $\chi$ be a satisfiable $\GMLin$-formula that is not finitely satisfiable.\nz{ref?} For $\varphi= \chi \rightarrow A$, we have $\top\models_{\it fin}\varphi$, and so, by the finite model property of $\LS$,\nz{??}  $\top$ is a  uniform $\MLin(\sigma)$-separator for $\varphi$ if $\sigma_{c}$ is infinite\nz{??} and contains the constants in $\varphi$. No $\MLin(\sigma)$-definition of $\varphi$ exists.
	%
	Consider $\GMLin/\MLin$. Let $\varphi$ be a satisfiable $\GMLin$-formula that is not finitely satisfiable. We claim that $\top$ is a uniform $\MLin(\sigma)$-separator for $\neg \varphi$. Indeed, if $\neg\varphi \models \chi$, for some $\MLin$-formula $\chi$, then $\neg\chi \models \varphi$, and so $\neg\chi$ is not finitely satisfiable. By the FMP of $\MLin$, this can only happen if $\neg\chi$ is not satisfiable, that is, $\chi \equiv \top$. On the other hand, $\neg \varphi$ is not $\MLin(\sigma)$-definable since $\MLin$ enjoys the FMP and so $\varphi$ would be finitely satisfiable iff it is satisfiable. \hfill $\dashv$ \end{example}

We close with a brief discussion of uniform $\CT/\MLiu$-separation, which reduces to uniform $\FOTNE/\MLiu$-separation since we have already analysed uniform $\CT/\FOTNE$-separation. Surprisingly, we can again use the same function $m(\varphi)$ that characterises $\FOTNE/\MLiu$-definability. In this case, we show in the appendix that, for any $\FOTNE(\sigma)$-formula $\varphi$, the following conditions are equivalent\textup{:}
	\begin{enumerate}
		\item[$(a)$] $\varphi$ has a uniform suc$\MLiu(\sigma)$-separator\textup{;}
		
		\item[$(b)$] $m(\varphi)$ is a uniform suc$\MLiu(\sigma)$-separator for $\varphi$\textup{;}
		
		\item[$(c)$] $\varphi\models m(\varphi)$.
    \end{enumerate}
It follows that uniform $\FOTNE/\MLiu(\sigma)$-separation can be reduced in polytime to $\FOTNE$-validity. In contrast, for $\MLiu/\MLi$ and $\MLiu/\ML$, the definability results do not directly transfer to uniform separation; their analysis is left for future work.  

%***********

\section{Future Work}\label{future}

We have presented a comprehensive analysis of the complexity of separation and definability in fragments of $\CT$. While a rather elementary (but elegant) approach worked in many cases for definability and uniform separation, more labour was needed to understand separation. Nevertheless, the decidability of $\CT/\FOT$-definability and uniform separation remain open and will require a different technique compared to the decidability proof for $\CT/\FOTNE$. 

As far as separation is concerned, a main open problem is to find methods to construct separators and bound their size. Based on the mosaic technique introduced in this article, first results have been obtained~\cite{jung2025computation}. Using a mosaic elimination algorithm, it is shown that for $\GML/\ML$- and $\GMLu/\MLu$-separation, one can always construct separators of elementary size if separators exist at all. 

It would also be of interest to investigate the problems considered here over finite models. If $\L$ has the FMP, then there is no difference, so only $\CT$ and extensions of $\GMLiu$ are affected. We conjecture that it is straightforward to adapt our undecidability proofs by working with appropriate `finitary' $n$-bisimulations (as compactness cannot be used, Theorem~\ref{criterion} does not hold and has to be replaced by a finitary approximation). The definability proofs are easily adapted, and so are the proofs for uniform separation. Observe that now definability coincides with uniform separation for all pairs of logics $\L/\LS$ with nominals in Table~\ref{table:results} if signatures containing infinitely-many constants are considered. Significant modifications are needed, however, to deal with $\GMLiu/\MLiu$-separation, where the current proof relies on realising star-types infinitely often.  

%***********

\cleardoublepage
\appendix

\section*{Proofs for Section~\ref{sec:definitions}}
%
%\lemguardedbisim*
%%
%\begin{proof}
%\textcolor{red}{Do we need it?}
%\end{proof}

\lred*

\begin{proof}
If $L \in \{\FOT, \CT\}$ and $\varphi(x), \psi(x) \in \L$, we define $\L$-formulas $\varphi'(x)$ and $\psi'(x)$ by taking
\begin{align*}
& \varphi'(x) ~=~ \varphi(x) \land \hspace*{-2mm} \bigwedge_{A \in \sig(\psi) \setminus \sig(\varphi)} \hspace*{-6mm} \big( A(x) \to A(x) \big) \land{} \\ 
&\hspace*{2.2cm} \bigwedge_{R \in \sig(\psi) \setminus \sig(\varphi)} \hspace*{-6mm} \big( R(x,x) \to R(x,x) \big) \land \hspace*{-2mm} \bigwedge_{c \in \sig(\psi) \setminus \sig(\varphi)} \hspace*{-6mm} (c=c),\\
& \psi'(x) ~=~ \psi(x) \land \hspace*{-2mm} \bigwedge_{A \in \sig(\varphi) \setminus \sig(\psi)} \hspace*{-6mm} \big( A(x) \to A(x) \big) \land{} \\ 
& \hspace*{2.2cm}\bigwedge_{R \in \sig(\varphi) \setminus \sig(\psi)} \hspace*{-6mm} \big( R(x,x) \to R(x,x) \big) \land \hspace*{-2mm} \bigwedge_{c \in \sig(\varphi) \setminus \sig(\psi)} \hspace*{-6mm} (c=c).
\end{align*}
Then $\varphi(x)$ and $\psi(x)$ are $\L/\LS$-separable iff $\varphi'(x)$ and $\psi'(x)$ are Craig $\L/\LS$-separable, and so unary Craig separable. 
If $\L$ is a modal logic, we take $\Diamond_R \top \to \Diamond_R \top$ and $N_c \to N_c$ in the last two conjuncts of $\varphi'(x)$ and $\psi'(x)$.

For the second claim, let $\Amf$ be a $\sigma$-structure. For a non-empty set $D\subseteq \dom(\Amf)$, denote by $\Amf_{|D}$ the \emph{restriction} of $\Amf$ to $D$ defined by taking $\dom(\Amf_{|D})=D$ and by relativising all predicates in $\sigma$ to $D$, that is,  by setting $(a_{1},\dots,a_{n})\in R^{\Amf_{|D}}$ iff $a_{1},\dots,a_{n}\in D$ and $(a_{1},\dots,a_{n})\in R^{\Amf}$, for all $a_{1},\dots,a_{n}\in \dom(\Amf)$ and all $R\in \sigma$. If $\sigma$ contains constants $c$, we demand that $c^{\Amf} \in D$ and set $c^{\Amf_{|D}} = c^{\Amf}$.
We remind the reader that the \emph{relativisation} $\varphi_{|A}(x)$ of a formula $\varphi(x)$ to a unary predicate $A \notin \sig(\varphi)$  has the following property \cite{modeltheory}: for all structures $\Amf$ and all $a\in \dom(\Amf)$, 
\begin{itemize}
\item[--] $\Amf\models \varphi_{|A}(a)$ implies  $A^{\Amf}\ne \emptyset$;

\item[--] $\Amf\models \varphi_{|A}(a)$ iff $\Amf_{|A^{\Amf}}\models \varphi(a)$.
\end{itemize}
For a $\CT$-formula $\varphi(x)$, its $\CT$-relativisation $\varphi(x)_{|A}$ is constructed in polynomial time. 

We are now in a position to give the reduction
for $\CT/\FOT$. Assume $\varphi(x)$ and $\psi(x)$ in $\CT$ are given. Let $A_{1},\ldots, A_{n}$ be the unary predicate symbols that either occur in $\varphi$ and not in $\psi$ or vice versa. Let $\varrho=(\sig(\varphi) \cup \sig(\psi))\setminus \{A_{1},\dots,A_{n}\}$. Let
$U$ be a fresh unary predicate symbol and $R$ a fresh binary predicate symbol.
Let $\varphi_{|U}$ and $\psi_{|U}$ be the relativisations of $\varphi$ and $\psi$ to $U$.
Consider the formulas $\varphi'$ and $\psi'$ obtained by 
%adding the conjuncts 
%$$
%\blacksquare (U \rightarrow \Diamond_{R}^{\geq 1}\neg %U)
%$$
%$$
%\blacksquare (\neg U \rightarrow (\Box_{R}\neg U %\wedge \Diamond_{R}^{\geq 1}\top))
%$$
replacing in $\varphi_{|U}$ and $\psi_{|U}$ every occurrence of $A_{i}$ by a formula $\varphi_{A_{i}}$ stating that there is an $R$-chain of length $i$ with the final node having exactly two $R$-successors. In $\GML$ terms,
$$
\varphi_{A_{i}} = \underbrace{\Diamond_{R} \dots \Diamond_{R}}_{i} \Diamond_{R}^{=2}\top.
$$
Let $\tau=\sig(\varphi')\cup \sig(\psi')$. By Lemma~\ref{criterion}, it suffices to show that the following conditions are equivalent:
\begin{enumerate}
\item there exist $\Amf,a$ and $\Bmf,b$ such that \mbox{$\Amf\models \varphi(a)$}, $\Bmf\models \psi(b)$, and $\Amf,a\sim_{\FOT(\varrho)}\Bmf,b$;

\item there exist $\Amf',a'$ and $\Bmf',b'$ such that $\Amf'\models \varphi'(a')$, \mbox{$\Bmf'\models \psi'(b')$}, and $\Amf',a'\sim_{\FOT(\tau)}\Bmf',b'$. 
\end{enumerate}
For the implication $(1) \Rightarrow (2)$, suppose  
$\Amf\models \varphi(a)$, $\Bmf\models \psi(b)$ and $\Amf,a\sim_{\FOT(\varrho)}\Amf,b$.
Define $\Amf',a'$ and $\Bmf',b'$ as follows: the domain of
$\Amf'$ contains $\dom(\Amf)$ and, for every $d\in \dom(\Amf)$,  mutually disjoint sets $A_{d}$ we define later.
We set $a'=a$. Similarly, the domain of
$\Bmf'$ contains $\dom(\Bmf)$ and, for every $d\in \dom(\Bmf)$,  mutually disjoint sets $B_{d}$ to be defined later. Set $b'=b$.
The symbols in $\varrho$ are interpreted in $\Amf'$ and $\Bmf'$ exactly as in $\Amf$ and $\Bmf$, respectively. Set $U^{\Amf'}=\dom(\Amf)$ and $U^{\Bmf'}=\dom(\Bmf)$.
It remains to interpret $R$. We interpret $R$ in $\Amf'$ in such a way that, for $d\in \dom(\Amf)$, we have $\Amf\models A_{i}(d)$ iff $\Amf'\models \varphi_{A_{i}}(d)$, for $i=1,\dots,n$. Hence, we use the additional points in $A_{d}$ as the domain of an $\{R\}$-structure $\Amf_{d}$ consisting of $n$-many $R$-chains all starting at $d$, with the $i$th $R$-chain having length $i$ and such that its final node has \emph{two} $R$-successors if $d\in A_{i}^{\Amf}$ and \emph{three} $R$-successors if $d\not\in A_{i}^{\Amf}$. For $d\in \dom(\Amf)$, the $\Amf_{d}$ are hooked to $\Amf$ at $d$ to obtain $\Amf'$. Observe that we have  $\Amf_{d}\models \varphi_{A_{i}}(d)$ iff $\Amf\models A_{i}(d)$, for $i=1,\dots,n$, and that $\Amf_{d},d\sim_{\FOT(\{R\})}\Amf_{d'},d'$, for any $d,d'\in \dom(\Amf)$, since $\FOT$ cannot distinguish between nodes having two or three $R$-successors.
The interpretation of $R$ in $\Bmf'$ is analogous.

It follows that $\Amf'\models \varphi'(a')$ and $\Bmf'\models \psi'(b')$. Moreover, any $\bis$ witnessing $\Amf,a\sim_{\FOT(\varrho)}\Bmf,b$ can be extended to a $\bis'$ witnessing $\Amf',a'\sim_{\FOT(\tau)}\Bmf',b'$ 
in a straightforward manner.

For $(2) \Rightarrow (1)$, assume that 
$\Amf'\models \varphi'(a')$, $\Bmf'\models \psi'(b')$, and $\Amf',a'\sim_{\FOT(\tau)}\Bmf',b'$. 
Let $\Amf=\Amf'_{|U^{\Amf'}}$ be expanded by setting 
$$
A_{i}^{\Amf}=\{d\in U^{\Amf'}\mid \Amf'\models \varphi_{A_{i}}(d)\},
$$
for $i=1,\dots,n$, and let $\Bmf =\Bmf'_{|U^{\Bmf'}}$ be expanded by setting 
$$
A_{i}^{\Bmf}=\{d\in U^{\Bmf'}\mid \Bmf'\models \varphi_{A_{i}}(d)\},
$$
for $i=1,\dots,n$. Let $a=a'$ and $b=b'$. By construction, we have $\Amf\models \varphi(a)$ and $\Bmf\models \psi(b)$. Moreover, let $\bis$ be a bisimulation witnessing $\Amf',a'\sim_{\FOT(\tau)}\Bmf',b'$. Then the restriction of $\bis$ to $U^{\Amf'}\times U^{\Bmf'}$ witnesses $\Amf,a\sim_{\FOT(\varrho)}\Bmf,b$. 

The polynomial reduction given above works for all pairs $\CT/\LS$ with $\LS \in \{\MLiu,\MLi,\MLu,\ML\}$ since the formulas $\varphi_{A_{i}}$ are in $\CT$.

Next consider $\GMLinu/\MLinu$. As before, we can construct the relativisation $\varphi_{|A}$ of a $\GMLinu$-formula $\varphi$ in $\GMLinu$ to a propositional variable $A$. To ensure that all nominals are interpreted in the relativised structure, we add to $\varphi_{|A}$ the conjunct $\Du (N_{c} \wedge A)$ for $c\in \sig(\varphi) \cup \sig(\psi)$. With this amendments, the proof above works for $\GMLinu/\LS$ and all $\LS\in \{\MLiu,\MLin,\MLnu,\MLu,\MLi,\MLn,\ML\}$.

The proof for the relevant pairs $\GMLiu/\LS$, $\GMLnu/\LS$, $\GMLu/\LS$, $\GMLi/\LS$, and $\GML/\LS$ is a straightforward modification of the proof above.

For the relevant pairs of the form $\GMLin/\LS$ and $\GMLn/\LS$, some care is required since one cannot now ensure that nominals are interpreted in the relativised structure by adding $\Du(N_{c} \wedge U)$ as a conjunct to $\varphi_{|U}$.
%In contrast, we now do not demand that nominals denote in relativised structures but add, 
Instead, when constructing $\Amf$ and $\Bmf$ from $\Amf'_{|U^{\Amf'}}$ and $\Bmf'_{|U^{\Bmf'}}$ in the proof of $(2) \Rightarrow (1)$, we add new elements to the domains of $\Amf$ and $\Bmf$ to interpret the nominals not yet interpreted in $\Amf'_{|U^{\Amf'}}$ and $\Bmf'_{|U^{\Bmf'}}$.
\end{proof}

The following criterion of relative definability is proved similarly to Lemma~\ref{criterion}:

\begin{lemma}\label{p:pbdpsem}
For any relevant pair $\L/\LS$, any $\L(\sigma)$-formulas $\varphi(x)$, $\psi(x)$ and any $\varrho \subseteq \sigma$, the following conditions are equivalent\textup{:}
\begin{itemize}
\item[--]
there is no explicit $\LS(\varrho)$-definition of $\psi$ modulo $\varphi$\textup{;}

\item[--] there are pointed $\sigma$-structures $\Amf,a$ and $\Bmf,b$ such that 
$$
\Amf\models (\varphi \land \psi)(a), \quad \Bmf\models (\varphi \land \neg\psi)(b), \quad \Amf,a \sim_{\LS(\varrho)} \Bmf,b.
$$
\end{itemize}
\end{lemma}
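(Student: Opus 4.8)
The plan is to derive this criterion directly from Lemma~\ref{criterion}. The key is a purely Boolean observation. Since $\L$ is closed under the Boolean connectives, $\varphi_{1} := \varphi \land \psi$ and $\varphi_{2} := \varphi \land \neg\psi$ are $\L(\sigma)$-formulas, and for every $\LS(\varrho)$-formula $\delta$ one has
$$
\varphi \models \delta \leftrightarrow \psi \qquad \text{iff} \qquad \varphi_{1} \models \delta \ \text{ and } \ \delta \models \neg\varphi_{2}.
$$
Indeed, $\varphi \models \delta\leftrightarrow\psi$ says that in every structure satisfying $\varphi$ we have $\delta$ exactly when $\psi$; splitting into the cases $\psi$ and $\neg\psi$, this is equivalent to the conjunction of $\varphi\land\psi\models\delta$ and $\varphi\land\neg\psi\models\neg\delta$, and the latter is readily seen to be the same as $\delta \models \neg(\varphi\land\neg\psi) = \neg\varphi_{2}$. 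By the definition of separator in Sec.~\ref{sec:definitions}, this means precisely that $\delta$ is an explicit $\LS(\varrho)$-definition of $\psi$ modulo $\varphi$ if and only if $\delta$ is an $\LS(\varrho)$-separator for the $\L$-formulas $\varphi_{1}$ and $\varphi_{2}$.

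Given this, I would simply apply Lemma~\ref{criterion} to $\varphi_{1}$ and $\varphi_{2}$ (note that Lemma~\ref{criterion} is stated for arbitrary $\L$-formulas, with no mutual exclusivity assumed, so it applies verbatim): there is no $\LS(\varrho)$-separator for $\varphi_{1}$ and $\varphi_{2}$ iff there are pointed $\sigma$-structures $\Amf,a$ and $\Bmf,b$ with $\Amf\models\varphi_{1}(a)$, $\Bmf\models\varphi_{2}(b)$ and $\Amf,a \sim_{\LS(\varrho)} \Bmf,b$. Unfolding $\varphi_{1} = \varphi\land\psi$ and $\varphi_{2} = \varphi\land\neg\psi$ and combining with the equivalence above yields exactly the two equivalent conditions in the statement.

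If a self-contained argument is preferred (mirroring the proof method behind Lemma~\ref{criterion}), the non-trivial direction goes as follows. Assuming no explicit $\LS(\varrho)$-definition of $\psi$ modulo $\varphi$ exists, one first uses compactness to show that $\{\varphi,\neg\psi\}$ together with all $\LS(\varrho)$-consequences of $\varphi\land\psi$ is satisfiable (otherwise a finite conjunction of such consequences would be a definition), obtaining $\Bmf,b$; then, using that $\LS(\varrho)$ is closed under negation, one shows that the $\LS(\varrho)$-theory of $\Bmf,b$ is consistent with $\varphi\land\psi$, obtaining $\Amf,a$ with $\Amf,a \equiv_{\LS(\varrho)} \Bmf,b$; finally one passes to $\omega$-saturated elementary extensions of $\Amf$ and $\Bmf$ (which preserve the first-order formulas $\varphi,\psi$ and $\LS(\varrho)$-equivalence) and applies Lemma~\ref{lem:guardedbisim} to upgrade $\LS(\varrho)$-equivalence to $\LS(\varrho)$-bisimilarity. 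The converse direction uses Lemma~\ref{lem:guardedbisim} directly: a putative $\LS(\varrho)$-definition $\delta$ of $\psi$ modulo $\varphi$ would be an $\LS(\varrho)$-formula true at $a$ in $\Amf$ (since $\Amf\models(\varphi\land\psi)(a)$) and false at $b$ in $\Bmf$ (since $\Bmf\models(\varphi\land\neg\psi)(b)$), contradicting $\Amf,a\sim_{\LS(\varrho)}\Bmf,b$.

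There is essentially no obstacle here: the whole content is the displayed Boolean equivalence, after which Lemma~\ref{criterion} does all the work; in the self-contained variant the only genuine ingredient is the passage to $\omega$-saturated elementary extensions, which is exactly the tool already used for Lemma~\ref{criterion}.
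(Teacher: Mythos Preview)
Your proposal is correct and aligns with the paper's approach: the paper simply states that the lemma ``is proved similarly to Lemma~\ref{criterion}'', and your Boolean observation that an explicit $\LS(\varrho)$-definition of $\psi$ modulo $\varphi$ is exactly an $\LS(\varrho)$-separator for $\varphi\land\psi$ and $\varphi\land\neg\psi$ makes this precise by reducing the statement directly to Lemma~\ref{criterion} rather than reproving it.
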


\thmequivdef*
\begin{proof}
We only consider modal logics $\L$ and $\LS$; the first-order case is similar. Let $\varrho \subseteq \sigma$. We show that an  $\L$-formula $\psi$ has an explicit $\LS(\varrho)$-definition modulo $\varphi$ in $\L$ iff the formulas $\varphi \land \psi$ and $\varphi^\varrho \land\neg \psi^\varrho$ have a Craig $\L/\LS$-separator, where $\varphi^\varrho$, $\psi^\varrho$ are obtained by replacing each variable $A \notin \varrho$ with a fresh variable $A^\varrho$ (cf.~\cite{MGabbay2005-MGAIAD}). Indeed, suppose first that $\chi$ is an $\LS(\varrho)$-formula with $\varphi \models \psi\leftrightarrow\chi$. Let $\chi'$ be the result of replacing every variable in $\chi$ that does not occur in $\sig(\varphi \land \psi)$ by $\top$, so $\sig(\chi') \subseteq \sig(\varphi \land \psi) \cup \sig(\varphi^\varrho \land \psi^\varrho)$. We clearly still have $\varphi \models \psi\leftrightarrow\chi'$, and so $\varphi \land \psi \models \chi'$ and $\chi' \models \varphi \to \psi$. The latter implies $\chi' \models \varphi^\varrho \to \psi^\varrho$ and $\chi' \models \neg(\varphi^\varrho \land \neg \psi^\varrho)$. Therefore, $\chi'$ is a Craig $\L/\LS$-separator for $\varphi \land \psi$ and $\varphi^\varrho \land\neg \psi^\varrho$. 
Conversely, it is readily seen that any Craig $\LS$-separator of $\varphi \land \psi$ and $\varphi^\varrho \wedge\neg \psi^\varrho$ is an explicit $\LS(\varrho)$-definition of $\psi$ modulo $\varphi$.
	
For the other reduction, we observe first that the decision problem for $\L$ is polynomial time reducible to relative $\L/\LS$-definability: for $\psi = A \notin \sig(\varphi)$ and $\varrho = \emptyset$, we have $\models \neg \varphi$ iff there is an explicit $\LS(\varrho)$-definition of $\psi$ modulo $\varphi$.
We show now that $L$-formulas $\varphi$ and $\neg\psi$ have a Craig $\L/\LS$-separator iff $\models \varphi\to \psi$ and there is an $\LS(\sig(\varphi) \cap \sig(\psi))$-definition of $\psi$ modulo $\psi \to \varphi$. 

Indeed, if $\chi$ is a Craig $\L/\LS$-separator for $\varphi$ and $\neg\psi$, then we have $\sig(\chi) \subseteq \sig(\varphi) \cap \sig(\psi)$, $\varphi \models \chi$ and $
\chi \models \psi$. It follows that $\models \varphi \to \psi$ and $\psi \to \varphi \models \psi \leftrightarrow \chi$, which means that $\chi$ is an $\LS(\sig(\varphi) \cap \sig(\psi))$-definition of $\psi$ modulo $\psi \to \varphi$. Conversely, suppose $\models \varphi \to \psi$ but $\varphi$ and $\neg \psi$ do not have a Craig $\L/\LS$-separator. By Lemma~\ref{criterion}, there are structures $\Amf,a$ and $\Bmf,b$ such that $\Amf\models\varphi(a)$, $\Bmf\models\neg\psi(b)$ and $\Amf,a \sim_{\LS(\varrho)} \Bmf,b$. But then $\Amf\models ((\psi\to\varphi)\land\psi)(a)$ and $\Bmf \models ((\psi\to\varphi)\land\neg\psi)(b)$, and so, by Lemma~\ref{p:pbdpsem}, there is no explicit $\LS(\varrho)$-definition of $\psi$ modulo $\psi \to \varphi$.
\end{proof}

%*************

\section*{Proofs for Section~\ref{sec:undec}}

\undecidability*

\subsection{$\GMLinu/\MLinu$-separation}\label{sec:undec-modal}

We reduce the non-halting problem for 2RMs to the non-existence of a Craig $\GMLinu/\MLinu$-separator. %To this purpose, let a two-register machine $M=(Q,P)$ be given. 
%To this end we reduce the halting problem for two-register machines. 
Recall that a 2RM is a pair
$M=(Q,P)$, where $Q = q_0,\dots,q_{m}$ is a set of \emph{states}
and $P = I_0,\dots,I_{m-1}$ a sequence of \emph{instructions}.
By definition, $q_0$ is the \emph{initial state}, and $q_m$ the
\emph{halting state}.  For all $i < m$,
\begin{itemize}
\item[--] either $I_i=+(\ell,q_j)$ is an \emph{incrementation instruction} with a register $\ell \in \{0,1\}$ and the subsequent state $q_j$;

\item[--] or $I_i=-(\ell,q_j,q_k)$ is a \emph{decrementation instruction} with a register $\ell \in \{0,1\}$, the subsequent state  $q_j$ if register~$\ell$ contains~0, and $q_k$ otherwise.
\end{itemize}
A \emph{configuration} of $M$ is a triple $(q,n_0,n_1)$, where $q$ is the
current state and $n_0,n_1 \in \omega$ the register contents.  We
write $(q_i,n_0,n_1) \Rightarrow_M (q_j,n_0',n_1')$ if one of the
following holds:
\begin{itemize}
\item[--] $I_i = +(\ell,q_j)$, $n_\ell' = n_\ell +1$, and $n_{1-\ell}'= n_{1-\ell}$;

\item[--] $I_i = -(\ell,q_j,q_k)$, $n_\ell'=n_\ell=0$, and $n_{1-\ell}'=n_{1-\ell}$;

\item[--] $I_i = -(\ell,q_k,q_j)$, $n_\ell > 0$, $n'_\ell = n_\ell -1$, and $n_{1-\ell}' = n_{1-\ell}$.
\end{itemize}
The \emph{computation} of $M$ on input $(n_0,n_1) \in \omega^2$ is
the unique longest configuration sequence 
$$
(p^0,k_0^0,k_1^0)
\Rightarrow_M (p^1,k_0^1,k_1^1) \Rightarrow_M \cdots
$$ 
such that $p^0 =
q_0$, $k_0^0 = n_0$, and $k_1^0 = n_1$.
%\end{definition}
%
The \emph{halting problem} for 2RMs is to decide, given a 2RM $M$, whether
its computation on input $(0,0)$ is finite (which implies that its
last state is $q_m$).

We assume without loss of generality (by adding a `dummy' initial state if necessary) that the initial state $q_0$ only occurs initially and that $I_0$ is a decrementation instruction.

Since we are reducing the non-halting problem for 2RMs to the non-existence of a Craig $\GMLinu/\MLinu$-separator, any machine $M$ will be associated with two $\GMLinu$-formulas $\varphi_M$ and $\psi_M$. Models of $\varphi_M$ represent the progression of the states of $M$ over time, while models of $\psi_M$ represent the content of the registers over time. The two are bisimilar (and hence there is no separator) if the state progression and the register contents match, and hence the models of $\varphi_M$ and $\psi_M$ together represent the execution of $M$.

The shared signature, $\varrho$, of $\varphi_M$ and $\psi_M$ consists of propositional variables $q_i\in Q$, $e_0$ and $e_1$, along with constant $c$ and binary relation $R$. Furthermore, $\psi_M$ uses non-shared propositional variables $r_0$ and $u$. For ease of notation, we define $r_1 \equiv \neg r_0$.

%To this purpose, $\varphi_M$ and $\psi_M$ use a shared vocabulary $\varrho$ consisting of unary predicates $Q_i$ for every state $q_i$ of $M$, as well as unary predicates $E_0$ and $E_1$, a constant symbol $c$ and a binary relation $R$. Furthermore, $\psi_M$ uses the non-shared unary predicates $U$ and $\Reg_0$. Finally, we use $\Reg_1$ as an abbreviation for $\neg \Reg_0$.

The intended meaning of $q_i$ is `the machine is currently in state $q_i$'\!, while $e_0$ and $e_1$ are used to represent `register 0 is empty' and `register 1 is empty'\!, respectively. The relation $R$ is used to represent the passage of time (i.e., if $(x,y)\in R$, then $y$ is one time step later than $x$). The constant $c$ identifies the starting point of the computation. The variable $r_0$ is used to indicate that a particular element of the domain counts as a part of register 0, while $r_1$ indicates elements that count as part of register 1. Finally, $u$ stands for `unique'\!, and will be true for exactly one element of each register at each point in time.

%Predicate $F$ and relation $S$ don't represent any particular aspect of $M$, and are instead used for technical reasons in the encoding.

The two formulas $\varphi_M$ and $\psi_M$ are both conjunctions, with the conjuncts best presented in a few groups. For $\varphi_M$, the first group of conjuncts is as follows:
\begin{enumerate}
\item \label{form:c}$N_c \wedge q_0$,

\item \label{form:one_of} $\blacksquare \bigvee_{1\leq i \leq m}q_i$,
	%	\item \label{form:F_excl} $\blacksquare \bigwedge_{1\leq i \leq \ell}\neg Q_i$

	\item \label{form:Q_excl} $\blacksquare\bigwedge_{1\leq i \leq m}(q_i\rightarrow \bigwedge_{i<j\leq m}\neg q_j)$.
\end{enumerate}	
Subformulas \eqref{form:c}--\eqref{form:Q_excl} express that every element of the domain satisfies exactly one of $q_0, \dots, q_{m}$, and the constant $c$ is the initial point of evaluation and satisfies $q_0$.

%These conjuncts describe the various concept names that we need: every element satisfies exactly one of $F$, $Q_0$, $\cdots$, $Q_{\ell}$, and the unique node satisfying $\{a\}$ satisfies $Q_0$. We will use models of $C_M$ to represent runs of the machine $M$; elements satisfying some $Q_i$ will represent snapshots of the computations where the machine is in state $q_i$. The elements satisfying $F$ are auxiliary, their purpose will become more clear once we discuss $D_M$.

The next group of conjuncts of $\varphi_M$ is as follows:
\begin{enumerate}[resume]
	\item \label{form:cont}$\blacksquare(\Diamond\!^{= 1}_{R}\top)$,
	
	\item \label{form:start} $\square_{R^-}\bot$,
	
	\item \label{form:back}$\blacksquare(\neg N_c\rightarrow \Diamond\!^{= 1}_{R^{-1}}\top)$.
	%	\item \label{form:F-prop}$\blacksquare(\square_{R}\neg F \wedge \square_{R^{-1}}\neg F)$
	%	\item \label{form:F_exist} $\blacksquare (\neg F \rightarrow \Diamond\!_{S}F)$
\end{enumerate}
These conjuncts describe the relation $R$: every element has exactly one $R$-successor, the initial element has no $R$-predecessors, and every other element has exactly one $R$-predecessor. %Furthermore, if an element is non-auxiliary then so are its $R$-successor and $R$-predecessor. Finally, every non-auxiliary element has an $S$-successor that is auxiliary.

It follows from these conditions that if $\Amf\models \varphi_M(x_0)$, then $\Amf$ contains an infinite $R$-chain starting at $x_{0}$ and each element of the chain satisfies exactly one $q_i$. We will interpret these elements as representing a run of $M$.

The next group of conjuncts depends on the instructions of the machine $M$. If $I_i=+(\ell,q_j)$, we add the conjunct
\begin{enumerate}[resume]
	\item\label{form:incr} $\blacksquare(q_i\rightarrow \square_R q_j)$,
\end{enumerate}
and if $I_i = -(\ell,q_j,q_k)$ we add the conjuncts
\begin{enumerate}[resume]
	\item\label{form:decr1} $\blacksquare((q_i\wedge e_\ell)\rightarrow \square_Rq_j)$,
	
	\item\label{form:decr2} $\blacksquare((q_i\wedge \neg e_\ell)\rightarrow \square_R q_k)$.
\end{enumerate}
These conjuncts express that the instructions of $M$ are followed, where $e_\ell$ (with $\ell\in \{0,1\}$) indicates that register $\ell$ is empty. Note that we do not, at this stage, keep track of the value in each register. But \emph{if} $e_\ell$ holds iff register $\ell$ is empty, then these conjuncts ensure that every pointed model $\Amf,x_{0}$ of $\varphi_{M}$ encodes the run of $M$.

Finally, we add one more, very simple, conjunct
\begin{enumerate}[resume]
	\item\label{form:nonhalt} $\blacksquare\neg q_\ell$,
\end{enumerate}
which expresses that the halting state $q_\ell$ is not reached. So, if a model of $\varphi_M$ encodes a run of $M$, it must be a non-halting run.

The formula $\psi_M$ can similarly be divided into groups of conjuncts. 
%Before discussing these conjuncts, it is convenient to first define a few abbreviations:
%\begin{itemize}
%	\item $U := \Diamond\!_S^{=1}.F \vee \Diamond\!_S^{=3}F$
%	\item $\Reg_0 := \Diamond\!_S^{=1}.F \vee \Diamond\!_S^{=2}F$
%	\item $\Reg_1 := \neg \Reg_0$.
%\end{itemize}
%Note that, because $U$, $\Reg_0$ and $\Reg_1$ are mere abbreviations, they are not considered to be part of the signature. Note also that $U\wedge \Reg_0$ and $U\wedge \Reg_1$ are both consistent, while $\Reg_0 \wedge \Reg_1$ obviously is not. We will use $\Reg_0$ and $\Reg_1$ to denote elements that represent registers 0 and 1, respectively. With regard to $U$, we will make sure that for each register and time point there is a unique element satisfying $U$.
The first group of conjuncts of $\psi_M$ is the same as the the first group of $\varphi_M$:
\begin{enumerate}
	\item \label{form2:c}$N_c\wedge q_0$,
	\item \label{form2:one_of} $\blacksquare\bigvee_{1\leq i \leq m}q_i$,
	%	\item \label{form2:F_excl}$\blacksquare\bigwedge_{1\leq i \leq \ell}\neg Q_i$
	\item \label{form2:Q_excl}$\blacksquare\bigwedge_{1\leq i \leq m}(q_i\rightarrow \bigwedge_{i<j\leq m}\neg q_j)$.
\end{enumerate}	
So, as with $\varphi_M$, in every model of $\psi_M$, all elements satisfy exactly one of $q_0, \dots, q_m$. The remaining conjuncts are quite different, however:
\begin{enumerate}[resume]
	\item \label{form2:init1}$\Diamond\!_R^{=2}\top$,
	
	\item \label{form2:init2}$\Diamond\!_R(u \wedge r_0) \wedge \Diamond\!_R(u\wedge r_1)$,
	
	\item \label{form2:init3}$e_0\wedge e_1$,
	
	\item \label{form2:prop1}$\blacksquare(r_\ell \rightarrow (\square_Rr_\ell \wedge \square_{R^{-1}}(e_\ell\vee N_c)))$, for $\ell\in \{0,1\}$,
	
	\item \label{form2:prop2}$\blacksquare((u\wedge \neg N_c)\rightarrow (\Diamond\!_R^{=1}u \wedge\Diamond\!_{R^{-1}}^{=1}u))$.
\end{enumerate}
Note that conjuncts~\eqref{form2:init1}--\eqref{form2:init3} do not start with $\blacksquare$, so they apply only to the initial point of evaluation which, due to conjunct~\eqref{form2:c}, is the constant $c$. The conjuncts then state that this element has exactly two $R$-successors, with one satisfying $u\wedge r_0$ and the other satisfying $u\wedge r_1$. Since we use $r_0$ and $r_1$ to denote the two registers, this means that a model of $\psi_M$ splits into two branches representing the two registers. Furthermore, conjunct~\eqref{form2:init3} says that $e_0$ and $e_1$ both hold at $c$, i.e., both registers are initially empty.

%Conjunct~\ref{form2:a} implies that in any pointed model $\Bmf,y_{0}$ of $D_{M}$ we have that $y_{0}$ satisfies $\{a\}$, so conjuncts~\ref{form2:init1} and \ref{form2:init2} imply that $y_{0}$ has exactly two $r$-successors, with one satisfying $U \sqcap R_0$ and the other $U\sqcap R_1$. We will use these branches to represent registers 0 and 1, respectively. Conjunct~\ref{form2:init3}, meanwhile, says that $y_{0}$ satisfies both $E_0$ and $E_1$.
Conjuncts~\eqref{form2:prop1} and \eqref{form2:prop2} handle propagation: \eqref{form2:prop1} says that $r_\ell$ propagates forward and backward along $R$ (except for $c$), while \eqref{form2:prop2} says that $u$ propagates to exactly one $R$-successor and $R$-predecessor (except for $c$). So the branches representing the registers do not interact, and there is a unique $R$-path in each register of elements that satisfy $u$.

The next group of conjuncts depends on the instructions of $M$. For $I_i = +(\ell,q_j)$, we have conjuncts
\begin{enumerate}[resume]
	\item \label{form2:incr} $\blacksquare((q_i\wedge r_\ell)\rightarrow (\Diamond\!_R^{=2}\top \wedge \square_R\Diamond\!_{R^{-1}}^{=1}\top))$, 
	
	\item \label{form2:incr2} $\blacksquare((q_i\wedge \neg r_\ell)\rightarrow (\Diamond\!_R^{=1}\top \wedge \square_R\Diamond\!_{R^{-1}}^{=1}\top))$.
\end{enumerate}
For $i\not = 0$ and $I_i = -(\ell,q_j,q_k)$, we have
\begin{enumerate}[resume]
	\item \label{form2:decr1} $\blacksquare ((q_i \wedge r_\ell \wedge e_\ell)\rightarrow (\Diamond\!_R^{=1}\top \wedge \square_R\Diamond\!_{R^{-1}}^{=1}.\top))$,
	
	\item \label{form2:decr2} $\blacksquare ((q_i \wedge r_\ell \wedge \neg e_\ell)\rightarrow (\Diamond\!_R^{=1}\top \wedge \square_R\Diamond\!_{R^{-1}}^{=2}.\top))$,
	
	\item \label{form2:decr3} $\blacksquare ((q_i \wedge \neg r_\ell)\rightarrow (\Diamond\!_R^{=1}\top \wedge \square_R\Diamond\!_{R^{-1}}^{=1}.\top))$.
\end{enumerate}
Conjunct~\eqref{form2:incr} says that if $q_i$ is an incrementation instruction, then an element satisfying $q_i\wedge r_\ell$ will have two $R$-successors, each of which having one $R$-predecessor. As we use $R$ to represent time in the computation, this means that the number of $r_\ell$-elements will double in a state with an incrementation instruction. With respect to conjuncts~\eqref{form2:decr1} and \eqref{form2:decr2}, recall that we use $e_\ell$ as a marker for register $\ell$ being empty. Thus, these conjuncts say that if a decrementation instruction is given with an empty register (conjunct~\eqref{form2:decr1}), then for every $r_\ell$-element, there is one $R$-successor that has one $R$-predecessor, so the number of $r_\ell$-elements remains the same, and if a decrementation instruction is given with a non-empty register (conjunct~\eqref{form2:decr2}), then for every $r_\ell$-element, there is a single $R$-successor that has two $R$-predecessors, so the number of $r_\ell$-elements is halved.

Since we double for incrementation and halve for decrementation, a register value of $n$ in register $\ell$ will be represented by $2^n$ elements satisfying $r_\ell$.

Conjuncts~\eqref{form2:incr2} and \eqref{form2:decr3} state that elements in the other register (i.e., those satisfying $\neg r_\ell$) have exactly one successor that has exactly one predecessor, so the number of such elements remains the same.

Note that we treat $q_0$ separately. This is because we first need to create the two registers before we can increase or decrease the number of elements in each register. This is done by conjuncts~\eqref{form2:init1} and \eqref{form2:init2}. After one step of the computation, our two registers are initialised with one element each, which we interpret as the value $0$ for that register. This is why we assume that $q_0$ is decrementing (so the registers are empty after one step of the computation) and only occurs initially (so we do not perform the initialisation again at later times).

Finally, we add one more conjunct
\begin{enumerate}[resume]
	\item \label{form2:empty} $\blacksquare ((e_\ell\sqcap r_\ell)\rightarrow u)$,
\end{enumerate}
which states that the combination of $e_\ell$ and $r_\ell$ is possible only when $u$ also holds.

%The formulas $\varphi_M$ and $\psi_M$ have the same signature. Hence any separator is a Craig separator. 
We show that an $\MLinu$-separator for $\varphi_M$ and $\psi_M$ exists iff $M$ is non-halting, from which it follows that separation is undecidable. More precisely, the reduction shows that non-separation is co-RE hard, and therefore the separation problem is RE hard.  

Before proving our reduction, let us first introduce the \emph{intended pointed models} for $\varphi_M$ and $\psi_M$, where $M$ is non-halting. A pointed model $\Amf,x_{0}$ of $\varphi_M$ will be used to represent a clock, and the state the machine is in at each time step. Recall that $\varphi_M$ guarantees that $\Amf$ follows the instructions of $M$ as long as $e_\ell$ holds when register $\ell$ is empty. In our intended models, we will make sure that (i) $e_\ell$ only holds if register $\ell$ is empty and (ii) if register $\ell$ is empty and the current instruction is to decrement register $\ell$, then $e_\ell$ holds (to indicate that decrementing is impossible). If register $\ell$ is empty and the current instruction is anything other than to decrement register $\ell$, we do not care whether $e_\ell$ holds.

A pointed model $\Bmf,y_{0}$ of $\psi_M$ is used to represent the content of the registers. Every point $t$ of the computation is represented by some element $x_t$ of $\Amf$, and this $x_t$ is $\MLinu$-bisimilar to a set of elements in $\Bmf$. These elements of $\Bmf$ can be divided into those that satisfy $r_0$ and those that satisfy $r_1$. If $x_t$ is $\MLinu$-bisimilar to $k$ elements that satisfy $r_0$ and $k'$ elements that satisfy $r_1$, then we say that registers 0 and 1 contain $\log_2k$ and $\log_2 k'$, respectively, at time $t$.

An example of (part of) these intended models is shown in the following image, reproduced from Section~\ref{sec:undec}.\\
\centerline{
\begin{tikzpicture}[>=latex,line width=0.4pt,xscale = 1,yscale = 1]
%\node at (-1,0) {$\mathfrak M$};
\node[point,scale = 0.4,label=below:{\footnotesize $\Amf, a$},label=left:{\footnotesize $q_0$},label=right:{\footnotesize $N_c,e_0,e_1$}] (a0) at (0,0) {};
\node[point,scale = 0.4,label=left:{\footnotesize $q_1$}] (a1) at (0,1) {};
\node[point,scale = 0.4,label=left:{\footnotesize $q_4$}] (a2) at (0,2) {};
\node[point,scale = 0.4,label=left:{\footnotesize $q_4$},label=right:{\footnotesize $e_0$}] (a3) at (0,3) {};
\node[point,scale = 0.4,label=left:{\footnotesize $q_2$}] (a4) at (0,4) {};
\node[] at (0,4.5) {$\vdots$};
\draw[->,left] (a0) to node[] {\scriptsize $R$} (a1);
\draw[->,left] (a1) to node[] {\scriptsize $R$} (a2);
\draw[->,left] (a2) to node[] {\scriptsize $R$} (a3);
\draw[->,left] (a3) to node[] {\scriptsize $R$} (a4);
%
%\node[point,scale = 0.6,label=left:{\footnotesize $y'$},label=right:{\footnotesize $q$}] (p1) at (2.5,.4) {};
%\node[point,fill=black,scale = 0.6,label=left:{\footnotesize $y$},label=right:{\footnotesize $p$}] (p0) at (2.5,-.4) {};
%\draw[->] (p0) to (p1);
%
\node[point,scale = 0.4,label=below:{\footnotesize $\Bmf,b$},label=left:{\footnotesize $q_0,r_1$},label=right:{\footnotesize $N_c,e_0,e_1$}] (b0) at (4,0) {};
\node[point,scale = 0.4,label=left:{\footnotesize $q_1,r_0$},label=right:{\footnotesize $u$}] (l1) at (3,1) {};
\node[point,scale = 0.4,label=right:{\footnotesize $q_1,r_1$},label=left:{\footnotesize $u$}] (r1) at (5,1) {};
\node[point,scale = 0.4,label=left:{\footnotesize $q_4,r_0$},label=right:{\footnotesize $u$}] (l21) at (2,2) {};
\node[point,scale = 0.4,label=left:{\footnotesize $q_4,r_0$}] (l22) at (4,2) {};
\node[point,scale = 0.4,label=right:{\footnotesize $q_4,r_1$},label=left:{\footnotesize $u$}] (r2) at (5,2) {};
\node[point,scale = 0.4,label=right:{\footnotesize $q_4,r_1$},label=left:{\footnotesize $e_0,u$}] (r3) at (5,3) {};
\node[point,scale = 0.4,label=right:{\footnotesize $q_2,r_1$},label=left:{\footnotesize $u$}] (r4) at (5,4) {};
\node[point,scale = 0.4,label=left:{\footnotesize $q_4,r_0$},label=right:{\footnotesize $e_0,u$}] (l3) at (3,3) {};
\node[point,scale = 0.4,label=left:{\footnotesize $q_2,r_0$},label=right:{\footnotesize $u$}] (l4) at (3,4) {};
\draw[->,left] (b0) to node[] {\scriptsize $R$} (l1);
\draw[->,left] (b0) to node[] {\scriptsize $R$} (r1);
\draw[->,left] (l1) to node[] {\scriptsize $R$} (l21);
\draw[->,left] (l1) to node[] {\scriptsize $R$} (l22);
\draw[->,left] (r1) to node[] {\scriptsize $R$} (r2);
\draw[->,left] (r2) to node[] {\scriptsize $R$} (r3);
\draw[->,left] (r3) to node[] {\scriptsize $R$} (r4);
\draw[->,left] (l21) to node[] {\scriptsize $R$} (l3);
\draw[->,left] (l22) to node[] {\scriptsize $R$} (l3);
\draw[->,left] (l3) to node[] {\scriptsize $R$} (l4);
\node[] at (3,4.5) {$\vdots$};
\node[] at (5,4.5) {$\vdots$};
\end{tikzpicture}
}\\
In this example, we have $I_0 = -(0,q_1,q_1)$, $I_1 = +(0,q_4)$ and $I_4=-(0,q_2,q_4)$. So the machine starts in state $q_0$ by decrementing the (already empty) register $0$, and moves to state $q_1$. In this state, it increments register 0, and continues to state $q_4$. In $q_4$, it first decrements register 0 while staying in $q_4$. Then it tries to decrement register 0 again, but now that register is empty so the next state is $q_2$ instead of $q_4$.

It is straightforward (if somewhat labour intensive) to verify that the intended pointed models satisfy $\varphi_M$ and $\psi_M$, and that they are $\MLinu(\varrho)$-bisimilar. 

\begin{proposition}
	If a machine $M$ is non-halting, then there are pointed models $\Amf,x_{0}$ of $\varphi_M$ and $\Bmf,y_{0}$ of $\psi_M$ such that $\Amf,x_{0}\sim_{\MLinu(\varrho)}\Bmf,y_{0}$. 
\end{proposition}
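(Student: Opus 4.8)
The plan is to read $\Amf$ and $\Bmf$ directly off the (infinite) computation of $M$. Since $M$ is non-halting, the computation of $M$ on $(0,0)$ is an infinite sequence of configurations $(p^{0},n_{0}^{0},n_{1}^{0})\Rightarrow_{M}(p^{1},n_{0}^{1},n_{1}^{1})\Rightarrow_{M}\cdots$ with $p^{0}=q_{0}$, $n_{0}^{0}=n_{1}^{0}=0$, and $p^{k}\ne q_{m}$ for every $k$. First I would let $\Amf$ be the infinite $R$-chain with domain $\{x_{k}\mid k<\omega\}$ and $R^{\Amf}=\{(x_{k},x_{k+1})\mid k<\omega\}$, with $c^{\Amf}=x_{0}$, with $x_{k}\in q_{i}^{\Amf}$ iff $p^{k}=q_{i}$, and --- this is the crucial choice --- with $x_{k}\in e_{\ell}^{\Amf}$ iff $n_{\ell}^{k}=0$. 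Then $\Amf\models\varphi_{M}(x_{0})$: conjuncts \eqref{form:c}--\eqref{form:Q_excl} hold since each $x_{k}$ satisfies exactly one $q_{i}$ and $x_{0}$ is the constant satisfying $q_{0}$; \eqref{form:cont}--\eqref{form:back} hold because $R^{\Amf}$ is the successor function on an $\omega$-chain whose only predecessor-free element is $x_{0}$; \eqref{form:incr}--\eqref{form:decr2} follow from the definition of $\Rightarrow_{M}$, the case split on $e_{\ell}$ in \eqref{form:decr1}--\eqref{form:decr2} matching the case split ``$n_{\ell}=0$ vs.\ $n_{\ell}>0$'' in $\Rightarrow_{M}$ precisely because $e_{\ell}$ marks exactly the steps with $n_{\ell}=0$; and \eqref{form:nonhalt} holds since $p^{k}\ne q_{m}$ for all $k$.

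Next I would build $\Bmf$ by levels, level $k$ consisting of the elements at $R$-distance $k$ from the root $y_{0}=c^{\Bmf}$. Level $0$ is $\{y_{0}\}$; level $1$ consists of one $r_{0}$-node and one $r_{1}$-node, both on a distinguished ``$u$-path'' through each register branch. From level $k$ to level $k+1$ the edges of $R^{\Bmf}$ are laid out so that: if $I_{p^{k}}$ increments register $\ell$, each level-$k$ $r_{\ell}$-node gets two $R$-successors and each $r_{1-\ell}$-node exactly one, every level-$(k+1)$ node having exactly one $R$-predecessor; if $I_{p^{k}}$ decrements a non-empty register $\ell$ (i.e.\ $n_{\ell}^{k}>0$), the level-$k$ $r_{\ell}$-nodes are paired and each pair maps onto one level-$(k+1)$ $r_{\ell}$-node (so that node has two $R$-predecessors), while each $r_{1-\ell}$-node gets a single successor with a single predecessor; and if $I_{p^{k}}$ decrements an empty register (or is the initial instruction $I_{0}$), every level-$k$ node gets a single successor with a single predecessor. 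Each level-$k$ node inherits $q_{p^{k}}$ and no other $q_{i}$, carries $e_{\ell}$ iff $n_{\ell}^{k}=0$, and is labelled $r_{\ell}$ according to its branch; $u$ is placed along one $R$-path through the $r_{0}$-nodes and one through the $r_{1}$-nodes (with $y_{0}\in u^{\Bmf}$), and whenever $n_{\ell}^{k}=0$ there is only one level-$k$ $r_{\ell}$-node, which is then forced onto that path, so conjunct \eqref{form2:empty} holds. A routine induction on $k$ shows level $k$ has exactly $2^{n_{\ell}^{k}}$ nodes satisfying $r_{\ell}$; conjuncts \eqref{form2:c}--\eqref{form2:prop2} hold by the level-$0$/level-$1$ set-up together with the forward propagation of $r_{\ell}$ within branches and the unique propagation of $u$, and the instruction conjuncts \eqref{form2:incr}--\eqref{form2:decr3} are precisely the local in/out-degree constraints imposed above; hence $\Bmf\models\psi_{M}(y_{0})$.

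Finally, I would set $\bis=\{(x_{k},z)\mid k<\omega,\ z\text{ a level-}k\text{ node of }\Bmf\}$. Atomic harmony on $\varrho$ holds because $x_{k}$ and every level-$k$ node satisfy precisely $q_{p^{k}}$ among the $q_{i}$ and agree on $e_{0},e_{1}$ (both governed by whether $n_{\ell}^{k}=0$); the nominal clause holds since $(c^{\Amf},c^{\Bmf})=(x_{0},y_{0})\in\bis$ and $y_{0}$ is the unique level-$0$ node; and the forth/back clauses for $R$ and $R^{-}$ hold because, for each $k$, every level-$k$ node has at least one $R$-successor (all of them at level $k+1$, hence $\bis$-related to the unique $R$-successor $x_{k+1}$) and, for $k\ge 1$, at least one $R$-predecessor at level $k-1$, while $x_{0}$ and $y_{0}$ have none. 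The relation $\bis$ is global since every element of $\Amf$ and of $\Bmf$ lies at some finite level, so it is an $\MLinu(\varrho)$-bisimulation, and $(x_{0},y_{0})\in\bis$ gives $\Amf,x_{0}\sim_{\MLinu(\varrho)}\Bmf,y_{0}$, as required.

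\textbf{Main obstacle.} The only delicate point is arranging $\Bmf$ so that \emph{all} the graded conjuncts of $\psi_{M}$ hold simultaneously --- in particular those of the form $\square_{R}\Diamond_{R^{-}}^{=n}\top$, which pin down the in-degrees of level-$(k+1)$ nodes and must cooperate with the doubling/halving of register sizes --- while keeping $u$ (and, via \eqref{form2:empty}, $e_{\ell}\wedge r_{\ell}$) on exactly one node per register per level and keeping the $e_{\ell}$-label along $\bis$ in agreement with $\Amf$. The clean way around this, adopted above, is to define $e_{\ell}$ in \emph{both} structures to mean literally ``register $\ell$ is empty at this step''; this makes the $e_{\ell}$-part of atomic harmony immediate and aligns the case splits in $\varphi_{M}$ and $\psi_{M}$ with the one in $\Rightarrow_{M}$, after which the whole verification reduces to a level-by-level bookkeeping induction.
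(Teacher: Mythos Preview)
Your proposal is correct and follows essentially the same approach as the paper: build the intended models $\Amf$ (the $\omega$-chain recording states) and $\Bmf$ (the level-by-level register structure with doubling/halving) directly from the infinite computation, and take the bisimulation to relate $x_k$ to all level-$k$ nodes. The paper's own proof is just the one-line remark that this verification is ``straightforward (if somewhat labour intensive)''; you have supplied exactly that labour, and your simplifying choice to let $e_\ell$ hold \emph{iff} $n_\ell^k=0$ (rather than only when required by a decrement instruction, as the paper allows) is a harmless specialisation that makes atomic harmony immediate.
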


Next, we need to show the converse:

\begin{proposition}
	\label{prop:GMLinu-reduction-back}
	If there are pointed models $\Amf,x_{0}$ of $\varphi_M$ and $\Bmf,y_{0}$ of $\psi_M$ such that $\Amf,x_{0}\sim_{\MLinu(\varrho)}\Bmf,y_{0}$,
	then $M$ is non-halting.
\end{proposition}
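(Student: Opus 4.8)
\textbf{Proof plan for Proposition~\ref{prop:GMLinu-reduction-back}.}

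The plan is to extract, from an arbitrary pair of $\MLinu(\varrho)$-bisimilar pointed models $\Amf,x_0\models\varphi_M$ and $\Bmf,y_0\models\psi_M$, a faithful record of an infinite computation of $M$, forcing $M$ to be non-halting. First I would use the conjuncts $N_c\wedge q_0$, $\blacksquare\Diamond^{=1}_R\top$, $\square_{R^-}\bot$ and $\blacksquare(\neg N_c\rightarrow\Diamond^{=1}_{R^{-}}\top)$ of $\varphi_M$ to observe that $\Amf$ contains a unique infinite $R$-chain $x_0\,R\,x_1\,R\,x_2\,R\cdots$ emanating from $x_0$, and that conjuncts \eqref{form:one_of}--\eqref{form:Q_excl} make each $x_k$ satisfy exactly one state atom; write $q_{\sigma(k)}$ for that atom, so $\sigma(0)=0$ and, by \eqref{form:nonhalt}, $\sigma(k)\ne m$ for all $k$. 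The whole argument will be to show that $(q_{\sigma(k)})_{k<\omega}$, together with suitably reconstructed register values, is precisely the (hence infinite) computation of $M$ on input $(0,0)$.

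Next I would transfer structure across the bisimulation. Since $x_0\sim_{\MLinu(\varrho)}y_0$, by the back-and-forth and \emph{global} clauses of $\MLinu(\varrho)$-bisimulation each $x_k$ is matched by a nonempty set $Y_k\subseteq\dom(\Bmf)$ of mutually $\MLinu(\varrho)$-bisimilar elements, and one shows by induction on $k$ that (i) every element of $Y_k$ satisfies $q_{\sigma(k)}$ and the same $e_0,e_1$-atoms as $x_k$ (these are in $\varrho$), and (ii) the $R$-successor/predecessor structure between $Y_k$ and $Y_{k+1}$ is governed by the $\psi_M$-conjuncts. The key invariant, to be maintained by induction, is: $Y_k$ partitions into an $r_0$-part of size $2^{n_0^k}$ and an $r_1$-part of size $2^{n_1^k}$, where $(n_0^k,n_1^k)$ are the register contents after $k$ steps of $M$, and moreover $e_\ell$ holds at $Y_k$ \emph{whenever} $n_\ell^k=0$ \emph{and} instruction $I_{\sigma(k)}$ decrements register $\ell$. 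The propagation conjuncts \eqref{form2:prop1}, \eqref{form2:prop2} keep the two register-branches separate and single out, via $u$, one $r_\ell$-representative per register at each time; conjunct \eqref{form2:empty} together with conjunct~\eqref{form2:prop1} ($\square_{R^{-1}}(e_\ell\vee N_c)$) is what forces $e_\ell$ to hold exactly at the empty-register decrement steps — this is the crucial point linking the two formulas, and it makes $\varphi_M$'s instruction conjuncts \eqref{form:decr1}--\eqref{form:decr2} implement genuine zero-tests. The doubling/halving bookkeeping is then routine: for $I_{\sigma(k)}=+(\ell,q_j)$, conjunct~\eqref{form2:incr} gives each $r_\ell$-element of $Y_k$ two $R$-successors, each with a unique $R$-predecessor, so the $r_\ell$-part of $Y_{k+1}$ has size $2^{n_\ell^k+1}$ while \eqref{form2:incr2} keeps the other register fixed; for a decrement with $n_\ell^k>0$, conjunct~\eqref{form2:decr2} pairs up predecessors so the $r_\ell$-part halves to $2^{n_\ell^k-1}$; with $n_\ell^k=0$ (so $e_\ell$ holds at $Y_k$), \eqref{form2:decr1} keeps it at size $1=2^0$. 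In each case \eqref{form:incr}--\eqref{form:decr2} tell us $\sigma(k+1)$ is exactly the state $M$ moves to, so $(q_{\sigma(k)},n_0^k,n_1^k)\Rightarrow_M(q_{\sigma(k+1)},n_0^{k+1},n_1^{k+1})$.

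Finally, since $\sigma(0)=0$, $(n_0^0,n_1^0)=(0,0)$ (conjunct~\eqref{form2:init3} plus $I_0$ decrementing), and the induction produces an infinite sequence of $\Rightarrow_M$-steps none of whose states is $q_m$, the computation of $M$ on $(0,0)$ is infinite, i.e.\ $M$ is non-halting. Combining this with the previous proposition and Lemma~\ref{criterion}, $\varphi_M$ and $\psi_M$ have an $\MLinu(\varrho)$-separator iff $M$ halts, giving RE-hardness.

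\textbf{Main obstacle.} The delicate step is the inductive maintenance of the invariant through the back-and-forth clauses: one must argue that the \emph{entire} bisimulation class $Y_k$ behaves uniformly (all elements carry the same $\varrho$-type, hence the same $q_i,e_\ell$), and that the graded/inverse-graded modalities in $\psi_M$ — which are \emph{not} in $\varrho$ and so are not directly transported — nevertheless pin down the exact cardinalities of the $r_0$- and $r_1$-parts of $Y_{k+1}$ rather than just lower bounds. This requires carefully combining: the exact-count modalities $\Diamond^{=n}_R$, $\Diamond^{=n}_{R^{-1}}$ inside $\Bmf$ itself (which hold regardless of $\varrho$), the uniqueness forced by $u$ and $N_c$, and the bisimulation only for the $\varrho$-reduct; the arithmetic that $2^{n}$ representatives encode value $n$ must be shown to be \emph{forced}, not merely \emph{consistent}. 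Handling the initialisation step $k=0$ separately (where $q_0$ splits into the two registers via \eqref{form2:init1}--\eqref{form2:init2} rather than applying an instruction) and verifying the $e_\ell$-placement there is a further fiddly case. These are exactly the parts one would relegate to a detailed appendix argument.
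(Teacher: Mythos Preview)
Your plan is correct and follows essentially the same architecture as the paper's proof: define the sets $Y_k$ of bisimulation partners of $x_k$, show they coincide with the iterated $R$-successors of $y_0$, split $Y_k$ into its $r_0$- and $r_1$-parts, and maintain the cardinality invariant $|Y_k^\ell|=2^{n_\ell^k}$ by induction using the doubling/halving conjuncts.

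One small correction: the mechanism forcing $e_\ell$ to hold at an empty-register decrement step is not conjunct~\eqref{form2:prop1} (which propagates $r_\ell$, not $e_\ell$; you misquoted it), but rather the arithmetic contradiction: if $|Y_k^\ell|=1$ and $e_\ell$ failed, conjunct~\eqref{form2:decr2} would force $|Y_{k+1}^\ell|=\tfrac{1}{2}$, which is impossible. Conjunct~\eqref{form2:empty} handles only the converse direction ($e_\ell$ implies the $r_\ell$-part is a singleton, via the uniqueness of $u$).
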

\begin{proof}
We show that $\Amf$ and $\Bmf$ contain intended models as sub-models. Because it is a model of $\varphi_M$, $\Amf$ contains a sequence of elements $x_0, x_1, \dots$, where $x_{0}=c$, $x_{t+1}$ is the unique $R$-successor of $x_t$, and $x_t$ is the unique $R$-predecessor of $x_{t+1}$. Furthermore, on each $x_t$ exactly one $q_i$ (with $0\leq i \leq \ell$) holds.
	
	In model $\Bmf$, we also have $y_{0}=c$. Let $Y_0 = \{y_0\}$, and for every $t$, let $Y_{t+1}$ be the $R$-successors of the elements of $Y_t$. Let $\bis\subseteq \dom(\Amf)\times \dom(\Bmf)$ be the $\MLinu(\varrho)$-bisimulation between $\Amf,x_{0}$ and $\Bmf,y_{0}$. The proof now proceeds through a series of claims.
	
	\textbf{Claim 1.} For every $t\in \omega$, $(i)$ if $(x_t,y)\in \bis$, then $y\in Y_t$, and $(ii)$ if $y\in Y_t$, then $(x_t,y)\in \bis$.
	
	\textbf{Proof of Claim 1.} We proceed by induction on $t$. As base case, note that $x_0$ and $y_0$ are the constants $c$ in their respective models, so $x_0$ and $y_0$ can only be $\MLinu(\varrho)$-bisimilar to each other. Since $\bis$ is, by assumption, an $\MLinu(\varrho)$-bisimulation, we must have $(x_0,y_0)\in \bis$. Assume then as the induction hypothesis that $t>0$ and that the claim holds for all $t'<t$.
	
	Take any $y$ such that $(x_t,y)\in \bis$. Because $\bis$ is an $\MLinu(\varrho)$-bisimulation and $x_t$ has $x_{t-1}$ as an $R$-predecessor, $y$ must have some $y'$ as an $R$-predecessor such that $(t_{x-1},y')\in \bis$. By the induction hypothesis, $y'\in Y_{t-1}$. This implies that $y$ is an $R$-successor of an element from $Y_{t-1}$, and so, by definition, we have $y\in Y_t$. 
	
	Now, take any $y\in Y_t$. By the definition of $Y_t$, there is some $y'\in Y_{t-1}$ that is an $r$-predecessor of $y$. By the induction hypothesis, $(x_{t-1},y)\in \bis$. Because $\bis$ is a bisimulation and $y$ is an $R$-successor of $y'$, there must be some $x'$ that is an $R$-successor of $x_{t-1}$ such that $(x',y)\in \bis$. The only $R$-successor of $x_{t-1}$ is $x_t$, so $x'=x_t$, and hence $(x_t,y)\in \bis$. This completes the induction step, and thereby the proof of Claim 1.

\smallskip

	\textbf{Claim 2.} For every $y\in Y_t$, if $y'$ is an $R$-predecessor of $y$, then $y'\in Y_{t-1}$.
	
	\textbf{Proof of Claim 2.} By Claim 1, $(x_t,y)\in \bis$. Since $y'$ is an $R$-predecessor of $y$, we must have $(y',x')\in \bis$ for some $R$-predecessor $x'$ of $x_t$. The only $R$-predecessor of $x_t$ is $x_{t-1}$, so we have $(x_{t-1},y')\in \bis$. By Claim 1, this implies that $y'\in Y_{t-1}$. 

\smallskip

	\textbf{Claim 3.} For every $t$, there is exactly one $i_t$ such that $q_{i_t}$ holds on any of $x_t$ and $Y_t$. This $q_{i_t}$ holds on all of $x_t$ and $Y_t$. Furthermore, $e_\ell$ either holds on all of $x_t$ and $Y_t$, or on none.
	
	\textbf{Proof of Claim 3.} Because $\Amf$ is a model of $\varphi_M$, every $x_t$ satisfies exactly one $q_{i_t}$. The rest of the claim follows immediately from $\bis$ being a bisimulation.
	
	Before continuing with further claims, we define $Y_t^\ell$, for $t>0$ and $\ell\in \{0,1\}$, by taking $Y_t^\ell = \{y\in Y_t\mid \Bmf,y\models r_\ell\}$. If $\ell\in \{0,1\}$, we write $\overline{\ell}$ for $1-\ell$, i.e., $\overline{0}=1$ and $\overline{1}=0$.
	
	\textbf{Claim 4.} For $\ell\in \{0,1\}$ and $t>0$, the $R$-successors of $Y_t^\ell$ are in $Y_{t+1}^\ell$ and the $R$-predecessors of $Y_{t+1}^\ell$ are in $Y_t^\ell$.
	
	\textbf{Proof of Claim 4.} Conjunct~\eqref{form2:prop1} says that $r_\ell$ propagates forward and backward trough $R$.
	
	\textbf{Claim 5.} For every $t>0$,
	\begin{itemize}
		\item[--] if $I_{i_t}=+(\ell,q_j)$, then $i_{t+1}=j$ and $|Y_{t+1}^p\ell=2\times |Y_t^\ell|$,
		
		\item[--] if $I_{i_t}=-(\ell,q_j,q_k)$ and $e_\ell$ holds on $x_t$, then $i_{t+1}=j$ and $|Y_{t+1}^\ell|=|Y_t^\ell|$,
		
		\item[--] if $I_{i_t}=-(\ell,q_j,q_k)$ and $e_\ell$ does not hold on $x_t$, then $i_{t+1}=k$ and $|Y_{t+1}^\ell|=\frac{1}{2}\times|Y_t^\ell|$,
	\end{itemize}
	and, in each case, $|Y_{t+1}^{\overline{\ell}}|=|Y_t^{\overline{\ell}}|$.
	
	\textbf{Proof of Claim 5.} Conjuncts~\eqref{form:incr}, \eqref{form:decr1} and \eqref{form:decr2} of $\varphi_M$ guarantee that $i_{t+1}$ has the appropriate value, i.e., the state transitions as represented in the models $\Amf$ and $\Bmf$ follow the rules of $M$. 
	
	Consider now the sets $Y_t^\ell$ and $Y_{t+1}^\ell$. By Claim 4, all the $R$-successor of $Y_t^\ell$ are in $Y_{t+1}^\ell$, and all the $R$-predecessors of $Y_{t+1}^\ell$ are in $Y_t^\ell$. Furthermore, conjunct~\eqref{form2:incr} of $\psi_M$ implies that, if $I_{i+t}=+(\ell,q_j)$, then every $y\in Y_t^\ell$ has exactly two $R$-successors that each have exactly one $R$-predecessor. It follows that $|Y_{t+1}^\ell|=2\times |Y_t^\ell|$. Similarly, if $I_{i_t}=-(\ell,q_j,q_k)$ and $e_\ell$ holds, then conjunct~\eqref{form2:decr1} says that every $y\in Y_t^\ell$ has exactly one $R$-successor that has exactly one $R$-predecessor, so $|Y_{t+1}^\ell|=|Y_t^\ell|$. If $I_{i_t}=-(\ell,q_j,q_k)$ and $e_\ell$ does not hold, then conjunct~\eqref{form2:decr2} makes sure that every $y\in Y_t^\ell$ has exactly one $R$-successor that has exactly two $R$-predecessors, so $|Y_{t+1}^\ell|=\frac{1}{2}\times|Y_t^\ell|$.
	
	Finally, conjuncts~\eqref{form2:incr2}, \eqref{form2:decr3} guarantee that every $y\in Y_t^{\overline{\ell}}$ has exactly one $R$-successor that has exactly one $R$-predecessor, so $|Y_{t+1}^{\overline{\ell}}|=|Y_t^{\overline{\ell}}|$.

\smallskip

\textbf{Claim 6.} $|Y_1^0|=|Y_1^1|=1$.

	\textbf{Proof of Claim 6.} By conjuncts~\eqref{form2:init1} and \eqref{form2:init2} of $\psi_M$, $c$ has exactly two successors, one of which satisfies $u\wedge r_0$ while the other satisfies $u\wedge r_1$.

\smallskip
	
	\textbf{Claim 7.} For any $t>0$ and $\ell\in \{0,1\}$, there is exactly one $y\in Y_t^\ell$ that satisfies $u$.
	
	\textbf{Proof of Claim 7.} For $t=1$, the claim holds because, by conjunct~\eqref{form2:init2} of $\psi_M$, there is at least one $y\in Y_1^\ell$ that satisfies $u$. As $Y_1^\ell$ is a singleton (by Claim 6), there must be exactly one such $y$. Conjunct~\eqref{form2:prop2} implies that every $u$-element (other than $c$) has exactly one $R$-predecessor and one $R$-successor that satisfy $u$, so if $Y_{t}^\ell$ has exactly one element satisfying $u$, then so does $Y_{t+1}^\ell$.

\smallskip
	
	\textbf{Claim 8.} For any $t>0$ and $\ell\in \{0,1\}$,
	\begin{itemize}
		\item[--] if $e_\ell$ holds on $x_t$ and $Y_t$, then $Y_t^\ell$ is a singleton and
		\item[--] if $I_t = -(\ell,q_j,q_k)$ and $Y_t^\ell$ is a singleton, then $e_\ell$ holds on $x_t$ and $Y_t$.
	\end{itemize}
	
	\textbf{Proof of Claim 8.} Suppose that $e_\ell$ holds on $x_t$ and $Y_t$. Then $e_\ell\land r_\ell$ is true on every element of $Y_t^\ell$. By conjunct~\eqref{form2:empty}, this implies that $u$ is true on all of $Y_t^\ell$. Claim 7 says that only one element of $Y_t^p$ can satisfy $u$, so $Y_t^\ell$ must be a singleton.
	
	Suppose next that $I_t= - (\ell,q_j,q_k)$ and $Y_t^\ell$ is a singleton. Assume (towards a contradiction) that $e_\ell$ does not hold on $x_t$. Then, by Claim 5, $|Y_{t+1}^\ell|=\frac{1}{2}\times|Y_t^\ell|$. But $Y_{t+1}^\ell$ cannot have half as many elements as $Y_t^\ell$, since $Y_t^\ell$ is a singleton. From this contradiction, we conclude that $e_\ell$ holds on $x_t$ (and therefore also on $Y_t$).

\smallskip
	
	\textbf{Claim 9.} For any $t$ and $\ell\in \{0,1\}$, let $v^\ell_t$ be the value in register $\ell$ at time $t$ in the run of $M$ and $q_{i_t}$ the state of $M$ at time $t$. Then
	\begin{itemize}
		\item[--] $q_{i_t}$ holds on $x_t$ and $Y_t$, and
		
		\item[--] if $t>0$, then $|Y_t^\ell|=2^{v_t^\ell}$.
	\end{itemize}
	
	\textbf{Proof of Claim 9.} As we remarked when we introduced them, conjuncts~\eqref{form:incr}--\eqref{form:decr2} of $\varphi_M$ guarantee that the state transition instructions of $M$ are obeyed, if $e_\ell$ holds in the appropriate places. Claim 5 shows that $\Bmf$ follows the incrementation/decrementation instructions of $M$, again under the assumption that $e_\ell$ holds where appropriate. Finally, Claim 8 shows that $e_\ell$ holds where appropriate.

\smallskip
	
	\textbf{Claim 10.} $M$ is non-halting.
	
	\textbf{Proof of Claim 10.} Follows from conjunct~\eqref{form:nonhalt} of $\phi_M$ together with Claim 9.
	
	This completes the proof of Proposition~\ref{prop:GMLinu-reduction-back}.
\end{proof}

Thus, we have shown that Craig $\GMLinu/\MLinu$-separation is RE-complete. This is also true of unary Craig $\GMLinu/\MLinu$-separation because $\varphi_M$ and $\psi_M$ have the same binary predicates. And using the second claim of Lemma~\ref{l:red}, we obtain that $\GMLinu/\MLinu$-separation is RE-complete.

By using the so-called spy-point technique~\cite{DBLP:journals/jsyml/ArecesBM01}, we can adapt  the proof to the case of (Craig) $\GMLin/\MLin$-separation (without the universal diamond $\Du$ and its dual universal box $\blacksquare$). Indeed, let $\varphi_M$ and $\psi_M$ be as defined above, except that we replace $\blacksquare$ by $\square_U$ and add the following  conjuncts to both of the resulting formulas:
$$
\Diamond\!_U N_c, \quad \square_U (\square_R\Diamond_{U^{-1}}N_c \wedge \square_{R^{-1}}\Diamond_{U^{-1}}N_c). 
%\land{} \\
%\square_U (\square_S\Diamond_{U^{-1}}N_c \wedge \square_{S^{-1}}\Diamond_{U^{-1}}N_c).
$$
These additional conjuncts guarantees that, while $U$ is not necessarily interpreted as the universal relation, all elements that are relevant in the proof are $U$-successors of $c$, and hence all conjuncts that start with $\square_U$ have the desired effect. 

To deal with the nominal-free $\LS$, we can simulate $N_c$ by adding to $\varphi_M$ the conjunct $N_c \leftrightarrow p$ with a fresh propositional variable $p$ and replace $N_c$ in $\psi_M$ with $p$, making it shared in $\varphi_M$ and $\psi_M$. 

This proves the second item of Theorem~\ref{thm:RE}.

%Undecidability of $\GMLin /\MLin$-(Craig) separation follows immediately.
%%
%\begin{theorem}
%	$\GMLin /\MLin$-(Craig) separation is RE-complete.
%\end{theorem}
%}

\subsection{$\CT/\FOT$-separation}

We now show the first item of Theorem~\ref{thm:RE}. Our reduction is very similar to the one given in the previous subsection for $\GMLinu/\MLinu$-separation. We begin by discussing the required modifications.

We again construct two formulas, $\xi_M$ and $\chi_M$, such that the pointed model $\Amf,x_{0}$ of $\xi_M$ is essentially an $R$-chain
encoding the the run of $M$, and the model $\Bmf,y_{0}$ of $\chi_M$ represents the contents of the registers. Synchronisation between $\Amf$ and $\Bmf$ is done through the bisimulation between them. All of this is exactly as in the $\GMLinu/\MLinu$ case.
Observe, however, that now we deal with a $\FOT(\varrho)$-bisimulation, which is more constraining than an $\MLinu(\varrho)$-bisimulation. Recall that a $\FOT(\varrho)$-bisimulation requires that if $(x,y)\in \bis$, then, for every $x'$, there is an $y'$ such that $(x',y')\in \bis$ and $(x,x')\mapsto (y,y')$ is a partial isomorphism, and similarly, for every $y'$, there must be an $x'$ with the same properties. Here, $(x,x')\mapsto (y,y')$ being a partial isomorphism means that (i) for every unary predicate $P$ in the relevant signature, $P(x)$ iff $P(y)$ and $P(x')$ iff $P(y')$, (ii) for every binary predicate $R$ in the relevant signature, $R(x,x')$ iff $R(y,y')$ and $R(x',x)$ iff $R(y',y')$, and (iii) $x=x'$ iff $y=y'$.

Perhaps surprisingly, it is only condition (iii) that will give us trouble, and that requires us to make the current proof slightly more complicated than the $\GMLinu/\MLinu$ one. In the intended pointed models of $\varphi_M$ and $\psi_M$, we can already choose a $y'$ (or $x'$) that satisfies properties (i) and (ii). Condition (iii), however, is not satisfied for those models. 
This is because condition (iii) implies that an $\FOT$-bisimulation can never relate a singleton set to a multi-element set (i.e., if $(x,y)\in \bis$ and $(x,y')\in \bis$ with $y\not = y'$, then there must be a $x'\not = x$ such that $(x',y)\in \bis$ and $(x',y')\in \bis$). After all, if $(x,y), (x,y')\in \bis$ with $y\not = y'$, then there must be some $x'$ such that $(x',y')\in \bis$ and $(x,x')\mapsto (y,y')$ is a partial isomorphism. In particular, we must then have $x=x'$ iff $y=y'$, so from $y\not = y'$ it follows that $x\not = x'$.

In order to address this issue, we make a small modification to the intended models for $\varphi_M$ and $\psi_M$. Specifically, we create two copies of everything: the intended models of $\xi_M$ and $\chi_M$ are simply two disjoint copies of the models for $\varphi_M$ and $\psi_M$, respectively.
Now, let us consider the precise formulas $\xi_M$ and $\chi_M$ that achieve this effect. We begin with $\xi_M$, which is the conjunction of the following formulas:
\begin{enumerate}
	\item \label{form3:exists}$\exists^{=2} x\, q_0(x)$,
	\item $\forall x \, \bigvee_{1\leq i\leq m}q_i(x)$,
	%	\item $\forall x \, (F(x)\rightarrow \bigwedge_{1\leq i\leq \ell}\neg Q_i(x))$
	\item $\forall x \, \bigwedge_{1\leq i \leq m}(q_i(x)\rightarrow \bigwedge_{i<j\leq m}\neg q_j(x))$,
	\item $\forall x \, (q_0(x)\rightarrow \forall y \neg R(y,x))$,
	\item $\forall x \, \exists^{=1}y R(x,y)$,
	\item $\forall x \, \neg q_0(x)\rightarrow \exists^{=1}y R(y,x)$;
	%	\item $\forall x \, \forall y \, ((R(x,y)\vee R(y,x))\rightarrow \neg F(y)))$
	%	\item \label{form3:F}$\forall x \, (\neg F \rightarrow \exists^{\geq 2}y (S(x,y)\wedge F(y)))$
\end{enumerate}
if $I_i=+(\ell,q_j)$, then 
\begin{enumerate}[resume]
	\item $\forall x \,(q_i(x)\rightarrow \forall y \, (R(x,y)\rightarrow q_j(y)))$,
\end{enumerate}
and if $I_i = -(\ell,q_j,q_k)$, then
\begin{enumerate}[resume]
	\item $\forall x \,((q_i(x)\wedge e_\ell(x))\rightarrow \forall y\,(R(x,y)\rightarrow q_j(y)))$,
	\item $\forall x \,((q_i(x)\wedge \neg e_\ell(x))\rightarrow \forall y\,(R(x,y)\rightarrow q_k(y)))$;
\end{enumerate}
and, finally,
\begin{enumerate}[resume]
	\item $\forall x \, \neg q_m(x)$
\end{enumerate}
The reader may notice that most conjuncts of $\xi_M$ are simply translations of the conjunct of $\varphi_M$ with the same number to $\CT$. The only exceptions, and this is a very minor exception, is conjuncts \eqref{form3:exists}. The corresponding $\varphi_M$ conjuncts guarantee the existence of one $q_0$ element, while here we guarantee the existence of two $q_0$ elements.

%The formula $\chi_M$ uses abbreviations $U$, $\Reg_0$ and $\Reg_1$, similar to how we used them in $\psi_M$. We need to scale the numbers up a bit, however, due to the aforementioned issue regarding bisimulations and singletons.
%\begin{itemize}
%	\item $U(x) := \exists^{=2}y \, (S(x,y)\wedge F(y))\vee \exists^{=4}y \, (S(x,y)\wedge F(y))$
%	\item $\Reg_0(x) := \exists^{=2}y \, (S(x,y)\wedge F(y))\vee \exists^{=3}y \, (S(x,y)\wedge F(y))$
%	\item $\Reg_1(x) := \neg \Reg_0(x)$.
%\end{itemize}
%Using these abbreviations, we can define the conjuncts of $\chi_M$.
The conjuncts of $\chi_M$ are as follows:
\begin{enumerate}
	\item \label{form4:exists}$\exists^{=2} x\, q_0(x)$,
	\item $\forall x \, \bigvee_{1\leq i\leq m}q_i(x)$,
	%	\item $\forall x \, (F(x)\rightarrow \bigwedge_{1\leq i\leq \ell}\neg Q_i(x))$
	\item $\forall x \, \bigwedge_{1\leq i \leq m}(q_i(x)\rightarrow \bigwedge_{i<j\leq m}\neg q_j(x))$,
	\item $\forall x \, (q_0(x)\rightarrow \exists^{=2}y \, R(x,y))$,
	\item $\forall x \, (q_0(x)\rightarrow \bigwedge_{\ell\in \{0,1\}}\exists^{=1}y \, (R(x,y)\wedge u(y)\wedge r_\ell(y)))$,
	\item $\forall x \, (q_0(x) \rightarrow (e_0(x)\wedge e_1(x)))$,
	\item $\forall x \, (r_\ell(x)\rightarrow (\forall y \, ((R(x,y)\vee R(y,x))\rightarrow (r_\ell(y)\vee q_0(y)))))$,
	\item $\forall x \, ((u(x)\wedge \neg q_0(x))\rightarrow (\exists^{=1}y\, (R(x,y)\wedge u(y))\wedge{}\\ \hspace*{4.7cm} \exists^{=1}y\, (R(y,x)\wedge u(y))))$;
\end{enumerate}
if $I_i = +(\ell,q_j)$, then
\begin{enumerate}[resume]
	\item $\forall x \, ((q_i(x)\wedge r_\ell(x))\rightarrow (\exists^{=2}y\, R(x,y) \wedge \forall y \,(R(x,y)\rightarrow{}\\ \hspace*{5.9cm}  \exists^{=1}x \, R(x,y))))$,
	\item $\forall x \, ((q_i(x)\wedge \neg r_\ell(x))\rightarrow (\exists^{=1}y\, R(x,y) \wedge \forall y \,(R(x,y)\rightarrow {}\\ \hspace*{5.9cm} \exists^{=1}x \, R(x,y))))$;
\end{enumerate}
if $I_i = -(\ell,q_j,q_j)$, then 
\begin{enumerate}[resume]
	\item $\forall x \, ((q_i(x)\wedge r_\ell(x) \wedge e_\ell)\rightarrow (\exists^{=1}y\, R(x,y) \wedge \forall y \,(R(x,y)\rightarrow {}\\ \hspace*{5.9cm} \exists^{=1}x \, R(x,y))))$,
	\item $\forall x \, ((q_i(x)\wedge r_\ell(x) \wedge \neg e_\ell)\rightarrow (\exists^{=1}y\, R(x,y) \wedge \forall y \,(R(x,y)\rightarrow {}\\ \hspace*{5.9cm} \exists^{=2}x \, R(x,y))))$,
	\item $\forall x \, ((q_i(x)\wedge \neg r_\ell(x))\rightarrow (\exists^{=1}y\, R(x,y) \wedge \forall y \,(R(x,y)\rightarrow {}\\ \hspace*{5.9cm}\exists^{=1}x \, R(x,y))))$;
\end{enumerate}
and finally
\begin{enumerate}[resume]
	\item $\forall x \, ((e_\ell(x)\wedge r_\ell(x))\rightarrow u(x))$.
\end{enumerate}

%Note that $\xi_{M}$ and $\chi_{M}$ use the same signature.
\begin{proposition}
	If $M$ is non-halting, then there are pointed models $\Amf,x_{0}$ and $\Bmf,y_{0}$ with $\Amf\models \xi_M(x_{0})$ and $\Bmf\models \chi_{M}(y_{0})$ 
	such that $\Amf,x_{0}\sim_{\FOT(\varrho)}\Bmf,y_{0}$.
\end{proposition}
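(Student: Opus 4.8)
The plan is to reuse, essentially verbatim, the reduction of Section~\ref{sec:undec-modal}: under the non-halting hypothesis it produces an intended pointed model $\Amf_0,x_0$ of $\varphi_M$ -- the infinite $R$-chain $x_0,x_1,\dots$ of state nodes, with $q_{i_t}$ and the relevant $e_\ell$ true at $x_t$ and $q_0$ true only at $x_0$ -- and an intended pointed model $\Bmf_0,y_0$ of $\psi_M$ -- the $R$-layered structure whose depth-$t$ layer $Y_t = Y_t^0 \cup Y_t^1$ encodes the register contents, with $q_i$ and $e_\ell$ labelled on all of $Y_t$ exactly as on $x_t$, and with $R$-edges running only between consecutive layers. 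The one point that does \emph{not} carry over is condition (iii) in the definition of a partial $\varrho$-isomorphism, which rules out relating a node with a singleton $\bis$-class against one with a larger class. As announced in Section~\ref{sec:undec}, I would fix this by \emph{duplicating}: let $\Amf$ be the disjoint union of two isomorphic copies of $\Amf_0$ (with state chains $x_t$ and $x_t'$), let $x_0$ be the root of the first copy, and likewise let $\Bmf$ be the disjoint union of two copies of $\Bmf_0$ (with layers $Y_t$ and $Y_t'$) and $y_0$ the root of the first copy.

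Next I would verify $\Amf\models \xi_M(x_0)$ and $\Bmf\models \chi_M(y_0)$. Apart from $\exists^{=2}x\,q_0(x)$, every conjunct of $\xi_M$ and $\chi_M$ is either a $\forall$-sentence or asserts a local counting property of $R$-successors/predecessors, so it is preserved under disjoint union; and $\exists^{=2}x\,q_0(x)$ holds because each copy contains exactly one $q_0$-node. That each single copy satisfies the remaining conjuncts is exactly the (routine, if laborious) conjunct-by-conjunct check already carried out for $\varphi_M$ and $\psi_M$ in Section~\ref{sec:undec-modal}.

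For the bisimulation, I would put $a\bis b$ whenever $a$ and $b$ lie at the same depth $t$ from the root of their respective copies, allowing all four combinations of copies; then $(x_0,y_0)\in\bis$ and $\bis$ is global. The forth/back conditions are checked by taking, for a target $a'$ (respectively $b'$) at some depth $s$, a node at depth $s$ on the other side: the unary predicates in $\varrho$ (the $q_i$ and $e_0,e_1$) agree by the labelling of the intended models, and the binary predicate $R$ is respected because forth/back moves along $R$-edges preserve and reflect them, and non-edges -- including the absence of intra-layer edges in $\Bmf_0$ and the absence of cross-copy edges -- map to non-edges.

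The main obstacle, and the only genuinely delicate point, is condition (iii): when $a'\ne a$ we must choose $b'\ne b$ that is moreover non-adjacent to $b$, and symmetrically; when $a'=a$ we are forced to take $b'=b$. The case $a'=a$ (and $b'=b$) is immediate; for $a'$ an $R$-successor or $R$-predecessor of $a$ we take the corresponding $R$-neighbour of $b$ within $b$'s own copy; and in every remaining case -- $a'$ in the other copy of $\Amf$, or more than one $R$-step away from $a$ -- we pick $b'$ in the copy of $\Bmf$ not containing $b$ (and, for the back direction, $a'$ in the copy of $\Amf$ not containing $a$), which is automatically distinct and in a different connected component, hence non-adjacent; here it is convenient that every layer $Y_t$ of $\Bmf_0$ already has at least two elements since $|Y_t^0|,|Y_t^1|\ge 1$. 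Working through this case analysis in all configurations is where the two-copy trick pays off, and it is the step I expect to require the most care. Finally, by Lemma~\ref{criterion}, the existence of these $\FOT(\varrho)$-bisimilar models shows that $\xi_M$ and $\chi_M$ have no $\FOT(\varrho)$-separator whenever $M$ is non-halting; the converse direction (an analogue of Proposition~\ref{prop:GMLinu-reduction-back}) then completes the RE-hardness reduction for $\CT/\FOT$-separation.
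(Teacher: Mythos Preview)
Your proposal is correct and follows exactly the paper's approach: take two disjoint copies of the intended models of $\varphi_M$ and $\psi_M$, observe that the duplicated structures satisfy $\xi_M$ and $\chi_M$ (the only new conjunct $\exists^{=2}x\,q_0(x)$ being precisely what the duplication provides), and verify $\FOT(\varrho)$-bisimilarity via the depth-matching relation, with the second copy supplying the witnesses needed for the equality condition~(iii). The paper's own proof is a two-line sketch to the same effect; your elaboration of the case analysis is a faithful expansion of what the paper calls ``straightforward to verify''.
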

\begin{proof}
	The models in question are, as discussed above, two disjoint copies of the intended models of $\varphi_M$ and $\psi_M$. %The required modifications are as follows:
	%	\begin{itemize}
		%		\item In the intended model of $\varphi_M$, we gave every $\neg F$ element a single $S$-successor that satisfies $F$, in the intended model of $\xi_M$ we give every $\neg F$ element \emph{two} $S$-successors that satisfy $F$.
		%		\item Similarly, in the intended model of $\psi_M$, we gave every $\neg F$ element between 1 and 4 $S$-successors satisfying $F$, with the exact amount determined by whether $U$, $\Reg_0$ and $\Reg_1$ should hold. In the intended model for $\chi_M$ we increase this amount by one, so we add between 2 and 5 $S$-successors satisfying $F$ for every $\neg F$ element.
		%		\item Finally, we take two disjoint copies of the structure obtained so far.
		%	\end{itemize}
	%	These changes are necessary because, as discussed above, a singleton set can never be $\FOT$-bisimilar to a multi-element set.
%
It is straightforward to verify that these structures are indeed models of $\xi$ and $\chi$, and that they are $\FOT(\varrho)$-bisimilar.
\end{proof}

Next, the other direction:

\begin{proposition}
	If there are pointed structures $\Amf,x_{0}$ and $\Bmf,y_{0}$ such that $\Amf\models \xi_M(x_{0})$, $\Bmf\models \chi_{M}(y_{0})$ 
	and $\Amf,x_{0}\sim_{\FOT(\varrho)}\Bmf,y_{0}$,
	then $M$ is non-halting.
\end{proposition}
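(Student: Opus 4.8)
The plan is to mirror, step by step, the proof of Proposition~\ref{prop:GMLinu-reduction-back}: extract the intended substructures from $\Amf$ and $\Bmf$ and run through an analogous chain of claims, the only genuinely new ingredient being the extra rigidity of $\FOT(\varrho)$-bisimulations compared with $\MLinu(\varrho)$-bisimulations. First I would read off from $\Amf\models\xi_M(x_0)$ that, starting at $x_0$ (a $q_0$-node, hence without $R$-predecessor), $\Amf$ contains an infinite $R$-chain $x_0,x_1,x_2,\dots$ in which $x_{t+1}$ is the unique $R$-successor of $x_t$, $x_t$ is for $t\ge 1$ the unique $R$-predecessor of $x_{t+1}$, and each $x_t$ satisfies exactly one $q_{i_t}$; this is forced by the structural conjuncts of $\xi_M$ exactly as the first group of conjuncts of $\varphi_M$ forces the $R$-chain in the modal case. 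On the $\Bmf$-side I would take $Y_0$ to be the set of $\bis$-images of $x_0$ (which contains $y_0$, a $q_0$-node with $e_0,e_1$ true and exactly two $R$-successors, one in $r_0$ and one in $r_1$) and $Y_{t+1}=\{y'\mid (y,y')\in R^{\Bmf}\text{ for some }y\in Y_t\}$.

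Let $\bis$ be an $\FOT(\varrho)$-bisimulation with $(x_0,y_0)\in\bis$. The core is then a sequence of claims patterned on Claims~1--10 of the proof of Proposition~\ref{prop:GMLinu-reduction-back}. Claim~1, that $(x_t,y)\in\bis$ iff $y\in Y_t$, is proved by induction on $t$, the base case being immediate from $(x_0,y_0)\in\bis$ (pointedness here plays the role the nominal $c$ plays in the modal proof) and the inductive step using the back-and-forth conditions together with uniqueness of the $R$-successor and $R$-predecessor of $x_t$; the equality clause ``$a=a'$ iff $b=b'$'' of partial $\varrho$-isomorphisms is automatically satisfied in every such invocation, the $\Amf$-nodes involved being distinct consecutive chain elements. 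Claim~2 ($R$-predecessors of $Y_{t+1}$ lie in $Y_t$) follows the same way, and Claim~3 (all of $\{x_t\}\cup Y_t$ agree on every $q_i$ and on $e_0,e_1$) uses that these predicates lie in $\varrho$ and are transported along $\bis$. Writing $Y_t^\ell=\{y\in Y_t\mid\Bmf,y\models r_\ell\}$ and reasoning from here inside $\Bmf$ (as $r_\ell,u\notin\varrho$), Claims~4--8 transfer: $r_\ell$ propagates along $R$ off the root; the incrementation/decrementation conjuncts of $\chi_M$, whose ``$\exists^{=2}y\,R(x,y)$'' and ``$\exists^{=1}x\,R(x,y)$'' parts are legitimate $\CT$-conjuncts, force $|Y_{t+1}^\ell|$ to be $2|Y_t^\ell|$, $|Y_t^\ell|$, or $\frac{1}{2}|Y_t^\ell|$ according to $I_{i_t}$ and the truth of $e_\ell$, while $|Y_{t+1}^{\overline{\ell}}|=|Y_t^{\overline{\ell}}|$; $|Y_1^\ell|=1$; exactly one element of $Y_t^\ell$ satisfies $u$; and, at every step where $I_{i_t}$ decrements register $\ell$, $e_\ell$ holds on $Y_t$ iff $Y_t^\ell$ is a singleton -- the nontrivial directions using the $\chi_M$-conjunct $((e_\ell\wedge r_\ell)\to u)$ with $u$-uniqueness and, respectively, the impossibility of halving a singleton. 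A simultaneous induction (Claim~9) then gives that $q_{i_t}$ is the state of $M$ at step $t$ of its computation on input $(0,0)$ and $|Y_t^\ell|=2^{v_t^\ell}$ for the register contents $v_t^\ell$; finally the conjunct $\forall x\,\neg q_m(x)$ of $\xi_M$ forbids $q_m$ along the chain, so $M$ never halts (Claim~10).

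The step I expect to require the most care is not any individual claim but the verification that the greater rigidity of $\FOT(\varrho)$-bisimulations does not disturb this extraction. One must check that every use of the back-and-forth conditions only exploits the partial $\varrho$-isomorphism positively (to transport unary $\varrho$-atoms and single $R$-edges) and that the equality clause is met automatically, never merging or splitting chain elements; and one must confirm that $Y_0$ reduces, for the counting argument, to $\{y_0\}$ -- or, if it spreads over the two disjoint copies out of which the intended models are built, that $|Y_1^\ell|=1$ still holds and the register sizes evolve coherently, the disjointness of the copies being exactly what rules out the successor-merging a decrement step would otherwise demand. Indeed this rigidity clause is what forces the ``two-copies'' design of the intended models of $\xi_M$ and $\chi_M$: a one-element level of $\Amf$ cannot be $\FOT(\varrho)$-bisimilar to a branching level of $\Bmf$ unless a partner level is available on the $\Amf$-side. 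Once the right substructures are in hand, the remainder -- as already signalled for the modal case -- is lengthy but routine bookkeeping, now carried out against the first-order conjuncts of $\xi_M$ and $\chi_M$ rather than against $\varphi_M,\psi_M$; combined with Lemma~\ref{criterion} (and, for plain separation, Lemma~\ref{l:red}) this yields the $\CT/\FOT$ instance of Theorem~\ref{thm:RE}, the remaining counting-free $\LS$ in its first item being handled by analogous modifications.
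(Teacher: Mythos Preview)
Your plan diverges from the paper's at precisely the point you flag as needing ``confirmation'', and that is where the genuine new work in the $\CT/\FOT$ case lives. You treat the equality clause of an $\FOT(\varrho)$-bisimulation only negatively (checking that it is ``automatically satisfied'' in each back-and-forth step), whereas the paper uses it \emph{positively}. Concretely, the paper does not anchor at the single pair $(x_0,y_0)$; it takes $A_0$ and $B_0$ to be the full sets of $q_0$-elements in $\Amf$ and $\Bmf$ (two each, by $\exists^{=2}x\,q_0(x)$) and proves as its key new Claim~3 that $A_0\times B_0\subseteq\bis$. That argument exploits the equality clause: from $(a_0,b_0)\in\bis$ and the fact that $b_0$ has two distinct $R$-successors $b_1\neq b_1'$ one obtains a partner $a_1''\neq a_1$ on the $\Amf$-side, which must be the $R$-successor of the \emph{other} $q_0$-element $a_0'$; pulling back along the unique $R$-predecessor of $b_1'$ then yields $(a_0',b_0)\in\bis$. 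Claim~4 extends this to $A_n\times B_n\subseteq\bis$ by induction, and only then does the paper defer to the $\GMLinu$ argument.

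Your setup, with $Y_0$ defined as the $\bis$-images of $x_0$, inherits this without proving it: once one knows Claim~3, $Y_0$ is \emph{always} both elements of $B_0$, so your base case $|Y_1^\ell|=1$ is not available in general (and your alternative ``if it spreads over the two disjoint copies \dots'' refers to the intended models, not arbitrary ones). The difficulty you expect to ``require the most care'' is thus not a mere verification that equality constraints do not get in the way; it is an argument that actively uses them, and it is exactly Claims~3--4 of the paper's proof. Once those are in place, the remainder does proceed as you outline, with $B_t$ playing the role of $Y_t$.
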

\begin{proof}
	This proof is mostly similar to that of Proposition~\ref{prop:GMLinu-reduction-back}, except for the first few claims, in which we show certain elements to be bisimilar. We therefore prove the first few claims in detail, and refer to the proof of Proposition~\ref{prop:GMLinu-reduction-back} for details on the remainder.
	
	Let $\bis$ be the bisimulation between $\Amf$ and $\Bmf$, and let $A_0$ be the elements in $\Amf$ that satisfy $q_0$ and $B_0$ the ones in $\Bmf$. Then, for $n\in \mathbb{N}$, let $A_{n+1}$ be the set of $R$-successors of elements of $A_n$ and $B_{n+1}$ the set of $R$-successors of $B_n$. 
	
	\textbf{Claim 1.} $\bis$ can only relate an element of $A_n$ to elements of $B_n$, and vice versa. 
	
	\textbf{Proof of Claim 1.} We proceed by induction. We show the base case for one direction, the other direction can be shown similarly. Suppose, towards a contradiction, that $(a,b)\in \bis$ with $a\in A_0$ and $b\not \in B_0$. Then, for any $x$, $(a,a)\mapsto (b,x)$ is not a partial isomorphism, since $q_0(a)$ but $\neg q_0(b)$. This contradicts $\bis$ being a bisimulation, so such $a$ and $b$ cannot exist.
	
	Assume then, as induction hypothesis, that the claim holds for all $n'<n$. Again, we show one direction: that $(a,b)\in \bis$ and $a\in A_n$. The element $a$ has an $a$-predecessor $a'\in A_{n-1}$. Because $\bis$ is a bisimulation, there must be a $b'$ such that $(b,b')\in \bis$ and $(a,a')\mapsto (b,b')$ is a partial isomorphism. We have $R(a',a)$, so we must also have $R(b',b)$. Furthermore, by the induction hypothesis, $(a',b')\in \bis$, together with $a'\in A_{n-1}$, implies that $b'\in B_{n-1}$. Hence $b$ is the $R$-successor of an element in $B_{n-1}$, which, by definition, means $b\in B_n$. This completes the induction step, thereby proving the claim.
	
	\textbf{Claim 2.} If $a\in A_{n+1}$ and $R(a',a)$, then $a'\in A_n$, and if $b\in B_{n+1}$ and $R(b',b)$, then $b'\in B_n$.
	
	\textbf{Proof of Claim 2.} In $\Amf$, every element, other than the two that satisfy $q_0$, has exactly one $R$-predecessor. So it immediately follows that every $R$-predecessor of $a\in A_{n+1}$ is in $A_n$. 
	Now, take any $b\in B_{n+1}$ and $b'$ such that $R(b',b)$. There must be some $a$ such that $(a,b)\in \bis$. By Claim 1, this implies that $a\in A_{n+1}$. Furthermore, there is some $a'$ such that $(a',b')\in \bis$ and $(a',a)\mapsto (b',b)$ is a partial isomorphism. As $R(b',b)$, this implies that $R(a',a)$ as well, so we have $a'\in A_n$. By Claim 1, this implies that $b'\in B_n$.
	
	\textbf{Claim 3.} If $a_0\in A_0$ and $b_0\in B_0$, then $(a_0,b_0)\in \bis$.
	
	\textbf{Proof of Claim 3.} Let $b_0$ be one of the two elements of $B_0$. Then there is at least one $a_0\in A_0$ such that $(a_0,b_0)\in\bis$. Let $a_0'$ be the other element of $A_0$. This $b_0$ has exactly two $R$-successors $b_1,b_1'\in B_1$ (with $b_1\not = b_1'$). There must be $a_1,a_1'$ such that $(a_1,b_1)\in\bis$, $(a_1',b_1')\in\bis$, and $(a_0,a_1)\mapsto (b_0,b_1)$ and $(a_0,a_1')\mapsto (b_0,b_1')$ are partial isomorphisms. From $R(b_0,b_1)$ and $R(b_0,b_1')$ it therefore follows that $R(a_0,a_1)$ and $R(a_0,a_0')$. As $a_0$ has exactly one $R$-successor, this implies that $a_0=a_0'$.
	
	Now, consider the pair $(b_1,b_1')$. We have $(a_1,b_1)\in\bis$, so there must be some $a_1''$ such that $(a_1'',b_1')\in\bis$ and $(a_1,a_1'')\mapsto (b_1,b_1')$ is a partial isomorphism. This implies that, in particular, $a_1=a_1''$ if and only if $b_1= b_1'$. We have $b_1\not = b_1'$, so $a_1\not = a_1''$. From $(a_1'', b_1')\in\bis$ and Claim 1, it follows that $a_1''\in A_1$. Because $a_1\not = a_1''$, this $a_1''$ must be the other element of $A_1$, which is the $R$-successor of $a_0'$. 
	
	Because $(a_1'', b_1')\in\bis$ and $R(a_0',a_1'')$, there must be some $b_0'$ such that $(a_0',b_0')\in\bis$ and $R(b_0',b_1')$. But $b_1'$ is an $R$-successor of $b_0$ and has only one $R$-predecessor, so $b_0'=b_0'$. Hence $(a_0',b_0')\in\bis$ implies $(a_0',b_0)\in\bis$. We have now shown that, for arbitrary $b_0\in B_0$, we have $(a_0,b_0)\in\bis$ and $(a_0',b_0)\in\bis$, with $a_0\not = a_0'$. The claim follows immediately.
	
	\textbf{Claim 4.} If $a_n\in A_n$ and $b_n\in B_n$, then $(a_n,b_n)\in\bis$.
	
	\textbf{Proof of Claim 4.} The proof is by induction. The base case is Claim 3. So assume as induction hypothesis that $n>0$ and that Claim 4 holds for all $n'<n$. Take any $a_n\in A_n$ and $b_n\in B_n$. There are $a_{n-1}\in A_{n-1}$ and $b_{n-1}\in B_{n-1}$ such that $R(a_{n-1},a_n)$ and $R(b_{n-1},b_n)$. By the induction hypothesis, $(a_{n-1},b_{n-1})\in\bis$. Now, there must be some $a_n'$ such that $(a_n',b_n)\in\bis$ and $(a_{n-1},a_n')\mapsto (b_{n-1},b_n)$ is a partial isomorphism. In particular, because $R(b_{n-1},b_n)$, we have $R(a_{n-1},a_n')$.
	
	We already had $R(a_{n-1},a_n)$, and $a_{n-1}$ has only one $R$-successor, so $a_n = a_n'$. As we have $(a_n',b_n)\in\bis$, this shows that $(a_n,b_n)\in\bis$, thereby completing the induction step and therefore proving the claim.
	
	The remainder of the proof proceeds in the same way as the proof of Proposition~\ref{prop:GMLinu-reduction-back}.
\end{proof}

Thus, we have established that (Craig) $\CT/\FOT$-separation is RE-complete. It remains to show how the above proofs can be modified for the pairs $\CT/L'$, where $\LS$ is any counting-free modal logic in Table~\ref{table:results}.
%
%The key observation is that, in the proof of Proposition~\ref{prop:GMLinu-reduction-back}, we do not rely on the fact that an $\MLinu$-bisimulation must be universal and preserve constants. All we use is that it is an $\ML$-bisimulation that additionally preserves inverses. It therefore follows that, just like $\GMLibnbu/\MLibnbu$-separability is undecidable, so is $\GMLibnbu/L_1$-separability where $L_1$ is a non-graded modal logic with inverse.
%
Because $\GMLinu$ can be viewed as a fragment of $\CT$, it follows from the second item of Theorem~\ref{thm:RE} that (Craig) $\CT/\LS$-separation is undecidable for all logics $\LS$  with inverse. For modal logics without inverse, we can use the fact that $\CT$ can define inverse relations: we can add the conjunct $\forall x \, \forall y\, (R(x,y)\leftrightarrow T(y,x))$. Then, in models of the relevant formulas, any modal logic bisimulation will respect the relation $T$ in addition to the relation $R$, and therefore also respect inverses. This completes the proof of Theorem~\ref{thm:RE}.

%*****************

\section*{Proofs for Section~\ref{sec:gmlu}}

%In this section, we prove the following:

\complexGMLnu* 

\begin{proof}
First, we show the upper bounds. By Lemma~\ref{l:red}, it suffices to consider Craig separation. Let $\varphi_{1}$, $\varphi_{2}$ be $\GMLnu$-formulas, $\sigma = \sig(\varphi_{1}) \cup \sig(\varphi_{2})$ and $\varrho = \sig(\varphi_{1}) \cap \sig(\varphi_{2})$. Denote by $\mx$ the maximal number $k$ such that $\Diamond_R^{\ge k}$ occurs in $\varphi_{1}$ or $\varphi_{2}$, for some $R$. 
A \emph{type} (\emph{for $\varphi_{1}$ and $\varphi_{2}$}) is any set $\type \subseteq \sub(\varphi_{1},\varphi_{2})$, for which there is a pointed $\sigma$-structure $\Amf, a$ such that $\type=\type(\Amf,a)$, where 
\[
\type(\Amf,a) = \{ \chi\in \sub(\varphi_{1},\varphi_{2})\mid \Amf \models \chi(a)\}
\] 
is the \emph{type} \emph{realised by $a$ in $\Amf$}. Denote by $\Types$ the set of all such types. Clearly, $|\Types| \le 2^{|\varphi_{1}|+|\varphi_{2}|}$ and $\Types$ can be computed in time exponential in $|\varphi_{1}| + |\varphi_{2}|$.
Types $\type_1$ and $\type_2$ are called \emph{$\Du$-equivalent}
in case $\Du\chi\in \type_1$ iff $\Du\chi\in \type_2$, for all $\Du\chi\in \sub(\varphi,\psi)$. Note that the types realised in the same structure are $\Du$-equivalent to each other. 

We check the existence of $\sigma$-structures $\Amf_1,a_1$ and $\Amf,a_2$ such that $\Amf_i\models \varphi_i(a_i)$, $i=1,2$,  and $\Amf_1,a_1 \sim_{\MLnu(\varrho)} \Amf_2,a_2$ using a mosaic-elimination procedure, a generalisation of the standard type-elimination procedure for satisfiability-checking; see, e.g.,~\cite[p.~72]{GabEtAl03}.

A \emph{mosaic} (\emph{for $\varphi_{1}$ and $\varphi_{2}$}) is a pair $M=(M_{1},M_{2})$ of non-empty sets $M_{1},M_{2}$ of types such that the following hold: %, for any $\type,\type'\in M_{1}\cup M_{2}$,
\begin{itemize}
\item[--] $A\in \type$ iff $A\in \type'$, for all $A\in \varrho$ and $\type,\type'\in M_{1}\cup M_{2}$;
 
\item[--] $N_c\in \type$ iff $N_c\in \type'$, for all $c \in \varrho$ and $\type,\type'\in M_{1}\cup M_{2}$; 

\item[--] 
%{\color{red} if $N_c\in \type\in M_{i}$, then $N_c\not\in\type'$ for any $\type'\in M_{i}$ with $\type'\not=\type$, for all $c\in \sigma$;} 
if $\type, \type' \in M_{i}$, $i \in \{1,2\}$, $N_c\in \type \cap \type'$, then $\type = \type'$, for all $c\in \sigma$;
	
\item[--] all types in each $M_{i}$ are $\Du$-equivalent to each other.
\end{itemize}
%
%{\color{red} Observe that it follows from Points~2 and 3 that if $N_c\in \type\in M_{i}$, then $M_{i}=\{\type\}$, for $i = 1,2$ and all $c\in \varrho$.}
We say that $N_c$ \emph{occurs} in $M$ if $N_c \in \type\in M_{i}$, for some $\type$ and $i\in \{1,2\}$.
As follows from the second and third items, if $N_c$ occurs in  $M$, for some $c\in \varrho$, then $M_1$ and $M_2$ are singletons.

There are at most doubly exponentially many mosaics, and they can be computed in double exponential time. Let $f$ be the function defined by taking $f(M) = 1$ if $M$ contains a type with a nominal 
%\textcolor{blue}{$N_c$, for some $c \in\sigma$,} 
and $f(M) = \mx$ otherwise.

The definition of mosaic above is an abstraction from the following semantical notion. Given structures $\Amf_{1}$ and $\Amf_{2}$, the
\emph{mosaic defined by $a\in \dom(\Amf_{i})$ in $\Amf_{1},\Amf_{2}$}, $i \in \{1,2\}$, 
is the pair $M(\Amf_{1},\Amf_{2},a)=(M_{1}(\Amf_{1},a),M_{2}(\Amf_{2},a))$, where, for $j \in \{1,2\}$,
\[
M_{j}(\Amf_{j},a) = \{ \type(\Amf_{j},b) \mid b\in
\dom(\Amf_{j}), \ \Amf_{i},a \sim_{\MLnu(\varrho)}\Amf_{j},b\}.
\] 
Clearly, $M(\Amf_{1},\Amf_{2},a)$ is a mosaic and $M(\Amf_{1},\Amf_{2},a) = M(\Amf_{1},\Amf_{2},a')$ if $a \sim_{\MLnu(\varrho)} a'$, for any $a,a' \in \dom(\Amf_{1}) \cup \dom(\Amf_{2})$.

We aim to build the required structures $\Amf_{i}$, $i=1,2$, and bisimulation $\bis$ between them---if this is at all possible---from copies of suitable pairs $(\type,M)$ with a mosaic $M=(M_{1},M_{2})$ and a type $\type\in M_i$, for which we, roughly, aim to  set $(\type_{1},M) \bis (\type_{2},M)$. 
To satisfy the graded modalities $\Diamond_{R}^{\geq k}\chi$ in the $\Amf_{i}$, we require the following definition.
%take sufficiently many copies of suitable $(\type,M)$. 
%Slightly abusing notation, we write $\type\in M$ if $\type \in M_{1}\cup M_{2}$.
%
Let $M=(M_{1},M_{2})$ be a mosaic, $\mathcal{M}$ a set of mosaics, and some type in $M_1\cup M_2$ contain a formula $\Diamond_{R}^{\geq k}\chi$. We call $\mathcal{M}$ an \emph{$R$-witness for $M$} if there exists a relation $R^{M,\mathcal{M}} \subseteq (\Delta_{1} \times \Gamma_{1})\cup(\Delta_{2}\times \Gamma_{2})$ with  
\begin{align*}
& \Delta_{i} =\{ (\type,M) \mid \type\in M_{i}\},\\ 
& \Gamma_{i}=\{ (\type',M',j) \mid \type'\in M_{i}',\ M'=(M_{1}',M_{2}')\in \mathcal{M},\ 1 \le j \le f(M')\}
\end{align*}
satisfying the following for $\Delta=\Delta_{1} \cup \Delta_{2}$ and $\Gamma= \Gamma_{1}\cup \Gamma_{2}$:
\begin{description}
%\item[(nominals)] $f(M')=1$ iff\nz{iff} $M'$ contains a nominal;

\item[($\Du$-harmony)] all types in $M_{i}$ and $M_{i}'$, for any $M'=(M_{1}',M_{2}')$ in $\mathcal{M}$, are $\Du$-equivalent to each other, $i=1,2$;
	
\item[(witness)] for any $\Diamond^{\geq k}_R\chi \in \sub(\varphi_1,\varphi_2)$ and $(\type,M)\in \Delta$, we have 
$$
\mbox{}\hspace*{-7mm}\Diamond^{\geq k}_R\chi \in \type \text{ iff } \big| \{ (\type',M',j) \in \Gamma \mid \chi\in \type', (\type,M) R^{M,\mathcal{M}} (\type',M',j) \}\big| \ge k;
$$
%
%$\Diamond^{\geq k}_R \chi \in \type$ iff $(\type,M)$ has $\ge k$ $R^{M,\mathcal{M}}$-successors $(\type',M',j)$ with $\chi\in \type'$; 
	
\item[(bisim)] if $R\in \varrho$, $M'\in \mathcal{M}$, and $(\type_{0},M)\in \Delta$ has an $R^{M,\mathcal{M}}$-successor $(\type_{0}',M',j)\in \Gamma$, then each $(\type_{1},M)\in \Delta$ has an $R^{M,\mathcal{M}}$-successor of the form $(\type_{1}',M',j')\in \Gamma$. 
\end{description}
The intuition behind conditions {\bf ($\Du$-harmony)} and {\bf (witness)} should be clear, and {\bf (bisim)} is needed to ensure that if, for any fixed $M'$, all $(\type',M',j)\in \Gamma$ are mutually $\MLnu(\varrho)$-bisimilar, then all $(\type,M)\in \Delta$ are also $\MLnu(\varrho)$-bisimilar to each other. 
It should also be clear that if $\Amf_1, a_1 \sim_{\MLnu(\varrho)}\Amf_2,a_2$, then every mosaic $M(\Amf_{1},\Amf_{2},a)$, for $a \in \dom(\Amf_i)$, $i=1,2$, has an $R$-witness for any relevant $R$, namely, the set $\mathcal{M}$ of all mosaics $M(\Amf_{1},\Amf_{2},b)$ such that $a' R^{\Amf_j} b$, for some $j \in \{1,2\}$ and $a' \in \dom(\Amf_j)$ with $M(\Amf_{1},\Amf_{2},a) = M(\Amf_{1},\Amf_{2},a')$.
Observe also that every $R$-witness $\mathcal{M}$ for $M$ contains a subset $\mathcal{M}'\subseteq \mathcal{M}$ such that $\mathcal{M}'$ is an $R$-witness for $M$ and $|\mathcal{M}'|\leq \mx 2^{|\varphi_{1}|+|\varphi_{2}|}$. 
%and\nz{remove} the maximal value of the copying function $f$ for $\mathcal{M}'$ not exceeding $\mx (|\varphi_{1}|+|\varphi_{2}|)$.
%
So, from now on, we assume all witnesses to be of size $\leq \mx 2^{|\varphi_{1}|+|\varphi_{2}|}$.

We can now define the mosaic elimination procedure for a set $\mathfrak{S}$ of mosaics. Call $M=(M_{1},M_{2})\in \mathfrak{S}$ \emph{bad} in $\mathfrak{S}$ if at least one of the following conditions is \emph{not} satisfied for $i = 1,2$:
\begin{itemize}
\item[--] if $\Diamond_{R}^{\geq k}\chi\in \type\in M_{i}$, then there is an $R$-witness $\mathcal{M}\subseteq \mathfrak{S}$ for $M$;
	
\item[--] if $\Du\chi \in \type\in M_{i}$, then there are $(M_{1}',M_{2}')\in \mathfrak{S}$ and $\type'\in M'_{i}$ such that $\chi\in \type' \in M_{i}'$ and $\type,\type'$ are $\Du$-equivalent.
\end{itemize}

\begin{lemma}\label{lem:decidebad1} 
Given a set $\mathfrak{S}$ of mosaics and $M\in \mathfrak{S}$, it can be decided in double exponential time in $|\varphi_{1}|+|\varphi_{2}|$ whether $M$ is bad in $\mathfrak{S}$.
\end{lemma}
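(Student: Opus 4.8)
The plan is to go through the two clauses of the definition of badness in turn and bound the work needed to test each, keeping track of all sizes so that everything multiplies out to double exponential in $n := |\varphi_{1}|+|\varphi_{2}|$. First I would collect the bounds already implicit in the preceding discussion: a type is a subset of $\sub(\varphi_1,\varphi_2)$, so there are at most $2^{n}$ types; a mosaic is a pair of sets of types, so there are at most $2^{2\cdot 2^{n}}$ mosaics and hence $|\mathfrak{S}|\le 2^{2^{O(n)}}$; and $\mx\le 2^{n}$ because $\mx$ occurs written in binary inside $\varphi_1$ or $\varphi_2$, so $f(M)\le\mx\le 2^{n}$ for every mosaic $M$.

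The second clause of badness (the one about $\Du\chi$) is handled by a direct scan. For each $i\in\{1,2\}$, each $\type\in M_{i}$ and each $\Du\chi\in\type$ (altogether at most $2\cdot 2^{n}\cdot|\sub(\varphi_1,\varphi_2)|$ instances, i.e.\ singly exponentially many), I would run through all mosaics $(M_1',M_2')\in\mathfrak{S}$ and all $\type'\in M_i'$ testing whether $\chi\in\type'$ and $\type,\type'$ are $\Du$-equivalent, the latter test being polynomial in $n$. Since each mosaic has at most $2^{n}$ types in $M_i'$ and $|\mathfrak{S}|\le 2^{2^{O(n)}}$, every instance costs $2^{2^{O(n)}}$, and so does the whole clause.

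The first clause is the substantive one: for each binary $R$ occurring in a graded diamond inside some type of $M_1\cup M_2$ (only polynomially many such $R$) I must decide whether there is an $R$-witness $\mathcal{M}\subseteq\mathfrak{S}$ for $M$. Here I would invoke the observation established just before the lemma, that every $R$-witness contains an $R$-witness of size $\le\mx\,2^{n}$; since a subset of an $R$-witness contained in $\mathfrak{S}$ is again contained in $\mathfrak{S}$, it suffices to search over $\mathcal{M}\subseteq\mathfrak{S}$ with $|\mathcal{M}|\le\mx\,2^{n}$, and there are at most $|\mathfrak{S}|^{\,\mx\,2^{n}+1}= 2^{2^{O(n)}}$ of these. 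For each such $\mathcal{M}$, the sets $\Delta=\Delta_1\cup\Delta_2$ and $\Gamma=\Gamma_1\cup\Gamma_2$ from the definition of $R$-witness satisfy $|\Delta|\le 2\cdot 2^{n}$ and $|\Gamma|\le|\mathcal{M}|\cdot 2^{n}\cdot\mx\le 2^{O(n)}$, so there are at most $2^{|\Delta|\cdot|\Gamma|}=2^{2^{O(n)}}$ candidate relations $R^{M,\mathcal{M}}\subseteq(\Delta_1\times\Gamma_1)\cup(\Delta_2\times\Gamma_2)$; for a fixed candidate, conditions $(\Du$-harmony$)$, $($witness$)$ and $($bisim$)$ are local Boolean statements over $\Delta$, $\Gamma$ and $\mathcal{M}$ and can be verified directly in time polynomial in $|\Delta|+|\Gamma|+|\mathcal{M}|+n$, i.e.\ in $2^{O(n)}$. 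Multiplying, the existence of an $R$-witness is decided in $2^{2^{O(n)}}\cdot 2^{2^{O(n)}}\cdot 2^{O(n)}=2^{2^{O(n)}}$ steps.

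Putting the clauses together, $M$ is bad iff one of the (at most singly exponentially many) instances above fails, and each is tested within the stated doubly exponential bounds, so the whole decision runs in double exponential time in $n$. The part needing genuine care is not the algorithm — brute-force enumeration over $\mathcal{M}$ and over $R^{M,\mathcal{M}}$ makes the delicate coupling between $\Delta$-nodes in $($bisim$)$ trivial to check — but the bookkeeping of the size estimates: because the numerical parameters are encoded in binary, $\mx$, and hence $f(M)$ and $|\Gamma|$, may be as large as $2^{n}$, so one must verify that the nested exponentials collapse to $2^{2^{O(n)}}$ rather than climbing to a third exponential level; everything else is routine enumeration.
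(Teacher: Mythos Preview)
Your argument is correct and follows the same brute-force enumeration strategy as the paper: enumerate the (doubly exponentially many) candidate witness sets $\mathcal{M}\subseteq\mathfrak{S}$ of bounded size together with the candidate relations $R^{M,\mathcal{M}}$, and check the three conditions directly. The paper's proof is a two-sentence version of exactly this; your additional bookkeeping on the sizes of $\Delta$, $\Gamma$, $\mx$ and the $\Du$-clause is sound and fleshes out what the paper leaves implicit.
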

\begin{proof}
There are at most doubly exponentially-many candidate sets $\mathcal{M}\subseteq  \mathfrak{S}$ with $|\mathcal{M}|\leq \mx 2^{|\varphi_{1}|+|\varphi_{2}|}$ and binary relations $R^{M,\mathcal{M}}\subseteq (\Delta_{1} \times \Gamma_{1})\cup(\Delta_{2}\times \Gamma_{2})$. We simply check that none of the pairs $\mathcal{M}$, $R^{M,\mathcal{M}}$ satisfies {\bf ($\Du$-harmony)}, {\bf (witness)} and {\bf (bisim)}.
\end{proof}

Let $\mathfrak{S}_{0}$ be the set of all mosaics. Let $h$ be a function that, for each pair $c \in \sigma$ and $i \in \{1,2\}$, selects one mosaic $h(N_c,i) = (M_1,M_2)$ such that
\begin{itemize}
\item[--] $N_c \in \type$, for some $\type \in M_i$;

\item[--] if $N_{c'} \in \type'$, for some $\type' \in M_j$, $j \in \{1,2\}$ and $c'\in \sigma$, then $h(N_c,i) = h(N_{c'},j)$.

%if there is $\type' \in M_j$ with $N_{c'} \in \type'$,  then $h(N_c,i) = h(N_{c'},j)$, for all $c'\in \sigma$ and $j=1,2$.
\end{itemize}
By the definition of mosaics, $h(N_c,1)=h(N_c,2)$ for all $c\in \varrho$.

Given such $h$, let $\mathfrak{S}_{0}^{h} \subseteq \mathfrak{S}_{0}$ comprise all mosaics of the form $h(N_c,i)$ and those that do not have occurrences of nominals. Starting from $\mathfrak{S}_{0}^{h}$, we construct inductively sets $\mathfrak{S}_{n+1}^{h}$, $n \ge 0$, by looking through the mosaics in $\mathfrak{S}_{n}^{h}$ and eliminating the bad ones. We stop when $\mathfrak{S}_{n+1}^{h} = \mathfrak{S}_{n}^{h}$ and set $\mathfrak{S}_{\ast}^{h}= \mathfrak{S}_{n}^{h}$. By Lemma~\ref{lem:decidebad1}, $\mathfrak{S}_{\ast}^{h}$ can be constructed in double exponential time.

\mosaicsGMLnu*

\begin{proof}
$(i) \Rightarrow (ii)$ 
%Suppose that $\Amf_{i}\models \varphi_{i}(a_{i})$, for $i=1,2$, and $\Amf_{1},a_{1} \sim_{\MLnu(\varrho)} \Amf_{2},a_{2}$. 
Let $\mathfrak{S}$ be the set of mosaics defined by the elements of the given $\Amf_{1}$ and $\Amf_{2}$, and let $h$ pick the nominal mosaics.  Then $\mathfrak{S}\subseteq \mathfrak{S}_{0}^{h}$ does not contain any bad mosaics, and so $\mathfrak{S}\subseteq \mathfrak{S}_{\ast}^{h}$. Moreover, $\varphi_{i}\in \type(\Amf_i,a_{i})\in M_{i}(\Amf_{i},a_{i})$, for $i=1,2$, as required.
	
$(ii) \Rightarrow (i)$ For any $M = (M_1,M_2)\in \mathfrak{S}_{\ast}^{h}$ and $R \in \sigma$ such that $\Diamond_{R}^{\geq k}\chi\in \type\in M_i$, for some $\Diamond_{R}^{\geq k}\chi$, $\type$ and $i \in \{1,2\}$, we take an $R$-witness $\mathcal{M}_{R,M}\subseteq \mathfrak{S}_{\ast}^{h}$ for $M$ with the corresponding relation $R^{M,\mathcal{M}_{R,M}}$ and define the required structures $\Amf_{i}$, $i=1,2$, as follows. The domain $\dom(\Amf_{i})$ of $\Amf_{i}$ consists of all words
\begin{equation}\label{eq1}
s=(\type_{0},M_{0},i_{0})R_{1}(\type_{1},M_{1},i_{1}) R_2 \dots R_{n}(\type_{n},M_{n},i_{n}) 
\end{equation}
such that $\omega > n\geq 0$, $M_{j}=(M_{j,1},M_{j,2})\in \mathfrak{S}_{\ast}^{h}$, for $j\leq n$, 
\begin{itemize}
\item[--] $\type_{j}\in M_{j,i}$ and $1 \le i_j \le f(M_{j,i})$, for all $j\leq n$; 

\item[--] $M_{j+1}\in \mathcal{M}_{R_{j+1},M_{j}}$ and no nominal occurs in $M_{j+1}$;
		
\item[--] all types in $M_{k}^{\ast}$ and $M_{0,k}$ are $\Du$-equivalent, $k=1,2$.
\end{itemize}
For a propositional variable $A$ and $s$ of the form~\eqref{eq1}, we set $s\in A^{\Amf_{i}}$ iff $A\in \type_{n}$. For a nominal $N_c$, we set $c^{\Amf_{i}} = s$ iff $n=0$, $N_c\in \type_{0}$ and $h(N_c,i) = M_0$. By the definitions of mosaics and functions $h$ and $f$, such an $s$ is unique and $f(M_0) = i_0 = 1$. 
Finally, for a binary $R \in \sigma$ and $s$ given by \eqref{eq1}, we set:
\begin{itemize}
\item[--] $(s',s)\in R^{\Amf_{i}}$  if  $s'$ is obtained by dropping $R_{n}(\type_{n},M_{n},i_{n})$ from $s$, $R=R_{n}$ and
$
(\type_{n-1},M_{n-1})R_{n}^{M_{n-1},\mathcal{M}_{R_n,M_{n-1}}}
(\type_{n},M_{n},i_{n});
$

\item[--] $(s,(\type,M,j))\in R^{\Amf_{i}}$ if $(\type,M,j)\in \dom(\Amf_{i})$, a nominal occurs in $M$ and $(\type_{n},M_{n})R^{M_{n},\mathcal{M}_{R,M_{n}}} (\type,M,j)$.
\end{itemize}
By induction on the construction of formulas we show that, for all $\chi\in \sub(\varphi_1,\varphi_2)$ and all $s \in \dom(\Amf_i)$, we have $\Amf_{i}\models \chi(s)$ iff $\chi \in \type_{n}$. The basis of induction follows immediately from the definition and the Boolean cases in the induction step are trivial. Consider $\Diamond^{\ge k}_R \chi$.

Let $s$ be of the form~\eqref{eq1}. If $\Amf_{i}\models \Diamond^{\ge k}_R \chi(s)$, then there exist distinct $s_1,\dots,s_k \in \dom(\Amf_i)$ with $s R^{\Amf_i} s_l$ and $\Amf_{i}\models \chi(s_l)$, for all $l = 1,\dots,k$. Let $\type_l$ be the type (first component) of the last triple in $s_l$.  By IH, $\chi \in \type_l$, and so $\Diamond^{\ge k}_R\chi \in \type_{n}$ by {\bf (witness)}. Conversely, suppose $\Diamond^{\ge k}_R\chi \in \type_{n}$ in $s$ of the form~\eqref{eq1}. As $s$ is not bad, $M_n$ has an $R$-witness, and so, by condition {\bf (witness)} and the definition of $R^{\Amf_i}$, there are distinct $s_1,\dots,s_k \in \dom(\Amf_i)$ with $s R^{\Amf_i} s_l$ and $\chi \in \type(s_l)$, where $\type_l$ is the type of the last triple in $s_l$ and $l = 1,\dots,k$. By IH, $\Amf_{i}\models \chi(s_l)$ for all such $l$, and so $\Amf_{i}\models \Diamond^{\ge k}_R \chi(s)$, as required.

Finally, the case $\Du\chi$ follows from IH, {\bf ($\Du$-harmony)} and the second item in the definition of bad mosaics. As a consequence, we obtain $\Amf_i\models \varphi_i(\type_i^{\ast},M^{\ast},1)$, for $i=1,2$. 
 	
Define a relation $\bis\subseteq \dom(\Amf_{1})\times\dom(\Amf_{2})$ by taking $s\bis s'$, for $s$ given by \eqref{eq1} and 
\begin{equation}\label{eq2}
s'=(\type_{0}',M_{0}',i_{0}')R_{1}'(\type_{1}',M_{1}',i_{1}') R'_2 \dots R_{m}'(\type_{m}',M_{m}',i_{m}'),
\end{equation}
if $n=m$, $R_{i}=R_{i}'$ and $M_{i}=M_{i}'$, for all $i\leq n$. It follows that we have $(\type_{1}^{\ast},M^{\ast},1) \bis (\type_{2}^{\ast},M^{\ast},1)$. Thus, it remains to show that $\bis$ is a $\MLnu(\varrho)$-bisimulation between $\Amf_{1}$ and $\Amf_{2}$. 
As both components of mosaics are non-empty, $\bis$ is global. That $c^{\Amf_1} \bis c^{\Amf_2}$ follows from the definition of $c^{\Amf_i}$ and the fact that $h(N_c,1)=h(N_c,2)$ for all $c\in \varrho$. The first item in the definition of mosaics  ensures $(i)$ in the definition of bisimulation. Finally, condition {\bf (bisim)} ensures items $(ii)$ and $(iii)$ in that definition. 
\end{proof}

This lemma in tandem with Lemma~\ref{lem:decidebad1} give a $2\ExpTime$ upper bound for deciding (Craig) $\GMLnu/\MLnu$-separation. We can easily modify the construction to establish the same upper bound for $\GMLnu/\MLn$, $\GMLnu/\MLu$, $\GMLu/\MLu$, and $\GMLu/\ML$:  
\begin{itemize}
\item[--] For $\GMLnu/\MLn$, we require that $\Amf_1,a_1 \sim_{\MLn(\varrho)} \Amf_2,a_2$ in Lemma~\ref{lem:main1}~$(i)$. To this end we admit mosaics $(M_{1},M_{2})$ in which $M_{1}$ or $M_{2}$ can be empty (but not both). Observe that now $h(N_c,1)=h(N_c,2)$ does not necessarily hold for all $c\in \varrho$ since the second component of $h(N_c,1)$ and the first component of $h(N_{c},2)$ can be empty, reflecting the fact that $\Amf_{1},c^{\Amf_{1}}$ is not necessarily $\MLn(\varrho)$-bisimilar to $\Amf_{2},c^{\Amf_{2}}$ if $c\in \varrho$ and  $\Amf_1,a_1 \sim_{\MLn(\varrho)} \Amf_2,a_2$.

\item[--] For $\GMLnu/\MLu$, we require that $\Amf_1,a_1 \sim_{\MLu(\varrho)} \Amf_2,a_2$ in Lemma~\ref{lem:main1}~$(i)$. To this end, given $\varphi_{1}$ and $\varphi_{2}$, we replace all $N_{c}$ in $\varphi_{2}$, for $c\in \sig(\varphi_{1})\cap \sig(\varphi_{2})$, by fresh $N_{c'}$ and apply the algorithm above to $\varphi_{1}$ and the resulting formula $\varphi_{2}'$.

\item[--] For $\GMLu/\MLu$, we omit everything related to nominals.

\item[--] For $\GMLu/\ML$, we omit everything related to nominals and allow mosaics with one empty component.
\end{itemize}

A somewhat different argument is required to establish the \textsc{co}\NExpTime upper bounds of Theorem~\ref{thm:gmlu} for $\GMLn/\MLn$- and $\GML/\ML$-separation in the $\Du$-free languages.   
First, we again ignore everything related to $\Du$ and admit mosaics $M = (M_{1},M_{2})$ with $M_{1}=\emptyset$ or $M_{2}=\emptyset$ (but not both). Second, we observe that, without $\Du$, we can significantly simplify the structures $\Amf_{i}$, $i=1,2$, from the proof of Lemma~\ref{lem:main1} $(ii) \Rightarrow (i)$. Let $m = \max \{\md(\varphi_1), \md(\varphi_2)\}$, where $\md(\varphi_i)$ is the \emph{modal depth} of $\varphi_i$, i.e., the maximal number of nested modal operators in $\varphi_i$. Let ${\Amf_{i}}_{|m}$ be the restrictions of the $\Amf_{i}$ to words $s$ of the form \eqref{eq1}  such that 
\begin{itemize}
\item[--] $n\leq m$ and no pair from $R^{\Amf_{i}}$ starts at $s$ of length $m$;

\item[--] in the first triple $(\type_0,M_0,i_0)$ of $s$, either $\type_0 \in \{\type^*_1, \type^*_2\}$ and $M_0 = M^*$ or $h(N_c,i) = M_0$, for some $c \in \sigma$.
\end{itemize}
Thus, the size of the ${\Amf_{i}}_{|m}$ is (singly) exponential in $|\varphi_{1}|+|\varphi_{2}|$. 
As well-known from modal logic~\cite{DBLP:books/daglib/0030819,DBLP:books/cu/BlackburnRV01}, ${\Amf_{i}}_{|m}\models \varphi_{i}(\type_{i}^{\ast},M^{\ast})$, for $i=1,2$. Moreover,
the restriction of $\bis$ to ${\Amf_{1}}_{|m}\times {\Amf_{2}}_{|m}$ is still an $\MLn(\varrho)$-bisimulation, and so ${\Amf_{1}}_{|m},(\type_{1}^{\ast},M^{\ast},1) \sim_{\MLn(\varrho)}{\Amf_{2}}_{|m},(\type_{2}^{\ast},M^{\ast},1)$. The existence of such structures ${\Amf_{i}}_{|m}$ can clearly be checked in \NExpTime. 
In the case of $\GML/\ML$-separation, we ignore everything related to nominals. 

\medskip

We now prove the matching 2$\ExpTime$ and \textsc{co}$\NExpTime$ lower bounds. By Lemma~\ref{l:red}, it suffices to do this for unary Craig separation. 
By Lemma~\ref{criterion}, we need to show 2$\ExpTime$-hardness of the following problem for $\LS\in \{\MLu,\ML\}$: given $\GMLu$-formulas $\varphi$ and $\psi$, decide  whether there exist pointed structures $\Amf,a$ and $\Bmf,b$ such that $\Amf\models \varphi(a)$, $\Bmf\models \psi(b)$, and $\Amf,a \sim_{\LS(\varrho)} \Bmf,b$, where $\varrho$ is the set of all binary predicates $R\in \sig(\varphi)\cup \sig(\psi)$ and all propositional variables $A\in \sig(\varphi)\cap\sig(\psi)$. This also implies 2$\ExpTime$-hardness for the relevant logics with nominals. 
To establish the \textsc{co}$\NExpTime$ lower bound, it suffices to show $\NExpTime$-hardness of the same problem for $\GML$-formulas $\varphi$, $\psi$ and $\LS = \ML$. 
%decide whether there exist $\Amf,a$ and $\Bmf,b$ such that $\Amf\models \varphi(a)$, $\Bmf\models \psi(b)$, and $\Amf,a \sim_{\ML(\varrho)} \Bmf,b$, where $\varrho$ is the set of all binary $R\in \sig(\varphi)\cup \sig(\psi)$ and all propositional variables $A\in \sig(\varphi)\cap\sig(\psi)$.

Both proofs are similar to the lower bound proofs for Craig interpolant existence in description logics with nominals and role inclusions given in~\cite[Sections~7, 9]{DBLP:journals/tocl/ArtaleJMOW23}. The underlying idea is to construct formulas $\varphi_n$ and $\psi_n$, for $n < \omega$, in the required language and of polynomial size in $n$ such that if $\Amf\models \varphi_n(a)$, $\Bmf\models \psi_n(b)$, and $\Amf,a \sim_{\LS(\varrho)} \Bmf,b$, then there is a set $E\subseteq \dom(\Bmf)$
of size $2^{n}$ with the following properties:
\begin{itemize}
\item[--] all $e\in E$ are $\LS(\varrho)$-bisimilar, and
	
\item[--] the $e\in E$ satisfy all pairwise distinct combinations of variables $A_{0},\ldots,A_{n-1}$ (which cannot be in $\varrho$).
\end{itemize}
We show how to do this for $L=\GML$ and $\LS\in \{\MLu,\ML\}$. The formula 
$\varphi_n = \Diamond_{R}^{=1}\dots \Diamond_{R}^{=1}\top$ with $n$-many $\Diamond_{R}^{=1}$ is such that $\Amf \models \varphi_n(a)$ iff $\Amf$ has a unique $R$-chain of length $n$ starting from $a$. The formula $\psi_n$ is defined as the conjunction of
\begin{itemize}
\item[--] $\Box_{R}^{i}\Diamond_{R}^{=2}\top$, for $0\leq i <n$;

\item[--] $\Box_{R}^{i}(\Diamond_{R}A_{i} \wedge \Diamond_{R}\neg A_{i})$, for $0\leq i <n$;

\item[--] $\Box_{R}^{i}\big((A_{j} \rightarrow \Box_{R}A_{j}) \wedge (\neg A_{j} \rightarrow \Box_{R}\neg A_{j}) \big)$, for $0 \leq j < i <n$.
\end{itemize} 
We have $\Bmf \models \psi_n(b)$ iff the $R$-chains of depth $\le n$ starting from $b$ comprise a binary $R$-tree of depth $n$ with leaves satisfying counter values from $0$ to $2^{n}-1$ represented by the variables $A_{0},\dots,A_{n-1}$. So, if $\Amf\models \varphi_n(a)$, $\Bmf\models \psi_n(b)$, and $\Amf,a\sim_{\ML(\varrho)} \Bmf,b$, then there is a unique $a'\in \dom(\Amf)$ reachable via an $R$-chain of length $n$ from $a$, and there are leaves $E = \{e_{0},\dots,e_{2^{n}-1}\}$ of the binary tree with root $b$ in $\Bmf$ such that $e_{i}$ satisfies the binary encoding of $i$ by means of $A_{0},\dots,A_{n-1}$, and so 
$\Amf,{a'} \sim_{\LS(\varrho)}  \Bmf,e_{0} \sim_{\LS(\varrho)}\cdots \sim_{\LS(\varrho)}\Bmf,e_{2^{n}-1}$ by the definition of $\LS(\varrho)$-bisimulation and $R \in \varrho$.

For the 2\ExpTime-lower bound, the nodes in $E$ can then be used to encode
the acceptance problem for exponential space-bounded alternating Turing machines (ATMs). Roughly, the idea is to start from each $e\in E$ a simulation of the computation tree of the ATM for an input word of length $n$. On the shared symbols in $\varrho$, these trees are essentially isomorphic. Using the variables $A_{0},\dots,A_{n-1}$ as bits, one can create, for each $m<2^{n}$ on some $e\in E$, 
a $2^{n}$-yardstick starting at $m$ and coordinating the labels of the computation tree at positions $m+l 2^{n}$, $l\geq 0$. This is also achieved using counters encoded by variables that are not in $\varrho$. As one has to look `exponentially deep' into the computation tree, the universal modality $\Du$ of $\GMLu$ is required when writing up the encoding; see~\cite{DBLP:journals/tocl/ArtaleJMOW23} for details.

For the $\NExpTime$-lower bound, the nodes in $E$ can be used to encode
a $2^{n}\times 2^{n}$-torus tiling problem. In this case, assume for simplicity that $|E| = 2^{2n}$. Then, for each grid cell $(i,j)$ with $i,j<2^{n}$, we use some $e_{i,j}\in E$ satisfying the encoding of $i$ via $A_{0},\dots,A_{n-1}$ and that of $j$ via $A_{n},\dots,A_{2n-1}$.
Using successors of the nodes in $E$ (which coincide on $\varrho$, for each $e\in E$),
one can ensure the horizontal and vertical matching conditions of the tiling problem.  
\end{proof}

\section*{Proofs for Section~\ref{inverse-sep}}

\mosaicGMLiu*
\begin{proof}
We start by establishing the upper bound and focusing on Craig separation in view of Lemma~\ref{l:red}. Consider first $\GMLiu/\MLiu$. Let $\varphi_{1}$ and $\varphi_{2}$ be  some $\GMLiu$-formulas, $\sigma = \sig(\varphi_1) \cup \sig(\varphi_2)$ and $\varrho = \sig(\varphi_{1}) \cap \sig(\varphi_{2})$. We employ the notion of type introduced in Section~\ref{sec:gmlu} and adapted in the obvious way for $\GMLiu$. We need to check whether there are pointed $\sigma$-structures $\Amf_{1},a_{1}$ and $\Amf_{2},a_{2}$, and a $\MLiu(\varrho)$-bisimulation $\bis$ between $\Amf_{1}$ and $\Amf_{2}$ such that $\Amf_{1}\models\varphi_{1}(a_{1})$, $\Amf_{2}\models \varphi_{2}(a_{2})$ and $a_{1}\bis a_{2}$.

We construct $\Amf_{1},\Amf_{2}$ and $\bis$ (if they exist) from mosaics, which are appropriately modified to deal with inverse modalities. To this end, we supply each node with some information about the types of its `parents and children'\!. Following~\cite[Sec.\ 8.3]{Pratt23book}, we call the result a star-type. To give a precise definition, we require some notations. 

The \emph{neighbourhood} $N(\Amf,a)$ of a node $a\in \dom(\Amf)$ in a $\sigma$-structure $\Amf$ is the set
$$
N(\Amf,a) = \{ b\in \dom(\Amf)\mid (a,b)\in S^{\Amf},\ S\in \{R,R^{-}\},\ R\in \sigma\}.
$$
A \emph{successor type set} $\frak{s}$ (\emph{for $\varphi_{1}$ and $\varphi_{2}$}) is a set containing, for each type $\type$ and each $S$, exactly one expression of the form $\Diamond_{S}^{=n}\type$, for $0\leq n\leq \mx$, or $\Diamond_{S}^{>\mx}\type$. Let $a\in \dom(\Amf)$, $D\subseteq \dom(\Amf)$, and $n(a,S,\type, D) = |\{b\in D \mid (a,b)\in S^{\Amf},  \type=\type(\Amf,b)\}|$. The successor type set $\frak{s}(\Amf,a,D)$ defined by $a$ and $D$ contains 
\begin{itemize}
\item[--] $\Diamond_{S}^{=n}\type\in \frak{s}(\Amf,a,D)$ if $n = n(a,S,\type,D) \leq \mx$, and 

\item[--] $\Diamond_{S}^{> \mx}\type\in \mathfrak{s}(\Amf,a,D)$ if $n(a,S,\type,D) > \mx$.
\end{itemize}  
Observe that $\frak{s}(\Amf,a,D)=\frak{s}(\Amf,a,D\cap N(\Amf,a))$, so only nodes in the neighbourhood of $a$ are relevant for the definition of $\frak{s}(\Amf,a,D)$.
Occasionally, we are only interested in the existence of an $S$-successor in $D$ with a type $\type$, writing $\Diamond_{S}^{>0}\type\in \frak{s}$ as a shorthand for `$\Diamond_{S}^{=n}\type\in \frak{s}$ 
for some $n$, $1\leq n \leq \mx$, or $\Diamond_{S}^{> \mx}\type\in \mathfrak{s}$'\!. We omit $\Amf$ from $\frak{s}(\Amf,a,D)$, $\type(\Amf,a)$, $N(\Amf,a)$, etc., if understood. We also omit elements of the form $\Diamond^{=0}_S \type$ from successor type sets.

%\begin{example}\em
%In Example~\ref{ex1sec6}, we have, for $n > 1$ and $m > 0$,
%%
%\begin{align*}
%& \frak{s}(e_0,\emptyset) = \{ \Diamond^{=0}_S \type \mid \type \in T, S \in \{R,R^-\}\},\\
%& \frak{s}(e_{1l},\{e_0\}) = \{ \Diamond^{=1}_{R^-} \type(e_0),\Diamond^{=0}_S \type \mid \text{other $S$ or $\type$}\},\ l =1,2,\\
%& \frak{s}(e_{n1},\{e_{(n-1)1},e_{(n-1)2}\}) = \{ \Diamond^{=2}_{R^-} \type(e_{(n-1)1}),\Diamond^{=0}_S \type \mid \text{other $S$ or $\type$}\},\\
%& \frak{s}(e_{m1},\{e_{(m+1)1},e_{(m+1)2}\}) = \{ \Diamond^{=2}_{R} \type(e_{(m+1)1}),\Diamond^{=0}_S \type \mid \text{other $S$ or $\type$}\}.
%\end{align*}
%%
%In what follows, we omit elements $\Diamond^{=0}_S \type$ from successor type sets.
%\end{example}

We define a \emph{star-type} as a triple of the form $\ftype = (\type,\frak{s}_{p},\frak{s}_{c})$, where $\type$ is a type,  $\frak{s}_{p}$ a \emph{parent} successor type set, and $\frak{s}_{c}$ a \emph{child} successor type set. To be coherent with the semantics of graded modalities, $\ftype$ should satisfy some natural conditions defined below. 
%
%We next spell out when, for $\ftype=(\type,\frak{s}_{p},\frak{s}_{c})$, adding up the number of witnesses for the graded modalities in $\frak{s}_{p}$ and $\frak{s}_{c}$ ensures that the graded modalities in $\type$ are satisfied. 
%
The \emph{$S$-profile $P_{S}(\frak{s})$ of $\frak{s}\in \{\frak{s}_{p},\frak{s}_{c}\}$} contains, for any $\chi\in \sub(\varphi_1,\varphi_2)$,
\begin{itemize}
\item[--] $\Diamond_{S}^{=n}\chi$ if $n$ is the sum over all $k$ with $\Diamond_{S}^{=k}\type' \in \frak{s}$ and $\chi\in \type'$;

\item[--] $\Diamond_{S}^{>\mx}\chi$ if this sum exceeds $\mx$ or $\Diamond_{S}^{>\mx} t'\in \frak{s}$, for some $\type' \ni \chi$.
\end{itemize}
The \emph{$S$-profile $P_{S}(\frak{s}_{p},\frak{s}_{c})$ of $\frak{s}_{p}$ and $\frak{s}_{c}$} contains, for all formulas $\chi\in \sub(\varphi_1,\varphi_2)$,
\begin{itemize}
\item[--] $\Diamond_{S}^{=n}\chi$ if $n=n_{1}+n_{2} \leq \mx$ for $\Diamond_{S}^{=n_{1}}\chi$ in the $S$-profile of $\frak{s}_{p}$ and $\Diamond_{S}^{=n_{2}}\chi$ in the $S$-profile of $\frak{s}_{c}$;

\item[--] $\Diamond_{S}^{>\mx}\chi$ otherwise.
\end{itemize}
A star-type $\ftype=(\type,\frak{s}_{p},\frak{s}_{c})$ is called \emph{coherent} if, for all formulas $\Diamond_{S}^{\geq n}\chi\in \sub(\varphi_1,\varphi_2)$, we have $\Diamond_{S}^{\geq n}\chi\in \type$ iff $P_{S}(\frak{s}_{p},\frak{s}_{c})$ contains either $\Diamond_{S}^{=n'}\chi$, for
some $n'\geq n$, or $\Diamond_{S}^{>\mx}\chi$. From now on, all star-types we consider are assumed to be coherent. 

The \emph{star-type realised by $a\in \dom(\Amf)$ and $D\subseteq \dom(\Amf)$} is defined as
$$
\ftype(a,D) = \big( \type(a),\frak{s}(a,D),
\frak{s}(a,\dom(\Amf)\setminus D) \big).
$$
Instead of $\dom(\Amf)\setminus D$, we can take $N(a) \setminus D$.

%; see the picture below, where the neighbourhood of $a$ contains six nodes: those in $D$ determine the parent successor type set of the star-type and the remaining ones determine the child successor type set.\\[-15pt]
%
%\centerline{\includegraphics[scale=0.2]{../star}}

%Clearly, $\ftype(\Amf,a,D)$ is coherent. 
%\textcolor{blue}{Intuitively, $\ftype(\Amf,a,D)$ regards the elements of $N(\Amf,a) \cap D$ as parents of $a$, and those of $N(\Amf,a) \cap (\dom(\Amf)\setminus D)$ as children of $a$.}

%\vspace*{-7mm}
%
%\begin{example}\em\label{ex2sec6}
%In Example~\ref{ex2sec6}, $a_0$ gives rise to two star-types:
%%
%\begin{align*}
%&\ftype(a_0,\emptyset) = (\type(a_0), \frak{s}(a_0,\emptyset),\frak{s}(a_0,\{a_1\}),\\
%&\ftype(a_0,\{a_1\}) = (\type(a_0), \frak{s}(a_0,\{a_1\}),\frak{s}(a_0,\emptyset)).
%\end{align*}
%%
%We depict them as shown below, placing the parent/child successor type set on the left/right-hand side of the centre $\bullet$ of the star-type:\\[-18mm]
%%
%\centerline{\includegraphics[scale=0.25]{../forkA1-0}}\\[-17mm]
%%
%Each $a_n$, $n>0$, gives rise to four different star-types:\nz{only those we use?} 
%\\
%%
%\centerline{\includegraphics[scale=0.5]{../forkA1-1}}\\[-3pt]
%%
%(remember that $\type(a_{2n})=\type(a_{2(n+1)})$ and $\type(a_{2n+3})=\type(a_{2n+5})$).
%There are more star-types in $\Amf_2$; some of them are shown below.\\
%%
%\centerline{\includegraphics[scale=0.5]{../forkA2}}
%\end{example}

A \emph{mosaic} is a pair $M=(M_{1},M_{2})$ of sets of star-types such that
\begin{itemize}
\item[--] $A\in \type$ iff $A\in \type'$, for all $(\type,\frak{s}_{p},\frak{s}_{c}), (\type',\frak{s}_{p}',\frak{s}_{c}')\in M_{1}\cup M_{2}$ and $A\in \varrho$;
	
\item[--] for any $(\type,\frak{s}_{p},\frak{s}_{c})\in M_{1}\cup M_{2}$, all types in $\frak{s}_{p}\cup \frak{s}_{c}$ are $\Du$-equivalent to $\type$.
\end{itemize} 
Given structures $\Amf_{1},\Amf_{2}$ with $a\in \dom(\Amf_{j_{1}})$ and $p\in \dom(\Amf_{j_{2}})$, for some $j_1,j_2 \in \{1,2\}$, the \emph{mosaic defined by $a$ and $p$} is 
$$
M(\Amf_{1},\Amf_{2},a,p)= \big( M_{1}(\Amf_{1},a,p),M_{2}(\Amf_{2},a,p) \big),
$$
where, for $ D_{i}=\{ d\in \dom(\Amf_{i}) \mid \Amf_{i},d\sim_{\MLiu(\varrho)}\Amf_{j_{2}},p\}$ and $i=1,2$,
\begin{multline*}
M_{i}(\Amf_{i},a,p) = \{ \ftype(\Amf_{i},e,D_{i})\mid {}\\
e\in \dom(\Amf_{i}), \Amf_{i},e \sim_{\MLiu(\varrho)} \Amf_{j_{1}},a\}.
\end{multline*}
%
%The star-types $\ftype(\Amf_{i},e_k,D_{i})$ comprising $M_{i}(\Amf_{1/2},a,p)$ are illustrated below (note that all nodes in the parent successor type set $D_i$ are $\MLiu(\varrho)$-bisimilar to each other, and so are the central nodes $e_k$ of all star-types in the mosaic). By the definition, $a\sim_{\MLiu(\varrho)}a'$ and 
%$p\sim_{\MLiu(\varrho)}p'$ imply $M(\Amf_{1/2},a,p) = M(\Amf_{1/2},a',p')$.\\
%%
%\centerline{\includegraphics[scale=0.6]{../mosaic}}\\[-5pt]
%
We are also interested in \emph{root mosaics} $M$ with $\Diamond_{S}^{=0}\type'\in \frak{s}_{p}$, for all $(\type,\frak{s}_{p},\frak{s}_{c})\in M_{1}\cup M_{2}$, all $\type'$ and all $S$. We obtain a root mosaic $M(\Amf_{1},\Amf_{2},a,\emptyset) = (M_1(\Amf_{1},a,\emptyset),M_2(\Amf_{2},a,\emptyset))$ with $a\in \dom(\Amf_{j})$ and $j \in \{1,2\}$ by taking, for $i = 1,2$,
$$
M_{i}(\Amf_{i},a,\emptyset) = \{ \ftype(\Amf_{i},e,\emptyset)\mid e\in
\dom(\Amf_{i}), \Amf_{i},e \sim_{\MLiu(\varrho)}\Amf_{j},a\}. 
$$

%\begin{example}\em\label{ex3sec6}
%For $\Amf_1,\Amf_2$ from Example~\ref{ex1sec6} and $n \ge 1$,
%%
%\begin{align*}
%& M(\Amf_{1/2},a_0,\emptyset) = (\{\ftype(a_0,\emptyset)\}, \{\ftype(e_0,\emptyset)\}),\\
%& M(\Amf_{1/2},a_1,a_0) = (\{\ftype(a_1,\{a_0\})\}, \{\ftype(e_{11},\{e_0\})\}),\\
%& M(\Amf_{1/2},a_{n+1},a_n) = (\{\ftype(a_{n+1},\{a_n\})\}, \{\ftype(e_{(n+1)1},\{e_{n1},e_{n2}\})\}).
%\end{align*}
%%
%Note that $M(\Amf_{1/2},a_{n+1},a_{n}) = M(\Amf_{1/2},a_{n+3},a_{n+1})$, for $n \ge 3$.
%\end{example}

%Let $M=(M_{1},M_{2})$ be a mosaic and $\mathcal{M}$ a set of mosaics. We now define when $\mathcal{M}$ provides enough witnesses for the graded modalities in $\frak{s}_{c}$, for all $(\type,\frak{s}_{p},\frak{s}_{c})\in M_{1}\cup M_{2}$.
%We say that $\ftype=(\type,\frak{s}_{p},\frak{s}_{c})$ and $\ftype'= (\type',\frak{s}_{p}',\frak{s}_{c}')$ are \emph{$S$-coherent}, in symbols $\ftype \rightsquigarrow_{S} \ftype'$, if 
%$\Diamond_{S}^{>0}\type'\in \frak{s}_{c}$ and 
%$\Diamond_{S^{-}}^{>0}\type\in \frak{s}_{p}$.
%Intuitively, $S$-coherence captures that we can draw an $S$-edge between a node representing $\ftype$ and a node representing $\ftype'$.

%We next provide a characterisation of existential witnesses that is closer to the intended semantics than syntactic existential witnesses.

As in Section~\ref{sec:gmlu}, we are going to use mosaics for building  the structures $\Amf_{1}$ and $\Amf_{2}$ we are after. To this end, we require the following definition. Let $M=(M_{1},M_{2})$ be a mosaic and $\mathcal{M}$ a set of mosaics. Set $\Delta=\Delta_{1}\cup \Delta_{2}$ and $\Gamma=\Gamma_{1} \cup \Gamma_{2}$, where, for $i=1,2$,
\begin{align*}
& \Delta_{i} = \{(\ftype,j)\mid j<\omega ,\ \ftype\in M_{i}\},\\ 
&\Gamma_{i} = \{(\ftype',j',M')\mid j<\omega,\ \ftype'\in M_{i}',\ M'=(M_{1}',M_{2}')\in \mathcal{M}\}.
\end{align*}
Intuitively, $\Delta_i$ consists of $\omega$-many copies of each star-type in $M_i$, and $\Gamma_i$ of $\omega$-many copies of each `pointed mosaic' $(\ftype',M')$ in $\mathcal{M}$.

We call $\mathcal{M}$ a \emph{witness} for $M$ if, for each $S$, there is a relation
$$
S^{M,\mathcal{M}} \subseteq (\Delta_{1} \times \Gamma_{1})\cup (\Delta_{2} \cup \Gamma_{2})
$$
such that the following conditions are satisfied:
\begin{description}
\item[(witness$_1$)] if $(\ftype,j)\in \Delta$, $\ftype=(\type,\frak{s}_{p},\frak{s}_{c})$ and $\Diamond_{S}^{=n}\type'\in \frak{s}_{c}$, then $(\ftype,j)$ has exactly $n$ different $S^{M,\mathcal{M}}$-successors of type $\type'$---i.e., of the form $((\type',\frak{s}_{p}',\frak{s}_{c}'),j',M')$---in $\Gamma$; if $\Diamond_{S}^{>\mx}\type'\in \frak{s}_{c}$, then $(\ftype,j)$ has at least $\mx+1$ different $S^{M,\mathcal{M}}$-successors of type $\type'$ in $\Gamma$;

%\item[(witness$_2$)] if $(\ftype,j)\in \Delta$, $\ftype=(\type,\frak{s}_{p},\frak{s}_{c})$ and $\Diamond_{S}^{>\mx}\type'\in \frak{s}_{c}$, then $(\ftype,j)$ has at least $\mx+1$ different $S^{M,\mathcal{M}}$-successors of type $\type'$ in $\Gamma$;

\item[(witness$_2$)] if $(\ftype',k,M')\in \Gamma$, $\ftype'= (\type',\frak{s}_{p}',\frak{s}_{c}')$ and $\Diamond_{S}^{=n}\type\in \frak{s}_{p}'$, then $(\ftype',k,M')$ has exactly $n$ $S^{M,\mathcal{M}}$-successors of type $\type$ in $\Delta$; if $\Diamond_{S}^{>\mx}\type\in \frak{s}_{p}'$, then $(\ftype',k,M')$ has $> \mx$  $S^{M,\mathcal{M}}$-successors of type $\type$ in $\Delta$;

%\item[(witness$_4$)] if $(\ftype',k,M')\in \Gamma$, $\ftype'=(\type',\frak{s}_{p}',\frak{s}_{c}')$ and $\Diamond_{S}^{>\mx}\type\in \frak{s}_{p}'$, then $(\ftype',k,M')$ has $> \mx$  $S^{M,\mathcal{M}}$-successors of type $\type$ in $\Delta$;

\item[(bisim$_1$)] if $R\in \varrho$, $S \in \{R,R^-\}$ and some element of $\Delta$ has an $S^{M,\mathcal{M}}$-successor $(\ftype',j',M') \in \Gamma$, then every element of $\Delta$ has an $S^{M,\mathcal{M}}$-successor of the form $(\ftype'',j'',M') \in \Gamma$;

\item[(bisim$_2$)] if $R\in \varrho$, $S \in \{R,R^-\}$ and some $(\ftype',j',M') \in \Gamma$ is an $S^{M,\mathcal{M}}$-successor of an element of $\Delta$, then all $(\ftype'',j'',M')$ in $\Gamma$ are  $S^{M,\mathcal{M}}$-successors of some element of $\Delta$.
\end{description}

%\begin{example}\em
%In the context of Example~\ref{ex1sec6}, we can take $\mathfrak{S}$ that consists of the root mosaic $M(\Amf_{1/2},a_0,\emptyset)$ and $M(\Amf_{1/2},a_{n+1},a_n)$, for $n\le 3$, with $\{M(\Amf_{1/2},a_1,a_0)\}$ being a witness for $M(\Amf_{1/2},a_0,\emptyset)$, 
%$\{M(\Amf_{1/2},a_{n+1},a_n)\}$ a witness for $M(\Amf_{1/2},a_n,a_{n-1})$, $1 \le n \le 3$, and $\{M(\Amf_{1/2},a_3,a_2)\}$ for $M(\Amf_{1/2},a_4,a_3)$. We can define $R^{M,\mathcal{M}}$ for thus defined $M$ and $\mathcal{M}$ mimicking $\Amf_1,\Amf_2$ in Example~\ref{ex1sec6}. For example, let $M = M(\Amf_{1/2},a_2,a_{1})$ and $\mathcal{M} = \{ M(\Amf_{1/2},a_3,a_{2})\}$. Then $R^{M,\mathcal{M}}$ consists of the following pairs:\nz{draw intended pic} 
%%
%\begin{description}
%\item[] $(\ftype(a_{2},\{a_1\}),j)$ and $(\ftype(a_{3},\{a_2\}),j,M(\Amf_{1/2},a_3,a_{2}))$, for $j < \omega$,
%
%\item[] $(\ftype(e_{21},\{e_{11},e_{12}\}),j)$ and $(\ftype(e_{31},\{e_{21},e_{22}\}),j',M(\Amf_{1/2},a_3,a_{2}))$, \\ \mbox{}\hfill for $j,j' \in \{2n,2n+1\}$, $n < \omega$. 
%\end{description}
%%
%\end{example}
%
Our aim is to show that there exist pointed structures $\Amf_{i},a_{i}$, $i=1,2$,  satisfying $\varphi_{i}$ with $\Amf_{1},a_{1}\sim_{\MLiu(\varrho)}\Amf_{2},a_{2}$ iff there exists a set $\mathfrak{S}$ of mosaics such that
\begin{description}
\item[(m1)] if $M=(M_{1},M_{2})\in \mathfrak{S}$, then $M_{1}\not=\emptyset$ and $M_{2}\not=\emptyset$;

\item[(m2)] there are a root mosaic $M=(M_{1},M_{2})\in \mathfrak{S}$ and star-types $\ftype_{i}\in M_{i}$ containing $\varphi_{i}$ in their types, for $i=1,2$;

\item[(m3)] every $M\in \mathfrak{S}$ has a witness $\mathcal{M}\subseteq \mathfrak{S}$;

\item[(m4)] if $\Du\chi\in \type$ for some $(\type,\frak{s}_{p},\frak{s}_{c})\in M_{i}$ and $M=(M_{1},M_{2})\in \mathfrak{S}$, then there exists a root mosaic $M'=(M_{1}',M_{2}')\in \mathfrak{S}$ such that $\chi\in \type'$ for some $(\type',\frak{s}_{p}',\frak{s}_{c}')\in M_{i}'$ such that $\type$ and $\type'$ are $\Du$-equivalent, for $i=1,2$.
\end{description}
Moreover, we also aim to show that it suffices to look for $\mathfrak{S}$ with mosaics and witnesses of exponential size, which would make it possible to check the existence of such an $\mathfrak{S}$ in double exponential size. With this in mind, we first give a somewhat different, but equivalent `syntactic' definition of witnesses.

We say that star-types $\ftype=(\type,\frak{s}_{p},\frak{s}_{c})$ and $\ftype'= (\type',\frak{s}_{p}',\frak{s}_{c}')$ are \emph{$S$-coherent} and write $\ftype \rightsquigarrow_{S} \ftype'$ if $\Diamond_{S}^{>0}\type'\in \frak{s}_{c}$ and $\Diamond_{S^{-}}^{>0}\type\in \frak{s}'_{p}$.
%Intuitively, $S$-coherence captures that we can draw an $S$-edge between a node representing $\ftype$ and a node representing $\ftype'$. \nz{better words?}
%
Let $M=(M_{1},M_{2})$ be a mosaic, $\mathcal{M}$ a set of mosaics and 
$$
\mathcal{M}_{i}^{\bullet}= \{(\ftype',M')\mid M'=(M_{1}',M_{2}')\in \mathcal{M}, \ftype'\in M_{i}'\}, \quad i=1,2. 
$$
Call $\mathcal{M}$ is a \emph{syntactic witness} for $M$ if, for every $S$, there is a relation 
$$
\S^{M,\mathcal{M}} \subseteq (M_{1}\times \mathcal{M}_{1}^{\bullet}) \cup (M_{2}\times \mathcal{M}_{2}^{\bullet})
$$
satisfying the following conditions, for $i=1,2$:
\begin{description}
\item[(coherence)] if $(\ftype, (\ftype',M'))\in \S^{M,\mathcal{M}}$, then $\ftype \rightsquigarrow_{S} \ftype'$;

\item[(bisim)] if $R\in \varrho$, $S \in \{R,R^-\}$, and $M'=(M_{1}',M_{2}')\in \mathcal{M}$, then either
\begin{itemize}
\item[--] for any $\ftype\in M_{i}$, there is $\ftype'\in M_{i}'$ with $(\ftype,(\ftype',M'))\in \S^{M,\mathcal{M}}$ and, for any $\ftype'\in M_{i}'$, there is $\ftype\in M_{i}$ with $(\ftype,(\ftype',M'))\in \S^{M,\mathcal{M}}$,
\end{itemize}
in which case $M'$ is called an \emph{$S$-successor of $M$},  
or 
\begin{itemize}
\item[--] $(\ftype,(\ftype',M'))\notin \S^{M,\mathcal{M}}$ for any $\ftype\in M_{i}$ and $\ftype'\in M_{i}'$;
\end{itemize}

\item[(wit$_1$)] if $\ftype=(\type,\frak{s}_{p},\frak{s}_{c}) \in M_{i}$ and $\Diamond_{S}^{>0}\type'\in \frak{s}_{c}$, then there are $M'=(M_{1}',M_{2}')\in \mathcal{M}$ and 
$\ftype'=(\type',\frak{s}_{p}',\frak{s}_{c}')\in M_{i}'$ such that 
$(\ftype,(\ftype',M'))\in \S^{M,\mathcal{M}}$;
	%	\item for every $(t,E)\in M$, if $(=n st')\in E_{down}$, then there are at most $n$ different $E'$ and $M'$ such that $((t,E),((t',E'),M'))\in s$,

\item[(wit$_2$)] if $M'=(M_{1}',M_{2}')\in \mathcal{M}$, $\ftype'=(\type',\frak{s}_{p}',\frak{s}_{c}')\in M_{i}'$ and $\Diamond_{S^{-}}^{>0}\type\in \frak{s}_{p}'$, then there is $\ftype=(\type,\frak{s}_{p},\frak{s}_{c})\in M_{i}$ with $(\ftype,(\ftype',M'))\in \S^{M,\mathcal{M}}$;

\item[(wit$_3$)] if $R\in \varrho$, $S \in \{R,R^-\}$ and $(\ftype,(\ftype',M'))\in \S^{M,\mathcal{M}}$, where $\ftype=(\type,\frak{s}_{p},\frak{s}_{c})\in M_{i}$, then there is a function $g$ that assigns to each $S$-successor $M^*=(M_{1}^*,M_{2}^*)$ of $M$ in $\mathcal{M}$ a star-type in $M_{i}^*$ such that $g(M')= \ftype'$ and, for any $\Diamond_{S}^{=n}\type_1\in \frak{s}_{c}$, we have 
$$
|\{M^*\in \mathcal{M}\mid g(M^*)=(\type_1,\frak{s}_{p}^*,\frak{s}_{c}^*), \text{ for some } \frak{s}_{p}^*,\frak{s}_{c}^*\}|\leq n. 
$$
\end{description}

The intuition behind the concept of a syntactic witness is as follows. If $\mathcal{M}$ is a witness for $M$, then for every $\ftype = (\type,\frak{s}_p,\frak{s}_c)\in M_1\cup M_2$, if $\Diamond^{=n}_S\type'\in \frak{s}_c$ then we require $(\ftype,j)$ to be $S^{M,\mathcal{M}}$-related to exactly $n$ different star-types based on type $\type'$. A similar constraint holds for $(\ftype',\frak{s}_p',\frak{s}_c')\in M_1'\cup M_2'$, where $(M_1',M_2')\in \mathcal{M}$. For a syntactic witness, we relax this requirement: if $\Diamond^{=n}_S\type'\in \frak{s}_c$ for some $n>0$ we do require $\ftype$ to be $\S^{M,\mathcal{M}}$-related to at least one star-type that is based on $\type'$, but we do not require there to be exactly $n$. The reason for this relaxation is that we can create multiple copies of each star-type: if, say, $\Diamond^{=2}_S\type'\in \frak{s}_c$ and $\ftype$ is $\S^{M,\mathcal{M}}$-related to exactly one $(\ftype',M')$, where $\ftype'$ is based on type $\type'$, then every $(\ftype,j)$ will be $S^{M,\mathcal{M}}$-related to $(\ftype',j_1',M')$ and $(\ftype,j_2',M')$ with $j_1'\neq j_2'$.
	
	Similarly, if $\Diamond^{=1}_S\type'\in \frak{s}_c$ and $\ftype$ is $\S^{M,\mathcal{M}}$-related to both $(\ftype',M')$ and $(\ftype'',M'')$, where $\ftype'$ and $\ftype''$ are based on type $\type'$, then there will be different $j_1$ and $j_2$ such that $(\ftype,j_1)$ is $S^{M,\mathcal{M}}$-related to $(\ftype',j',M')$ and $(\ftype,j_2)$ is $s^{M,\mathcal{M}}$-related to $(\ftype'',j'',M'')$.
	
	Such copying does, of course, require there to be at least one $\S^{M,\mathcal{M}}$-successor when $\Diamond^{>0}_S\type'\in \frak{s}_c$. This is enforced by \textbf{(wit$_1$)}, while \textbf{(wit$_2$)} is the corresponding requirement for $\Diamond^{>0}_S\type\in \frak{s}_p'$. 
	
	There is one other requirement, however, that is slightly harder to explain. Suppose that $\Diamond^{=1}\type'\in \frak{s}_c$ and that $(\ftype,(\ftype_1',M')), (\ftype, (\ftype_2',M''))\in \S^{M,\mathcal{M}}$, where $\ftype_1'$ and $\ftype_2'$ are both based on $\type'$ and $M'\neq M''$. Suppose furthermore that $\ftype_1'$ and $\ftype_2'$ are the only $\S^{M,\mathcal{M}}$-successors of $\ftype$ in their respective mosaics. If we were to try to create a witness $S^{M,\mathcal{M}}$ from this $\S^{M,\mathcal{M}}$, we would take $((\ftype,j),(\ftype_1',j_1',M'))\in S^{M,\mathcal{M}}$ and $((\ftype,j),(\ftype_2',j_2',M''))\in S^{M,\mathcal{M}}$. But there would be no $\ftype_2''$ and $j_2''$ such that $((\ftype,j),(\ftype_2'',j_2'',M''))\in S^{M,\mathcal{M}}$ and no $\ftype_1''$ and $j_1''$ such that $((\ftype,j),(\ftype_1'',j_1'',M'))\in S^{M,\mathcal{M}}$. Therefore, the \textbf{(bisim$_1$)} condition would not be satisfied.
	
	In order to prevent this kind of situation, we must ensure that if $(\ftype,(\ftype',M'))\in \S^{M,\mathcal{M}}$, then there is some way to choose $\ftype',M'$ as a successor of $\ftype$ that respects both graded modalities and bisimulation restrictions. This means it must be possible to choose one $\ftype''$ per $S$-successor $M''$ of $M$ such that (i) for the mosaic $M'$ we choose $\ftype'$ and (ii) if $\Diamond^{=n}_S\type''\in \frak{s}_c$ then we choose at most $n$ star-types based on type $\type''$. In our definition of a syntactic witness, this requirement is included as \textbf{(wit$_3$)}.
	
	Now that we have explained the intuition behind syntactic witnesses, we can show that syntactic witnesses are witnesses, and vice versa.

\begin{lemma}\label{lem:syntactic witness1} 
$\mathcal{M}$ is a witness for $M$ iff $\mathcal{M}$ is a syntactic witness for $M$.
\end{lemma}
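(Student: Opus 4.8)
The plan is to prove the two implications separately, translating between the relation $S^{M,\mathcal{M}}$ of the semantic definition, which lives on the $\omega$-indexed copies $\Delta_i$, $\Gamma_i$ and insists on \emph{exact} successor multiplicities, and the relation $\S^{M,\mathcal{M}}$ of the syntactic definition, which lives on the finite sets of star-types $M_i$, $\mathcal{M}_i^{\bullet}$ and only records which successor star-types \emph{occur}. The counting content that is explicit in \textbf{(witness$_1$)}, \textbf{(witness$_2$)} is, in the syntactic version, recovered from the successor type sets $\frak{s}_c$, $\frak{s}_p$ of the star-types together with clause \textbf{(wit$_3$)}; this is the only place where the equivalence is not a routine unfolding of definitions.

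For ``witness $\Rightarrow$ syntactic witness'' I would, for each $S$, take the projection
\[
(\ftype,(\ftype',M'))\in\S^{M,\mathcal{M}}\ \text{ iff }\ \big((\ftype,j),(\ftype',j',M')\big)\in S^{M,\mathcal{M}}\ \text{ for some }j,j'<\omega ,
\]
and check the clauses in turn. \textbf{(coherence)} holds because an $S^{M,\mathcal{M}}$-edge contributes a positive summand to the relevant $S$- and $S^{-}$-profiles, so $\Diamond_S^{>0}\type'\in\frak{s}_c$ and $\Diamond_{S^{-}}^{>0}\type\in\frak{s}_p'$ follow from \textbf{(witness$_1$)} and \textbf{(witness$_2$)}; \textbf{(wit$_1$)}, \textbf{(wit$_2$)}, and \textbf{(bisim)} are direct projections of \textbf{(witness$_1$)}, \textbf{(witness$_2$)}, \textbf{(bisim$_1$)}, \textbf{(bisim$_2$)}. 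For \textbf{(wit$_3$)}, given $j,j'$ with $((\ftype,j),(\ftype',j',M'))\in S^{M,\mathcal{M}}$, I use \textbf{(bisim$_1$)} to choose, for every $S$-successor mosaic $M^{*}$ of $M$, one $S^{M,\mathcal{M}}$-successor of $(\ftype,j)$ inside a copy of $M^{*}$, taking the choice in $M'$ to be $(\ftype',j',M')$; these choices lie in distinct $\Gamma$-copies, hence are distinct $\Gamma$-elements, so by \textbf{(witness$_1$)} at most $n$ of them have type $\type_1$ when $\Diamond_S^{=n}\type_1\in\frak{s}_c$. Setting $g(M^{*})$ to be the star-type of the chosen successor yields the required function.

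For the converse I would \emph{inflate} $\S^{M,\mathcal{M}}$ to a relation on the $\omega$-indexed copies. The safe way is to fix, for each star-type $\ftype=(\type,\frak{s}_p,\frak{s}_c)\in M_i$ and each $S$, one template: the function $g=g_{\ftype,S}$ granted by \textbf{(wit$_3$)} (for some chosen $\S^{M,\mathcal{M}}$-edge out of $\ftype$) together with, whenever more summands are needed, additional $\type'$-successors for each $\Diamond_S^{=n}\type'\in\frak{s}_c$ drawn from the mosaic/star-type pair supplied by \textbf{(wit$_1$)}, and likewise $>\mx$ successors when $\Diamond_S^{>\mx}\type'\in\frak{s}_c$. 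Copying this template onto every index $j$, and adding from the $\Gamma$-side the symmetric successors demanded by the parent sets $\frak{s}_p'$ on fresh copies, one obtains $S^{M,\mathcal{M}}$; since infinitely many copies are available on both sides, no demand can block. It then remains to verify that this $S^{M,\mathcal{M}}$ satisfies \textbf{(witness$_1$)}--\textbf{(witness$_2$)} (the multiplicities are correct by the cardinality bound in \textbf{(wit$_3$)} together with \textbf{(wit$_1$)}, \textbf{(wit$_2$)}) and \textbf{(bisim$_1$)}--\textbf{(bisim$_2$)} (every $S$-successor mosaic is hit from every copy, using the ``$S$-successor mosaic'' alternative of \textbf{(bisim)}, which makes that set the same for all star-types in $M_i$).

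I expect this converse direction to be the main obstacle: the difficulty is not the existence of successors but making the $\Delta$-side exactness and the $\Gamma$-side exactness hold \emph{simultaneously}, which is exactly why clause \textbf{(wit$_3$)}, rather than just \textbf{(wit$_1$)}--\textbf{(wit$_2$)}, has to be built into the definition of a syntactic witness. Keeping the construction template-based, as above, rather than a genuinely interleaved back-and-forth, should keep the bookkeeping and the final verification manageable.
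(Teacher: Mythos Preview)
Your forward direction is essentially the paper's argument: project $S^{M,\mathcal{M}}$ to $\S^{M,\mathcal{M}}$ and read off the syntactic clauses, with \textbf{(wit$_3$)} obtained by fixing one copy $(\ftype,j)$ and, for each $S$-successor mosaic $M^\ast$, picking one $S^{M,\mathcal{M}}$-successor of $(\ftype,j)$ inside $M^\ast$; the cardinality bound then follows from \textbf{(witness$_1$)}. This is exactly what the paper does.

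The converse direction is where you diverge from the paper, and where the plan has a real gap. You propose to fix one template per star-type and ``copy this template onto every index $j$'', then repair the $\Gamma$-side ``on fresh copies''. But once the template is copied to \emph{all} indices $j<\omega$, every $\Delta$-element already has its successor budget fully spent, so there are no fresh $\Delta$-copies left: any edge you now add from the $\Gamma$-side lands on a $\Delta$-element that already has exactly the $\frak{s}_c$-prescribed number of $S$-successors of each type, breaking \textbf{(witness$_1$)}. Symmetrically, the $\Gamma$-elements that were hit in the first pass receive predecessor counts that need not match $\frak{s}_p'$. Your remark that ``no demand can block'' is precisely where the argument fails; the two exactness constraints genuinely interact and cannot be satisfied by two independent passes.

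The paper resolves this by the interleaving you explicitly try to avoid: it enumerates $\Delta\cup\Gamma$ and processes elements one at a time, maintaining the invariant that when an element $x$ is reached it carries \emph{at most one} edge already. At that point the full $\frak{s}_c$- (resp.\ $\frak{s}_p'$-) demand of $x$ is discharged onto as-yet-untouched copies on the other side; the single pre-existing edge, if any, is absorbed into the count by a case distinction (four cases). For $R\in\varrho$, the function $g$ from \textbf{(wit$_3$)} is invoked at each $\Delta$-step to ensure that every $S$-successor mosaic is hit while staying within the $\frak{s}_c$-budget, which is exactly what delivers \textbf{(bisim$_1$)}. So the role of \textbf{(wit$_3$)} is not merely to make templates exist, but to make the back-and-forth step compatible with the bisimulation constraint; a static template cannot replace the enumeration.
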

\begin{proof}
	$(\Rightarrow)$ Suppose $\mathcal{M}$ is a witness for $M=(M_1,M_2)$ and, for each $S$, let $S^{M,\mathcal{M}}$ be the witnessing relation with respect to $S$. Let $\S^{M,\mathcal{M}}$ be the relation given by $(\ftype,(\ftype',M')\in \S^{M,\mathcal{M}}$ iff there are $j,j'$ such that $((\ftype,j),(\ftype',j',N'))\in S^{M,\mathcal{M}}$. 
	%Because $S^{M,\mathcal{M}}$ satisfies 
	
	We show that $\S^{M,\mathcal{M}}$ satisfies the conditions for $\mathcal{M}$ being a syntactic witness for $M$.
	\begin{description}
		\item[(coherence)] Suppose that $(\ftype,(\ftype',M'))\in \S^{M,\mathcal{M}}$, where $\ftype=(\type,\frak{s}_p,\frak{s}_c)$ and $\ftype' = (\type',\frak{s}_p',\frak{s}_c')$. Then, by the definition of $\S^{M,\mathcal{M}}$, there are $j,j'$ such that $((\ftype,j),(\ftype',j',M'))\in S^{M,\mathcal{M}}$.
		
		By the \textbf{(witness$_1$)} condition, $((\ftype,j),(\ftype',j',M'))\in S^{M,\mathcal{M}}$ implies that $\Diamond^{=0}\type'\not\in \frak{s}_c$, and hence that $\Diamond^{>0}\type'\in \frak{s}_c$. Similarly, by \textbf{(witness$_2$)}, it implies that $\Diamond^{=0}_{S^-}\type\not\in \frak{s}'_p$, and hence $\Diamond^{>0}_{S^-}\type\in \frak{s}'_p$. This shows that $\ftype \rightsquigarrow_S \ftype'$, so the \textbf{(coherence)} condition is satisfied.
		
		\item[(bisim)] Consider any $M'=(M_1',M_2')\in \mathcal{M}$ and any $i\in \{1,2\}$. Assume $R\in \varrho$ and $S \in \{R,R^-\}$. We distinguish two cases. As a first case, suppose that there are $\ftype_1\in M_i$, $\ftype_1'\in M_i'$ and $j_1,j_1'<\omega$ such that $((\ftype_1,j_1),(\ftype_1',j_1',M_i')\in S^{M,\mathcal{M}}$. Then  \textbf{(bisim$_1$)} implies that for every $\ftype_2\in M_i$ and every $j_2<\omega$ there are $\ftype_2'\in M_i'$ and $j_2'<\omega$ such that $((\ftype_2,j_2),(\ftype_2',j_2',M')\in S^{M,\mathcal{M}}$. 
		
		So, for every such $\ftype_2$, there is a $\ftype_2'$ such that $(\ftype_2,(\ftype_2,M'))\in \S^{M,\mathcal{M}}$.
		Similarly, \textbf{(bisim$_2$)} implies that for every $\ftype_2'\in M_i'$ there is a $\ftype_2\in M_i$ such that $(\ftype_2,(\ftype_2,M'))\in \S^{M,\mathcal{M}}$. Hence, in this case $M'$ is an $S$-successor of $M$.
		
		As a second case, suppose there are no $\ftype_1\in M_i$, $\ftype_1'\in M_i'$ and $j_1,j_1'<\omega$ such that $((\ftype_1,j_1),(\ftype_1',j',M'))\in S^{M,\mathcal{M}}$. Then for every such $\ftype,\ftype'$ we have $((\ftype,(\ftype',M'))\not\in\S^{M,\mathcal{M}}$.
		
		These cases are exhaustive, so the \textbf{(bisim)} condition is satisfied.
		\item[(wit$_1$)] Suppose $\ftype = (\type,\frak{s}_p,\frak{s}_c)\in M_i$ and $\Diamond^0_S\type'\in \frak{s}_c$. Then, because $S^{M,\mathcal{M})}$ satisfies \textbf{(witness$_1$)}, every $(\ftype,j)$ has $>0$ $S^{M,\mathcal{M}}$-successors $(\ftype',j',M')$. Therefore, $(\ftype',M')$ is an $\S^{M,\mathcal{M}}$ successor of $\ftype$. Hence \textbf{(wit$_1$)} is satisfied.
		\item[(wit$_2$)] As with \textbf{(wit$_1$)}, except here we use the fact that $S^{M,\mathcal{M}}$ satisfies \textbf{(witness$_2$)}.
		\item[(wit$_3$)] Assume $R\in \varrho$ and $S \in \{R,R^-\}$. 
		Take any $(\ftype,(\ftype',M'))\in \S^{M,\mathcal{M}}$. Let $\ftype = (\type,\frak{s}_p,\frak{s}_c)\in M_i$. Fix any $j,j'<\omega$ such that $((\ftype,j),(\ftype',j',M'))\in S^{M,\mathcal{M}}$. For every $M^\ast=(M_1^\ast,M_2^\ast)\in \mathcal{M}$, we say that $M^\ast$ is an $S$-successor mosaic of $(\ftype,j)$ if there are $\ftype^\ast,j^\ast$ such that $((\ftype,j),(\ftype^\ast,j^\ast,M^\ast))\in S^{M,\mathcal{M}}$.
		
		Let $g$ be any function that (1) assigns to every $S$-successor mosaic $M^\ast$ of $(\ftype,j)$ an $\ftype^\ast$ such that there is a $j^\ast$ with $((\ftype,j),(\ftype^\ast,j^\ast,M^\ast))\in S^{M,\mathcal{M}}$ and (2) satisfies $g(M')=\ftype'$.
		
		As $S^{M,\mathcal{M}}$ satisfies \textbf{(witness$_1$)}, we have that if $\Diamond_S^{=n}\type'\in \frak{s}_C$ then there are at most $n$ mosaics $M^\ast$ such that $g(M^\ast)=\ftype^\ast = (t^\ast,\frak{s}^\ast_p,\frak{s}^\ast_c)$ for some $\frak{s}^\ast_p,\frak{s}^\ast_c$. Furthermore, because of the \textbf{(bisim$_1$)} condition, the set of $S$-successor mosaics does not depend on the choice of $j$.
		
		It follows that $g$ is a function that satisfies the requirements of \textbf{(wit$_3$)}, so that property is satisfied for $\S^{M,\mathcal{M}}$.
	\end{description}
	We have now shown that $\S^{M,\mathcal{M}}$ satisfies all of the conditions for $\mathcal{M}$ to be a syntactic witness of $M$.
	%Checking the implication $(\Rightarrow)$ is straightforward.\nz{some comments?}
	
	$(\Leftarrow)$ Suppose $\mathcal{M}$ is a syntactic witness for $M$ and the elements of $\Gamma\cup \Delta$ are enumerated. For $R\in \sigma$ and $S\in \{R,R^{-}\}$, let $\S^{M,\mathcal{M}}$ be the witnessing relation. We will create the relation $S^{M,\mathcal{M}}$ by adding $S$- and $S^-$-edges to elements $x\in \Delta\cup\Gamma$ in the enumeration order. Throughout the process, we keep invariant the property that, by the time we reach $x\in \Delta \cup \Gamma$, there is at most one $S$- or $S^{-1}$-edge to or from $x$ already. Four cases are possible. 
	
	\emph{Case} 1: $x=(\ftype,j)\in \Delta$, $\ftype=(\type,\frak{s}_{p},\frak{s}_{c})$, and no edges from $x$ have been added yet. First, suppose $R\in \sigma\setminus \varrho$. For every type $\type'$, if $\Diamond^{=n}\type'\in \frak{s}_{c}$ for some $n>0$, we fix any $M'$ and $\ftype'=(\type',\frak{s}_p',\frak{s}_c')$ such that $(\ftype,(\ftype',M'))\in \S^{M,\mathcal{M}}$. (At least one such $(\ftype',M')$ exists due to \textbf{(wit$_1$)}.) Now add $S$-edges from $x$ to the first $n$ elements of the form $(\ftype',k,M')\in \Gamma$ that do not yet have any edges to it. Similarly, if $\Diamond^{>\mx}\type'\in \frak{s}_c$, then add $S$-edges from $x$ to the lowest (in the enumeration) $\mx+1$ elements $(\ftype',k,M')\in \Gamma$ that do not yet have edges to it.
	This construction ensures that \textbf{(witness$_1$)} will be satisfied for $S^{M,\mathcal{M}}$.
	
	Now, suppose $R\in \varrho$. In this case, our construction has to be more complex, since we need to satisfy not only \textbf{(witness$_1$)} but also \textbf{(bisim$_1$)}. Let $\mathcal{M}_S\subseteq \mathcal{M}$ be the set of $S$-successors of $M$ in the sense defined in \textbf{(bisim)}. This means that we must ensure that, for every $M'\in \mathcal{M}_S$, there is at least one $(\ftype',j',M')\in \Gamma$ such that $(x,(\ftype',j',M'))\in S^{M,\mathcal{M}}$, while obeying the number restrictions. To this purpose, fix any $(\ftype',M')$ such that $(\ftype,(\ftype',M'))\in \S^{M,\mathcal{M}}$. Let $g$ be the function given by \textbf{(wit$_3$)} and let $\mathcal{E} = \{(\ftype'',M'') \mid g(M'') = \ftype''\}$.
	
	Consider any $\type''$ such that $\Diamond^{=n}_S \type'' \in \frak{s}_{c}$. It is a property of $g$ that there are at most $n$ elements in $\mathcal{E}$ that have a $\type''$ component. If there are less than $n$ such elements, take any $(\ftype'',M'')$ such that $(\ftype,(\ftype'',M''))\in \S^{M,\mathcal{M}}$, and add enough copies of this $(\ftype'',M'')$ to $\mathcal{E}$ to make the total number of elements with a $\type''$ component exactly $n$. We turn $\mathcal{E}$ into a multi-set by doing this. (Such $(\ftype'',M'')$ must exist by \textbf{(wit$_1$)}.)
	
	Similarly, if $\Diamond^{> \mx}_S \type'' \in \frak{s}_{c}$ and there are $\leq \mx$ elements in $\mathcal{E}$ with a $\type''$ component, add copies of $(\ftype'',M'')$ until we have more than $\mx$ such elements.
	
	Now, for every $(\ftype',M')\in \mathcal{E}$, add an $S$-edge from $x = (\ftype,j)$ to the lowest element $(\ftype',k,M')\in \Gamma$ that does not yet have any edges to it. This construction ensures that \textbf{(witness$_1$)} and \textbf{(bisim$_1$)} will be satisfied for $S^{M,\mathcal{M}}$.

%	Let $\mathcal{M}_S\subseteq \mathcal{M}$ be the set of $S$-successors of $M$ as defined in \textbf{(bisim)}. Take any $M'\in \mathcal{M}_S$. Then there is some $\ftype' \in M'$ such that $(\ftype, (\ftype',M'))\in \S^{M,\mathcal{M}}$. Let $g$ be the function for  $(\ftype, (\ftype',M'))$ given by \textbf{(wit$_3$)}, and let $\mathcal{E} = \{(\ftype'',M'') \mid g(M'') = \ftype''\}$.
%	
%	Consider any $\type''$ such that $\Diamond^{=n}_S \type'' \in \frak{s}_{c}$. It is a property of $g$ that there are at most $n$ elements in $\mathcal{E}$ that have a $\type''$ component. If there are less than $n$ such elements, take any $(\ftype'',M'')$ such that $(\ftype,(\ftype'',M''))\in \S^{M,\mathcal{M}}$, and add enough copies of this $(\ftype'',M'')$ to $\mathcal{E}$ to make the total number of elements with a $\type''$ component exactly $n$. We turn $\mathcal{E}$ into a multi-set by doing this. 
%	{\color{red} Note that $(\ftype'', M'')$ with the required property exist by the \textbf{(wit$_1$)} condition of syntactic witness. }
%	%
%	Similarly, if $\Diamond^{> \mx}_S \type'' \in \frak{s}_{c}$ and there are $\leq \mx$ elements in $\mathcal{E}$ with a $\type''$ component, add copies of $(\ftype'',M'')$ until we have more than $\mx$ such elements. 
%	%	
%	Now, for every $(\ftype',M')\in \mathcal{E}$, add an $S$-edge from $x = (\ftype,j)$ to the lowest (in the enumeration) element $(\ftype',k,M')\in \Gamma$ that does not yet have any edges to it.
	
	\emph{Case} 2: $x=(\ftype',k,M')\in \Gamma$ and no edges to $x$ have been added yet. For every $S$, we do the following. If $\Diamond^{=n}_{S^-} \type \in \frak{s}_{p}$ or $\Diamond^{> \mx}_{S^-} \type \in \frak{s}_{p}$, let $\ftype$ be such that $(\ftype, (\ftype',M'))\in \S^{M,\mathcal{M}}$. Create $S$ edges to $x$ from the first $n$ (if $\Diamond^{=n}_{S^-} \type \in \frak{s}_{p}$) or the first $\mx + 1$ (if $\Diamond^{> \mx}_{S^-} \type \in \frak{s}_{p}$) elements of the form $(\ftype,j)\in \Delta$ that do not yet have any outgoing edges.
	
	\emph{Case} 3: Suppose $x = (\ftype, j) \in \Delta$ and there is already an $S$ edge from $x$ to $(\ftype',k,M')\in \Gamma$. Add edges for $S^{-1}$ as in case 1. With regard to $S$, add edges like in case 1 except that instead of taking any $S$-successor we now take the successor $(\ftype',M')$ as a starting point, and take the already existing $S$ edge from $(\ftype,j)$ to $(\ftype',k,M')$ instead of one of the edges that we would otherwise add.
	
	\emph{Case}	4: Suppose $x = (\ftype',k,M')\in \Gamma$ and there is already an $S$ edge from $(\ftype,j)$ to $x$. Add edges for $S^{-1}$ as in case 2. For $S$, do the same except we take the edge from $(\ftype,j)$ to $x$ instead of one edge that would otherwise be added.
	
	It is immediate from the construction that conditions  \textbf{(witness$_1$)}, \textbf{(witness$_2$)} and \textbf{(bisim$_1$)} are satisfied;  condition \textbf{(bisim$_2$)} follows from $\mathcal{M}$ being a syntactic witness.
\end{proof}

Witnesses for mosaics defined by (disjoint) structures $\Amf_{1}$ and  $\Amf_{2}$ can be  extracted as follows. If $M=M(\Amf_{1},\Amf_{2},a,\emptyset)$ is a root mosaic, then 
$$
\mathcal{M}=\{ M(\Amf_{1},\Amf_{2},b,a) \mid b\in N(a'), \text{ for some } a'\sim_{\MLiu(\varrho)} a\}
$$
is a witness for $M$. If $M=M(\Amf_{1},\Amf_{2},a,p)$, for $a,p\in \dom(\Amf_{i})$, then we obtain a witness $\mathcal{M}$ for $M$ by taking 
\begin{multline*}
	%& B_{a,p}= \{ b \in N(\Amf_{i},a) \mid \Amf_{i}, b\not\sim_{\MLiu(\varrho)}\Amf_{i},p\},\\
	\mathcal{M}=\{ M(\Amf_{1},\Amf_{2},b,a) \mid \Amf_{i},  b\not\sim_{\MLiu(\varrho)}\Amf_{i},p,\  b \in N(a'),\\ \text{ for some } a'\sim_{\MLiu(\varrho)} a\}.
\end{multline*}
Alternative and possibly larger witnesses can be obtained by dropping the restrictions `$b \in N(a')$, for some $a'\sim_{\MLiu(\varrho)} a$'\!.
In both cases, we call $\mathcal{M}$ a \emph{witness for $M$ given by $\Amf_{1},\Amf_{2}$}.

It is rather straightforward to see now that there exist $\Amf_{i},a_i$, $i=1,2$, such that $\Amf_{i} \models\varphi_{i}(a_{i})$ and $\Amf_{1},a_{1}\sim_{\MLiu(\varrho)}\Amf_{2},a_{2}$ iff there is a set $\mathfrak{S}$ of mosaics satisfying {\bf (m1)}--{\bf (m4)}. 
%
%\begin{description}
%\item[(global)] if $M=(M_{1},M_{2})\in \mathfrak{S}$, then $M_{1}\not=\emptyset$ and $M_{2}\not=\emptyset$;
%
%\item[(cont)] there is a root mosaic $M=(M_{1},M_{2})\in \mathfrak{S}$ and $\ftype_{i}\in M_{i}$ with $\varphi_{i}$ in the type of $\ftype_{i}$, for $i=1,2$;
%
%\item[(wit1)] for every $M\in \mathcal{M}$ there exists an existential witness $\mathcal{M}\subseteq \mathfrak{S}$.
%
%\item[(wit2)] if $\Du\chi\in \type$ for some $(\type,\frak{s}_{p},\frak{s}_{c})\in M_{i}$ with $M=(M_{1},M_{2})\in \mathfrak{S}$, then there exists a root mosaic $M'=(M_{1}',M_{2}')\in \mathfrak{S}$ such that $\chi\in \type'$ for some $(\type',\frak{s}_{p}',\frak{s}_{c}')\in M_{i}'$ such that $\type$ and $\type'$ are $\Du$-equivalent, for $i=1,2$.
%\end{description}
%
So we focus on showing that it is enough to use mosaics and witnesses of exponential size. 

Call successor sets $\frak{s}$ and $\frak{s}'$ \emph{profile-equivalent} if they have the same $S$-profiles for all $S\in \{R,R^{-}\}$ with $R\in \sigma$. Star-types $\ftype=(\type,\frak{s}_{p},\frak{s}_{c})$ and $\ftype'=(\type',\frak{s}_{p}',\frak{s}_{c}')$ are called \emph{parent profile-equivalent} if $\type=\type'$ and $\frak{s}_{p}$, $\frak{s}_{p}'$ are profile equivalent, and \emph{child profile-equivalent} if $\type=\type'$ and $\frak{s}_{c}$, $\frak{s}_{c}'$ are profile equivalent.
Mosaics $M=(M_{1},M_{2})$ and $M'=(M_{1}',M_{2}')$ are called  \emph{child profile-equivalent} if, for any $\ftype \in M_{i}$, there exists a child-profile equivalent $\ftype'\in M_{i}'$, and vice versa, for $i=1,2$; $M$ and $M'$ are called \emph{parent profile-equivalent} if, for any $\ftype \in M_{i}$, there exists a parent-profile equivalent $\ftype'\in M_{i}'$, and vice versa, for $i=1,2$.
%	\item \emph{down/up profile-equivalent} if for any $(t,E_{up},E_{down})\in M$ %there exists $(t',E_{up}',E_{down}')\in M'$ such that $(t,E_{down})$ and %$(t',E_{up}')$ are profile-equivalent and vice versa. 
Observe that the number of child/parent profile non-equivalent star-types does not exceed $(|\varphi_1|+|\varphi_2|)^{4(|\varphi_1|+|\varphi_2|)}$.

\expcardinality*

\begin{proof}
We proceed in three steps. First, for every mosaic $M'\in \mathcal{M}$, we reduce the number of star-types in $M'$ to an at most exponential amount. Second, we reduce the number of mosaics in $\mathcal{M}$. Finally, we reduce the number of star-types in $M$. After each of these steps, we show that the result is still a syntactic witness.

	\textit{Step} 1. 
	For each relation $S$, let $\S^{M,\mathcal{M}}$ be the witnessing relation for $M$ and $\mathcal{M}$, and consider any $M'=(M_1',M_2')\in \mathcal{M}$. There can be arbitrarily many star-types in $M_1'\cup M_2'$. We want to select subsets $N_1'\subseteq M_1'$ and $N_2'\subseteq M_2'$ of at most exponential size. Importantly, however, we want to do so in such a way that $(i)$ $N'=(N_1',N_2')$ is child profile-equivalent to $M'$ and $(ii)$ every edge $(\ftype,(\ftype',M'))\in \S^{M,\mathcal{M}}$, where $\ftype'$ is not included in $N'$, can be redirected to some $(\ftype,(\ftype'',N'))$, while still satisfying the conditions of a witnessing relation.
	
	The first of these requirements will be important when we eventually use our collection of mosaics to build a model; because of this child profile-equivalence we will be able to replace $M'$ with $N'$ in a model construction. The second requirement means that the resulting set of mosaics remains a syntactic witness for $M$.
	
	Formally, for any $M'=(M_1,M_2)\in \mathcal{M}$, let $N'=(N_1',N_2')$ be any mosaic such that $(i)$ $N_i'\subseteq M_i'$ for $i\in \{1,2\}$, $(ii)$ $M'$ and $N'$ are child profile-equivalent,
	$(iii)$ for every $\type$, $\type'$ and $S$, if there is a $\ftype' = (\type',\frak{s}'_p,\frak{s}'_c)\in M_i'$ such that $\Diamond_S^{>0}\type\in \frak{s}'_p$ then there is a $\ftype''=(\type',\frak{s}_p'',\frak{s}_c'')\in N_i'$ such that $\Diamond_S^{>0}\type\in \frak{s}''_p$ and $(iv)$ $|N_1|+|N_2|\leq k_{\varphi_{1}\varphi_2}$.
	
	Note that while $\ftype''$ may have different parent and child profiles, it must be based on the same type $\type'$ as $\ftype'$. Such $N'$ exists because $N_1'$ and $N_2'$ need to contain, in the worst case, one element for each child profile (to ensure child profile-equivalence) and one further element for each $S$ and each pair of types. There are exponentially many child profiles, exponentially many types and polynomially many relations, so we can select $N_1'\subseteq M_1'$ and $N_2'\subseteq M_2'$ of the required size.
	
	Let $\mathcal{K}$ be the result of replacing each $M'\in\mathcal{M}$ by the corresponding $N'$. We show that $\mathcal{K}$ is a syntactic witness for $M$ by constructing the witnessing relation $\S^{M,\mathcal{K}}$. For any $(\ftype,(\ftype',M'))\in \S^{M,\mathcal{M}}$, if $\ftype'\in N'$, then we simply take $(\ftype,(\ftype',M'))\in \S^{M,\mathcal{K}}$. If $\ftype'\not \in N'$, let $\ftype = (\type,\frak{s}_p,\frak{s}_c)$ and $\ftype' = (\type',\frak{s}_p',\frak{s}_c')$. 
	
	Because of the \textbf{(coherence)} condition, we have $\ftype \rightsquigarrow_{S}\ftype'$ and therefore, in particular, $\Diamond^{>0}_{S^-}\type\in \frak{s}'_p$. By construction, there must be some $\ftype''=(\type',\frak{s}_p'',\frak{s}_c'')\in N'$ such that $\Diamond_{S^-}\type\in \frak{s}''_p$, which implies that $\ftype \rightsquigarrow_{S}\ftype''$. We then take $(\ftype,(\ftype'',N'))\in \S^{M,\mathcal{K}}$.
	
	It is straightforward to verify that this $\S^{M,\mathcal{K}}$ satisfies all of the requirements for $\mathcal{K}$ being a syntactic witness for $M$:
	\begin{description}
		\item[(coherence)] holds because any $(\ftype,(\ftype',N'))\in \S^{M,\mathcal{K}}$ either existed in $\S^{M,\mathcal{M}}$ or was redirected to some $\ftype'$ with the property that $\ftype\rightsquigarrow_S\ftype'$,
		\item[(bisim) and (wit$_3$)] hold because we redirected edges only within the same mosaic, so $N'$ is an $S$-successor of $M$ (w.r.t.\ $\S^{M,\mathcal{K}}$) iff $M'$ is an $S$-successor of $M$ (w.r.t.\ $\S^{M,\mathcal{M}}$),
		\item[(wit$_1$) and (wit$_2$)] hold because we redirected edges only to star-types based on the same type.
	\end{description}
	
	\textit{Step} 2.
	Now, we select a subset $\mathcal{N}\subseteq \mathcal{K}$ of at most exponential size, while it remains a syntactic witness for $M$. In order to do this, we redirect edges from $\S^{M,\mathcal{K}}$. Importantly, we redirect an edge $(\ftype, (\ftype',N'))\in \S^{M,\mathcal{K}}$ only to $(\ftype',N'')$ where $N''$ is already an $S$-successor of $M$ with respect to $\S^{M,\mathcal{K}}$.
	
	Formally, let $\mathcal{N}$ be any subset of $\mathcal{K}$ such that $(i)$ for every $S$, every $M'\in \mathcal{K}\setminus \mathcal{N}$, every $i\in\{1,2\}$, every $\ftype = (\type,\frak{s}_p,\frak{s}_c)\in M_i$ and every $\ftype' = (\type',\frak{s}_p',\frak{s}_c')\in M_i'$, if $(\ftype, (\ftype',M'))\in \S^{M,\mathcal{K}}$ then there are $M''\in \mathcal{N}$ and $\ftype''= (\type', \frak{s}_p'', \frak{s}_c'')\in M_i''$ such that $\ftype \rightsquigarrow_S\ftype''$ and $M_i''$ is an $S$-successor of $M$ w.r.t. $\S^{M,\mathcal{K}}$ and $(ii)$ $|\mathcal{N}|\leq k_{\varphi_1,\varphi_2}$. 
	Such $\mathcal{N}$ exists because, in the worst case, $\mathcal{N}$ needs to contain one mosaic for every $i$, every $S$ and every $\ftype,\ftype'$, of which there are exponentially many.
	
	We  show that $\mathcal{N}$ is a syntactic witness for $M$ by constructing the witnessing relation $\S^{M,\mathcal{N}}$ from the relation $\S^{M,\mathcal{K}}$. For any $(\ftype,(\ftype',M'))\in \S^{M,\mathcal{K}}$, if $M'\in\mathcal{N}$, then we simply take $(\ftype,(\ftype',M'))\in \S^{M,\mathcal{N}}$. If $M'\not \in \mathcal{N}$, then by construction there are an $S$-successor $M''\in\mathcal{N}$ of $M$ such that $\ftype'\in M''_1\cup M''_2$. We take $(\ftype,(\ftype',M''))\in \S^{M,\mathcal{N}}$.
	
	It is again straightforward to verify that all of the requirements for a syntactic witness are satisfied:
	\begin{description}
		\item[(coherence), (wit$_1$) and (wit$_2$)] hold because we an edge to $(\ftype',M')$ was redirected to $(\ftype',M'')$, i.e., the same star-type but in a different mosaic,
		\item[(bisim) and (wit$_3$)] hold because we redirected edges only to mosaics that were already $S$-successors of $M$.
	\end{description}
	
	\textit{Step} 3. Finally, we select subsets $N_1\subseteq M_1$  and $N_2\subseteq M_2$ such that $(i)$ $N=(N_1,N_2)$ is parent profile-equivalent to $M$ and $(ii)$ every edge $(\ftype,(\ftype',N'))\in \S^{M,\mathcal{N}}$ can be replaced by an edge $(\ftype'',(\ftype',N'))$ where $\ftype''\in N_1\cup N_2$.
	
	Formally, let $N=(N_1,N_2)$ be any mosaic such that $(i)$ $N_i\subseteq M_i$ for $i\in\{1,2\}$, $(ii)$ $M$ and $N$ are parent profile-equivalent, $(iii)$ for every $\type$, $\type'$ and $S$, if there is a $(\type, \frak{s}_p,\frak{s}_c)\in M_i$ such that $\Diamond^{>0}_S\type'\in \frak{s}_c$, then there is a $(\type,\frak{s}_p'',\frak{s}_c'')\in N_i$ such that $\Diamond_S^{>0}\type'\in \frak{s}''_c$ and $(iv)$ $|N_1|+|N_2|\leq k_{\varphi_1\varphi_2}$.
	
	We show that $\mathcal{N}$ is a syntactic witness for $N$ by constructing the witnessing relation $\S^{N,\mathcal{N}}$. For any $(\ftype,(\ftype',N'))\in \S^{M,\mathcal{N}}$, if $\ftype\in N$, then we take $(\ftype,(\ftype',N'))\in \S^{N,\mathcal{N}}$. If $\ftype\not \in N$, take $(\ftype'',(\ftype',N'))\in \S^{N,\mathcal{N}}$, where $\ftype''=(\type'',\frak{s}_p'',\frak{s}_c'')$ is the star-type such that $\Diamond_S^{>0}\type'\in \frak{s}''_c$. Note that in this case $\ftype''\rightsquigarrow_S \ftype'$.
	
	This $\S^{N,\mathcal{N}}$ satisfies all of the requirements of a syntactic witness:
	\begin{description}
		\item[(coherence)] holds because any edge $(\ftype,(\ftype',N'))\in \S^{N,\mathcal{N}}$ either was in $\S^{M,\mathcal{N}}$ or had $\ftype$ chosen in such a way that $\ftype\rightsquigarrow_S\ftype'$,
		\item[(wit$_1$) and (wit$_2$)] hold because $\ftype$ and $\ftype''$ are based on the same type $\type$,
		\item[(bisim) and (wit$_3$)] hold because we did not add any new $S$-successors.
	\end{description}
	We have now shown that $\mathcal{N}$ is a syntactic witness for $N$. Furthermore, $N$ and $\mathcal{N}$ satisfy all of the requirements of the lemma.
\end{proof}

We are now in a position to formulate the mosaic elimination procedure establishing the $2\ExpTime$ upper bound in Theorem~\ref{thm:gmliu}. Let $\mathfrak{S}_{0}$ be the set of all mosaics $M=(M_{1},M_{2})$ with non-empty $M_{1}$, $M_{2}$ and $|M_{1}\cup M_{2}|\leq k_{\varphi_1,\varphi_2}^{2}$. Clearly $\mathfrak{S}_{0}$ can be computed in double exponential time in $|\varphi_1|+|\varphi_2|$. 

Let $\mathfrak{S}\subseteq \mathfrak{S}_{0}$. We call $M\in \mathfrak{S}$ \emph{bad} in $\mathfrak{S}$ if at least one of the following conditions does not hold:
\begin{itemize}
\item[--] there exists a syntactic witness $\mathcal{M}\subseteq \mathfrak{S}$ for $M$ such that $|\mathcal{M}| \leq k_{\varphi_1\varphi_2}$;

\item[--] if $\Du\chi\in \type$ and $(\type,\frak{s}_{p},\frak{s}_{c})\in M_{i}$, for $M=(M_{1},M_{2})\in \mathfrak{S}$ and $i \in \{1,2\}$, then there is a root mosaic $M'=(M_{1}',M_{2}')\in \mathfrak{S}$ such that $\chi\in \type'$, for some $(\type',\frak{s}_{p}',\frak{s}_{c}')\in M_{i}'$, with $\type$ and $\type'$ being $\Du$-equivalent.
\end{itemize}
We compute a sequence $\mathfrak{S}_{0},\mathfrak{S}_{1},\dots$, where $\mathfrak{S}_{i+1}$ is obtained from $\mathfrak{S}_{i}$ by eliminating all bad mosaics, and stop when $\mathfrak{S}_{n+1} = \mathfrak{S}_{n}$, setting $\mathfrak{S}^{\ast} = \mathfrak{S}_{n}$.  Clearly, $\mathfrak{S}^{\ast}$ is computable in double exponential time.

\mainlemmaforgmliu*

\begin{proof}
$(i) \Rightarrow (ii)$ Suppose $\Amf_{i},a_{i}$, $i=1,2$, satisfying $(i)$ are given. Consider the set of all pairs $p=(M, \mathcal{M}_{M})$ such that $\mathcal{M}_{M}$ is a witness for $M$ given by $\Amf_{1}$ and $\Amf_{2}$. The union of all such $\{M\}\cup \mathcal{M}_{M}$ would not contain any bad mosaics if $M$ and $\mathcal{M}_{M}$ were of proper size ($\leq k_{\varphi_1\varphi_2}$), which is not necessarily the case. To deal with pairs $p=(M, \mathcal{M}_{M})$ of wrong size, we apply to them Lemma~\ref{lem:cardin} and obtain pairs $(M^{p},\mathcal{M}_{M}^{p})$ satisfying the conditions of the lemma. However, these pairs are not necessarily defined by $\Amf_{1},\Amf_{2}$. We call the $M^{p}$ \emph{child-good} and all mosaics in any $\mathcal{M}_{M}^{p}$ \emph{parent-good} (as we have found the relevant witnesses for graded modalities in children and, respectively, parents). 
	
%Our final set $\mathfrak{S}$ of mosaics is obtained by creating new mosaics from child-good and parent-good mosaics as follows: 

Given a parent good mosaic $M=(M_{1},M_{2})$ and a child-good mosaic $M'=(M_{1}',M_{2}')$ that are parent- and child-profile equivalent, we construct a new mosaic $M|M'=(M_{1}|M_{1}',M_{2}|M_{2}')$ which, for any parent- and child-profile equivalent star-types $\ftype=(\type,\frak{s}_{p},\frak{s}_{c})\in M_{i}$ and $\ftype'=(\type',\frak{s}_{p}',\frak{s}_{c}')\in M_{i}'$, contains  the star-type $\ftype|\ftype'=(\type,\frak{s}_{p},\frak{s}_{c}')$ in $M_i|M'_i$. Observe that, by the definition of parent/child equivalence, all $\ftype|\ftype'$ are coherent star-types and $M|M'$ is a mosaic. Also observe that $|M|M'|\leq k_{\varphi_1\varphi_2}^{2}$. Let $\mathfrak{S}$ be the set of all $M|M'$ just constructed. Then $\mathfrak{S}\subseteq \mathfrak{S}_{0}$. It should be clear that $\mathfrak{S}$ does not contain any bad mosaics, and so $\mathfrak{S}\subseteq \mathfrak{S}^{\ast}$. Finally, by applying Lemma~\ref{lem:cardin} to $M=M(\Amf_{1},\Amf_{2},a_{1},\emptyset)$ we obtain the required root mosaic in $\mathfrak{S}$. 

$(ii) \Rightarrow (i)$ Suppose $M^{\ast}=(M_{1}^{\ast},M_{2}^{\ast})\in \mathfrak{S}^{\ast}$ is given by $(ii)$.
%
%for $(\type,\frak{s}_{p},\frak{s}_{c})\in M_{1}^\ast$ and $(\type',\frak{s}_{p}',\frak{s}_{c}')\in M_{2}^{\ast}$.
%	
We use $\mathfrak{S}^*$ to construct the required $\Amf_{i},a_{i}$, $i=1,2$,	and $\MLiu(\varrho)$-bisimulation $\bis$ between them. For every $M\in \mathfrak{S}^{\ast}$, take a witness $\mathcal{M}_{M}\subseteq \mathfrak{S}^{\ast}$ for $M$ together with the corresponding relations $S^{M,\mathcal{M}_{M}} \subseteq \Delta^{M} \times \Gamma^{M}$ between $\Delta^{M}=\Delta_1^M \cup \Delta_2^M$ and $\Gamma^{M}=\Gamma_1^M \cup \Gamma_2^M$.
For $i=1,2$, the domain $\dom(\Amf_{i})$ of $\Amf_{i}$ comprises words of the form 
\begin{equation}\label{eq1second}
s = M_{0} \dots M_{n-1} (\ftype_{n},j_{n},M_{n}),
\end{equation}
%
%	\begin{equation}\label{eq1second}
%		s=(\ftype_{0},j_{0},M_{0})(\ftype_{1},j_{1},M_{1})\ldots (\ftype_{n},j_{n},M_{n})
%	\end{equation}
	%
where $M_{j}=(M_{j,1},M_{j,2})\in \mathfrak{S}^{\ast}$, $0 \le j \le n$, 
\begin{itemize}
\item[--] $M_{0}$ is a root mosaic and any type in $M_{0,i}$ is $\Du$-equivalent to any type in $M^{\ast}_{i}$;

\item[--] $M_{j+1}\in \mathcal{M}_{M_{j}}$, for all $j < n$;

\item[--] $\ftype_{n} \in M_{n,i}$ and $j_{n}<\omega$.
\end{itemize}
It follows from the definition of (bad) mosaics that all types used in the construction of  $\dom(\Amf_{i})$ are $\Du$-equivalent to each other.

Let $\type_n$ be the type of $\ftype_{n}$, in which case $\type_n$ is also called the \emph{type} of $s$. We interpret the propositional variables $A \in \sigma$ by taking $s\in A^{\Amf_{i}}$ if $A\in \type_{n}$. For any $R\in \sigma$, we set 
\begin{itemize}
\item[--] $(s',s)\in R^{\Amf_{i}}$ for any $s$ of the form \eqref{eq1second} and any $s'$ of the form
\begin{equation}\label{eq1second'}
s' = M_{0} \dots M_{n-2} (\ftype_{n-1},j_{n-1},M_{n-1})
\end{equation}
%\nz{change} obtained from $s$ by dropping $(\ftype_{n},j_{n},M_{n})$ 
such that $(\ftype_{n-1},j_{n-1})R^{M_{n-1},\mathcal{M}_{M_{n-1}}}(\ftype_{n},j_{n},M_{n})$;

\item[--] $(s,s')\in R^{\Amf_{i}}$ for any $s$ of the form \eqref{eq1second} and any $s'$ of the form \eqref{eq1second'}
%obtained from $s$ by dropping $(\ftype_{n},j_{n},M_{n})$ 
such that $(\ftype_{n-1},j_{n-1}){(R^{-})}^{M_{n-1},\mathcal{M}_{M_{n-1}}}(\ftype_{n},j_{n},M_{n})$.
\end{itemize}
%
%Let $\type_n$ be the type of $\ftype_{n}$, in which case $\type_n$ is also called the \emph{type} of $s$. We interpret the propositional variables $A \in \sigma$ by taking $s\in A^{\Amf_{i}}$ if $A\in \type_{n}$. For any $R\in \sigma$ and $S \in \{R,R^-\}$, we set 
%
%\begin{itemize}
%\item[--] $(s',s)\in S^{\Amf_{i}}$ for any $s$ of the form \eqref{eq1second} and any $s'$ of the form
%%
%\begin{equation}\label{eq1second'}
%s' = M_{0} \dots M_{n-2} (\ftype_{n-1},j_{n-1},M_{n-1})
%\end{equation}
%%\nz{change} obtained from $s$ by dropping $(\ftype_{n},j_{n},M_{n})$ 
%such that $(\ftype_{n-1},j_{n-1})S^{M_{n-1},\mathcal{M}_{M_{n-1}}}(\ftype_{n},j_{n},M_{n})$;
%
%\item[--] $(s,s')\in S^{\Amf_{i}}$ for any $s$ of the form \eqref{eq1second} and any $s'$ of the form \eqref{eq1second'}
%%obtained from $s$ by dropping $(\ftype_{n},j_{n},M_{n})$ 
%such that $(\ftype_{n-1},j_{n-1}){(S^{-})}^{M_{n-1},\mathcal{M}_{M_{n-1}}}(\ftype_{n},j_{n},M_{n})$.
%\end{itemize}
%
We show by induction that, for any $\chi\in\sub(\varphi_1,\varphi_2)$ and any $s$ of the form \eqref{eq1second}, we have $\Amf_{i}\models \chi(s)$ iff $\chi\in \type_{n}$. The basis of induction and the Boolean cases of the induction step are trivial. Suppose $\Amf_{i}\models \Du\chi(s)$. Then $\Amf_{i}\models \chi(s')$, for some $s' \in \dom(\Amf_{i})$. It $\type'$ is the type of $s'$, then $\chi \in \type'$ by IH, and so $\Du \chi \in \type'$ and $\Du \chi \in \type_n$. 
Conversely, if $\Du \chi\in \type_{n}$, then, by the definition of bad mosaics, $\mathfrak{S}^*$ contains a root mosaic $M'$ with a star-type $\ftype'$ of some type $\type' \ni \chi$ in its $i$th component. By IH, $\Amf_{i}\models \chi(s')$ for any $s' = (\ftype',j,M')$, and so $\Amf_{i}\models \Du\chi(s)$.
Finally, suppose $\Amf_{i}\models \Diamond_S^{\ge k}\chi(s)$ and $\ftype_n = (\type_n,\frak{s}_{p}^n,\frak{s}_{c}^n)$. To show that $\Diamond_S^{\ge k}\chi 
\in \type_n$, we need to prove that $P_{S}(\frak{s}_{p},\frak{s}_{c})$ contains $\Diamond_S^{> \mx}\chi$ or $\Diamond_S^{= m}\chi$, for some $m \ge k$, which follows from IH and conditions {\bf (witness$_1$)} and {\bf (witness$_2$)}. The converse implication also follows from {\bf (witness$_1$)}, {\bf (witness$_2$)} and IH.

%By IH, $s$ has $S$-successors $s'_1, \dots, s'_m$ in $\Amf_i$ of types $\type'_1,\dots,\type'_m \ni \chi$ such that $k \le m \le \mx +1$.

We next define $\bis\subseteq \dom(\Amf_{1})\times\dom(\Amf_{2})$ by taking $s\bis s'$, for $s \in \dom(\Amf_{1})$ of the from \eqref{eq1second} and $s'\in \dom(\Amf_{2})$ of the form
\begin{equation}\label{eq2secondsec}
s'= M_{0}'\dots M_{m-1}'(\ftype_{m}',j_{m}',M_{m}')
\end{equation}
if $n=m$ and $M_{j}=M_{j}'$ for all $j\leq n$. It follows from non-emptiness of mosaics in $\mathfrak{S}^*$, the definition of relations $R^{\Amf_{i}}$ and conditions {\bf (bisim$_1$)} and {\bf (bisim$_2$)} that $\bis$ is a $\MLiu(\varrho)$-bisimulation. 
\end{proof}

Lemmas~\ref{criterion}, \ref{l:red}, and ~\ref{lem:gmliuall} give the $2\ExpTime$ upper bound for deciding $\GMLiu$/$\MLiu$-Craig separation in Theorem~\ref{thm:gmliu}.
%
%The separation and Craig-separation problems are
%%
%\begin{description}
%\item[\it\textup{$2\ExpTime$}-complete] for $\GMLiu/\LS$ and $\LS \in \{\MLiu,\MLi,\MLu,\ML\}$\textup{;}
%
%\begin{proposition}\label{prop:gmliu}
%	$\GMLiu$/$\MLiu$-Craig separation is in 2\ExpTime.
%\end{proposition}   
%
The $2\ExpTime$ upper bound for $\GMLiu/\MLi$ is obtained by dropping the condition that $M_{1}$ and $M_{2}$ are both non-empty for mosaics $M=(M_{1},M_{2})$. The one for $\GMLiu/\MLu$ is obtained by treating the inverse $R^{-}$ with $R\in \varrho$ in the same way as $R\in \sigma\setminus\varrho$. Both of these modifications give the upper bound for $\GMLiu/\ML$.

It remains to establish $\textsc{co}\NExpTime$ upper bounds for $\GMLi/\MLi$  and $\GMLi/\ML$.
%
%We next show that the proof
%above also shows the remaining upper bounds of Theorem~\ref{sec:gmliu}.
%%
%\begin{proposition}
%	$(1)$ $\GMLiu$/$\L$-Craig separation is in 2\ExpTime{} for all $L\in  \{\MLi,\MLu,\ML\}$.
%	
%	$(2)$ $\GMLi$/$\L$-Craig separation is in co\NExpTime{} for all $L\in \{\MLi,\ML\}$. 
%\end{proposition}
%\begin{proof}
%	(1) The claim for $\MLi$ is obtained by dropping the condition that $M_{1}$ and $M_{2}$ are both not empty for mosaics $M=(M_{1},M_{2})$.
%	
%	The proof for $\MLu$ is obtained by treating the inverse $R^{-}$ with $R\in \varrho$ in the same way as $R\in \sigma\setminus\varrho$.
%	
%	The claim for $\ML$ is obtained by comining the modifications for $\MLi$ and $\MLu$.
%	
Suppose $\varphi_{1},\varphi_{2}$ are $\GMLi$-formulas. 
%Ignore in the proof of Proposition~\ref{prop:gmliu} all occurrences of the universal modality $\Du$. 
We again ignore everything related to $\Du$ and admit mosaics $M = (M_{1},M_{2})$ with $M_{1}=\emptyset$ or $M_{2}=\emptyset$ (but not both). Let $m = \max \{\md(\varphi_1), \md(\varphi_2)\}$. Then it follows from the proof of Lemma~\ref{lem:gmliuall} (and the definition of $\bis$ in it) that there exist $\Amf_{i},a_{i}$, $i=1,2$, with $\Amf_i \models \varphi_{i}(a_{i})$ and $\Amf_{1},a_{1} \sim_{\MLi(\varrho)} \Amf_{2},a_{2}$ iff we find words 
\begin{equation}\label{eq1secondneww}
s = M_{0}M_{1}\dots M_{n}
\end{equation}
with $n\leq \md(\varphi,\psi)$ and  $M_{j}=(M_{j,1},M_{j,2})$ such that
\begin{itemize}
\item[--] $M_{j}$ a mosaic of size $\leq k_{\varphi_{1},\varphi_{2}}^{2}$, for $j\leq n$;

\item[--] $M_{0}=M^{\ast}$ for a fixed root mosaic $M^*=(M_{1}^{\ast},M_{2}^{\ast})$ such that there are $(\type,\frak{s}_{p},\frak{s}_{c})\in M_{1}^{\ast}$ and $(\type',\frak{s}_{p}',\frak{s}_{c}')\in M_{2}^*$ with $\varphi_{1}\in \type$ and $\varphi_{2}\in \type'$;

\item[--] $M_{j+1}\in \mathcal{M}_{M_{j}}$, for a syntactic witness $\mathcal{M}_{M_{j}}$ for $M_{j}$ with $|\mathcal{M}_{M_{j}}|\leq k_{\varphi,\psi}$, for all $j< n$.
\end{itemize}
A \NExpTime{} algorithm deciding non-separation could guess words of the form~\eqref{eq1secondneww} and then check in exponential time that they satisfy the three conditions above. 

The matching lower bounds are proved as in the previous section. This completes the proof of Theorem~\ref{thm:gmliu}.
\end{proof}

%****************

\section*{Proofs for Section~\ref{sec:definability}}
We start by giving the precise recursive definition of the flattening $\Flat(\varphi)\in \FOTNE$ for $\CT$-formulas $\varphi$ without equality:
\[\begin{array}{ll}
	\Flat(A(x))=A(x), & \Flat(R(x,y)) = R(x,y), \\
	\Flat(\exists^{\geq k}x \, \varphi)= \exists x \, \Flat(\varphi), & \Flat(\neg \varphi) = \neg \Flat(\varphi),\\
	\Flat(\varphi\wedge\varphi') = \Flat(\varphi)\wedge\Flat(\varphi'). & 
\end{array}\]
We require the following observation.
\begin{lemma}\label{lem:withoueq}
	Let $\sigma$ be constant-free and let $\kappa\in \mathbb{N}^{\infty}$. Then, for all $\FOTNE(\sigma)$-formulas $\psi_{1}(x)$ and $\psi_{2}(x,y)$, all $a,b\in \dom(\Amf)$, and all $i,j,i',j'<\kappa$, we have\textup{:}
	\begin{itemize}
		\item[--] $\Amf^{\kappa}\models \psi_{1}(a,i)$ iff $\Amf^{\kappa}\models \psi_{1}(a,j)$\textup{;}
		\item[--] $\Amf^{\kappa}\models \psi_{2}((a,i),(b,j))$ iff $\Amf^{\kappa}\models \psi_{2}((a,i'),(b,j'))$.
	\end{itemize}
\end{lemma}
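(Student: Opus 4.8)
The plan is to exploit the projection $\pi\colon\dom(\Amf^\kappa)\to\dom(\Amf)$ given by $\pi(a,i)=a$. It is surjective (every element of $\dom(\Amf)$ has at least one copy in $\Amf^\kappa$), and directly from the definition of $\Amf^\kappa$ one reads off, for every unary $A$ and binary $R$ in $\sigma$, that $(a,i)\in A^{\Amf^\kappa}$ iff $a\in A^\Amf$ and $\big((a,i),(b,j)\big)\in R^{\Amf^\kappa}$ iff $(a,b)\in R^\Amf$. Thus $\pi$ is a surjective homomorphism that in addition \emph{reflects} all atomic relations. Note that $\pi$ does \emph{not} reflect equality, since $\pi(a,i)=\pi(a,j)$ even though $(a,i)\neq(a,j)$ for $i\neq j$; this is precisely the point at which the hypothesis that the logic is equality-free enters, and it is also why the analogous statement fails for $\FOT$ with equality.

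The key step is to prove, by a routine induction on the construction of $\FOTNE(\sigma)$-formulas $\chi$ whose free variables lie among $\var=\{x,y\}$, that $\Amf^\kappa,\nu\models\chi$ iff $\Amf,\pi\circ\nu\models\chi$ for every assignment $\nu$ of elements of $\dom(\Amf^\kappa)$ to the variables in $\var$. The atomic case is exactly the two reflection equivalences above (there are no equality atoms to handle); the Boolean cases are immediate; and for $\chi=\exists z\,\psi$ with $z\in\var$ one transfers a witness $d\in\dom(\Amf^\kappa)$ to the witness $\pi(d)\in\dom(\Amf)$ using the induction hypothesis, and conversely transfers any witness $e\in\dom(\Amf)$ to some preimage under $\pi$ using surjectivity. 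Phrasing the inductive claim in terms of assignments rather than tuples is what lets the reuse of the two variables under quantification go through without difficulty.

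With this established, both assertions follow as immediate specialisations. For the first, the assignments sending $x$ to $(a,i)$ and to $(a,j)$ both compose with $\pi$ to the assignment sending $x$ to $a$, so $\Amf^\kappa\models\psi_1(a,i)$ iff $\Amf\models\psi_1(a)$ iff $\Amf^\kappa\models\psi_1(a,j)$. For the second, the assignments sending $(x,y)$ to $\big((a,i),(b,j)\big)$ and to $\big((a,i'),(b,j')\big)$ both compose with $\pi$ to the assignment sending $(x,y)$ to $(a,b)$ — and this remains true when $a=b$ — so $\Amf^\kappa\models\psi_2\big((a,i),(b,j)\big)$ iff $\Amf\models\psi_2(a,b)$ iff $\Amf^\kappa\models\psi_2\big((a,i'),(b,j')\big)$.

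There is no genuine obstacle here: the argument is the natural extension to two free variables of the reasoning behind Lemma~\ref{lemma:bisim_omega}. The only two points demanding a moment's care are (i) invoking equality-freeness at the atomic step, without which the statement is simply false, and (ii) setting up the inductive claim for formulas with free variables among $\{x,y\}$ over arbitrary assignments, so that the statements for $\psi_1(x)$ and $\psi_2(x,y)$ drop out by instantiating with the appropriate assignments.
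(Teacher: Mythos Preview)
Your proof is correct and is exactly the natural argument: the projection $\pi$ is a surjective strong homomorphism (preserving and reflecting all atoms other than equality), and a routine induction on equality-free formulas over assignments then gives $\Amf^\kappa,\nu\models\chi$ iff $\Amf,\pi\circ\nu\models\chi$, from which both items follow by instantiation. The paper itself gives no proof of this lemma---it is stated as an observation without argument---so there is nothing to compare against; your write-up is precisely the kind of straightforward induction the paper is tacitly relying on.
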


\lemmaequivalenceomega*

\begin{proof}
	Let $\mx<\kappa \in \mathbb{N}^{\infty}$. First, we note that, because the signature contains no constants, for every $a\in \dom(\Amf)$ and every $i<\kappa$, we have $(a,i)\in \dom(\Amf^\kappa)$.
	
	We show that $\Amf^{\kappa}\models \varphi(a,i)$ iff $\Amf^{\kappa}\models \Flat(\varphi)(a,i)$ for all elements $a\in \dom(\Amf)$ and $i<\kappa$. Observe that $\Amf^{\kappa}\models \Flat(\varphi)(a,i)$ iff $\Amf^{\kappa'}\models \Flat(\varphi)(a,i)$ follows from Lemma~\ref{lemma:bisim_omega}, for all $i<\kappa$ and $\kappa'\in \mathbb{N}^{\infty}$. 
	
	Now the proof proceeds by induction on the construction of $\varphi$. Assume first that $\varphi\in \GMLiu$. We only discuss the critical induction step, where $\varphi$ is of the form $\Diamond\!^{\ge k}_R \psi$ for some $\psi$, in which case $\Flat(\varphi)$ is of the form $\Diamond\!_R\,\Flat(\psi)$.
	
	Suppose that $\Amf^\kappa\models \Diamond\!^{\ge k}_R \psi(a,i)$. We assumed that $k>0$, so there is at least one $(b,j)$ such that $\Amf^\kappa \models \psi(b,j)$ and $((a,i), (b,j))\in R^{\Amf^{\kappa}}$. By the induction hypothesis, we then also have $\Amf^\kappa\models \Flat(\psi)(b,j)$, and therefore $\Amf^\kappa\models \Diamond_R\, \Flat(\psi)(a,i)$.
	
	For the other direction, suppose that $\Amf^\kappa\models \Diamond_R\Flat(\psi)(a,i)$, so there is some $(b,j)$ such that $\Amf^\kappa\models \Flat(\psi)(b,j)$ and $((a,i),(b,j))\in R^{\Amf^{\kappa}}$. By Lemma~\ref{lem:withoueq}, this implies that we have $\Amf^\kappa\models \Flat(\psi)(b,j')$ for every $j'<\kappa$. 
	
	By the induction hypothesis, this implies that $\Amf^\kappa \models \psi(b,j')$ for all $j'<\kappa$. Furthermore, by the construction of $\Amf^\kappa$, it follows that $((a,i), (b,j'))\in R^{\Amf^{\kappa}}$ for all $j'<\kappa$. From $\mx< \kappa$ we obtain $\Amf^\kappa\models\Diamond\!^{\ge k}_R\psi(a,i)$. This completes the induction step.
	
	The proof for $\varphi\in \CT$ is a straightforward extension of the proof above. It is again by induction on the construction of $\varphi$. For formulas $\psi(x,y)$, we claim that $\Amf^{\kappa}\models \psi((a,i),(b,j))$ iff $\Amf^{\kappa}\models \Flat(\psi)((a,i'),(b,j'))$. We consider the step $\varphi= \exists^{\geq k}x \,\psi(x,y)$ (we may assume that $y$ is free in $\psi$). 	
	Suppose that $\Amf^\kappa\models \exists^{\geq k}x \, \psi(a,i)$. We assumed that $k>0$, so there is at least one $(b,j)$ such that $\Amf^\kappa \models \psi((b,j),(a,i))$. By the induction hypothesis, we then also have $\Amf^\kappa\models \Flat(\psi)((b,j),(a,i))$, and therefore $\Amf^\kappa\models \exists x \, \Flat(\psi)(a,i)$.
	
	For the other direction, suppose that $\Amf^\kappa\models \exists x \, \Flat(\psi)(a,i)$, so there is some $(b,j)$ such that $\Amf^\kappa\models \Flat(\psi)((b,j),(a,i))$. By Lemma~\ref{lem:withoueq}, this implies that $\Amf^\kappa\models \Flat(\psi)((b,j'),(a,i))$ for every $j'<\kappa$. 
	
	By IH, this implies that $\Amf^\kappa \models \psi((b,j'),(a,i))$ for all $j'<\kappa$. From $\mx< \kappa$ we obtain $\Amf^\kappa\models \exists^{\geq k}x\, \psi(a,i)$. This completes the induction step and thereby the proof.
\end{proof}

The following lemma can be shown similarly to Lemma~\ref{lemma:equivalenceomega}.

\begin{lemma}\label{lem:equivonomega2}
	Suppose $\varphi$ is a $\GMLinu(\sigma)$-formula with finite $\sigma_{c}$ and $\mx<\kappa<\kappa'\in \mathbb{N}^{\infty}$. Then 
	$\Amf^\kappa\models \varphi(a,i)$ iff $\Amf^\kappa\models \Flat_{\sigma_{c}}(\varphi)(a,i)$
	iff $\Amf^{\kappa'}\models \Flat_{\sigma_{c}}(\varphi)(a,i)$ iff $\Amf^{\kappa'}\models \varphi(a,i)$, for all pointed $\sigma$-structures $\Amf,a$ and all $i <\kappa$.
\end{lemma}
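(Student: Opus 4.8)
The argument follows the same pattern as Lemma~\ref{lemma:equivalenceomega}; the only new ingredient is the bookkeeping for nominals that the extra disjuncts of $\Flat_{\sigma_{c}}$ are designed for. The plan is to prove first, by induction on the construction of $\varphi$, the single equivalence
$$
(\star)\qquad \Amf^{\kappa}\models\varphi(a,i)\ \Longleftrightarrow\ \Amf^{\kappa}\models\Flat_{\sigma_{c}}(\varphi)(a,i)
$$
for all pointed $\sigma$-structures $\Amf,a$, all $i<\kappa$ and all $\kappa\in\mathbb{N}^{\infty}$ with $\mx<\kappa$, and then to assemble the four-way equivalence of the statement: since $\Flat_{\sigma_{c}}(\varphi)\in\MLinu$, Lemma~\ref{lemma:bisim_omega} gives $\Amf^{\kappa}\models\Flat_{\sigma_{c}}(\varphi)(a,i)$ iff $\Amf\models\Flat_{\sigma_{c}}(\varphi)(a)$ iff $\Amf^{\kappa'}\models\Flat_{\sigma_{c}}(\varphi)(a,i)$ (note that $\mx<\kappa<\kappa'$), so that $(\star)$ applied once for $\kappa$ and once for $\kappa'$ closes the chain $\Amf^{\kappa}\models\varphi(a,i)\Leftrightarrow\cdots\Leftrightarrow\Amf^{\kappa'}\models\varphi(a,i)$.

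In the induction for $(\star)$, the atomic cases (including $N_{c}$), the Boolean cases and the case $\varphi=\Du\psi$ are immediate from the induction hypothesis, since $\Flat_{\sigma_{c}}$ acts homomorphically there and $\kappa>\mx$ is not used. The only substantial case is $\varphi=\Diamond^{\ge k}_{S}\psi$ with $S\in\{R,R^{-}\}$. Writing $\psi'=\Flat_{\sigma_{c}}(\psi)$, the formula $\Flat_{\sigma_{c}}(\varphi)$ is the disjunction of $\Diamond_{S}\bigl(\psi'\wedge\bigwedge_{c\in\sigma_{c}}\neg N_{c}\bigr)$ and $\bigvee_{\varrho\subseteq\sigma_{c},\,|\varrho|=k}\ \bigwedge_{c\in\varrho}\Diamond_{S}\bigl(\psi'\wedge N_{c}\wedge\bigwedge_{d\in\varrho\setminus\{c\}}\neg N_{d}\bigr)$. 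Both directions rest on two elementary features of $\kappa$-expansions: a non-constant element $b\in\dom(\Amf)$ has $\kappa$ copies in $\Amf^{\kappa}$, all mutually $\MLinu(\sigma)$-equivalent by Lemma~\ref{lemma:bisim_omega}, and $\kappa>\mx\ge k$; whereas an element interpreting a constant $c$ is represented by the single copy $(c^{\Amf},0)=c^{\Amf^{\kappa}}$, which is the only point of $\Amf^{\kappa}$ at which $N_{c}$ holds, so that any point of $\Amf^{\kappa}$ satisfying $\bigwedge_{c\in\sigma_{c}}\neg N_{c}$ is necessarily a copy of a non-constant element.

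For the direction ``$\Rightarrow$'': if $(a,i)$ has $k$ distinct $S$-successors $w_{1},\dots,w_{k}$ in $\Amf^{\kappa}$ with $\Amf^{\kappa}\models\psi(w_{l})$, hence $\Amf^{\kappa}\models\psi'(w_{l})$ by the induction hypothesis, we split on whether some $w_{l}$ is a copy of a non-constant element---then $w_{l}$ satisfies $\psi'\wedge\bigwedge_{c}\neg N_{c}$ and the first disjunct holds---or every $w_{l}$ equals $c_{l}^{\Amf^{\kappa}}$ for some constant $c_{l}$. In the latter case distinctness of the $w_{l}$ forces the interpretations, hence the constants $c_{1},\dots,c_{k}$, to be pairwise distinct, so $\varrho=\{c_{1},\dots,c_{k}\}$ satisfies $|\varrho|=k$, and for each $c_{l}\in\varrho$ the point $w_{l}$ witnesses $\Diamond_{S}\bigl(\psi'\wedge N_{c_{l}}\wedge\bigwedge_{d\in\varrho\setminus\{c_{l}\}}\neg N_{d}\bigr)$: indeed $N_{c_{l}}$ holds at $w_{l}$, while for $d\in\varrho\setminus\{c_{l}\}$ we have $d^{\Amf}\ne c_{l}^{\Amf}$, so $N_{d}$ fails at $w_{l}$.

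For the direction ``$\Leftarrow$'': if the first disjunct holds with witness $(b,j)$, then $b$ is non-constant by the remark above, so $(a,i)$ has the $\kappa$ $S$-successors $(b,0),(b,1),\dots$ in $\Amf^{\kappa}$, all satisfying $\psi'$ by Lemma~\ref{lemma:bisim_omega} and hence $\psi$ by the induction hypothesis; as $\kappa>\mx\ge k$ this yields $\Amf^{\kappa}\models\Diamond^{\ge k}_{S}\psi(a,i)$. If instead the second disjunct holds for some $\varrho$, the $|\varrho|=k$ witnesses $\{w_{c}:c\in\varrho\}$ are pairwise distinct---were $w_{c}=w_{c'}$ for $c\ne c'$ in $\varrho$, then $N_{c'}$ would hold at $w_{c}$, contradicting the conjunct $\neg N_{c'}$---and each satisfies $\psi'$, hence $\psi$, so again $\Amf^{\kappa}\models\varphi(a,i)$. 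The inverse case $S=R^{-}$ is symmetric, reading ``$S$-predecessor'' throughout. The only point that needs care---and the main, if mild, obstacle---is precisely this interplay between the two kinds of copies in $\Amf^{\kappa}$ and the matching case split provided by the disjuncts of $\Flat_{\sigma_{c}}$; everything else is routine.
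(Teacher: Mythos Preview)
Your proof is correct and follows exactly the approach the paper intends: the paper itself merely states that the lemma ``can be shown similarly to Lemma~\ref{lemma:equivalenceomega}'', and your write-up supplies precisely the missing bookkeeping for nominals---the case split on whether a witnessing successor is a copy of a non-constant element (so has $\kappa>\mx\ge k$ siblings) or a constant point (handled by the $\varrho$-disjunct)---together with the chain via Lemma~\ref{lemma:bisim_omega}.
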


Note that, according to the definition of $\L$/$\LS$-definability, defining $\LS$-formulas $\varphi'$ are allowed to contain fresh symbols not occurring in $\varphi$. While fresh propositional variables or modal operators do not affect definability (see below), we 
leave it open whether using a fresh nominal has an effect on definability of formulas without nominals.

\begin{theorem}\label{prop:equivonomega2}
	Let $\L/\LS$ be any pair of modal logics from Table~\ref{table:results}, with $L$ admitting nominals, and let $\sigma$ contain at least one constant. Then an $\L(\sigma)$-formula $\varphi$ is $\LS(\sigma)$-definable iff $\models \varphi \leftrightarrow \Flat_{\sigma_{c}}(\varphi)$.
\end{theorem}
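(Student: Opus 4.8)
The plan is to mirror the proof of Theorem~\ref{prop:definability1}, replacing the plain flattening $\Flat$ with the nominal-aware flattening $\Flat_{\sigma_{c}}$ and $\kappa$-expansions with the constant-respecting expansions $\Amf^\kappa$. The direction $(\Leftarrow)$ is again trivial, since $\Flat_{\sigma_{c}}(\varphi)$ is an $\LS(\sigma)$-formula (it contains only the modalities/nominals already present in $\varphi$, so it lies in $\LS$), and $\models \varphi \leftrightarrow \Flat_{\sigma_{c}}(\varphi)$ then immediately witnesses $\LS(\sigma)$-definability. So the work is all in $(\Rightarrow)$.

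For $(\Rightarrow)$, suppose $\varphi$ is definable by an $\LS$-formula $\varphi'$. First I would reduce to the case where $\sig(\varphi)\cup\sig(\varphi')$ has the same constants as $\varphi$: any constant $c$ occurring in $\varphi'$ but not in $\varphi$ can be eliminated by replacing the nominal $N_c$ with $\bot$ wherever it occurs in $\varphi'$, which preserves logical equivalence to $\varphi$ over structures interpreting only $\sig(\varphi)$ (here one uses that $\varphi$ and $\varphi'$ are equivalent as formulas, hence over all structures interpreting a common signature — the standard argument that unused symbols can be trivialised, exactly as in the proof of Lemma~\ref{thm:equivdef}). After this, set $\sigma' = \sig(\varphi)\cup\sig(\varphi')$, so $\sigma'_c = \sig(\varphi)_c \subseteq \sigma_c$, and note $\Flat_{\sigma_c}(\varphi)$ only uses constants from $\sigma_c$ that actually matter through the disjuncts over $\varrho\subseteq\sigma_c$; one has to check $\Flat_{\sigma_c}(\varphi) \equiv \Flat_{\sigma'_c}(\varphi)$ relative to $\sigma'$-structures, which holds because any constant in $\sigma_c\setminus\sigma'_c$ is not constrained and the corresponding disjuncts can always be satisfied or avoided freely — actually the cleanest route is to just prove the theorem for $\sigma = \sigma'$ (so $\sigma_c = \sig(\varphi)_c$) and observe the statement is stable under adding irrelevant constants. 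Then, for any pointed $\sigma$-structure $\Amf,a$: by the choice of $\varphi'$, $\Amf\models\varphi(a)$ iff $\Amf\models\varphi'(a)$; by Lemma~\ref{lemma:bisim_omega} (with $\L$ a fragment of $\MLinu$), $\Amf,a \equiv_{\LS(\sigma)} \Amf^\omega,(a,0)$, so $\Amf\models\varphi'(a)$ iff $\Amf^\omega\models\varphi'(a,0)$; using the definition again, $\Amf^\omega\models\varphi'(a,0)$ iff $\Amf^\omega\models\varphi(a,0)$; by Lemma~\ref{lem:equivonomega2}, $\Amf^\omega\models\varphi(a,0)$ iff $\Amf^\omega\models\Flat_{\sigma_c}(\varphi)(a,0)$; and then Lemma~\ref{lemma:bisim_omega} applied to $\Flat_{\sigma_c}(\varphi)$ (still a fragment of $\MLinu$) gives $\Amf^\omega\models\Flat_{\sigma_c}(\varphi)(a,0)$ iff $\Amf\models\Flat_{\sigma_c}(\varphi)(a)$. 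Chaining these equivalences yields $\Amf\models\varphi(a)$ iff $\Amf\models\Flat_{\sigma_c}(\varphi)(a)$ for all pointed $\sigma$-structures, i.e. $\models\varphi\leftrightarrow\Flat_{\sigma_c}(\varphi)$.

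The main obstacle — and the only place where more than routine bookkeeping is needed — is the handling of the constant signature. Two subtleties must be dispatched carefully: (i) the elimination of fresh constants in $\varphi'$ must genuinely preserve equivalence with $\varphi$, and one must be honest that the case of a fresh \emph{nominal} is exactly the borderline flagged in the paragraph preceding Theorem~\ref{prop:equivonomega2} (``we leave it open whether using a fresh nominal has an effect on definability of formulas without nominals''); but here $\varphi$ \emph{does} admit nominals and already contains at least one constant, so the statement only concerns $\LS(\sigma)$-definability for a \emph{fixed} $\sigma$ whose constants are already available, and fresh constants in $\varphi'$ beyond $\sigma$ simply cannot appear — thus this subtlety does not actually arise for the theorem as stated, and I would note that explicitly. (ii) One must verify that $\Flat_{\sigma_c}(\varphi)$ genuinely belongs to $\LS$, not merely to $\GMLinu$: since $\LS$ in every pair of Table~\ref{table:results} with $\L$ admitting nominals also admits nominals, and $\Flat_{\sigma_c}$ only introduces the connectives $\vee,\wedge,\neg$, $\Diamond_R$ (and, after the DAG-rewriting, $\Diamond_R^{\ge k}$ — but in $\Flat_{\sigma_c}$ itself only $\Diamond_R$ is used) plus the nominals $N_c$ for $c\in\sigma_c$ already in $\sigma$, it stays within $\LS(\sigma)$; no inverse or universal modality is created, so the flattening respects the $\mathsf e$-superscript. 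With these two points recorded, the rest is the mechanical equivalence chain above, and the analogue of Lemma~\ref{lemma:equivalenceomega} needed is precisely Lemma~\ref{lem:equivonomega2}, which is already available in the excerpt.

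\begin{proof}[Proof sketch]
The implication $(\Leftarrow)$ is immediate: $\Flat_{\sigma_{c}}(\varphi)$ is an $\LS(\sigma)$-formula, since the only constructors it introduces are Boolean connectives, diamonds $\Diamond_R$ for $R$ occurring in $\varphi$, and nominals $N_c$ with $c\in\sigma_{c}$; in particular it introduces no inverse or universal modality beyond those present in $\varphi$, hence lies in the same modal fragment $\LS$. Thus $\models\varphi\leftrightarrow\Flat_{\sigma_{c}}(\varphi)$ witnesses $\LS(\sigma)$-definability of $\varphi$.

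For $(\Rightarrow)$, suppose $\varphi$ is definable by an $\LS$-formula $\varphi'$. Since the statement concerns a fixed $\sigma$ that already contains every constant of $\varphi$, we may assume, after replacing in $\varphi'$ any nominal $N_c$ with $c\notin\sigma$ by $\bot$ (which preserves equivalence with $\varphi$ over $\sigma$-structures), that $\sig(\varphi)\cup\sig(\varphi')\subseteq\sigma$ and the constants occurring there lie in $\sigma_{c}$. Let $\Amf,a$ be any pointed $\sigma$-structure. By the choice of $\varphi'$, Lemma~\ref{lemma:bisim_omega} (applied to $\varphi'$, $\varphi$, and $\Flat_{\sigma_{c}}(\varphi)$, all of which lie in a fragment of $\MLinu$), and Lemma~\ref{lem:equivonomega2},
\[
\Amf\models\varphi(a) \iff \Amf\models\varphi'(a) \iff \Amf^{\omega}\models\varphi'(a,0) \iff \Amf^{\omega}\models\varphi(a,0)
\]
\[
\iff \Amf^{\omega}\models\Flat_{\sigma_{c}}(\varphi)(a,0) \iff \Amf\models\Flat_{\sigma_{c}}(\varphi)(a).
\]
As $\Amf,a$ was arbitrary, $\models\varphi\leftrightarrow\Flat_{\sigma_{c}}(\varphi)$.
\end{proof}
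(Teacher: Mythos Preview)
Your argument is essentially the paper's: the equivalence chain through $\Amf^\omega$ using Lemma~\ref{lemma:bisim_omega} and Lemma~\ref{lem:equivonomega2} is exactly what the paper means by ``now the proof is similar to the proof of Theorem~\ref{prop:definability1}''. So the core of your proof is correct and matches the intended route.

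There is one genuine slip. Your proposed elimination of a fresh nominal $N_c$ (with $c\notin\sigma$) by replacing it with $\bot$ does \emph{not} preserve equivalence with $\varphi$: for instance $\varphi'=\Du N_c$ is valid, while $\Du\bot$ is unsatisfiable, so $\varphi'[N_c:=\bot]$ need not be equivalent to $\varphi$ at all. The paper handles this differently: it replaces each fresh $N_c$ by $N_{c'}$ for some fixed $c'\in\sigma_c$, which is sound because one may interpret $c$ as $(c')^{\Amf}$ without affecting $\varphi$. This is precisely where the hypothesis ``$\sigma$ contains at least one constant'' enters, a point your proof leaves unexplained. You yourself observe that for $\LS(\sigma)$-definability as literally stated the step is vacuous (since $\varphi'\in\LS(\sigma)$ already), so your final chain still goes through for the theorem as written; but the replacement you wrote down is wrong, and your discussion misses the actual role of the constant hypothesis.
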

\begin{proof}
	Note that if $\models \varphi \leftrightarrow \varphi'$ and $\varphi'$ uses symbols not in $\sigma$, then we can replace any variable $A\not\in\sigma$ by $\top$, any $\Diamond_{R}\chi$ with $R\not\in\sigma$ by $\neg \top$, and any $N_{c}$ with $c\not\in\sigma$ by an $N_{c'}$ with $c'\in \sigma$, and obtain $\models \varphi \leftrightarrow \varphi''$ for the resulting formula $\varphi''$. Now the proof is similar to the proof of Theorem~\ref{prop:definability1}.
\end{proof}

It remains open whether there are polynomial DAG-size formulas in $\LS$ 
equivalent to $\Flat_{\sigma_{c}}(\Diamond\!^{\ge k}_R \psi)$. Note that the problem is that distinct constants can denote the same domain element.
In contrast, under the unique name assumption, polynomial DAG-size formulas
are easily constructed. Indeed, let $Th_{k}^{n}(p_{0},\dots, p_{n-1})$ be a propositional formula that is true under a propositional assignment iff at least $k$ variables among $p_{0},\dots,p_{n-1}$ are true. It is folklore that there are such formulas of polynomial DAG-size. Let $0,1,\ldots,n-1$ be the constants in $\sigma_{c}$.
Then the formula obtained by replacing in $\Flat_{\sigma_{c}}(\Diamond\!^{\ge k}_R \psi)$ the second conjunct by the formula obtained from $Th_{k}^{n}$ by replacing each $p_{i}$ by $\Diamond_{R}(i\wedge \Flat_{\sigma_{c}}(\psi))$ is as required.

We note that the complexity results for the validity/satisfi\-ability problems for the logics considered in this paper do not depend on whether one defines the size of an input $\varphi$ as the \emph{formula-size} of $\varphi$ (the length of $\varphi$ as a word) or the \emph{DAG-size} $|\varphi|$ of $\varphi$ (the number of subformulas of $\varphi$).
Here, we sketch a reduction from DAG-size to formula-size by introducing 
abbreviations. 
Recall that $\Flat_{\sigma_{c}}(\varphi)$ is equivalent to the $\L$-formula $f(\varphi)$ defined recursively in the same way as 
$\Flat_{\sigma_{c}}(\varphi)$ except that
$$
f_{\sigma_{c}}(\Diamond\!^{\ge k}_R\psi))=\big[\Diamond_R \big(\Flat_{\sigma_{c}}(\psi)\wedge \bigwedge_{c\in \sigma_{c}} \neg N_c\big)\vee{} \Diamond_{R}^{\geq k}(\Flat_{\sigma_{c}}(\psi) \wedge \bigvee_{c\in \sigma_{c}}N_{c}) \big].
$$
Let $\Flat'_{\sigma_{c}}$ be also defined in the same way as $\Flat_{\sigma_{c}}$ except that we set, for a fresh propositional variable $A_\psi$,
$$
\Flat'_{\sigma_{c}}(\Diamond\!^{\ge k}_R \psi) = 
\big[\Diamond_R \big(A_\psi \wedge \bigwedge_{c\in \sigma_{c}} \neg N_c\big)\vee{} \Diamond_{R}^{\geq k}(A_\psi \wedge \bigvee_{c\in \sigma_{c}}N_{c})\big].
$$
The formula-size of $\Flat'_{\sigma_{c}}(\varphi)$ is clearly polynomial in the formula-size of $\varphi$. 
If $\L$ contains the universal modality, then $\models \varphi \leftrightarrow \Flat_{\sigma_{c}}(\varphi)$ iff the following $\L$-formula (of polynomial formula-size) is valid:
$$
\blacksquare \bigwedge \{A_\psi \leftrightarrow \Flat'_{\sigma_{c}}(\psi) \mid \Diamond\!^{\ge k}_R \psi \in \sub (\varphi)\} \rightarrow \big( \varphi \leftrightarrow \Flat'_{\sigma_{c}}(\varphi) \big).
$$
If $\L$ does not admit the universal modality, instead of $\blacksquare$ we take all sequences of the form $\Box_{R_1} \dots \Box_{R_m}$ such that $\varphi$ contains nested $\Diamond^{\ge k_1}_{R_1} \dots \Diamond^{\ge k_m}_{R_m}$, with each $\Diamond^{\ge k_i}_{R_i}$, $i =1,\dots,m-1$, being an immediate predecessor of $\Diamond^{\ge k_{i+1}}_{R_{i+1}}$.

\medskip

We now give the detailed definition of $\Flat(\varphi)$ for $\CT/\FOTNE$. Recall that the straightforward extension of the flattening $\Flat$ defined above for $\CT$ without equality does not work. So we now define a new flattening $\Flat$ from $\CT$ to $\FOTNE$. Suppose we are given a $\CT$-formula $\varphi$.
As before, only the definition of $\Flat(\exists^{\geq k}x \, \psi)$ is non-trivial. Assume that $\psi$ is a Boolean combination $\beta$ of the form 
$$
\psi = \beta(\rho_{1},\ldots,\rho_{k_{1}}, \gamma_{1}(x),\dots,\gamma_{k_{2}}(x),\xi_{1}(y),\dots,\xi_{k_{3}}(y)),
$$
where
\begin{itemize}
	\item[--] the $\rho_{i}$ are binary atoms $R(x,y)$, $R(y,x)$, or $x=y$; 
	
	\item[--] the $\gamma_{i}(x)$ are unary atoms of the form $A(x)$, $x=x$, or $R(x,x)$, or a formula of the form $\exists^{\geq k'} y\, \gamma_{i}'$; 
	
	\item[--] the $\xi_{i}(y)$ are unary atoms of the form $A(y)$, $y=y$ or $R(y,y)$, or a formula of the form $\exists^{\geq k'} x \, \xi_{i}'$.
\end{itemize}
We may assume that $y$ is free in $\psi$. Consider the formula
$$
\varphi_{1} = \exists^{\geq k} x \, \big(((x=y) \wedge \psi_{1}) \vee ((x\not=y) \wedge \psi_{2})\big),
$$
in which $\psi_{1}$ is obtained from $\psi$ by replacing 
\begin{itemize}
	\item[--] all $\rho_{i}$ of the form $x=y$ by $\top$;
	
	\item[--] all $\gamma_{i}(x)$ of the form $x=x$ by $\top$;
	
	\item[--] all $\xi_{i}(y)$ of the form $y=y$ by $\top$; 
\end{itemize}
and $\psi_{2}$ is obtained from $\psi$ by replacing
\begin{itemize}
	\item[--] all $\rho_{i}$ of the form $x=y$ by $\neg \top$;
	
	\item[--] all $\gamma_{i}(x)$ of the form $x=x$ by $\top$;
	
	\item[--] all $\xi_{i}(y)$ of the form $y=y$ by $\top$.
\end{itemize}
Clearly, $\varphi$ and $\varphi_{1}$ are logically equivalent. Also, the formula $\exists x\, ((x=y)\wedge \psi_{1})$
is logically equivalent to the formula $\psi_{1}'$ obtained from $\psi_{1}$ by replacing all
$\rho_{i}$ of the form $R(x,y)$ or $R(y,x)$ by $R(y,y)$, all $\gamma_{i}(x)$ of the form $A(x)$ 
by $A(y)$, and all $\gamma_{i}(x)$ of the form $\exists^{\geq k'}\! y\, \gamma_{i}'$ by the result of swapping $x$ and $y$. Now set, recursively, 
\begin{align*}
	& \Flat(\exists^{\geq k}x\,\psi)= \Flat(\psi_{1}')\vee \exists x\, \Flat(\psi_{2}), \quad \text{for $k=1$},\\
	& \Flat(\exists^{\geq k}x\,\psi)= \exists x\, \Flat(\psi_{2}), \hspace*{1.4cm} \text{for $k>1$}.
\end{align*}
Observe that we do not define $\Flat(\psi(x,y))$ for formulas $\psi(x,y)$ in $\CT$ with occurrences of equality that are not in the scope of a quantifier.

It is now straightforward to prove analogues of Lemma~\ref{lemma:equivalenceomega}
and Theorem~\ref{prop:definability1} for $\CT/\FOTNE$.

\begin{lemma}\label{lem:equivonomega27}
	Let $\sigma$ be constant-free, let $\varphi(x)$ be a $\CT(\sigma)$-formula $($with equality$)$, and let $\mx<\kappa<\kappa'\in \mathbb{N}^{\infty}$. Then, for all pointed $\sigma$-structures $\Amf,a$ and all $i <\kappa$, we have 
	$\Amf^\kappa\models \varphi(a,i)$ iff $\Amf^\kappa\models \Flat(\varphi)(a,i)$
	iff $\Amf^{\kappa'}\models \Flat(\varphi)(a,i)$ iff $\Amf^{\kappa'}\models \varphi(a,i)$.
\end{lemma}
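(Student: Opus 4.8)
The plan is to deduce the four‑way chain of equivalences from a single statement about the flattening. The middle equivalence is cheap: $\Flat(\varphi)$ is an $\FOTNE(\sigma)$-formula, so Lemma~\ref{lemma:bisim_omega} (with $\L=\FOTNE$) gives $\Amf,a\equiv_{\FOTNE(\sigma)}\Amf^{\kappa},(a,i)$ and $\Amf,a\equiv_{\FOTNE(\sigma)}\Amf^{\kappa'},(a,i)$ whenever $i<\kappa<\kappa'$, hence $\Amf^{\kappa}\models\Flat(\varphi)(a,i)$ iff $\Amf^{\kappa'}\models\Flat(\varphi)(a,i)$. For the two outer equivalences it suffices to prove the following claim $(\star)$: for every $\CT(\sigma)$-formula $\psi$ with free variables among $\{x,y\}$ in which no equality atom occurs outside the scope of a quantifier (so that $\Flat(\psi)$ is defined), every $\sigma$-structure $\Amf$, every $\lambda\in\mathbb{N}^{\infty}$ strictly above the largest counting parameter of $\psi$, and all $a,b\in\dom(\Amf)$ and $i,j<\lambda$, we have $\Amf^{\lambda}\models\psi((a,i),(b,j))$ iff $\Amf^{\lambda}\models\Flat(\psi)((a,i),(b,j))$. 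The lemma then follows by instantiating $(\star)$ once with $\lambda:=\kappa$ and once with $\lambda:=\kappa'$, legitimate since $\mx<\kappa<\kappa'$ and $i<\kappa<\kappa'$.

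I would prove $(\star)$ by induction on formulas ordered by the pair (number of occurrences of counting quantifiers in $\psi$, number of subformulas of $\psi$), lexicographically. This is precisely the order along which the recursive definition of $\Flat$ descends: in the clause for $\exists^{\ge k}x\,\theta$, the formulas $\psi_1'$ and $\psi_2$ to which $\Flat$ is applied are built from a subformula of $\theta$ by replacing free equality atoms with $\top$ or $\neg\top$ and, for $\psi_1'$, by the two‑variable renaming witnessing $\exists x\,(x=y\wedge\psi_1)\equiv\psi_1'$; in all cases they contain strictly fewer counting quantifiers than $\exists^{\ge k}x\,\theta$. The atomic cases ($A(z)$, $R(z,z')$, and the constants $\top,\neg\top$ introduced by substitutions) are trivial because $\Flat$ is the identity on them, and the Boolean cases $\psi=\neg\theta$ and $\psi=\theta\wedge\theta'$ are immediate from the induction hypothesis. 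A recurring auxiliary fact is that $\psi_2$ is \emph{copy‑invariant over $\Amf^{\lambda}$}, i.e.\ its truth value at $(a,i),(b,j)$ does not depend on $i$ and $j$; this is not an added hypothesis but a consequence of the induction hypothesis $(\star)$ for $\psi_2$ together with Lemma~\ref{lem:withoueq} applied to the $\FOTNE$-formula $\Flat(\psi_2)$.

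For the quantifier step $\psi=\exists^{\ge k}x\,\theta$ (the case $\exists^{\ge k}y\,\theta$ being symmetric), write $\theta$ in the normal form of the definition and note the equivalences $\theta\equiv(x=y\wedge\psi_1)\vee(x\ne y\wedge\psi_2)$ and $\exists x\,(x=y\wedge\psi_1)\equiv\psi_1'$. If $k=1$, then $\exists^{\ge1}x\,\theta\equiv\psi_1'\vee\exists x\,(x\ne y\wedge\psi_2)$, and over $\Amf^{\lambda}$, where $\lambda\ge k+1=2$, copy‑invariance of $\psi_2$ lets us replace $\exists x\,(x\ne y\wedge\psi_2)$ by $\exists x\,\psi_2$ (a $\psi_2$-witness $(b,j')$ can be shifted to another copy $(b,j'')\ne(a,i)$); applying the induction hypothesis to $\psi_1'$ and to $\psi_2$ turns this into $\Flat(\psi_1')\vee\exists x\,\Flat(\psi_2)=\Flat(\psi)$. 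If $k>1$, I would show that $\Amf^{\lambda}\models\exists^{\ge k}x\,\theta(a,i)$ iff $\Amf^{\lambda}\models\exists x\,\psi_2(a,i)$: for the forward direction, at most one witness of $\theta$ can satisfy $x=y$, so at least $k-1\ge1$ of the witnesses satisfy $x\ne y\wedge\psi_2$, hence $\psi_2$; for the backward direction, a single $\psi_2$-witness $(b,j)$ spreads by copy‑invariance to all $\lambda$ copies $(b,j')$, which are pairwise distinct because $\sigma$ is constant‑free, so all but at most one of them differ from $(a,i)$ and satisfy $x\ne y\wedge\psi_2$, and there are at least $\lambda-1\ge k$ such. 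The induction hypothesis for $\psi_2$ then rewrites $\exists x\,\psi_2$ as $\exists x\,\Flat(\psi_2)=\Flat(\psi)$, completing the step; finally $(\star)$ for one‑free‑variable $\psi$ is the special case needed in the lemma.

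The step I expect to be most delicate is the bookkeeping in the quantifier clause: verifying soundness of the two‑variable renaming behind $\psi_1'\equiv\exists x\,(x=y\wedge\psi_1)$ (with correct handling of the bound variable of quantifier‑form conjuncts $\exists^{\ge k'}y\,\gamma_i'$), checking that the chosen measure genuinely decreases so that both $\Flat$ and the induction are well‑founded, and carefully tracking which of the auxiliary formulas $\psi_1,\psi_1',\psi_2$ still contain free equality atoms. Everything else reduces to copy‑invariance of $\FOTNE$-formulas over $\kappa$-expansions, which is already in hand via Lemmas~\ref{lemma:bisim_omega} and~\ref{lem:withoueq}, exactly as in the proof of Lemma~\ref{lemma:equivalenceomega}. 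Once $(\star)$ is established, the analogue of Theorem~\ref{prop:definability1} for $\CT/\FOTNE$, and hence the polynomial reduction of $\CT/\FOTNE$-definability to $\CT$-validity, follows by the same argument as there.
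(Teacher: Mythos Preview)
Your proposal is correct and follows essentially the same route as the paper: reduce the chain to showing $\Amf^{\lambda}\models\psi$ iff $\Amf^{\lambda}\models\Flat(\psi)$ by induction, handle the quantifier step via the decomposition $\theta\equiv(x=y\wedge\psi_1)\vee(x\ne y\wedge\psi_2)$, and use copy-invariance (Lemma~\ref{lem:withoueq}) to move between copies and to multiply a single $\psi_2$-witness into $\lambda$ many. Your explicit lexicographic measure (number of counting-quantifier occurrences, then subformula count) and your remark that copy-invariance of $\psi_2$ is a \emph{consequence} of the induction hypothesis rather than an extra assumption are welcome clarifications that the paper leaves implicit.
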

\begin{proof}
As before, the interesting part is to show that we have $\Amf^{\kappa}\models \varphi(a,i)$ iff $\Amf^{\kappa}\models \Flat(\varphi)(a,i)$, for all $a\in \dom(\Amf)$ and $i<\kappa$. 
	
	The proof is by induction on the construction of $\varphi$. For $\CT$-formulas $\psi(x,y)$ without occurrences of equality that are not in the scope of a quantifier, we claim that $\Amf^{\kappa}\models \psi((a,i),(b,j))$ iff $\Amf^{\kappa}\models \Flat(\psi)((a,i'),(b,j'))$, for all $a,b\in \dom(\Amf)$ and all $i,j,i',j'<\kappa$. Observe that the induction step for such formulas follows from Lemma~\ref{lem:withoueq}. 
	
	We consider the step $\varphi= \exists^{\geq k}x \, \psi(x,y)$, where we may assume that $y$ is free in $\psi$. Assume from above:
	\begin{itemize}
		\item[--] $\psi=\gamma(\rho_{1},\ldots,\rho_{k_{1}}, \gamma_{1}(x),\ldots,\gamma_{k_{2}}(x),\xi_{1}(y),\ldots,\xi_{k_{3}}(y))$;
		
		\item[--] $
		\varphi_{1} = \exists^{\geq k} x \, \big[ \big((x=y) \wedge \psi_{1} \big) \vee \big((x\not=y) \wedge \psi_{2}\big)\big]$;
		
		\item[--] 
		for $k=1$, 
		$\Flat(\exists^{\geq k}x \, \psi)= \Flat(\psi_{1}')\vee \exists x \, \Flat(\psi_{2})$;
		
		\item[--] for $k>1$,
		$
		\Flat(\exists^{\geq k}x \, \psi)= \exists x \, \Flat(\psi_{2}).
		$
	\end{itemize}
Suppose that $\Amf^\kappa\models \exists^{\geq k}x \, \psi(a,i)$. 

If $k=1$, then we have either $\Amf^{\kappa}\models \psi((a,i),(a,i))$, and so $\Amf^{\kappa} \models \psi_{1}'(a,i)$, or $\Amf^{\kappa}\models \psi((b,j),(a,i))$, for some $(b,j)\not=(a,i)$, and so $\Amf^{\kappa} \models \psi_{2}((b,j),(a,i))$. Then, by IH, $\Amf^{\kappa} \models \Flat(\psi_{1}')(a,i)$ or $\Amf^{\kappa} \models \Flat(\psi_{2})((b,j),(a,i))$. In both cases $\Amf^\kappa\models \Flat(\varphi)$.
	
	If $k>1$, then $\Amf^{\kappa}\models \psi((b,j),(a,i))$, for some $(b,j)\not=(a,i)$, and so again $\Amf^{\kappa} \models \psi_{2}((b,j),(a,i))$ and $\Amf^\kappa\models \Flat(\varphi)$ by IH.
	
	For the other direction, suppose $\Amf^\kappa\models \Flat(\varphi)(a,i)$. 
	
	Let $k=1$. If $\Amf^\kappa\models \Flat(\psi_{1}')(a,i)$, then by IH, $\Amf^{\kappa}\models \psi_{1}'(a,i)$, and so $\Amf^{\kappa}\models \varphi(a,i)$. If $\Amf^{\kappa}\models \exists x \, \Flat(\psi_{2})(a,i)$, then there is some $(b,j)$ such that $\Amf^\kappa\models \Flat(\psi_{2})((b,j),(a,i))$. By Lemma~\ref{lem:withoueq}, we have $\Amf^\kappa\models \Flat(\psi_{2})((b,j'),(a,i))$ for some $(b,j')\not=(a,i)$.  
	By the induction hypothesis, this implies that $\Amf^\kappa \models \psi_{2}((b,j'),(a,i))$. Thus, we obtain $\Amf^\kappa\models \exists^{\geq k}x \, \psi(a,i)$. 
	
	Let $k>1$. Then $\Amf^{\kappa}\models \exists x \, \Flat(\psi_{2}(a,i)$. We then have some $(b,j)$ such that $\Amf^\kappa\models \Flat(\psi_{2})((b,j),(a,i))$. By Lemma~\ref{lem:withoueq}, $\Amf^\kappa\models \Flat(\psi_{2})((b,j'),(a,i))$ for every $j'<\kappa$. 
	By the induction hypothesis, this implies that $\Amf^\kappa \models \psi_{2}((b,j'),(a,i))$ for all $j'<\kappa$. Since $\mx< \kappa$, we obtain $\Amf^\kappa\models \exists^{\geq k}x \, \psi(a,i)$, as required. 
\end{proof}

For the polynomial reduction of checking $\models\varphi \leftrightarrow \Flat(\varphi)$ to $\CT$-validity, 
observe that, 
although $\Flat(\varphi)$ is exponential in formula-size as we duplicate the formulas $\gamma_{i}(x)$ and $\xi_{i}(y)$, we can again   use the DAG-representation instead. 

\begin{corollary}
	$\CT/\FOTNE$-definability is polytime reducible to $\CT$-validity.
\end{corollary}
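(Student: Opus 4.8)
The plan is to obtain the corollary from the same two ingredients that drive all the other definability results in Table~\ref{table:results}: a definability criterion of the form ``$\varphi(x)$ is $\FOTNE$-definable iff $\models\varphi\leftrightarrow\Flat(\varphi)$'', together with the fact that $\Flat(\varphi)$ can be produced from $\varphi$ in polynomial time (as a DAG). For the criterion I would reuse the argument of Theorem~\ref{prop:definability1} almost verbatim, replacing Lemma~\ref{lemma:equivalenceomega} by Lemma~\ref{lem:equivonomega27} (which already covers $\Flat$ as a map from $\CT$ with equality into $\FOTNE$). The $(\Leftarrow)$ direction is immediate since $\Flat(\varphi)\in\FOTNE$. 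For $(\Rightarrow)$, given an $\FOTNE$-definition $\varphi'$ of $\varphi$ and any pointed $\sigma$-structure $\Amf,a$ over the constant-free signature $\sigma=\sig(\varphi)\cup\sig(\varphi')$, one chains: $\Amf\models\varphi(a)$ iff $\Amf\models\varphi'(a)$ (definability), iff $\Amf^{\omega}\models\varphi'(a,0)$ (Lemma~\ref{lemma:bisim_omega}, as $\FOTNE$ is contained in equality-free $\FO$), iff $\Amf^{\omega}\models\varphi(a,0)$ (definability), iff $\Amf^{\omega}\models\Flat(\varphi)(a,0)$ (Lemma~\ref{lem:equivonomega27}), iff $\Amf\models\Flat(\varphi)(a)$ (Lemma~\ref{lemma:bisim_omega} again). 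Hence $\models\varphi\leftrightarrow\Flat(\varphi)$.

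It then remains to turn this criterion into a polynomial-time reduction to $\CT$-validity, and this is the only delicate point. Written out as a string, $\Flat(\varphi)$ is exponential, because the clause $\Flat(\exists^{\geq k}x\,\psi)=\Flat(\psi_1')\vee\exists x\,\Flat(\psi_2)$ for $k=1$ duplicates the inner unary formulas $\gamma_i(x)$ and $\xi_i(y)$. However, $\psi_1$, $\psi_2$ and $\psi_1'$ arise from $\psi$ only by renaming atoms and swapping $x$ and $y$, and since $\FOTNE$ has just two variables each subformula of $\varphi$ contributes at most its ``$(x,y)$-version'' and its ``$(y,x)$-version''; hence the set of distinct subformulas occurring in $\Flat(\varphi)$ has size $O(|\varphi|)$, so $\Flat(\varphi)$ is of polynomial DAG-size and, by inspection of the definition, is computable in polynomial time from the DAG of $\varphi$. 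Since formulas are represented as DAGs throughout, this already yields the reduction: on input $\varphi$, compute the DAG of $\varphi\leftrightarrow\Flat(\varphi)$ and ask whether it is $\CT$-valid. (If $\CT$-validity is posed on string inputs, the abbreviation trick reducing DAG-size to formula-size described earlier in this section applies verbatim.)

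The main obstacle is precisely this polynomiality bookkeeping: once Lemma~\ref{lem:equivonomega27} is in hand the definability criterion is routine, and the real content of the corollary is that the naive, exponential translation $\Flat$ can be kept polynomial by exploiting the two-variable restriction and the DAG representation. A secondary thing to verify is that the case split $k=1$ versus $k\geq 2$ in the definition of $\Flat$ does not invalidate Lemma~\ref{lem:equivonomega27} on $\omega$-expansions---but that is exactly what Lemma~\ref{lem:equivonomega27} (via Lemma~\ref{lem:withoueq} and $\mx<\kappa$) guarantees, so no extra work is required there.
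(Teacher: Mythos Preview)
Your proposal is correct and follows essentially the same approach as the paper: establish the analogue of Theorem~\ref{prop:definability1} for $\CT/\FOTNE$ via Lemma~\ref{lem:equivonomega27}, then observe that although $\Flat(\varphi)$ has exponential formula-size due to the duplications in $\psi_1'$ and $\psi_2$, its DAG-size stays polynomial. The paper's own justification of the polynomial DAG bound is actually terser than yours---it simply asserts that the DAG-representation suffices---so your more explicit bookkeeping of the ``$(x,y)$-version'' versus ``$(y,x)$-version'' is a welcome elaboration rather than a departure.
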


\lemtransss* 

\begin{proof}
$(i)$ is trivial. $(ii)$ We construct $\varphi'$ by induction. The case of atomic formulas is clear. Assume $\chi(x,y)$ is a Boolean combination $\beta$ of the form 
	$$
	\chi = \beta(\gamma_{1}(x),\dots,\gamma_{k_{1}}(x),\xi_{1}(y),\dots,\xi_{k_2}(y)),
	$$
	where
	\begin{itemize}
		\item[--] the $\gamma_{i}(x)$ are unary atoms of the form $A(x)$, $\top(x)$, or a formula of the form $\exists y\, \gamma_{i}^\ast$; 
		
		\item[--] the $\xi_{i}(y)$ are unary atoms of the form $A(y)$, $\top(y)$, or a formula of the form $\exists x \, \xi_{i}^\ast$.
	\end{itemize} 
	A \emph{$\gamma$-type} $t$ is a maximal satisfiable subset of $$
	\{\gamma_{1}(x),\dots,\gamma_{k_{1}}(x),\neg\gamma_{1}(x),\dots,\neg\gamma_{k_{1}}(x)\}.
	$$
	A \emph{$\xi$-type} $t$ is a maximal satisfiable subset of 
	$$
	\{\xi_{1}(y),\dots,\xi_{k_{2}}(y),\neg\xi_{1}(y),\dots,\neg\xi_{k_{2}}(y)\}.
	$$
	Let $S$ be the set of all pairs $(t_{1},t_{2})$ of $\gamma$-types $t_{1}$ and $\xi$-types $t_{2}$ such that $t_{1}\cup t_{2}\models \chi$. Consider now, inductively, the equivalent translations $\gamma_{i}'$ and $\xi_{i}'$ of $\gamma_{i}(x)$ and $\xi_{i}(y)$ into $\L$, respectively. For any $\gamma$-type $t$, let $t'=\bigwedge_{\gamma\in t}\gamma'$ and, for any $\xi$-type $t$, let 
	$t'=\bigwedge_{\xi\in t}\xi'$. Now define
\begin{itemize}
\item[--] $(\exists y \, (R(x,y) \wedge \chi(x,y)))'= \bigvee_{(t_{1},t_{2})\in S}(t_{1}' \wedge \Diamond_{R}t_{2}')$;

\item[--] $(\exists x \, (R(y,x) \wedge \chi(x,y)))'= \bigvee_{(t_{1},t_{2})\in S}(t_{2}' \wedge \Diamond_{R}t_{1}')$;

\item[--] $(\exists y \, (R(y,x) \wedge \chi(x,y)))'= \bigvee_{(t_{1},t_{2})\in S}(t_{1}' \wedge \Diamond_{R}^{-}t_{2}')$;

\item[--] $(\exists x \, (R(x,y) \wedge \chi(x,y)))'= \bigvee_{(t_{1},t_{2})\in S}(t_{2}' \wedge \Diamond_{R}^{-}t_{1}')$;

\item[--] $(\exists y \, \chi(x,y))' = \bigvee_{(t_{1},t_{2})\in S}(t_{1}' \wedge \Du t_{2}')$;

\item[--] $(\exists x \, \chi(x,y))' = \bigvee_{(t_{1},t_{2})\in S}(t_{2}' \wedge \Du t_{1}')$.
\end{itemize} 
It is straightforward to see that $\cdot'$ is a map from suc$\L$ to $\L$ and that
$\Amf\models \varphi(a)$ iff $\Amf\models \varphi'(a)$, for all pointed $\Amf,a$.
\end{proof}

We now consider $\FOTNE/\MLiu$-definability, where $\MLiu$ is represented in suc$\MLiu$. For every $\mathfrak{A}$, we take the \emph{twin-unfolding} $\Amf^{\ast}$ of $\mathfrak{A}$ whose domain $\dom(\Amf^{\ast})$ is the set of 
\begin{equation}\label{eq:def1}
	\vec{a}=a_{0}^{j}S_{0}\dots S_{n-1}a_{n}^{j}
\end{equation}
with $n\geq 0$, $j\in \{0,1\}$, $a_{i}\in \dom(\Amf)$ for all $i\leq n$, and $S_{i}$ either a binary predicate symbol in $\sig(\varphi)$ or its inverse, and such that $(a_{i},a_{i+1})\in S_{i}^{\Amf}$ for $i\leq n$. For $\vec{a}$ of the form \eqref{eq:def1}, let $\text{tail}(\vec{a})=a_{n}^{j}$. Set $\vec{a}\in A^{\Amf^{\ast}}$ if $a_{n}\in A^{\Amf}$ for 
$\text{tail}(\vec{a})=a_{n}^{j}$.
It remains to define $R^{\Amf^{\ast}}$ for a binary predicate symbol $R$. Assume $\vec{a}$ takes the form \eqref{eq:def1} and
let $\vec{a}'=a_{0}^{j}S_{0}\cdots a_{n}^{j}S_{n}a_{n+1}^{j}$ be an extension of $\vec{a}$. Then $(\vec{a},\vec{a}')\in R^{\Amf^{\ast}}$ if $R=S_{n}$ and 
$(\vec{a}',\vec{a})\in R^{\Amf^{\ast}}$ if $R^{-}=S_{n}$.  

Observe that $\Amf,a\sim_{\MLiu} \Amf^{\ast},a^{j}$ for $j=0,1$ since $\vec{a}\mapsto \text{tail}(\vec{a})$ is an $\MLiu$-bisimulation between $\Amf^{\ast}$ and $\Amf$.

\begin{lemma}\label{lem:twinunfoldingtrans}
	For all $\varphi\in \MLiu$ and all pointed $\Amf,a$: $\Amf^{\ast}\models \varphi(a^0)$ iff $\Amf\models\varphi(a)$.
\end{lemma}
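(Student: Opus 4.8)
The plan is to obtain Lemma~\ref{lem:twinunfoldingtrans} directly from the bisimulation observation made immediately before its statement, combined with the model-theoretic characterisation in Lemma~\ref{lem:guardedbisim}. Set $\sigma=\sig(\varphi)$, so that $\Amf^{\ast}$ is a $\sigma$-structure and $\varphi$ is an $\MLiu(\sigma)$-formula, and consider the relation $\bis\subseteq\dom(\Amf^{\ast})\times\dom(\Amf)$ that relates $\vec a\in\dom(\Amf^{\ast})$ to the element $a_n\in\dom(\Amf)$, where $\text{tail}(\vec a)=a_n^{j}$. The first step is to check that $\bis$ is an $\MLiu(\sigma)$-bisimulation between $\Amf^{\ast}$ and $\Amf$. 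Atomic harmony is immediate from the definition of $A^{\Amf^{\ast}}$. Globality holds because $\vec a\mapsto a_n$ is total on $\dom(\Amf^{\ast})$ and every $d\in\dom(\Amf)$ is the tail of the length-$0$ word $d^{0}\in\dom(\Amf^{\ast})$; this also takes care of the condition needed for the universal modality $\Du$.

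Next I would verify the forth/back conditions for $R$ and $R^{-}$ by a case analysis on how edges of $R^{\Amf^{\ast}}$ arise. By the construction of $\Amf^{\ast}$, $(\vec u,\vec v)\in R^{\Amf^{\ast}}$ holds exactly when either $\vec v$ extends $\vec u$ by appending ``$R\,c^{j}$'' for some $c$ — in which case membership of $\vec v$ in $\dom(\Amf^{\ast})$ forces $(\text{tail}(\vec u),\text{tail}(\vec v))\in R^{\Amf}$ — or $\vec u$ extends $\vec v$ by appending ``$R^{-}c^{j}$'', in which case membership of $\vec u$ in $\dom(\Amf^{\ast})$ again forces $(\text{tail}(\vec u),\text{tail}(\vec v))\in R^{\Amf}$. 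Hence the forth condition for $R$ holds, and, since $(R^{-})^{\Amf^{\ast}}$ is just the converse of $R^{\Amf^{\ast}}$, so does the forth condition for $R^{-}$. For the back condition, given $(\text{tail}(\vec u),c)\in R^{\Amf}$ one appends ``$R\,c^{j}$'' to $\vec u$ (staying inside $\dom(\Amf^{\ast})$ and producing a word with tail $c$), and given $(c,\text{tail}(\vec u))\in R^{\Amf}$ one appends ``$R^{-}c^{j}$''; both deliver the required partner. There are no nominals to handle since $\varphi\in\MLiu$.

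Finally, since $\text{tail}(a^{0})=a^{0}$, we have $(a^{0},a)\in\bis$, so $\Amf^{\ast},a^{0}\sim_{\MLiu(\sigma)}\Amf,a$, and the first implication of Lemma~\ref{lem:guardedbisim} yields $\Amf^{\ast},a^{0}\equiv_{\MLiu(\sigma)}\Amf,a$. Instantiating this equivalence at the formula $\varphi$ gives exactly $\Amf^{\ast}\models\varphi(a^{0})$ iff $\Amf\models\varphi(a)$. (One could equally run a direct induction on the structure of $\varphi$, the only mildly delicate clause being that for $\Diamond_{R}$ and $\Diamond_{R^{-}}$, where one must track whether a given $R^{\Amf^{\ast}}$-edge was created by appending ``$R$'' or ``$R^{-}$''; in both cases the tails end up $R^{\Amf}$-related in the right direction, so this is purely bookkeeping.) I do not expect a genuine obstacle here: the content of the lemma is entirely the verification that the tail map is an $\MLiu$-bisimulation, which is routine once the edge relation of $\Amf^{\ast}$ is unwound.
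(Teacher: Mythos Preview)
Your proposal is correct and matches the paper's approach: the paper simply records, just before the lemma, that the tail map $\vec a\mapsto\text{tail}(\vec a)$ is an $\MLiu$-bisimulation between $\Amf^{\ast}$ and $\Amf$, and then states the lemma without further argument. Your write-up fills in exactly the routine verification that the paper leaves implicit.
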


\begin{lemma}\label{lem:twinunfoldingequiv}
For all pointed structures $\Amf,a$ and $j=0,1$, we have $\Amf^{\ast}\models \varphi(a^{j})$ iff $\Amf^{\ast}\models m(\varphi)(a^{j})$.
\end{lemma}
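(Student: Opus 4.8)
The plan is to prove the equivalence by structural induction on $\varphi$, strengthening the statement so that the induction carries: instead of only the root copies $a^{j}$, I would prove that $\Amf^{\ast}\models\varphi(\vec a)$ iff $\Amf^{\ast}\models m(\varphi)(\vec a)$ for every node $\vec a\in\dom(\Amf^{\ast})$, and, for the two-free-variable subformulas that show up in the quantifier case, for every pair of nodes. The lemma is then the instance $\vec a=a^{j}$. The atomic case and the Boolean cases $\neg,\wedge$ are trivial, since $m$ fixes atoms and commutes with $\neg$ and $\wedge$, so all the work is in the case $\varphi=\exists x\,\psi$.

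The first step I would isolate is a single structural fact about the twin-unfolding. The Gaifman graph of $\Amf^{\ast}$ is a forest in which each edge carries a unique relation symbol $R\in\sig(\varphi)$ and a unique direction, and no node has a self-loop: two nodes are adjacent iff one is a one-step extension of the other, the extending step $S_{n}$ fixes both the relation and the direction, and a self-loop would force a path to equal its own proper prefix. This yields the pointwise consequence I actually need: for all $\vec a,\vec b\in\dom(\Amf^{\ast})$, under the assignment $x\mapsto\vec a$, $y\mapsto\vec b$, among the atoms $S(x,y)$, $S(y,x)$ ($S\in\sig(\varphi)$), $S(x,x)$, $S(y,y)$ at most one is true --- the $S(x,x)$- and $S(y,y)$-atoms are always false, $S(x,y)$ can hold only for the unique $S$ with $(\vec a,\vec b)\in S^{\Amf^{\ast}}$, $S(y,x)$ only for the unique $S$ with $(\vec b,\vec a)\in S^{\Amf^{\ast}}$, and these last two cases are mutually exclusive.

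The main argument for $\varphi=\exists x\,\psi$ would then go as follows. Fix a node $\vec b$ for the free variable and decompose $\psi$ as in the definition of $m(\exists x\,\psi)$. Since $\psi_{1}$ is logically equivalent to $\psi$, it holds in $\Amf^{\ast}$ under all assignments, and distributing $\exists x$ over its three disjuncts gives that $\Amf^{\ast}\models(\exists x\,\psi)(\vec b)$ holds iff (i) for some $R$ there is $\vec a$ with $(\vec a,\vec b)\in R^{\Amf^{\ast}}$ and $\Amf^{\ast}\models\psi[\vec a,\vec b]$, or (ii) for some $R$ there is $\vec a$ with $(\vec b,\vec a)\in R^{\Amf^{\ast}}$ and $\Amf^{\ast}\models\psi[\vec a,\vec b]$, or (iii) there is $\vec a$ with $\Amf^{\ast}\models\psi[\vec a,\vec b]$ and no binary atom true at $(\vec a,\vec b)$. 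In case (i) the structural fact forces the binary atoms of $\psi$ at $(\vec a,\vec b)$ to the truth values that are hard-wired into $\psi_{R(x,y)}$, so $\psi$ and $\psi_{R(x,y)}$ have the same value at $(\vec a,\vec b)$; likewise $\psi$ agrees with $\psi_{R(y,x)}$ in case (ii) and with $\psi_{N}$ in case (iii). Finally, $m(\psi_{R(x,y)})$ arises from $\psi_{R(x,y)}$ by leaving the atoms $A(x),A(y),\top(x),\top(y),\neg\top(x),\neg\top(y)$ unchanged and replacing each maximal subformula $\exists y\,\gamma_i'$ or $\exists x\,\xi_i'$ --- a proper subformula of $\exists x\,\psi$ with one free variable --- by its $m$-image; the induction hypothesis applied to these subformulas at $\vec a$ (resp.\ $\vec b$) gives that $\psi_{R(x,y)}$ and $m(\psi_{R(x,y)})$ agree at $(\vec a,\vec b)$, and similarly for $\psi_{R(y,x)}$, $\psi_{N}$. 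Reassembling the three cases produces exactly $m(\exists x\,\psi)$, closing the induction step.

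The hard part will be the structural analysis of $\Amf^{\ast}$ and the accompanying check that the replacement rules defining $\psi_{R(x,y)}$, $\psi_{R(y,x)}$, $\psi_{N}$ really anticipate \emph{every} binary atom that can occur in $\psi$ --- in particular the $S(x,x)$- and $S(y,y)$-atoms and the $S(y,x)$-atoms with $S\neq R$ --- and that the three cases (i)--(iii) are both exhaustive and, on those atoms, mutually incompatible in $\Amf^{\ast}$; everything else is routine bookkeeping. A smaller point that still needs care is that $m$ is stated only for one-free-variable formulas, so $m(\psi_{R(x,y)})$ for the two-variable $\psi_{R(x,y)}$ must be read as its componentwise action on the one-variable subformulas, and the induction hypothesis must be maintained for arbitrary nodes and pairs of nodes rather than only for the root copies $a^{j}$.
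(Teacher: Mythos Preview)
Your induction and your edge-label property of $\Amf^{\ast}$ (no self-loops, each edge carries a unique relation and direction) match the paper's property~(1), but you have missed the paper's property~(2), which is precisely what the two copies $j\in\{0,1\}$ in the twin-unfolding are for: for every node and every one-free-variable Boolean combination of subformulas of $m(\varphi)$ satisfied there, there is a \emph{different} node, not adjacent to the first, satisfying the same combination. Without this your step ``reassembling the three cases produces exactly $m(\exists x\,\psi)$'' breaks in the backward direction for the third disjunct.

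Concretely: your case~(iii) is ``there is $\vec a$ with $\psi_{N}[\vec a,\vec b]$ and no binary atom true at $(\vec a,\vec b)$'', whereas the third disjunct of $m(\exists x\,\psi)$ is just $\exists x\,m(\psi_{N})$, which after the IH is ``there is $\vec a$ with $\psi_{N}[\vec a,\vec b]$'' \emph{without} the non-adjacency constraint. A witness $\vec a$ for $\exists x\,m(\psi_{N})(\vec b)$ may be adjacent to $\vec b$, and then $\psi$ and $\psi_{N}$ need not agree at $(\vec a,\vec b)$; property~(1) alone does not help here. The repair uses the twins: $\psi_{N}$ has no free binary atoms in $x,y$, so its value at $(\vec a,\vec b)$ depends only on the $\gamma_{i}(\vec a)$'s and $\xi_{i}(\vec b)$'s; by the IH the $\gamma_{i}$'s are equivalent on $\Amf^{\ast}$ to $m(\gamma_{i})\in\text{suc}\MLiu$, hence preserved under the copy-swap $j\mapsto 1{-}j$; the twin $\vec a'$ of $\vec a$ lies in the other connected component of $\Amf^{\ast}$, is therefore not adjacent to $\vec b$, and gives the witness you need for~(iii).
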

\begin{proof}
	The proof uses the following properties of $\Amf^{\ast}$: (1) for all $a,b\in \dom(\Amf^{\ast})$,
	if $(a,b)\in R^{\Amf^{\ast}}$, then $(a,b)\not\in S^{\Amf^{\ast}}$ for any $S\not=R$ and $(b,a)\not\in S^{\Amf^{\ast}}$ for any $S$; (2) for all $a\in \dom(\Amf^{\ast}$ and all Boolean combination of subformulas $\psi$ of $m(\varphi)$ with one free variable, if $\Amf^{\ast}\models \psi(a)$, then there exists $b\in \dom(\Amf^{\ast})$ such that $\Amf^{\ast}\models \psi(b)$, $b\not=a$, and $(a,b),(b,a)\not\in S^{\Amf^{\ast}}$ for any $S\in \sig(\varphi)$.
\end{proof}

We are now in a position to show that $m(\varphi)$ is a suc$\MLiu$-formula defining $\varphi$ whenever such a definition exists.

\begin{theorem}\label{prop:definability111}
An $\FOTNE$-formula $\varphi(x)$ %without equality 
is suc$\MLiu$-definable iff $\models \varphi \leftrightarrow m(\varphi)$.
\end{theorem}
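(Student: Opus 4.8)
The direction $(\Leftarrow)$ is immediate: if $\models \varphi \leftrightarrow m(\varphi)$, then $\varphi$ is equivalent to $m(\varphi)$, which is a $\text{suc}\MLiu$-formula, so $\varphi$ is $\text{suc}\MLiu$-definable. For $(\Rightarrow)$, suppose $\varphi$ has a $\text{suc}\MLiu$-definition $\varphi'$, i.e.\ $\models \varphi \leftrightarrow \varphi'$; we must show $\Amf \models \varphi(a)$ iff $\Amf \models m(\varphi)(a)$ for every pointed structure $\Amf,a$. The plan is to route everything through the twin-unfolding $\Amf^{\ast}$, on which $\varphi$ and $m(\varphi)$ agree by Lemma~\ref{lem:twinunfoldingequiv}, and to get back and forth between $\Amf$ and $\Amf^{\ast}$ using the fact that $\Amf$ and $\Amf^{\ast}$ are indistinguishable by modal formulas.

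First I would do some harmless preprocessing. As noted for definability, fresh propositional variables and binary symbols occurring in $\varphi'$ but not in $\varphi$ can be replaced by $\top$ and $\neg\top$, so we may assume $\sig(\varphi')\subseteq\sig(\varphi)$, and accordingly take $\Amf$ and $\Amf^{\ast}$ over $\sig(\varphi)$. By Lemma~\ref{lem:lemtransss}$(ii)$, the $\text{suc}\MLiu$-formulas $\varphi'$ and $m(\varphi)$ are equivalent over all structures to $\MLiu$-formulas $\varphi''$ and $\mu$, respectively, still over $\sig(\varphi)$ (the translation introduces no new symbols); in particular $\models \varphi\leftrightarrow\varphi''$. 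Now fix $\Amf,a$ and run the chain
\begin{align*}
\Amf\models\varphi(a) &\iff \Amf\models\varphi''(a) && (\models\varphi\leftrightarrow\varphi'')\\
&\iff \Amf^{\ast}\models\varphi''(a^0) && (\text{Lemma~\ref{lem:twinunfoldingtrans}})\\
&\iff \Amf^{\ast}\models\varphi(a^0) && (\models\varphi\leftrightarrow\varphi'')\\
&\iff \Amf^{\ast}\models m(\varphi)(a^0) && (\text{Lemma~\ref{lem:twinunfoldingequiv}})\\
&\iff \Amf^{\ast}\models\mu(a^0) && (\models m(\varphi)\leftrightarrow\mu)\\
&\iff \Amf\models\mu(a) && (\text{Lemma~\ref{lem:twinunfoldingtrans}})\\
&\iff \Amf\models m(\varphi)(a). && (\models m(\varphi)\leftrightarrow\mu)
\end{align*}
Hence $\models\varphi\leftrightarrow m(\varphi)$, as required. (Alternatively, the two applications of Lemma~\ref{lem:twinunfoldingtrans} can be replaced by a direct appeal to $\MLiu$-bisimulation invariance via Lemma~\ref{lem:guardedbisim}, since $\vec a\mapsto\mathrm{tail}(\vec a)$ witnesses $\Amf,a\sim_{\MLiu}\Amf^{\ast},a^0$.) One then observes that $m(\varphi)$ is constructed from $\varphi$ in polynomial time and of polynomial DAG-size, which gives the claimed polytime reduction of $\text{suc}\MLiu$-definability (hence of $\CT/\MLiu$-definability, reached earlier in the section) to $\CT$-validity.

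\textbf{Where the work is.} For the theorem itself the chain above is routine; the only delicate bookkeeping is passing from $\text{suc}\MLiu$ to $\MLiu$ via Lemma~\ref{lem:lemtransss} so that Lemma~\ref{lem:twinunfoldingtrans} applies, and the fresh-symbol normalisation. The genuine content sits in Lemma~\ref{lem:twinunfoldingequiv}, which would be proved by structural induction on $\varphi$, the critical case being $\varphi=\exists x\,\psi$ matched against the case split of $m(\exists x\,\psi)$ into $\psi_{R(x,y)}$, $\psi_{R(y,x)}$, and $\psi_N$. Here one uses the two structural features of $\Amf^{\ast}$: $(1)$ if $(\vec a,\vec b)\in R^{\Amf^{\ast}}$ then $(\vec a,\vec b)\notin S^{\Amf^{\ast}}$ for $S\ne R$ and $(\vec b,\vec a)\notin S^{\Amf^{\ast}}$ for all $S$, so the disjoint $R(x,y)$/$R(y,x)$/no-edge cases in the translation faithfully mirror the first-order semantics of the binary atoms inside $\psi$; and $(2)$ every element of $\Amf^{\ast}$ satisfying a Boolean combination of subformulas of $m(\varphi)$ with one free variable has a distinct, $S$-unrelated "twin" satisfying the same combination, which is exactly what allows the counting-free existential quantifier of $\text{suc}\MLiu$ to simulate the equality-free $\FOTNE$ quantifier (the absence of equality in $\varphi$ being essential, since $\exists^{\ge 2}x$ with $x=y$ behaves differently, as the failure of the naive flattening for $\FOTNE$ already shows).
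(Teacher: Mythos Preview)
Your proof is correct and follows the same approach as the paper, which simply cites Lemmas~\ref{lem:twinunfoldingtrans} and~\ref{lem:twinunfoldingequiv}. Your extra step of translating $\varphi'$ and $m(\varphi)$ to $\MLiu$ via Lemma~\ref{lem:lemtransss} is careful bookkeeping (since Lemma~\ref{lem:twinunfoldingtrans} is stated for $\MLiu$), though as you note it can be bypassed by the bisimulation observation preceding that lemma, which applies equally to $\text{suc}\MLiu$; your analysis of where the real work lies in Lemma~\ref{lem:twinunfoldingequiv} also matches the paper's proof sketch.
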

\begin{proof}
Follows from Lemmas~\ref{lem:twinunfoldingtrans} and~\ref{lem:twinunfoldingequiv}.
\end{proof}

As $m(\varphi)$ is constructed in polytime in DAG-representation, we obtain:
\begin{corollary}
	$\CT/\MLiu$-definability is polytime reducible to $\CT$-validity.
\end{corollary}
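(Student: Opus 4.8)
The plan is to assemble the result from three pieces already available: $(1)$ the $\CT/\FOTNE$-flattening $\Flat$, for which the lifted form of Theorem~\ref{prop:definability1} (together with Lemma~\ref{lem:equivonomega27}) gives that a $\CT$-formula $\varphi(x)$ is $\FOTNE$-definable iff $\models \varphi \leftrightarrow \Flat(\varphi)$, computable in polynomial time in DAG-representation; $(2)$ Lemma~\ref{lem:lemtransss}, by which $\text{suc}\MLiu$ and $\MLiu$ have the same expressive power, so a formula is $\MLiu$-definable iff it is $\text{suc}\MLiu$-definable; and $(3)$ Theorem~\ref{prop:definability111}, which says that an $\FOTNE$-formula $\psi(x)$ is $\text{suc}\MLiu$-definable iff $\models \psi \leftrightarrow m(\psi)$, with $m(\psi)$ of polynomial DAG-size and polytime computable.

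Concretely, given a $\CT(\sigma)$-formula $\varphi(x)$ with one free variable, I would form $\Flat(\varphi)(x)\in\FOTNE$, then its modal reduct $m(\Flat(\varphi))\in\text{suc}\MLiu$, and output the validity question ``$\models \varphi \leftrightarrow m(\Flat(\varphi))$''. This is a legitimate $\CT$-validity instance because $\text{suc}\MLiu$ is built over the two variables $x,y$ using only Boolean connectives and (guarded) quantifiers, hence $\text{suc}\MLiu\subseteq\FOTNE\subseteq\CT$; since $\Flat$ and $m$ are each polytime in DAG-representation, the resulting instance has polynomial DAG-size. It then remains to verify that $\varphi$ is $\MLiu$-definable iff $\models \varphi \leftrightarrow m(\Flat(\varphi))$. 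For the right-to-left direction, $\models \varphi \leftrightarrow m(\Flat(\varphi))$ means $\varphi$ is equivalent to the $\text{suc}\MLiu$-formula $m(\Flat(\varphi))$, and Lemma~\ref{lem:lemtransss}$(ii)$ produces an equivalent $\MLiu$-formula, so $\varphi$ is $\MLiu$-definable. For the left-to-right direction, if $\varphi$ is $\MLiu$-definable then, $\MLiu$ being via the standard translation a fragment of $\FOTNE$, $\varphi$ is $\FOTNE$-definable; hence $\models \varphi \leftrightarrow \Flat(\varphi)$ by the lifted Theorem~\ref{prop:definability1}, so $\Flat(\varphi)\equiv\varphi$ is $\MLiu$-definable and thus $\text{suc}\MLiu$-definable by Lemma~\ref{lem:lemtransss}$(i)$; Theorem~\ref{prop:definability111} then gives $\models \Flat(\varphi) \leftrightarrow m(\Flat(\varphi))$, and combining with $\models \varphi \leftrightarrow \Flat(\varphi)$ yields $\models \varphi \leftrightarrow m(\Flat(\varphi))$.

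I expect no genuinely hard step here: all the substance lies in Lemma~\ref{lem:lemtransss}, Lemma~\ref{lem:twinunfoldingequiv}, and Theorem~\ref{prop:definability111}, and what remains is bookkeeping. The one point needing care is that the left-to-right argument must route through $\FOTNE$-definability so that $\Flat(\varphi)$ may legitimately stand in for $\varphi$; a minor related point is that a hypothetical $\MLiu$-definition may use non-logical symbols not occurring in $\varphi$, but these can be removed (replace a fresh propositional variable by $\top$, a fresh $\Diamond_R\chi$ by $\neg\top$) without changing the defined class, exactly as observed for the nominal-free logics in Theorem~\ref{prop:definability1}. Writing out the final argument therefore just chains the three cited facts and records that the combined transformation is polytime in DAG-size.
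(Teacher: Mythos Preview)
Your proposal is correct and follows essentially the same route as the paper: compose the $\CT/\FOTNE$-flattening $\Flat$ with the modal reduct $m$ and reduce to a $\CT$-validity check. The paper states the corollary with only a one-line justification, but the subsequent proof for $\CT/\ML$-definability (which tests $\models \varphi \leftrightarrow \Flat(\varphi)$ and $\models \Flat(\varphi) \leftrightarrow m(\Flat(\varphi))$ as separate conditions) confirms this is the intended argument; your single check $\models \varphi \leftrightarrow m(\Flat(\varphi))$ is an equivalent and slightly neater packaging.
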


We next consider the elimination of the universal and inverse modalities.

Define the \emph{$\omega$-unfolding} $\Amf^{\updownarrow\omega}$ of
a structure $\Amf$ as follows. Its domain $\dom(\Amf^{\updownarrow\omega})$ is the set of 
all
\begin{equation}\label{eq:def5}
	\vec{a}=a_{0}^{j_{0}}S_{0}\cdots S_{n-1}a_{n}^{j_{n}},
\end{equation}
where $n\geq 0$, $j_{i}\in \omega$, $a_{i}\in \dom(\Amf)$ for all $i\leq n$, and $S_{i}$ is either a binary predicate symbol in $\sig(\varphi)$ or its inverse such that $(a_{i},a_{i+1})\in S_{i}^{\Amf}$ for $i\leq n$. Let $\text{tail}(\vec{a})=a_{n}^{j}$.
For $\vec{a}$ of the form \eqref{eq:def5}, we set $\vec{a}\in A^{\Amf^{\updownarrow\omega}}$ iff $a_{n}\in A^{\Amf}$. Let $\vec{a}$ be of the form \eqref{eq:def5} and let $\vec{a}'= a_{0}^{j_{0}}S_{0}\cdots a_{n}^{j_{n}}S_{n}a_{n+1}^{j_{n+1}}$ be an extension of $\vec{a}$. Then $(\vec{a},\vec{a}')\in R^{\Amf^{\updownarrow\omega}}$ iff $R=S_{n}$ and 
$(\vec{a}',\vec{a})\in R^{\Amf^{\updownarrow\omega}}$ iff $R^{-}=S_{n}$.
\begin{lemma}\label{lem:updownarrow}
	For all pointed structures $\Amf,a$, we have $\Amf^{\updownarrow\omega},a^{0}\sim_{\MLiu}\Amf,a$.
\end{lemma}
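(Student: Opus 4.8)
The plan is to write down an explicit global $\MLiu$-bisimulation between $\Amf^{\updownarrow\omega}$ and $\Amf$ and check that it contains the pair $(a^{0},a)$. The natural candidate is the relation $\bis\subseteq \dom(\Amf^{\updownarrow\omega})\times\dom(\Amf)$ defined by $(\vec a,b)\in\bis$ iff $\text{tail}(\vec a)=b^{j}$ for some $j<\omega$ --- that is, $b$ is the last $\Amf$-component of the word $\vec a$. Since $\text{tail}(a^{0})=a^{0}$, we immediately get $(a^{0},a)\in\bis$, so it will suffice to verify that $\bis$ is an $\MLiu$-bisimulation.

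First I would check globality and atomic harmony. Every $\vec a$ is $\bis$-related to the $\Amf$-element occurring as its last component, and every $b\in\dom(\Amf)$ is $\bis$-related to the length-$0$ word $b^{0}$; hence $\bis$ is global, which takes care of the universal-modality clause. Atomic harmony is immediate from the definition of the unary predicates in $\Amf^{\updownarrow\omega}$: for $\vec a$ of the form \eqref{eq:def5} with $b=a_{n}$ we have $\vec a\in A^{\Amf^{\updownarrow\omega}}$ iff $a_{n}\in A^{\Amf}$.

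Next I would verify the forth and back conditions for each $R\in\sig(\varphi)$ and for each $R^{-}$. For the forth direction with $R$: an $R$-edge of $\Amf^{\updownarrow\omega}$ out of $\vec a=a_{0}^{j_{0}}S_{0}\cdots a_{n}^{j_{n}}$ necessarily goes to an extension $\vec a'=\vec a\, R\, a_{n+1}^{j_{n+1}}$ with $(a_{n},a_{n+1})\in R^{\Amf}$, so $b'=a_{n+1}$ works. For the back direction with $R$: given $(b,b')\in R^{\Amf}$ with $b=a_{n}$, the word $\vec a'=\vec a\, R\,(b')^{0}$ belongs to $\dom(\Amf^{\updownarrow\omega})$ and provides the required $R$-successor of $\vec a$ with $(\vec a',b')\in\bis$. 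The two clauses for $R^{-}$ are handled the same way, now using the second half of the definition of $R^{\Amf^{\updownarrow\omega}}$ (a backward edge arising from an extension labelled $R^{-}$, or equivalently an $R$-edge of $\Amf^{\updownarrow\omega}$ traversed backwards); in each case the last $\Amf$-components of the two words stand in the relation $(R^{-})^{\Amf}$, and for the back direction one appends $R^{-}\,(b')^{0}$ to $\vec a$.

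There is no real obstacle here: the argument is the routine ``a structure is bisimilar to its unfolding'' fact, transferred to the setting with inverse modalities and the universal modality. The one point that needs a little care is bookkeeping the two kinds of edges of $\Amf^{\updownarrow\omega}$ --- those created by a relation symbol and those created by its inverse --- and matching them against the clauses for $R$ versus $R^{-}$ in the definition of an $\MLi$-bisimulation; once that is done, globality (hence the $\MLu$-clause) is free because the length-$0$ words $b^{0}$ realise every $\Amf$-element.
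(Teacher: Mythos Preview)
Your approach is correct and is exactly what the paper does: the paper states this lemma without proof, but for the analogous twin-unfolding $\Amf^{\ast}$ it remarks that the tail map $\vec a\mapsto\text{tail}(\vec a)$ is an $\MLiu$-bisimulation, and the same works here. One small imprecision in your sketch: in the forth clause for $R$ you say an $R$-edge out of $\vec a$ ``necessarily goes to an extension'', but by the second clause of the definition of $R^{\Amf^{\updownarrow\omega}}$ there can also be an $R$-edge from $\vec a$ to its immediate \emph{prefix} $\vec b$ when $\vec a=\vec b\,R^{-}\,a_{n}^{j_{n}}$; this is precisely the bookkeeping you flag at the end, and it causes no trouble since then $(a_{n-1},a_{n})\in (R^{-})^{\Amf}$ gives $(a_{n},a_{n-1})\in R^{\Amf}$, so $b'=a_{n-1}$ works.
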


$\Amf_{0}$ is a \emph{generated substructure} of $\Amf_{1}$ if it is a substructure of $\Amf_{1}$ and $a\in \text{dom}(\Amf_{0})$ and $(a,a')\in R^{\Amf_{1}}$ or
$(a',a)\in R^{\Amf_{1}}$ imply $a'\in \dom(\Amf_{0})$.
$\Amf_{0}$ is a \emph{forward generated substructure} of $\Amf_{1}$ if it is a substructure of $\Amf_{1}$ and $a\in \text{dom}(\Amf_{0})$ and $(a,a')\in R^{\Amf_{1}}$ imply $a'\in \dom(\Amf_{0})$. By $\Amf_{a}$ we denote the smallest
generated substructure of $\Amf$ containing $a\in \dom(\Amf)$.    

Define the \emph{forward $\omega$-unfolding} $\Amf^{\uparrow\omega}$  as the
substructure of $\Amf^{\updownarrow\omega}$, in which $\dom(\Amf^{\uparrow\omega})$ is the set of all
$$
\vec{a}=a_{0}^{j_{0}}R_{0}\cdots R_{n-1}a_{n}^{j_{n}}, 
$$
where each $R_{i}$ is a binary predicate symbol in $\sig(\varphi)$. Forward $\omega$-unfoldings have been considered before in the context of $\ML$-definability~\cite{DBLP:journals/igpl/AndrekaBN95,DBLP:conf/concur/JaninW96}.
Note that $\Amf^{\uparrow\omega}$ is a forward generated substructure of $\Amf^{\updownarrow\omega}$ and we have the following:

\begin{lemma}\label{lem:uparrow}
	$\Amf^{\uparrow\omega},a^{0}\sim_{\MLu}\Amf,a$.
\end{lemma}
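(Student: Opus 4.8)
The plan is to exhibit an explicit $\MLu(\sigma)$-bisimulation between $\Amf^{\uparrow\omega}$ and $\Amf$ containing the pair $(a^{0},a)$, in close analogy with the proof of Lemma~\ref{lem:updownarrow}. Since an $\MLu(\sigma)$-bisimulation is nothing but a \emph{global} $\ML(\sigma)$-bisimulation, and $\ML(\sigma)$-bisimulations impose forth/back clauses only for the forward predicates $R\in\sigma$, the restriction to forward paths built into $\Amf^{\uparrow\omega}$ is harmless. Concretely, I would take
$$
\bis \ = \ \{\,(\vec{a},b)\mid \vec{a}\in\dom(\Amf^{\uparrow\omega}),\ \text{tail}(\vec{a})=b^{j}\ \text{for some}\ j<\omega\,\},
$$
that is, $\vec{a}$ is related to the domain element occurring in its tail; clearly $(a^{0},a)\in\bis$.

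The next step is to check that $\bis$ is an $\ML(\sigma)$-bisimulation. The atom clause is immediate from the definition of $A^{\Amf^{\uparrow\omega}}$, which puts $\vec{a}\in A^{\Amf^{\uparrow\omega}}$ exactly when the last domain element of $\vec{a}$ lies in $A^{\Amf}$. For the forth clause, fix $\vec{a}\bis b$ and $(\vec{a},\vec{a}')\in R^{\Amf^{\uparrow\omega}}$. The key observation is that $R^{\Amf^{\uparrow\omega}}$ carries no edge directed from a sequence to one of its proper prefixes: by the definition of $R^{\Amf^{\updownarrow\omega}}$, such an edge would force the last label symbol of the longer sequence to be the inverse $R^{-}$, which never occurs in $\Amf^{\uparrow\omega}$. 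Hence $\vec{a}'$ is a one-step extension $\vec{a}'=\vec{a}\,R\,c^{j'}$ with $(b,c)\in R^{\Amf}$, and $b':=c$ is the required witness in $\Amf$ since $\text{tail}(\vec{a}')=c^{j'}$ gives $\vec{a}'\bis c$. For the back clause, given $(b,b')\in R^{\Amf}$, the sequence $\vec{a}\,R\,(b')^{0}$ is a legal forward extension of $\vec{a}$, hence lies in $\dom(\Amf^{\uparrow\omega})$, is an $R^{\Amf^{\uparrow\omega}}$-successor of $\vec{a}$, and satisfies $\vec{a}\,R\,(b')^{0}\bis b'$.

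It then remains to verify globality. On the $\Amf^{\uparrow\omega}$-side every $\vec{a}\in\dom(\Amf^{\uparrow\omega})$ is related, by construction, to the domain element of its tail. On the $\Amf$-side, for every $b\in\dom(\Amf)$ the length-zero sequence $b^{0}$ belongs to $\dom(\Amf^{\uparrow\omega})$ (being of length zero, it satisfies the edge constraints vacuously), and $b^{0}\bis b$. Thus $\bis$ is a global $\ML(\sigma)$-bisimulation, i.e.\ an $\MLu(\sigma)$-bisimulation, with $(a^{0},a)\in\bis$, whence $\Amf^{\uparrow\omega},a^{0}\sim_{\MLu}\Amf,a$; the argument goes through verbatim with $\sigma$ replaced by any $\varrho\subseteq\sigma$.

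I do not expect a genuine obstacle: this is an unfolding argument of the standard kind. The only two points worth making explicit are (i) the absence of ``prefix-directed'' $R$-edges in $\Amf^{\uparrow\omega}$, which is precisely where the forward-only restriction is used and explains why $\Amf^{\uparrow\omega}$ suffices for $\MLu$ whereas $\Amf^{\updownarrow\omega}$ is needed in Lemma~\ref{lem:updownarrow} for $\MLiu$, and (ii) the observation that globality on the base-structure side is witnessed by the trivial length-zero sequences $b^{0}$.
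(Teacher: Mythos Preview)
Your proof is correct and follows the standard tail-map argument; the paper does not spell out a proof of this lemma, treating it as routine in the same vein as Lemma~\ref{lem:updownarrow}, so your approach is essentially what the paper has in mind.
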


For pointed structures $\Amf,a$ and $\Bmf,b$, we write $\Amf,a\cong\Bmf,b$ if there is an isomorphism from $\Amf$ onto $\Bmf$ mapping $a$ to $b$. 

\begin{lemma}\label{lem:lemfrombisimtoiso} 
For any at most countable structures $\Amf,\Bmf$ with 
	$a\in \dom(\Amf)$ and $b\in \dom(\Bmf)$,
		\begin{itemize}
		\item[--] if $\Amf,a\sim_{\MLi} \Bmf,b$, then
		$\Amf_{a^{0}}^{\updownarrow\omega},a^{0}\cong \Bmf_{b^{0}}^{\updownarrow\omega},b^{0}$\textup{;}
		%\item if $\Amf,w \sim_{\MLu} \Bmf,v$, then
		%$\Amf^{\uparrow\omega},w^{0}\cong\Bmf^{\uparrow\omega},v^{0}$.
		\item[--] if $\Amf,a\sim_{\ML} \Bmf,b$, then
		$\Amf_{a^{0}}^{\uparrow\omega},a^{0}\cong\Bmf_{b^{0}}^{\uparrow\omega},b^{0}$.	
	\end{itemize}
\end{lemma}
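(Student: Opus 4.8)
The plan is to build the isomorphism explicitly, by recursion on the length of the words constituting the unfolded domains, using the $\omega$-fold multiplicity of copies to absorb any mismatch between how many successors bisimilar points have. I would treat the $\MLi$-case in detail; the $\ML$-case is obtained from it by replacing every $\updownarrow\omega$-unfolding with the corresponding $\uparrow\omega$-unfolding, every relation symbol $S\in\{R,R^{-}\}$ with a forward symbol $R$, and every appeal to the back-and-forth clauses of the bisimulation for $R^{-}$ with the analogous clause for $R$ only (which is all an $\ML$-bisimulation provides).

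First I would record the shape of the domains. As $\Amf$ and $\Bmf$ are at most countable, so are $\Amf^{\updownarrow\omega}$ and $\Bmf^{\updownarrow\omega}$, whose elements are finite words over countable alphabets. An edge of $\Amf^{\updownarrow\omega}$ only ever appends or deletes a final letter, so it never alters the length-$0$ prefix of a word; hence the generated substructure $\Amf_{a^{0}}^{\updownarrow\omega}$ is exactly the set of words $\vec a=a_{0}^{j_{0}}S_{0}\cdots S_{n-1}a_{n}^{j_{n}}$ with $a_{0}=a$ and $j_{0}=0$, and likewise for $\Bmf_{b^{0}}^{\updownarrow\omega}$, and every length-$(n+1)$ word of $\Amf_{a^{0}}^{\updownarrow\omega}$ is an edge-successor of a unique length-$n$ word, namely its length-$n$ prefix. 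Now fix an $\MLi$-bisimulation $\beta$ between $\Amf$ and $\Bmf$ with $(a,b)\in\beta$ and construct a map $f\colon\dom(\Amf_{a^{0}}^{\updownarrow\omega})\to\dom(\Bmf_{b^{0}}^{\updownarrow\omega})$ level by level so as to maintain: $f$ restricts to a bijection between words of length $n$; $\vec a$ and $f(\vec a)$ carry the same sequence $S_{0},\dots,S_{n-1}$ of (possibly inverted) relation symbols; and if $\text{tail}(\vec a)=a_{n}^{j_{n}}$ and $\text{tail}(f(\vec a))=b_{n}^{k_{n}}$, then $(a_{n},b_{n})\in\beta$. Start with $f(a^{0})=b^{0}$. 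For the inductive step, given $\vec a$ of length $n$ with $\text{tail}(\vec a)=a_{n}^{j_{n}}$, $\text{tail}(f(\vec a))=b_{n}^{k_{n}}$ and $(a_{n},b_{n})\in\beta$, and for each relation symbol $S$, set $A_{S}=\{a'\mid (a_{n},a')\in S^{\Amf}\}$ and $B_{S}=\{b'\mid (b_{n},b')\in S^{\Bmf}\}$: the forth clause of $\beta$ for $S$ makes $\beta\cap(A_{S}\times B_{S})$ total on $A_{S}$, the back clause makes it onto $B_{S}$, so $A_{S}=\emptyset$ iff $B_{S}=\emptyset$. When both are nonempty, $A_{S}\times\omega$ and $B_{S}\times\omega$ are countably infinite, and a routine back-and-forth construction gives a bijection $g_{S}$ between them with $g_{S}(a',i)=(b',j)$ only if $(a',b')\in\beta$ --- at each stage pair the least unmatched element of one side with the least unmatched copy of a $\beta$-partner from the other side, which exists because that partner has $\omega$-many copies of which only finitely many are used so far. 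Letting $f$ send $\vec a\,S\,(a')^{i}$ to $f(\vec a)\,S\,g_{S}(a',i)$ and collecting over all $S$ extends $f$ to level $n+1$ with the invariant intact; since the length-$(n+1)$ words lie over unique length-$n$ words and $f$ is already bijective at level $n$, it is bijective at level $n+1$.

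It then remains to verify that $f$ is an isomorphism: it is a bijection of domains by construction; it preserves and reflects unary predicates since $\vec a\in A^{\Amf^{\updownarrow\omega}}$ iff the $\Amf$-component of $\text{tail}(\vec a)$ is in $A^{\Amf}$, which by $(a_{n},b_{n})\in\beta$ holds iff the $\Bmf$-component of $\text{tail}(f(\vec a))$ is in $A^{\Bmf}$; and it preserves and reflects each binary relation because the edges of the unfoldings run precisely between a word and its one-letter extensions, $f$ commutes with the ``drop the last letter'' operation, and $f$ leaves the appended relation symbol unchanged. The main obstacle is the successor-matching step: it genuinely uses that $\beta\cap(A_{S}\times B_{S})$ is total in both directions and that there are $\omega$ copies on each side (the matching would be impossible for, say, $|A_{S}|=1$, $|B_{S}|=2$ with only finitely many copies), and it must be carried out inside the ``at most countable'' hypothesis so that the enumerations underlying back-and-forth are available; the remaining verifications are bookkeeping. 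This is the familiar phenomenon, exploited in the $\ML$-definability setting of~\cite{DBLP:journals/igpl/AndrekaBN95,DBLP:conf/concur/JaninW96}, that bisimilar countable structures have isomorphic $\omega$-fold tree unfoldings.
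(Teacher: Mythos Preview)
Your proof is correct and is the standard explicit construction; the paper, however, states this lemma without proof, treating it as folklore (the $\ML$-case being well known from the $\omega$-expansion characterisation in~\cite{DBLP:journals/igpl/AndrekaBN95,DBLP:conf/concur/JaninW96}, which the paper already cites for forward unfoldings). So there is nothing to compare against: you have supplied exactly the argument the paper leaves implicit, namely a level-by-level back-and-forth that exploits the $\omega$-many copies at each node to turn the totality of $\beta\cap(A_{S}\times B_{S})$ into a bijection between $A_{S}\times\omega$ and $B_{S}\times\omega$ respecting $\beta$. One cosmetic point: the bijection you call $g_{S}$ actually depends on the word $\vec a$ (through the tails $a_{n},b_{n}$), so writing $g_{\vec a,S}$ would be cleaner, but your quantifier order makes the intended dependence clear.
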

%A substructure $\Amf_{0}$ of $\Amf_{1}$ is \emph{$\varphi$-exhaustive} if all $\Du \psi\in %\sub(\varphi)$ satisfied in  $\Amf_{1}$ there exists $w\in \dom(\Amf_{0})$ satisfying $\psi$. 

We say that $\varphi$ is \emph{invariant under} (\emph{forward}) \emph{generated substructures} if, for all pointed $\Amf,a$ and all (forward) generated substructures $\Amf'$ of $\Amf$ with $a\in \dom(\Amf')$, we have $\Amf\models\varphi(a)$ iff $\Amf'\models \varphi(a)$.
%We say that $\varphi$ is \emph{invariant under $\varphi$-exhaustive generated substructures} if %for all pointed $\Amf,w$ and all generated substructures $\Amf'$ of $\Amf$ which are %$\varphi$-exhaustive and with $w\in \dom(\Amf')$, $\Amf\models\varphi(w)$ iff $\Amf'\models %\varphi'(w)$.

\begin{lemma}\label{lem:semcrit}
	The following hold\textup{:}
	\begin{enumerate}
		\item $\varphi\in \MLiu$ is $\MLi$-definable iff $\varphi$ is invariant under generated substructures\textup{;}
		
		%\item $\varphi\in \MLiu$ is $\MLu$-definable iff $\varphi$ is invariant under %$\varphi$-exhaustive forward generated substructures;
		
		\item $\varphi\in \MLiu$ is $\ML$-definable iff $\varphi$ is invariant under forward generated substructures.
	\end{enumerate}
\end{lemma}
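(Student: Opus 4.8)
The plan is to prove this as a pair of model-theoretic characterisations in the spirit of van Benthem's theorem, using the unfolding lemmas established above. For both items the implication ``$\LS$-definable $\Rightarrow$ invariant under (forward) generated substructures'' is routine: if $\varphi\equiv\psi$ with $\psi\in\LS\in\{\MLi,\ML\}$ and $\Amf'$ is a (forward) generated substructure of $\Amf$ with $a\in\dom(\Amf')$, then the identity relation on $\dom(\Amf')$ is an $\LS$-bisimulation between $\Amf'$ and $\Amf$ linking $a$ to $a$ --- closure of $\Amf'$ under $R$-successors (and, in the $\MLi$-case, $R$-predecessors) is precisely what makes the back-and-forth clauses go through --- so $\Amf',a\equiv_{\LS}\Amf,a$ by Lemma~\ref{lem:guardedbisim}, and hence $\Amf'\models\varphi(a)$ iff $\Amf\models\varphi(a)$. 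So the content is the converse. I will do item~1 in detail; item~2 is obtained by systematically replacing ``generated substructure'' by ``forward generated substructure'', $\MLi$ by $\ML$, $\sim_{\MLi}$ by $\sim_{\ML}$, and $\Amf^{\updownarrow\omega}_{a^0}$ by $\Amf^{\uparrow\omega}_{a^0}$, invoking the second bullet of Lemma~\ref{lem:lemfrombisimtoiso} in place of the first.

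So assume $\varphi\in\MLiu$ is invariant under generated substructures, and write $\varrho=\sig(\varphi)$, a \emph{finite} signature. The first and main step is to show that $\varphi$ is invariant under $\MLi(\varrho)$-bisimulation, i.e.\ $\Amf,a\sim_{\MLi(\varrho)}\Bmf,b$ implies $\Amf\models\varphi(a)$ iff $\Bmf\models\varphi(b)$. I claim it suffices to prove this for \emph{countable} $\Amf,\Bmf$. Indeed, suppose $\Amf,a\sim_{\MLi(\varrho)}\Bmf,b$ via a bisimulation $\bis$ but, say, $\Amf\models\varphi(a)$ and $\Bmf\not\models\varphi(b)$. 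Expand $\Amf\uplus\Bmf$ by a binary relation symbol naming $\bis$, unary predicates naming the two disjoint parts, and constants naming $a,b$. Since $\varrho$ is finite, ``$\bis$ is an $\MLi(\varrho)$-bisimulation between the two parts with $(a,b)\in\bis$'' is expressible by a single first-order sentence over the expanded signature, so by the downward L\"owenheim--Skolem theorem there is a countable elementary substructure $\mathfrak{C}$ in which this sentence, together with $\varphi(a)$ and $\neg\varphi(b)$, still holds; restricting $\mathfrak{C}$ to its two parts yields countable $\Amf_0\preceq\Amf$, $\Bmf_0\preceq\Bmf$ with $\Amf_0,a\sim_{\MLi(\varrho)}\Bmf_0,b$, $\Amf_0\models\varphi(a)$, $\Bmf_0\not\models\varphi(b)$, contradicting the countable case. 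So assume $\Amf,\Bmf$ countable. By Lemma~\ref{lem:updownarrow}, $\Amf^{\updownarrow\omega},a^0\sim_{\MLiu}\Amf,a$, hence by Lemma~\ref{lem:guardedbisim} $\Amf^{\updownarrow\omega}\models\varphi(a^0)$ iff $\Amf\models\varphi(a)$; since $\Amf^{\updownarrow\omega}_{a^0}$ is the generated substructure of $\Amf^{\updownarrow\omega}$ containing $a^0$, invariance under generated substructures gives $\Amf^{\updownarrow\omega}\models\varphi(a^0)$ iff $\Amf^{\updownarrow\omega}_{a^0}\models\varphi(a^0)$, and symmetrically for $\Bmf$. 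Finally Lemma~\ref{lem:lemfrombisimtoiso} gives $\Amf^{\updownarrow\omega}_{a^0},a^0\cong\Bmf^{\updownarrow\omega}_{b^0},b^0$, so these satisfy $\varphi$ at the marked points simultaneously; chaining the equivalences yields $\Amf\models\varphi(a)$ iff $\Bmf\models\varphi(b)$.

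The second step derives $\MLi$-definability from $\MLi$-bisimulation invariance by a standard compactness argument, which I would include explicitly. Let $T=\{\psi\in\MLi(\varrho)\mid \varphi\models\psi\}$. Then $\varphi\models T$ trivially, and I claim $T\models\varphi$: if $\Amf\models T(a)$, then $\{\psi\in\MLi(\varrho)\mid\Amf\models\psi(a)\}\cup\{\varphi\}$ is satisfiable, for otherwise compactness and closure of $\MLi$ under the Booleans would produce a single $\MLi(\varrho)$-formula in $T$ false at $a$ in $\Amf$; so there are $\Bmf,b$ with $\Bmf\models\varphi(b)$ and $\Amf,a\equiv_{\MLi(\varrho)}\Bmf,b$. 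Passing to $\omega$-saturated elementary extensions $\Amf^+,\Bmf^+$ and using Lemma~\ref{lem:guardedbisim} we get $\Amf^+,a\sim_{\MLi(\varrho)}\Bmf^+,b$ with $\Bmf^+\models\varphi(b)$, whence $\Amf^+\models\varphi(a)$ by Step~1 (which was proved for all structures), so $\Amf\models\varphi(a)$. Thus $\varphi\equiv\bigwedge T$, and by compactness $\varphi$ is equivalent to a finite conjunction of $\MLi(\varrho)$-formulas, i.e.\ $\MLi$-definable. The main obstacle is Step~1: reducing to countable structures so that Lemma~\ref{lem:lemfrombisimtoiso} is applicable, and keeping the three bisimulation notions ($\MLiu$ for the unfolding, $\MLi$ for the hypothesis) and the passage between $\Amf$, its $\omega$-unfolding, and the sub-unfolding generated by $a^0$ correctly aligned; the rest is bookkeeping and standard compactness.
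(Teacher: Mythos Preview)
Your proof is correct and follows essentially the same route as the paper: the key step in both is the chain $\Amf\models\varphi(a) \Leftrightarrow \Amf^{\updownarrow\omega}\models\varphi(a^0) \Leftrightarrow \Amf^{\updownarrow\omega}_{a^0}\models\varphi(a^0)$ (via Lemma~\ref{lem:updownarrow} and invariance under generated substructures), combined with the isomorphism of unfoldings from Lemma~\ref{lem:lemfrombisimtoiso}. The paper runs the argument as a direct contradiction---assuming non-$\MLi$-definability, it asserts without further comment that one can find \emph{countable} $\MLi$-bisimilar pointed structures disagreeing on $\varphi$, and then applies the unfolding chain---whereas you unpack that black box into a two-step argument (bisimulation invariance via L\"owenheim--Skolem reduction to the countable case, then definability via compactness and $\omega$-saturated extensions); your version is more self-contained but not materially different.
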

\begin{proof}
	(1) It is straightforward to show that every formula in $\MLi$ is invariant under generated substructures. Hence the direction from left to right follows. Conversely, assume that $\varphi\in\MLiu$ is not $\MLi$-definable but is invariant under generated substructures. We find countable $\Amf,a$ and $\Bmf,b$ with $\Amf\models \varphi(a)$, $\Bmf\models\neg\varphi(b)$, and $\Amf,a\sim_{\MLi}\Bmf,b$.
	By Lemma~\ref{lem:lemfrombisimtoiso}, we have 
	$\Amf_{a^{0}}^{\updownarrow\omega},a^{0}\cong \Amf_{b^{0}}^{\updownarrow\omega},b^{0}$.
	By invariance of $\varphi$ under generated substructures, $\Amf_{a^{0}}^{\updownarrow\omega}\models \varphi(a^{0})$ because
	$\Amf_{a^{0}}^{\updownarrow\omega}$ is a generated substructure of
	$\Amf^{\updownarrow\omega}$ and $\Amf^{\updownarrow\omega}\models \varphi(a^{0})$
	because, by Lemma~\ref{lem:updownarrow}, $\Amf^{\updownarrow\omega},a^{0}\sim_{\MLiu}\Amf,a$.
	
	By invariance of $\varphi$ under generated substructures, we obtain $\Bmf_{b^{0}}^{\updownarrow\omega}\models \neg\varphi(b^{0})$ because
	$\Bmf_{b^{0}}^{\updownarrow\omega}$ is a generated substructure of
	$\Bmf^{\updownarrow\omega}$ and $\Bmf^{\updownarrow\omega}\models \neg\varphi(b^{0})$
	because, by Lemma~\ref{lem:updownarrow}, $\Bmf^{\updownarrow\omega},b^{0}\sim_{\MLiu}\Bmf,b$.
	
	We have derived a contradiction as isomorphic pointed structures satisfy the same formulas.
	
	(2) It is straightforward to show that every formula in $\ML$ is invariant under forward generated substructures. Hence the direction from left to right follows. Conversely, assume that $\varphi\in\MLiu$ is not $\ML$-definable but is invariant under forward generated substructures. We find countable $\Amf,a$ and $\Bmf,b$ with $\Amf\models \varphi(a)$, $\Bmf\models\neg\varphi(b)$, and $\Amf,a\sim_{\ML}\Bmf,b$.
	By Lemma~\ref{lem:lemfrombisimtoiso}, 
	$\Amf_{a^{0}}^{\uparrow\omega},a^{0}\cong \Amf_{b^{0}}^{\uparrow\omega},b^{0}$.
	By invariance of $\varphi$ under forward generated substructures, $\Amf_{a^{0}}^{\uparrow\omega}\models \varphi(a^{0})$ because
	$\Amf_{a^{0}}^{\uparrow\omega}$ is a forward generated substructure of
	$\Amf^{\updownarrow\omega}$ and $\Amf^{\updownarrow\omega}\models \varphi(a^{0})$
	because, by Lemma~\ref{lem:updownarrow}, $\Amf^{\updownarrow\omega},a^{0}\sim_{\MLiu}\Amf,a$.
	
	By invariance of $\varphi$ under forward generated substructures, $\Bmf_{b^{0}}^{\uparrow\omega}\models \neg\varphi(b^{0})$ because
	$\Bmf_{b^{0}}^{\uparrow\omega}$ is a forward generated substructure of
	$\Bmf^{\updownarrow\omega}$ and $\Bmf^{\updownarrow\omega}\models \neg\varphi(b^{0})$
	because, by Lemma~\ref{lem:updownarrow}, $\Bmf^{\updownarrow\omega},b^{0}\sim_{\MLiu}\Bmf,b$.  
	
	We have derived a contradiction as isomorphic pointed structures satisfy the same formulas.    
\end{proof}
%Does not work with constants: $\Diamond_{S}N_{c} \rightarrow \Diamond_{R}^{-}N_{c}$.
%Note the following: $\varphi\in \MLi$ is projectively definable in $\ML$ and $\neg\varphi$ is %projectively definable in $\ML$ iff $\varphi$ is definable in $\ML$.

\begin{theorem}
Both $\CT/\MLi$- and $\CT/\ML$-definability are \coNExpTime-complete.
\end{theorem}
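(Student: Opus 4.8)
The plan is to read off the $\coNExpTime$ upper bound from Theorem~\ref{thm:succ} and to establish $\coNExpTime$-hardness by a polynomial-time reduction from $\CT$-unsatisfiability.

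For the upper bound, Theorem~\ref{thm:succ} reduces both $\CT/\MLi$- and $\CT/\ML$-definability in polynomial time to $\CT$-validity; since $\CT$-satisfiability is $\NExpTime$-complete, $\CT$-validity is $\coNExpTime$-complete, so both definability problems lie in $\coNExpTime$.

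For the lower bound, I would reduce the $\coNExpTime$-complete problem of deciding whether a $\CT$-sentence $\psi$ is unsatisfiable. Let $\sigma_0 = \sig(\psi)$ and introduce fresh symbols: a unary predicate $U$, a unary predicate $A$ and a binary predicate $R$. Set $\varphi(x) = \psi_{|U} \wedge \exists^{\ge 2}y\,(R(x,y) \wedge A(y))$, where $\psi_{|U}$ is the $U$-relativisation of $\psi$, a $\CT$-sentence computable in polynomial time (cf.\ the proof of Lemma~\ref{l:red}). The map $\psi \mapsto \varphi$ is polytime, and the claim is that $\varphi$ is $\MLi$-definable iff $\varphi$ is $\ML$-definable iff $\psi$ is unsatisfiable. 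If $\psi$ is unsatisfiable then $\psi_{|U}$ is unsatisfiable, so $\varphi \equiv \bot$, which is $\ML$-definable, hence $\MLi$-definable. For the converse, suppose $\psi$ is satisfiable, fix a model $\Bmf \models \psi$ and a point $b \in \dom(\Bmf)$, and amalgamate $\Bmf$ --- placed entirely inside $U$ so that $\psi_{|U}$ holds --- with the standard gadget witnessing that $\exists^{\ge 2}y\,(R(x,y) \wedge A(y))$ is not $\MLi$-definable. Concretely, let $\Amf_1$ extend $\Bmf$ by interpreting $U$ as $\dom(\Bmf)$, keeping the $\sigma_0$-relations unchanged, and attaching to $b$ two fresh points lying outside $U$, both in $A$ and both $R$-successors of $b$; let $\Amf_2$ be defined likewise but with a single fresh $A$-point as the only $R$-successor of $b$. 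Then $\Amf_1 \models \varphi(b)$ and $\Amf_2 \models \neg\varphi(b)$ (both models satisfy $\psi_{|U}$, since their $U$-parts are $\Bmf$), while the relation given by the identity on $\dom(\Bmf)$ together with the two pairs that link the gadget successors of $\Amf_1$ to the gadget successor of $\Amf_2$ is an $\MLi(\sig(\varphi))$-bisimulation relating $b$ to $b$. As $\L/\LS$-definability of $\varphi$ amounts to $\L/\LS$-separability of $\varphi$ and $\neg\varphi$ (Section~\ref{intro}), Lemma~\ref{criterion} then shows that $\varphi$ is not $\MLi$-definable; since $\ML \subseteq \MLi$, it is not $\ML$-definable either. (For the nominal-free logics in question, allowing definers to use fresh non-logical symbols is inessential.) Together with the upper bound, this gives $\coNExpTime$-completeness of both problems.

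The technically most delicate point --- though essentially routine --- will be verifying that $\Amf_1$ and $\Amf_2$ are genuinely $\MLi$-bisimilar with respect to the whole signature $\sig(\varphi) = \sigma_0 \cup \{U,A,R\}$ simultaneously. The relativisation to the fresh predicate $U$ is precisely what insulates the model of $\psi$ from the counting gadget --- a plain disjoint union of structures need not preserve satisfaction of a $\CT$-sentence --- and one has to check that $\sigma_0$-edges never leave the identity part of the bisimulation, that the $R$-edges from $b$ into the gadget are matched in both the forward and the backward (inverse-modality) clauses, and that $U$- and $A$-membership is respected along the two linking pairs; the remaining verifications are bookkeeping.
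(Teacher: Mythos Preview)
Your argument is correct. For the upper bound you and the paper agree: both derive it from the polytime reduction of $\CT/\LS$-definability to $\CT$-validity, with you invoking Theorem~\ref{thm:succ} as a black box while the paper spells the reduction out explicitly as the three-way test ``$\varphi\leftrightarrow\Flat(\varphi)$ valid, $\Flat(\varphi)\leftrightarrow m(\Flat(\varphi))$ valid, and $m(\Flat(\varphi))$ invariant under (forward) generated substructures'', the last of these being encoded as a $\text{suc}\MLiu$-validity via relativisation to a fresh unary predicate. Be a little careful here: in the paper's dependency structure, the proof of the present theorem is precisely where the $\MLi$/$\ML$ cases of Theorem~\ref{thm:succ} get their details, so to avoid circularity you should cite Lemma~\ref{lem:semcrit}, Theorem~\ref{prop:definability111} and Lemma~\ref{lem:equivonomega27} directly rather than the summary Theorem~\ref{thm:succ}.

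For the lower bound the approaches genuinely differ: the paper's proof gives \emph{no} hardness argument at all --- it only establishes membership in $\coNExpTime$ and leaves $\coNExpTime$-hardness implicit. Your reduction from $\CT$-unsatisfiability via $\varphi(x)=\psi_{|U}\wedge\exists^{\ge 2}y\,(R(x,y)\wedge A(y))$ is correct and fills this gap cleanly: the relativisation to the fresh $U$ makes $\varphi\equiv\bot$ when $\psi$ is unsatisfiable, and when $\psi$ has a model $\Bmf$ the one- and two-successor gadget structures you build over $\Bmf$ are $\MLi(\sig(\varphi))$-bisimilar at $b$ (the identity on $\dom(\Bmf)$ handles all $\sigma_0$-edges, the gadget points carry only the fresh $R$, $A$ and $\neg U$, and the single $R^{-}$-step back to $b$ is matched by the identity), so Lemma~\ref{criterion} yields non-$\MLi$-definability and a fortiori non-$\ML$-definability. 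This is a welcome addition the paper omits.
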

\begin{proof}
	We consider $\CT/\ML$-definability, the proof for $\CT/\MLi$-definability is similar. Observe that we can encode the problem of deciding whether a formula $\varphi(x)$ in suc$\MLiu$ is invariant under forward generated substructures as a validity problem in suc$\MLiu$. To show this, let $A$ be a fresh unary predicate symbol and let $\varphi(x)_{|A}$ be the relativisation of $\varphi(x)$ to $A$. Observe that, for all pointed $\Amf,a$, we have $\Amf\models \varphi_{|A}(a)$ iff $\Amf_{|A}\models \varphi(a)$, where $\Amf_{|A}$ is the substructure of $\Amf$ induced by $A^{\Amf}$
	and $a\in A^{\Amf}$. Then $\varphi(x)$ is invariant under forward generated substructures iff,
	for 
	$$
	\chi_{A}=\bigwedge_{R\in \sig(\varphi)} \forall x (A(x)\rightarrow \forall y (R(x,y) \rightarrow A(y))),
	$$
	the formula
	$$
	\chi_{A} \rightarrow \forall x ((\varphi(x) \wedge A(x)) \leftrightarrow \varphi(x)_{|A})
	$$
	is valid. Hence $\varphi(x)\in \CT$ is not $\ML$-definable iff at least one of the following conditions holds:
	\begin{enumerate}
		\item $\varphi \leftrightarrow \Flat(\varphi)$ is not valid;  
		\item $\Flat(\varphi) \leftrightarrow m(\Flat(\varphi))$ is not valid;
		\item $m(\Flat(\varphi))$ is not invariant under forward generated substructures.
	\end{enumerate}
The claim follows since each of these conditions can be checked in \NExpTime.    
\end{proof}

\section*{Proofs for Section~\ref{sec:uniformseparation}}

\thmuniformsepone*

\begin{proof}
	We first complete the proof of $(a) \Rightarrow (d)$ for the remaining pairs $\L/\LS$ of modal logics. If the universal modality is not admitted in $\LS$, then we take the set $\Gamma'$ of all $\LS_{2\md(\varphi)+2}(\varrho)$-formulas $\chi$ with $\Bmf^{\mx+1}\models\chi(a,0)$ 
	and $\gamma' = \bigwedge_{\chi\in \Gamma'}\chi$. Then, for every sequence $\Box=\Box_{R_{1}}\cdots\Box_{R_{n}}$ with $n\leq \md(\varphi)$ and $R_{1},\dots,R_{n}\in \varrho$, every $\chi\in \sub(\Flat(\varphi))$, and every $R\in \varrho$, $\Gamma$ contains
	$$
	\Box \Big( \Diamond_{R} \chi \rightarrow \big(\Diamond_{R}(\chi \wedge A_{0}) \wedge \cdots \wedge \Diamond_{R} (\chi \wedge A_{\mx+1})\big) \Big)
	$$
	and, for every sequence $\Box=\Box_{R_{1}}\cdots\Box_{R_{n}}$ with $n\leq 2\md(\varphi)$ and $R_{1},\dots,R_{n}\in \varrho$, it contains the formula 
	$$
	\Box(A_{i}\rightarrow \neg A_{j}), \quad \text{for $0\leq i < j \leq \mx$}. 
	$$
	This again gives $\gamma' \models \varphi \leftrightarrow \Flat(\varphi)$.
	If $\LS$ admits inverse modalities, in the formulas $\Gamma'$ above we have to consider not only $R\in \varrho$ but also $R^{-}$ with $R\in\varrho$. This allows us to show again that
	$\gamma' \models \varphi \leftrightarrow \Flat(\varphi)$.
	
	The implication $(b) \Rightarrow (a)$ is trivial.
	
	$(c) \Rightarrow (b)$ Let $\varphi\models \Flat(\varphi)$. To prove  that $\Flat(\varphi)$ is a uniform $\LS(\sigma)$-separator for $\varphi$, suppose $\varphi\models\chi$, for an $\LS(\sigma)$-formula $\chi$, aiming to show $\Flat(\varphi)\models \chi$.	Let $\Amf\models \Flat(\varphi)(a)$. Then $\Amf^{\omega}\models \Flat(\varphi)(a)$, by Lemma~\ref{lemma:bisim_omega}. Then $\Amf^{\omega}\models \varphi(a)$ 
	by Lemma~\ref{lemma:equivalenceomega}, and so $\Amf^{\omega}\models \chi(a)$. Using again Lemma~\ref{lemma:bisim_omega}, we obtain $\Amf\models \chi(a)$, as required.
	
	$(d) \Rightarrow (c)$ Suppose $\Amf\models \varphi(a)$. As $\varphi$ is  preserved under $\omega$-expansions, $\Amf^{\omega}\models \varphi(a)$. Hence $\Amf^{\omega}\models \Flat(\varphi)(a)$, by Lemma~\ref{lemma:equivalenceomega}. By Lemma~\ref{lemma:bisim_omega}, $\Amf\models \Flat(\varphi)(a)$, as required.
	
	The proof for $\CT/\FOTNE$ is similar and left to the reader.
\end{proof}

\thmuniformsepthree*

\begin{proof}
	We first give the construction of $\text{diag}_{\sigma_{f}}(\Amf,a)$ for the pair $\GMLinu/\MLinu$. Consider a finite pointed $\sigma_{f}$-structure $\Amf,a$, where $\sig(\varphi)\subseteq \sigma_{f}\subseteq \sigma$ and $\sigma_{f}$ is finite. Identify the elements of $\dom(\Amf)$ with some constants in $\sigma\setminus\sigma_{f}$. Then define $\text{diag}_{\sigma_{f}}(\Amf,a)$  as the conjunction of the following formulas:
	\begin{itemize}
		\item[--] $N_{a}$,
		
		\item[--] $\blacksquare\bigvee_{c\in \dom(\Amf)}N_{c}$,
		
		\item[--] $\Du N_{c}$, for $c\in \dom(\Amf)$,
		
		\item[--] $\blacksquare(N_{c}\rightarrow \neg N_{c'})$, for distinct $c,c'\in \dom(\Amf)$, 
		
		\item[--] $\Du(N_{c} \wedge A)$, for $c\in A^{\Amf}$,
		
		\item[--] $\Du(N_{c} \wedge N_{d})$, for $c=d^{\Amf}$ and $d\in \sigma_{f}$,
		
		\item[--] $\Du(N_{c} \wedge \neg A)$, for $c\not\in A^{\Amf}$,
		
		\item[--] $\Du(N_{c} \wedge \Diamond_{R}N_{d})$, for $(c,d)\in R^{\Amf}$,
		
		\item[--] $\Du(N_{c} \wedge \neg\Diamond_{R}N_{d})$, for $(c,d)\not\in R^{\Amf}$. 
	\end{itemize} 
	It is easy to see that $\text{diag}_{\sigma_{f}}(\Amf,a)$ has the properties (i) and (ii) from the proof sketch in the main paper.
	
	We now show the directions $(a) \Leftrightarrow (b)$ and $(c) \Rightarrow (b)$ for all pairs $\L/\LS$ from the theorem.
	
	$(a) \Rightarrow (b)$ Consider a uniform $\LS(\sigma)$-separator $\varphi'$ for $\varphi$.
	Clearly, $\varphi'$ is a uniform $\LS(\sigma')$-separator for $\varphi$, for any signature $\sigma'$ obtained from $\sigma$ by removing all but a finite set $\sigma_{0}'$ of constants that contains all constants occurring in $\varphi$. By Theorem~\ref{thm:uniformsep2}, $\Flat_{\sigma_{0}'}(\varphi)$ is also a uniform $\L(\sigma')$-separator for $\varphi$. By the definition of uniform separators, we have:
	\begin{itemize}
		\item[--] $\varphi'\models \Flat_{\sigma_{0}'}(\varphi)$ for any such $\sigma_{0}'$;
		
		\item[--] $\Flat_{\sigma_{0}'}(\varphi)\models \varphi'$ for any such $\sigma_{0}'$ with $\sigma_{0}'\supseteq \sigma_{0}$.
	\end{itemize}
	It follows that $\models \varphi'\leftrightarrow \Flat_{\sigma_{0}}(\varphi)$, as required.
	
	$(b) \Rightarrow (a)$ is trivial.
	
	$(c) \Rightarrow (b)$ Suppose $\varphi\models \chi$, for an $\LS(\sigma)$-formula $\chi$. We have to show that $\Flat_{\sigma_{0}}(\varphi)\models \chi$. Assume this is not the case. By the FMP of $\LS$, we have a finite pointed structure $\Amf,a$ with $\Amf\models \Flat_{\sigma_{0}}(\varphi)(a)$ and $\Amf\not\models\chi(a)$. But then, in view of $\Flat_{\sigma_{0}}(\varphi)\models_{\it fin} \varphi$ and $\varphi\models \chi$, we have $\Amf\models \chi(a)$, which is impossible.
	
	\smallskip
	
	It remains to show $(b) \Rightarrow (c)$ for the remaining pairs $\L/\LS$.
	For $\GMLnu/\MLnu$, the proof is the same as in the main part of the paper since $\text{diag}_{\sigma_{f}}(\Amf,a)$ is in $\MLnu$.
	
	For $\GMLin/\MLin$ and $\GMLn/\MLn$, we construct $\text{diag}_{\sigma_{f}}(\Amf,a)$ without using $\Du$. 
	We still require condition~(ii) but only the following weaker versions of condition~(i):
	%~\cite{DBLP:books/daglib/0030819,DBLP:books/cu/BlackburnRV01}
	\begin{itemize}
		\item[(i)] for $\GMLin$: if $\Bmf\models \text{diag}_{\sigma_{f}}(\Amf,a)(b)$, then $\Amf_{a},a \cong_{\sigma_{f}}\Bmf_{b},b$ for the smallest generated substructures $\Amf_{a}$ and $\Bmf_{b}$ of $\Amf$ and $\Bmf$ containing
		$a$ and $b$, respectively; 
		\item[(i)] for $\GMLn$: if $\Bmf\models \text{diag}_{\sigma_{f}}(\Amf,a)(b)$, then $\Amf_{a},a \cong_{\sigma_{f}}\Bmf_{b},b$ for the smallest forward generated substructures $\Amf_{a}$ and $\Bmf_{b}$ of $\Amf$ and $\Bmf$ containing
		$a$ and $b$, respectively. 
	\end{itemize}
	Recall from the previous section that 
	then the domain of $\Amf_{a}$ comprises all $a' \in \dom(\Amf)$ such that $a'=a$ or $a'$ is reachable from $a$ via a path
	$$
	a = a_0 S_1 a_1 S_2 \dots S_n a_n = a',
	$$
	where all $a_i$ are in $\dom(\Amf)$ and all $S_i$ are binary relations from $\sigma_f$ (for $\MLn$) and inverses thereof (for $\MLin$).
	For each $a' \in \dom(\Amf)$, $a' \ne a$, we fix a shortest path of this form and set
	$\Diamond_{a'} = \Diamond_{S_1} \dots \Diamond_{S_n}$, $\Box_{a'} = \Box_{S_1} \dots \Box_{S_n}$, 
	regarding both $\Diamond_a$ and $\Box_a$ as blank. We can now replace the conjuncts of the form $\Du \psi$ and $\blacksquare \psi$ in $\text{diag}_{\sigma_{f}}(\Amf,a)$ by
	$$
	\bigvee_{c \in \dom(\Amf_{a})} \hspace*{-2mm} \Diamond_c \psi \quad \text{and} \quad \bigwedge_{c \in \dom(\Amf_{a})} \hspace*{-2mm} \Box_c \psi,
	$$
	respectively, and add the conjuncts
	\begin{itemize}
		\item[--] $\Box_c (N_c \to \Box_S \hspace*{-2mm} \displaystyle \bigvee_{(c,d) \in S^{\Amf}} \hspace*{-2mm} N_d)$, for every relevant binary relation $S$.
	\end{itemize}
	Denote the resulting $\LS(\sigma_f)$-formulas by $\text{diag}_{\sigma_{f}}^{\LS}(\Amf,a)$, where $\LS\in \{\MLin,\MLn\}$. It is readily checked that they satisfy the respective versions of (i) for $\LS$.
	
	It is now straightforward to lift the proof of $(b) \Rightarrow (c)$ to the pairs $\GMLin/\MLin$ and $\GMLn/\MLn$.  
\end{proof}	

It follows that, for the pairs $\L/\LS$ considered in Theorem~\ref{thm:uniformsep3}, uniform $\LS(\sigma)$-separator existence for $\sigma$ containing infinitely many constants reduces in polytime to 
finite and general $\L$-valdity. For logics with the FMP, it follows from Theorems~\ref{thm:uniformsep3} and~\ref{prop:equivonomega2} that uniform separator existence and definability coincide: 

\begin{theorem}\label{prop:defanduniformthesame}
	Let $\L/\LS$ be any of the pairs $\GMLnu/\MLnu$, $\GMLu/\MLu$, $\GMLn/\MLn$, or $\GML/\ML$. Then, for any signature $\sigma$ with infinite $\sigma_{c}$ and any $\L(\sigma)$-formula $\varphi$, a uniform $\LS(\sigma)$-separator for $\varphi$ is an $\LS$-definition of $\varphi$.
\end{theorem}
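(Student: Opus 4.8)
The plan is to combine three ingredients already available: the syntactic characterisations of uniform separators from Theorem~\ref{thm:uniformsep3} (and, for the nominal-free pairs $\GMLu/\MLu$ and $\GML/\ML$, from Theorems~\ref{thm:uniformsep1} and~\ref{thm:uniformsep2}, where the constants of $\sigma$ play no role), the uniqueness of uniform separators modulo logical equivalence noted in Section~\ref{sec:uniformseparation}, and the fact that $\GML$, $\GMLn$, $\GMLu$, $\GMLnu$ all enjoy the FMP. Fix one of the four pairs $\L/\LS$, a signature $\sigma$ with infinite $\sigma_{c}$, an $\L(\sigma)$-formula $\varphi$, and a uniform $\LS(\sigma)$-separator $\varphi'$ for $\varphi$; the goal is $\models \varphi \leftrightarrow \varphi'$. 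One direction, $\varphi \models \varphi'$, is immediate from the definition of uniform separator, so the work is to establish $\varphi' \models \varphi$.

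First I would pick a finite non-empty $\sigma_{0} \subseteq \sigma_{c}$ containing every constant occurring in $\varphi$ and apply Theorem~\ref{thm:uniformsep3}: since $\varphi$ \emph{has} a uniform $\LS(\sigma)$-separator (namely $\varphi'$), condition $(b)$ tells us that $\Flat_{\sigma_{0}}(\varphi)$ is itself a uniform $\LS(\sigma)$-separator for $\varphi$, and condition $(c)$ gives both $\varphi \models \Flat_{\sigma_{0}}(\varphi)$ and $\Flat_{\sigma_{0}}(\varphi) \models_{\it fin} \varphi$. (For the nominal-free pairs one runs the same argument with $\Flat(\varphi)$ in place of $\Flat_{\sigma_{0}}(\varphi)$, using Theorems~\ref{thm:uniformsep1} and~\ref{thm:uniformsep2}.)

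The crux is to upgrade $\Flat_{\sigma_{0}}(\varphi) \models_{\it fin} \varphi$ to genuine entailment $\Flat_{\sigma_{0}}(\varphi) \models \varphi$, and this is exactly the step that forces the restriction to the four FMP pairs. The point is that $\Flat_{\sigma_{0}}(\varphi)$ is an $\LS$-formula, hence also an $\L$-formula, so $\Flat_{\sigma_{0}}(\varphi) \wedge \neg\varphi$ is an $\L(\sigma)$-formula with no finite model; by the FMP of $\L$ it then has no model at all, i.e.\ $\Flat_{\sigma_{0}}(\varphi) \models \varphi$. Combined with $\varphi \models \Flat_{\sigma_{0}}(\varphi)$ this yields $\models \varphi \leftrightarrow \Flat_{\sigma_{0}}(\varphi)$, so $\Flat_{\sigma_{0}}(\varphi)$ is already an $\LS$-definition of $\varphi$.

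Finally, since uniform separators are unique up to logical equivalence and both $\varphi'$ and $\Flat_{\sigma_{0}}(\varphi)$ are uniform $\LS(\sigma)$-separators for $\varphi$, we conclude $\varphi' \equiv \Flat_{\sigma_{0}}(\varphi)$, and therefore $\models \varphi \leftrightarrow \varphi'$; that is, $\varphi'$ is an $\LS$-definition of $\varphi$, as required. The main obstacle in this plan is precisely the $\models_{\it fin}$-to-$\models$ passage: it is the FMP of $\L$ that makes it work, which is why the statement is confined to these pairs and need not hold for logics such as $\GMLinu/\MLinu$ that lack the FMP. Everything else is bookkeeping with results established earlier in the paper.
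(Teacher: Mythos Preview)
Your argument is correct and matches the paper's (one-line) proof: apply Theorem~\ref{thm:uniformsep3} to obtain both $\varphi\models\Flat_{\sigma_0}(\varphi)$ and $\Flat_{\sigma_0}(\varphi)\models_{\it fin}\varphi$, upgrade the latter to $\Flat_{\sigma_0}(\varphi)\models\varphi$ via the FMP of $\L$, and finish by uniqueness of uniform separators.

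One small point on your parenthetical about the nominal-free pairs $\GMLu/\MLu$ and $\GML/\ML$. It is both unnecessary and, as written, does not work. It is unnecessary because, by the paper's conventions (Section~\ref{sec:logicsabst}), logics without the $\mathsf{n}$ superscript are only defined over constant-free signatures, so the hypothesis ``$\sigma_c$ infinite'' is vacuous for these pairs. And it does not work because Theorem~\ref{thm:uniformsep1}(c) gives only $\varphi\models\Flat(\varphi)$, not the reverse entailment $\Flat(\varphi)\models_{\it fin}\varphi$ that your FMP step needs; indeed $\Diamond_R^{\ge 2}A$ has $\Diamond_R A$ as a uniform $\ML$-separator over a constant-free signature without being $\ML$-definable. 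The substance of your proof---the nominal cases---is fine.
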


We consider uniform $\FOTNE/\MLiu$-separation.

\begin{theorem}\label{thm:uniformsep20}
	For any signature $\sigma$ with $\sigma_{c}=\emptyset$ and any $\FOTNE(\sigma)$-formula $\varphi$, the following conditions are equivalent\textup{:}
	\begin{enumerate}
		\item[$(a)$] $\varphi$ has a uniform $\MLiu(\sigma)$-separator\textup{;}
		
		\item[$(b)$] $m(\varphi)$ is a uniform $\MLiu(\sigma)$-separator for $\varphi$\textup{;}
		
		\item[$(c)$] $\varphi\models m(\varphi)$\textup{;}
		
		\item[$(d)$] $\varphi$ is preserved under twin-unfoldings of $\sigma$-structures $\Amf$ in the sense that $\Amf\models \varphi(a)$ implies $\Amf^{\ast}\models \varphi(a^{0})$, for any $\Amf,a$.
	\end{enumerate}
\end{theorem}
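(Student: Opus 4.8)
The plan is to establish the cycle $(d)\Rightarrow(c)\Rightarrow(b)\Rightarrow(a)\Rightarrow(d)$, running the argument in parallel with the proof of Theorem~\ref{thm:uniformsep1} for the pair $\GMLu/\MLu$, but with the flattening $\Flat$, the $\omega$-expansions and Lemmas~\ref{lemma:bisim_omega}, \ref{lemma:equivalenceomega} replaced, respectively, by the modal reduct $m(\cdot)$, the twin-unfoldings $\Amf^{\ast}$, and Lemmas~\ref{lem:twinunfoldingtrans}, \ref{lem:twinunfoldingequiv}. Throughout I would use Lemma~\ref{lem:lemtransss} to move freely between $\text{suc}\MLiu$ and $\MLiu$, so that $m(\varphi)$ may be regarded as an $\MLiu$-formula whenever convenient and Lemma~\ref{lem:twinunfoldingtrans} applied to it.

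First the easy implications. For $(d)\Rightarrow(c)$: if $\Amf\models\varphi(a)$ then $\Amf^{\ast}\models\varphi(a^{0})$ by $(d)$, hence $\Amf^{\ast}\models m(\varphi)(a^{0})$ by Lemma~\ref{lem:twinunfoldingequiv}, hence $\Amf\models m(\varphi)(a)$ by Lemma~\ref{lem:twinunfoldingtrans}; so $\varphi\models m(\varphi)$. For $(c)\Rightarrow(b)$ I would first record the unconditional fact that $m(\varphi)$ entails every $\MLiu(\sigma)$-consequence $\chi$ of $\varphi$: if $\Bmf\models m(\varphi)(b)$ then $\Bmf^{\ast}\models m(\varphi)(b^{0})$ (Lemma~\ref{lem:twinunfoldingtrans}), hence $\Bmf^{\ast}\models\varphi(b^{0})$ (Lemma~\ref{lem:twinunfoldingequiv}), hence $\Bmf^{\ast}\models\chi(b^{0})$, hence $\Bmf\models\chi(b)$ (Lemma~\ref{lem:twinunfoldingtrans} again). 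Under $(c)$, $m(\varphi)$ is itself an $\MLiu(\sigma)$-consequence of $\varphi$ that lies below all of them, hence is the logically strongest such formula, that is, a uniform $\MLiu(\sigma)$-separator for $\varphi$. The implication $(b)\Rightarrow(a)$ is immediate.

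The main obstacle is $(a)\Rightarrow(d)$, which I would prove by contradiction, following the $\GMLu/\MLu$ argument of Theorem~\ref{thm:uniformsep1}. Assume $\varphi'$ is a uniform $\MLiu(\sigma)$-separator for $\varphi$, and suppose there is $\Amf,a$ with $\Amf\models\varphi(a)$ and $\Amf^{\ast}\not\models\varphi(a^{0})$; then $\Amf^{\ast}\models\varphi'(a^{0})$ (by Lemma~\ref{lem:twinunfoldingtrans}, since $\varphi\models\varphi'$) while $\Amf^{\ast}\not\models m(\varphi)(a^{0})$ (by Lemma~\ref{lem:twinunfoldingequiv}). Using that $\sigma$ contains infinitely many fresh unary predicates, I would expand $\Amf^{\ast}$ to a structure $\Bmf^{\ast}$ in which auxiliary predicates record the distinctive features of twin-unfoldings — the unique ``incoming-edge label'' of each non-root node and a marker separating the two unrelated copies of every node — and take $\gamma$ to be the conjunction of all $\MLiu(\varrho)$-formulas of modal depth at most $\md(m(\varphi))+c$ (for a suitable constant $c$) true at $a^{0}$ in $\Bmf^{\ast}$, where $\varrho\subseteq\sigma$ is the finite signature consisting of the symbols of $\varphi$, of $\varphi'$ and of the auxiliary predicates (so $\gamma$ is, up to equivalence, a single $\MLiu(\varrho)$-formula). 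The crux is to show that $\varphi\models\gamma\to m(\varphi)$; this reduces to verifying that any model of $\gamma$ satisfies, up to the depth relevant to $\varphi$, properties (1) and (2) of $\Amf^{\ast}$ used in the proof of Lemma~\ref{lem:twinunfoldingequiv} (unique edge labelling and the existence of an unrelated duplicate of every node-type), after which the induction behind that lemma relativises to $\gamma$-models. Granting this, $\gamma\to m(\varphi)$ is an $\MLiu(\sigma)$-formula entailed by $\varphi$, so $\varphi'\models\gamma\to m(\varphi)$ by the definition of uniform $\MLiu(\sigma)$-separator; since $\Bmf^{\ast}\models\varphi'(a^{0})$ and $\Bmf^{\ast}\models\gamma(a^{0})$, we get $\Bmf^{\ast}\models m(\varphi)(a^{0})$, and as $m(\varphi)$ does not mention the auxiliary predicates, $\Amf^{\ast}\models m(\varphi)(a^{0})$ — contradicting the above.

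I expect the only non-routine part to be the construction in $(a)\Rightarrow(d)$: choosing the auxiliary predicates and the depth bound so that $\gamma$ pins down properties (1) and (2) up to the depth that $\varphi$ can see, and then carrying the induction of Lemma~\ref{lem:twinunfoldingequiv} through inside arbitrary models of $\gamma$. The remaining implications are short applications of Lemmas~\ref{lem:twinunfoldingtrans}, \ref{lem:twinunfoldingequiv} and \ref{lem:lemtransss}, essentially identical to the corresponding steps in Theorem~\ref{thm:uniformsep1}.
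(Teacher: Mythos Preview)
Your proposal is correct and follows essentially the same route as the paper: the same cycle $(d)\Rightarrow(c)\Rightarrow(b)\Rightarrow(a)\Rightarrow(d)$, the same use of Lemmas~\ref{lem:twinunfoldingtrans} and~\ref{lem:twinunfoldingequiv} for the easy implications, and for $(a)\Rightarrow(d)$ the same contradiction argument via an expansion $\Bmf^{\ast}$ of $\Amf^{\ast}$ by fresh unary predicates encoding the characteristic properties of twin-unfoldings. The paper is only more explicit than you about the auxiliary predicates: it uses $B_0,B_1$ to mark the two copies (your ``marker separating the two unrelated copies'') and a mod-$3$ colouring $A_0^S,A_1^S,A_2^S$ indexed by $S\in\{R,R^{-}\mid R\in\sig(\varphi)\}$ to enforce the unique incoming-edge-label property; its $\gamma$ is the conjunction of finitely many explicit $\text{suc}\MLiu$-formulas (including $\Du(\chi\wedge B_0)\leftrightarrow\Du(\chi\wedge B_1)$ for all $\chi$ of bounded rank) rather than the full bounded-depth type you take, and it derives the slightly stronger $\gamma\models\varphi\leftrightarrow m(\varphi)$, but this makes no substantive difference.
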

\begin{proof}
	$(a) \Rightarrow (d)$. Given a uniform $\MLiu(\sigma)$-separator $\varphi'$ for $\varphi$, we need to show that $\varphi$ is preserved under twin unfoldings of $\sigma$-structures. Suppose this is not so. Take a pointed $\sigma$-structure $\Amf,a$ such that $\Amf\models \varphi(a)$ and $\Amf^{\ast}\not\models \varphi(a^0)$. 
	From (a) we obtain $\Amf\models \varphi'(a)$, and so, by By Lemma~\ref{lem:twinunfoldingtrans}, $\Amf^{\ast}\models\varphi'(a^{0})$.
	
	For any signature $\varrho$ and $k\geq 0$, let suc$\MLiu_{k}(\varrho)$ be the set of all suc$\MLiu(\varrho)$-formulas of rank $\le k$. If $\varrho$ is finite, then suc$\MLiu(\varrho)$ is finite modulo logical equivalence. 
	
	The proof is now similar to the proof of Theorem~\ref{thm:uniformsep1}. We construct a formula $\gamma(x) \in \text{suc}\MLiu$ using fresh unary relation symbols such that $\Amf^{\ast}$ can always be expanded to a model of $\gamma$ and $\gamma\models \varphi\leftrightarrow m(\varphi)$. The latter is achieved by constructing $\gamma$
	in such a way that it ensures conditions (1) and (2) of the proof of Lemma~\ref{lem:twinunfoldingequiv}.
	
	Consider fresh unary $A_{0}^{S},A_{1}^{S},A_{2}^{S},B_{0},B_{1}$,  for 
	$S\in \mathcal{R}(\varphi)$ with $\mathcal{R}(\varphi)=\{ R \mid R\in \sig(\varphi)\} \cup \{R^{-} \mid R\in \sig(\varphi)\}$. Let $\varrho$ be the union of the set of fresh symbols and $\sig(\varphi)\cup \sig(\varphi')$. We first ensure condition~2 by stating that $B_{0},B_{1}$ partition the domain, satisfy the same formulas in suc$\MLiu_{k}(\varrho)$, and that elements satisfying $B_{0}$ and $B_{1}$ are not related by any $R\in \sigma$. So let $\gamma$ contain the conjuncts
	\begin{align*}
		& \blacksquare ((B_{0}\rightarrow \neg B_{1}) \wedge (B_{0} \vee B_{1})),\\
		& \blacksquare ((B_{0} \rightarrow \neg \Diamond_{R}B_{1}) \wedge (B_{1} \rightarrow \neg \Diamond_{R}B_{0})), \\ 
		& \Du (\chi \wedge B_{0}) \leftrightarrow \Du(\chi \wedge B_{1}),
	\end{align*}  
	for all $\chi\in \text{suc}\MLiu_{k}(\varrho)$. To enforce condition~1, $\gamma$ states that the $A_{0}^{S},A_{1}^{S},A_{2}^{S}$ partition the domain:
	\begin{align*}
		&\blacksquare \bigvee_{0\leq i\leq 2,S\in \mathcal{R}(\varphi)}A_{i}^{S},\\
	    &\blacksquare \bigwedge_{A_{i}^{S}\not=A_{j}^{S'}}(A_{I}^{S} \rightarrow \neg A_{j}^{S'}),\\
	\end{align*}
	and that the relations $S$ can link the $A_{i}^{S'}$ only in a certain way: for all  $S,S',S''\in \mathcal{R}(\varphi)$ and all $i\leq 2$: 
	\begin{align*}
		& \blacksquare (A_{i}^{S}\rightarrow \Box_{S'}\neg A_{i\oplus_{3} 1}^{S''}), \text{ if $S'\not=S''$},\\
		& \blacksquare (A_{i}^{S}\rightarrow \Box_{S'}\neg A_{i}^{S''}).
	\end{align*}
	Then $\gamma$ is satisfied in any $\varrho$-structure $\Bmf^{\ast}$ obtained from some $\Amf^{\ast}$ by satisfying $B_{0}$ in all nodes of the form \eqref{eq:def1} with $j=0$, $B_{1}$ in all nodes of the form \eqref{eq:def1} with $j=1$ and by satisfying $A_{i}^{S}$ in a node of the form \eqref{eq:def1} if there is $k\geq 0$ with $n=3k+i$ and $S_{n-1}=S$. Then $\Bmf^{\ast}\models \gamma(a^{0})$.
	By construction, $\gamma \models \varphi \leftrightarrow m(\varphi)$.
	As a consequence, we have $\varphi\models \gamma \rightarrow m(\varphi)$, and so, by the definition of uniform $\MLiu(\sigma)$-separators, $\varphi'\models \gamma \rightarrow m(\varphi)$. Since $\Bmf^{\ast}\models \varphi'(a^{0})$ and $\Bmf^{\ast}\models \gamma(a^{0})$, we obtain $\Bmf^{\ast}\models m(\varphi)(a^{0})$. Therefore, $\Amf^{\ast}\models m(\varphi)(a^{0})$ as $m(\varphi)$ does not contain any fresh symbols. It follows from Lemma~\ref{lem:twinunfoldingtrans} that $\Amf^{\ast}\models\varphi(a^{0})$, which is impossible.
	
	The implication $(b) \Rightarrow (a)$ is trivial.
	
	$(c) \Rightarrow (b)$ Let $\varphi\models m(\varphi)$. To prove  that $m(\varphi)$ is a uniform $\MLiu(\sigma)$-separator for $\varphi$, suppose $\varphi\models\chi$, for an $\MLiu(\sigma)$-formula $\chi$, aiming to show $m(\varphi)\models \chi$. Let $\Amf\models m(\varphi)(a)$. Then $\Amf^{\ast}\models m(\varphi)(a)$, by Lemma~\ref{lem:twinunfoldingequiv}. Then $\Amf^{\ast}\models \varphi(a)$ 
	by Lemma~\ref{lem:twinunfoldingtrans}, and so $\Amf^{\ast}\models \chi(a)$. Using again Lemma~\ref{lem:twinunfoldingequiv}, we obtain $\Amf\models \chi(a)$, as required.
	
	$(d) \Rightarrow (c)$ Suppose $\Amf\models \varphi(a)$. As $\varphi$ is  preserved under twin-unfoldings, $\Amf^{\ast}\models \varphi(a^{0})$. Hence $\Amf^{\ast}\models m(\varphi)(a^{0})$, by Lemma~\ref{lem:twinunfoldingtrans}. By Lemma~\ref{lem:twinunfoldingequiv}, $\Amf\models m(\varphi)(a)$, as required.
\end{proof}

%***********

%\cleardoublepage
\bibliographystyle{../IEEEtran}
% argument is your BibTeX string definitions and bibliography database(s)
\bibliography{../bibliobeth,../local}

\end{document}